\newtheorem{thm}{Theorem}
\newtheorem{lem}[thm]{Lemma}
\newtheorem{prop}[thm]{Proposition}
\newtheorem{cor}[thm]{Corollary}
\theoremstyle{definition}
\theoremstyle{remark}
\newtheorem{remark}{Remark}[thm] 
\theoremstyle{plain}
\numberwithin{equation}{section}
\def\CC{{\mathbb C}}
\def\HH{{\mathbb H}}
\def\NN{{\mathbb N}}
\def\RR{{\mathbb R}}
\def\SS{{\mathbb S}}
\def\ZZ{{\mathbb Z}}
\def\e{\mathrm{e}}
\def\i{\mathrm{i}}
\def\id{\operatorname{id}}
\def\D{\operatorname{D{}}}
\def\E{\operatorname{E{}}}
\def\G{\operatorname{G{}}}
\def\PSL{\operatorname{PSL}}
\def\Area{\operatorname{Area}}
\def\ord{\operatorname{ord}}
\def\scrA{{\mathcal A}}
\def\scrF{{\mathcal F}}
\def\scrI{{\mathcal I}}
\def\scrL{{\mathcal L}}
\def\Re{\operatorname{Re}}
\def\Im{\operatorname{Im}}
\newcommand{\be}{\begin{equation}}
\newcommand{\ee}{\end{equation}}
\newcommand{\ba}{\begin{align}}
\newcommand{\ea}{\end{align}}
\begin{document}

\begin{center}
\title[The trace formula for singular perturbations]{The trace formula for singular perturbations \\ of the Laplacian on hyperbolic surfaces}
\author{Henrik Uebersch\"ar}
\address{School of Mathematics, University of Bristol, Bristol BS8 1TW, U.K.}
\email{h.ueberschaer@bris.ac.uk}
\date{5 May 2009. The author is supported by an EPSRC doctoral training grant.}
\maketitle
\end{center}

\begin{abstract}
We prove an analogue of Selberg's trace formula for a delta potential on a hyperbolic surface of finite volume. For simplicity we restrict ourselves to surfaces with at most one cusp, but our methods can easily be extended to any number of cusps. In the case of a noncompact surface we derive perturbative analogues of Maass cusp forms, residual Maass forms and nonholomorphic Eisenstein series. The latter satisfy a functional equation as in the classical case. We also introduce a perturbative analogue of Selberg's zeta function and apply the trace formula to prove its meromorphic continuation to the complex plane as well as a functional equation. 
\end{abstract}

\tableofcontents

\section{Introduction}

Selberg's trace formula \cite{S} is a central tool in the spectral theory of automorphic forms \cite{Hj2}, \cite{Hj3}, \cite{Iw} and has important applications in number theory \cite{Hj1}. It also plays a central role in the theory of quantum chaos as an exact analogue of Gutzwiller's celebrated trace formula  \cite{Bz}, \cite{CRR}, \cite{G}, \cite{PU} which links the distribution of the energy levels of a classically chaotic quantum system to the actions of periodic orbits. Selberg proved his trace formula in 1956 for the Laplacian on weakly symmetric Riemannian spaces. The originial goal was to prove the existence of cusp forms for the modular surface. These form the discrete spectrum of the Laplacian and play a crucial role in Selberg's proof of the trace formula. Selberg also introduced a zeta function associated with the eigenvalue problem of the Laplacian and showed that it had remarkably similar properties to Dirichlet series studied in number theory. He used the trace formula to show that his zeta function satisfied an analogue of the Riemann hypothesis.

Among number theorists singular perturbations of the Laplacian attracted attention in the early 80s when D. Hejhal \cite{Hj4} gave the mathematical answer to a problem of Haas \cite{Ha}. In 1977 Haas had discovered the nontrivial zeros of the Riemann zeta function in numerical investigations of the spectrum of the Laplacian on the modular surface. In 1979 Hejhal demystified this phenomenon when he showed \cite{Hj4} that the apparent eigenfunctions on the modular surface that Haas had discovered featured a logarithmic singularity at the conic singularity of order 3 which numerically was hard to detect, and thus failed to be eigenfunctions of the Laplacian. The full mathematical explanation of these pseudo-eigenfunctions was given by Colin de Verdiere \cite{CdV} in 1983 by revealing them to be eigenfunctions of rank one perturbations of the Laplacian by a delta potential at the respective conic singularity.

A semi-classical trace formula for a delta potential was proven by Hillairet \cite{Hi} in the case of a 3-dimensional Riemannian surface. In this paper we prove an exact trace formula for rank one perturbations by a delta potential on a hyperbolic Riemann surface with at most one cusp. In particular our trace formula expresses the difference between perturbed and unperturbed trace in terms of diffractive orbit terms and, in the noncompact case, an additional scattering term. There is a general trace formula for rank one perturbations due to Krein \cite{Kr} which, however, does not give any information about diffractive orbits or the scattering term. Our work is also closely related to Venkov's proof \cite{V} of an analogue of Selberg's trace formula for an automorphic Schr\"odinger operator with a continuous nonnegative potential. Venkov extended the notion of Eisenstein series to such potentials and proved that they satisfy an analogous functional equation. He also introduced an analogue of Selberg's zeta function in this case and verified that it possesses properties similar to those of the original zeta function.

\subsection{Notation}
Before we state our main results we fix some notation. Let $\HH=\left\{x+\i y\in\CC \mid y>0\right\}$ be the upper half-plane with Riemannian metric $ds^{2}=y^{-2}(dx^{2}+dy^{2})$ and volume element $d\mu(z)=y^{-2}dx\,dy$, where $z=x+\i y$. The geodesic distance on $\HH$ is given by $d(\cdot,\cdot):\HH\times\HH\to\RR_{+}$, where
\begin{equation}
\cosh d(z,w)=1+\frac{|z-w|^{2}}{2\Im z \Im w}.
\end{equation}
The Laplacian on $\HH$ is of the form
\begin{equation}
\Delta=y^{2}\left(\frac{\partial^{2}}{\partial x^{2}}+\frac{\partial^{2}}{\partial y^{2}}\right).
\end{equation}
The orientation-preserving isometries of $\HH$ are the fractional linear transformations
\begin{equation}
	 z\to\frac{az+b}{cz+d},\qquad
	 \begin{pmatrix}
   a & b\\
   c & d
   \end{pmatrix}
   \in\PSL(2,\RR).
\end{equation}

A hyperbolic Riemann surface of finite volume can be represented as a quotient $\Gamma\backslash\HH$, where $\Gamma\subset\PSL(2,\RR)$ is a Fuchsian group of the first kind. Let $z_{0}\in\Gamma\backslash\HH$. Consider the operator $\Delta_{\alpha,z_{0}}=\Delta+\alpha(\delta_{z_{0}},\cdot)\delta_{z_{0}}$, $\alpha\in\RR$, which we discuss in detail in section 2.2. 

\subsection{The compact case}

If $\Gamma\backslash\HH$ is compact the spectrum of the Laplacian on the surface is discrete. The spectrum of $\Delta_{\alpha,z_{0}}$ interlaces with the spectrum of $\Delta$,
\begin{equation}
\lambda_{-M}^{\alpha}<0=\lambda_{-M}<\lambda_{-M+1}^{\alpha}<\lambda_{-M+1}\leq\cdots\leq\lambda_{n-1}\leq\lambda_{n}^{\alpha}
\leq\lambda_{n}\leq\cdots\to\infty,
\end{equation}
where $\lambda_{j}<\tfrac{1}{4}$ for $j=-M,\cdots,-1$ and $\lambda_{j}\geq\tfrac{1}{4}$ for $j\geq0$.

Let us write an eigenvalue as $\lambda=\tfrac{1}{4}+\rho^{2}$, $\rho\in\RR_{+}\cup\i\RR_{+}$. Let $\sigma\geq\tfrac{1}{2}$ and $\delta>0$. We define $H_{\sigma,\,\delta}$ to be the space of functions $h:\CC\to\CC$, s. t.\\
\begin{enumerate}
\item[(i)] $h$ is even,
\item[(ii)] $h$ is analytic in the strip $\left|\Im{\rho}\right|\leq\sigma$,
\item[(iii)] $h(\rho)<\!\!<(1+|\Re\rho|)^{-2-\delta}$ uniformly in the strip $\left|\Im\rho\right|\leq\sigma$.\\
\end{enumerate}

Let $m_{\Gamma}=|\scrI|<+\infty$, where
\begin{equation}
\scrI=\lbrace\gamma\in\Gamma\mid\gamma z_{0}=z_{0}\rbrace.
\end{equation}
Let $\psi(s)=\tfrac{1}{2\pi}\Gamma'(s)/\Gamma(s)$, where $\Gamma(s)$ denotes the Gamma function. For $h\in H_{\sigma,\delta}$ and $\beta\in\RR$ we define the transform of $h$
\begin{equation}\label{trans}
g_{\beta,k}(t)=\frac{(-1)^{k}}{2\pi\i k}\int_{-\i\nu-\infty}^{-\i\nu+\infty}\frac{h'(\rho)\e^{-\i\rho t}}{(1+m_{\Gamma}\beta\psi(\tfrac{1}{2}+\i\rho))^{k}}d\rho,
\end{equation}
where $\nu>v_{\beta}$ if $1+m_{\Gamma}\beta\psi(\tfrac{1}{2}+\i\rho)$ has a zero $-\i v_{\beta}$ in the interval $(0,-\i\sigma)$, and $\nu=0$ otherwise. In fact there is at most one zero of $1+m_{\Gamma}\beta\psi(\tfrac{1}{2}+\i\rho)$ in the halfplane $\Im\rho<0$ and it lies on the imaginary axis. To see this consider the representation (B9) in \cite{Iw} (beware of our additional factor $\tfrac{1}{2\pi}$). It follows that
\begin{equation}
\Im(1+m_{\Gamma}\beta\psi(\tfrac{1}{2}+\i\rho))=m_{\Gamma}\beta\frac{\Re\rho}{2\pi}\sum_{n=0}^{\infty}|n+\tfrac{1}{2}+\i\rho|^{-2},
\end{equation}
so $1+m_{\Gamma}\beta\psi(\tfrac{1}{2}+\i\rho)$ has no zeros off the imaginary line. For $\rho=-\i v$ and $v>-\tfrac{1}{2}$ we have $1+m_{\Gamma}\beta\psi(\tfrac{1}{2}+v)\in\RR$ and
\begin{equation}
\frac{d}{dv}(1+m_{\Gamma}\beta\psi(\tfrac{1}{2}+v))=\frac{m_{\Gamma}\beta}{2\pi}\sum_{n=0}^{\infty}(n+\tfrac{1}{2}+v)^{-2}>0.
\end{equation}
This together with the asymptotics $\Gamma'(s)/\Gamma(s)=\log s+O(|s|^{-1})$ for $\Re s>\tfrac{1}{2}$ (cf. (B11), pp. 198-9 in \cite{Iw}) and the observation that $\lim_{v\to-1/2^{+}}\psi(\tfrac{1}{2}+v)=-\infty$ implies that $1+m_{\Gamma}\beta\psi(\tfrac{1}{2}+\i\rho)$ has exactly one zero in the halfplane $\Im\rho<\tfrac{1}{2}$ at $\rho=-\i v_{\beta}$, $v_{\beta}>-\tfrac{1}{2}$.

Selberg's trace formula relates the trace of the Laplacian to a sum over periodic orbits on $\Gamma\backslash\HH$. Our trace formula relates the difference between the traces of $\Delta_{\alpha,z_{0}}$ and $\Delta$ to so-called \textit{diffractive orbits}. By a diffractive orbit associated with a group element $\gamma\in\Gamma$, which does not fix $z_{0}$, we mean the following. We consider the unique geodesic which connects $z_{0}$ to $\gamma z_{0}$ in $\HH$ and project it onto the surface $\Gamma\backslash\HH$. We hence obtain an orbit which starts and returns to $z_{0}$, which is, however, not necessarily a periodic orbit. We denote the length $d(\gamma z_{0},z_{0})$ of such an orbit by $l_{\gamma,z_{0}}$.

The following trace formula for compact surfaces is proven in section 4.
\begin{thm}\label{thm1}
Suppose $\Gamma\backslash\HH$ is compact. Let $\sigma>C(\Gamma,\alpha,z_{0})$, where $C(\Gamma,\alpha,z_{0})$ is defined by \eqref{large1} and let $\beta(\alpha)=\alpha/(1+c\alpha)$ for some real constant $c$. Let $\nu$ be defined as above. For any $\delta>0$ and $h\in H_{\sigma,\delta}$ we have the identity
\begin{equation}
\begin{split}
\sum_{j=-M}^{\infty}\lbrace h(\rho^{\alpha}_{j})-h(\rho_{j})\rbrace
=\;&\frac{1}{2\pi}\int_{-\i\nu-\infty}^{-\i\nu+\infty}h(\rho)\frac{m_{\Gamma}\beta(\alpha)\psi'(\tfrac{1}{2}+\i\rho)}{1+m_{\Gamma}\beta(\alpha)\psi(\tfrac{1}{2}+\i\rho)}d\rho\\
&+\sum_{k=1}^{\infty}\beta(\alpha)^{k}\sum_{\gamma_{1},\cdots,\gamma_{k}\in\Gamma\backslash\scrI}\,\int_{l_{\gamma_{1},z_{0}}}^{\infty}\cdots\int_{l_{\gamma_{k},z_{0}}}^{\infty}\frac{g_{\beta,k}(t_{1}+\cdots+t_{n})\prod_{n=1}^{k}dt_{n}}{\prod_{n=1}^{k}\sqrt{\cosh t_{n}-\cosh l_{\gamma_{n},z_{0}}}}.
\end{split}
\end{equation}
\end{thm}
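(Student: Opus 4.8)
The plan is to compute the difference of resolvent traces $\tr(R_{\alpha}(\lambda)-R(\lambda))$ explicitly, where $R_{\alpha}(\lambda)=(\Delta_{\alpha,z_{0}}-\lambda)^{-1}$ and $R(\lambda)=(\Delta-\lambda)^{-1}$, and then integrate against $h$ using a contour-integral (Cauchy) argument. By Krein's formula for rank-one perturbations (cf. \cite{Kr}), the perturbed resolvent differs from the free one by a rank-one operator built from the free resolvent kernel $G_{\lambda}(z,w)$ (the automorphic Green's function on $\Gamma\backslash\HH$); concretely $R_{\alpha}(\lambda)=R(\lambda)-c_{\alpha}(\lambda)\,(\,\cdot\,,G_{\bar\lambda}(\cdot,z_{0}))\,G_{\lambda}(\cdot,z_{0})$, where $c_{\alpha}(\lambda)$ is a scalar determined by the renormalized coupling. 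Taking the trace, $\tr(R_{\alpha}(\lambda)-R(\lambda))=-c_{\alpha}(\lambda)\int G_{\lambda}(z,z_{0})G_{\bar\lambda}(z_{0},z)\,d\mu(z)$, and the integral equals $\partial_{\lambda}$ of the renormalized diagonal Green's function at $z_{0}$ by the standard resolvent identity $\int G_{\lambda}(z,w)G_{\mu}(w,z)\,d\mu(z)=(G_{\lambda}-G_{\mu})(w,w)/(\lambda-\mu)\to\partial_{\lambda}G_{\lambda}(w,w)$ as $\mu\to\lambda$. The upshot, which I expect section 2.2 to have set up, is a clean closed form: the trace difference is a logarithmic derivative,
\[
\tr\bigl(R_{\alpha}(\lambda)-R(\lambda)\bigr)=\frac{d}{d\lambda}\log\bigl(1+m_{\Gamma}\beta(\alpha)\psi(\tfrac12+\i\rho)+\Phi_{z_{0}}(\rho)\bigr),
\]
where $\lambda=\tfrac14+\rho^{2}$, the term $m_{\Gamma}\beta\psi(\tfrac12+\i\rho)$ comes from the contribution of the stabilizer $\scrI$ (the $\psi$-function being the regularized value of the free hyperbolic Green's function at coincident points, with multiplicity $m_{\Gamma}$), and $\Phi_{z_{0}}(\rho)=\sum_{\gamma\in\Gamma\backslash\scrI}(\text{Legendre-type function of }l_{\gamma,z_{0}})$ is the sum over nontrivial cosets built from the off-diagonal Green's function values $G_{\lambda}(z_{0},\gamma z_{0})$.

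Next I would feed this into the spectral expansion. Since $h\in H_{\sigma,\delta}$ is even, analytic in $|\Im\rho|\le\sigma$ and decays like $(1+|\Re\rho|)^{-2-\delta}$, I can write $\sum_{j}\{h(\rho_{j}^{\alpha})-h(\rho_{j})\}$ as a contour integral of $h(\rho)$ against the difference of spectral counting measures, i.e. against $\tr(R_{\alpha}-R)$ pulled back to the $\rho$-plane, picking up exactly the poles of the resolvent difference (the perturbed eigenvalues with a $+$ residue, the unperturbed ones with a $-$ residue — the interlacing in the displayed inequality guarantees these are the only poles and that the series converges). Concretely, $\sum_{j}\{h(\rho_{j}^{\alpha})-h(\rho_{j})\}=\frac{1}{2\pi\i}\oint h(\rho)\,\frac{d}{d\rho}\log\bigl(1+m_{\Gamma}\beta\psi(\tfrac12+\i\rho)+\Phi_{z_{0}}(\rho)\bigr)\,d\rho$ over a contour enclosing the real axis, which after deforming to the horizontal line $\Im\rho=-\nu$ (the shift by $\nu$ is exactly to step past the single possible zero $-\i v_{\beta}$ of $1+m_{\Gamma}\beta\psi$ in the lower half-plane, as analyzed in the excerpt) becomes the line integral in the statement plus the contribution of $\Phi_{z_{0}}$.

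The remaining step is to expand the logarithm and identify the diffractive-orbit sum. Writing $\log(1+m_{\Gamma}\beta\psi+\Phi_{z_{0}})=\log(1+m_{\Gamma}\beta\psi)+\log\bigl(1+\Phi_{z_{0}}/(1+m_{\Gamma}\beta\psi)\bigr)$, the first piece gives the explicit $\psi'/\psi$ line integral; expanding the second as $\sum_{k\ge1}\frac{(-1)^{k-1}}{k}\bigl(\Phi_{z_{0}}/(1+m_{\Gamma}\beta\psi)\bigr)^{k}$ and using $\Phi_{z_{0}}(\rho)=\beta\sum_{\gamma\in\Gamma\backslash\scrI}(\text{integral transform of }h'\text{-kernel over }l_{\gamma,z_{0}})$ produces, upon multiplying out the $k$-fold product over cosets $\gamma_{1},\dots,\gamma_{k}$ and recognizing the transform $g_{\beta,k}$ of \eqref{trans} as the inverse of the relevant Legendre-function/Mehler–Fock type integral, exactly the $k$-fold integral $\int_{l_{\gamma_{1},z_{0}}}^{\infty}\cdots\int_{l_{\gamma_{k},z_{0}}}^{\infty}g_{\beta,k}(t_{1}+\cdots+t_{k})\prod(\cosh t_{n}-\cosh l_{\gamma_{n},z_{0}})^{-1/2}\,dt_{n}$. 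The factor $(\cosh t-\cosh l)^{-1/2}$ is the kernel relating the hyperbolic Green's function to its Fourier/Selberg transform, so this identification is a change of variables plus Fubini. The main obstacle I anticipate is twofold: first, controlling convergence — both of the spectral series and of the $k$-sum over diffractive orbits — which is exactly why the hypothesis $\sigma>C(\Gamma,\alpha,z_{0})$ from \eqref{large1} is needed (it forces $|\Phi_{z_{0}}/(1+m_{\Gamma}\beta\psi)|<1$ on the contour and makes the geometric/logarithmic series absolutely convergent, using that the number of $\gamma$ with $l_{\gamma,z_{0}}\le T$ grows like $e^{T}$ against the decay of $h$); and second, the careful bookkeeping of the renormalization constant $c$ in $\beta(\alpha)=\alpha/(1+c\alpha)$, i.e. showing that the self-adjoint extension $\Delta_{\alpha,z_{0}}$ is parametrized so that the divergent part of the diagonal Green's function is absorbed precisely into the shift from $\alpha$ to $\beta(\alpha)$ — this is where one must be most careful to match constants, and it relies entirely on the construction in section 2.2.
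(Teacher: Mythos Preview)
Your outline is structurally the same as the paper's: both arrive at the key identity that the spectral difference equals a contour integral of $h$ against the logarithmic derivative of the relative zeta function $S_{\alpha,z_{0}}(\tfrac12+\i\rho)=\beta^{-1}+m_{\Gamma}\psi(\tfrac12+\i\rho)+\sum_{\gamma\in\Gamma\setminus\scrI}G_{1/2+\i\rho}(z_{0},\gamma z_{0})$, and both then expand the logarithm to extract the identity term and the $k$-fold diffractive-orbit sums. Your route via Krein's formula and the resolvent identity is a legitimate alternative to the paper's direct definition of $S_{\alpha,z_{0}}$ and the identification of its zeros and poles (Proposition~\ref{eigencond}, Theorem~\ref{prop}); both produce the same meromorphic function.

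However, there is a genuine gap in your argument at the contour step. You write the spectral sum as $\tfrac{1}{2\pi\i}\oint h(\rho)\,\tfrac{d}{d\rho}\log(\cdots)\,d\rho$ over a contour ``enclosing the real axis'' and then deform. But the integrand has poles precisely at every $\rho_{j}$ and $\rho_{j}^{\alpha}$ on the real line, so one cannot simply invoke decay of $h$ to close the contour at infinity. The paper integrates over the boundary of a box $B(T)=[-T,T]\times[-\i\sigma,\i\sigma]$ and must then show that the contributions from the vertical edges $[\pm T-\i\sigma,\pm T]$ vanish as $T\to\infty$ along a suitable sequence. This is the bulk of Section~4: one first finds a sequence $\{T_{N}\}$ threading between consecutive eigenvalues with spacing $\gg T_{N}^{-1}$ (Lemma~\ref{spacing}), proves a polynomial upper bound $|S_{\alpha,z_{0}}(\tfrac12+\i\rho)|\ll T_{N}^{5}$ on those segments (Proposition~\ref{polybound}), and then---since one also needs a \emph{lower} bound to control $\log|S_{\alpha,z_{0}}|$ near possible zeros---constructs a special test function (Lemma~\ref{testf}) to extract an $L^{1}$ bound $\int_{T_{N}-\i\sigma}^{T_{N}}\bigl|\log|S_{\alpha,z_{0}}|\bigr|\,|d\rho|\ll T_{N}^{2+\epsilon}$ (Proposition~\ref{compbound}), from which the vanishing finally follows (Theorem~\ref{van}). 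None of this is addressed in your proposal; the decay $h(\rho)\ll(1+|\Re\rho|)^{-2-\delta}$ alone cannot control $h(\rho)\,S'_{\alpha,z_{0}}/S_{\alpha,z_{0}}$ on a vertical segment that may pass arbitrarily close to a pole or zero of $S_{\alpha,z_{0}}$. This is the missing idea, and it is where most of the work lies.
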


\subsection{The noncompact case}
If $\Gamma\backslash\HH$ has one cusp, one can employ a simple coordinate transformation to move the cusp of $\Gamma\backslash\HH$ to the standard cusp of width one at $\infty$. Its stability group is then given by
\begin{equation}
\Gamma_{\infty}=\left\{
\begin{pmatrix}
   1 & n\\
   0 & 1
   \end{pmatrix}\mid n\in\ZZ\right\},
\end{equation}
which acts by integer translations in the $x$-variable.

The generalised eigenfunctions of the Laplacian on $\Gamma\backslash\HH$ are given by non-holomorphic Eisenstein series
\begin{equation}
E(z,s)=\sum_{\gamma\in\Gamma_{\infty}\backslash\Gamma}(\Im\gamma z)^{s}, \qquad \Re s>1,
\end{equation}
which satisfy
\begin{equation}
(\Delta+s(1-s))E(\cdot,s)=0,
\end{equation}
and have the asymptotics
\begin{equation}
E(x+\i y,s)=y^{s}+\varphi(s)y^{1-s}+O(\e^{-2\pi y}), \qquad y\to\infty,
\end{equation}
in the cusp, where $\varphi(s)$ is the scattering coefficient. $E(z,s)$ can be meromorphically continued to the whole of $\CC$ and satisfies a functional equation
\begin{equation}
E(z,s)=\varphi(s)E(z,1-s).
\end{equation}
The spectrum of the Laplacian consists of a continuous part $[\tfrac{1}{4},\infty)$ which corresponds to Eisenstein series $\lbrace E(z,\tfrac{1}{2}+\i t)\rbrace_{t\in\RR}$ and a discrete part
\begin{equation}
0=\lambda_{-M}<\lambda_{-M+1}\leq\cdots\leq\lambda_{-1}<\tfrac{1}{4}\leq\lambda_{0}\leq\lambda_{1}\leq\cdots\leq\lambda_{j}\leq\cdots\to\infty,
\end{equation}
which consists of the residual part $\lbrace\lambda_{j}\rbrace_{j=-M}^{-1}$ with eigenfunctions which are either residues of $E(\cdot,s)$ in the interval $[0,1]$ or cusp forms, and the cuspidal part $\lbrace\lambda_{j}\rbrace_{j=0}^{\infty}$ all of whose eigenfunctions are cusp forms.

The spectrum of $\Delta_{\alpha,z_{0}}$ consists of a continuous part $[\tfrac{1}{4},\infty)$ which corresponds to perturbed Eisenstein series and a discrete part (cf. \cite{CdV}, Thm. 3) consisting of perturbed cusp forms and a finite number of perturbed Maass forms. The perturbed Eisenstein series play the analogous role of the scattering solutions in the unperturbed case. They satisfy a functional equation similar to that satisfied by classical Eisenstein series. We give the proof of the following Theorem in section 5.
\begin{thm}\label{thm2}
The generalised eigenfunctions of $\Delta_{\alpha,z_{0}}$ are given by perturbed Eisenstein series $E^{\alpha,z_{0}}(z,s)$ which have the asymptotics
\begin{equation}
E^{\alpha,z_{0}}(x+\i y,s)=y^{s}+\varphi_{\alpha,z_{0}}(s)y^{1-s}+O(\e^{-2\pi y}), \qquad y\to\infty,
\end{equation}
in the cusp, where $\varphi_{\alpha,z_{0}}(s)$ is the scattering coefficient for the perturbed Laplacian $\Delta_{\alpha,z_{0}}$. They satisfy an analogous funtional equation
\begin{equation}
E^{\alpha,z_{0}}(z,s)=\varphi_{\alpha,z_{0}}(s)E^{\alpha,z_{0}}(z,1-s).
\end{equation}
\end{thm}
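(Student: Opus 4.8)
The plan is to produce the perturbed Eisenstein series by adding to the classical one a suitable multiple of the automorphic resolvent kernel, with the multiple forced by the boundary condition at $z_{0}$ defining $\Delta_{\alpha,z_{0}}$, and then to read off the asymptotics and the functional equation from the analytic structure of that kernel.

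First I recall the automorphic Green's function $G_{s}(z,w)$, the kernel of $(\Delta+s(1-s))^{-1}$, holomorphic for $\Re s>1$ and continuing meromorphically to all of $\CC$ (see \cite{Hj3}, \cite{Iw}). I will use three facts about it: (a) the on-diagonal expansion $G_{s}(z,w)=-\tfrac{1}{2\pi}\log d(z,w)+g_{0}(s)+o(1)$ as $z\to w$, with $g_{0}(s)$ meromorphic in $s$; (b) the constant Fourier coefficient in the cusp, $\int_{0}^{1}G_{s}(x+\i y,w)\,dx=\tfrac{1}{2s-1}E(w,s)\,y^{1-s}+O(\e^{-2\pi y})$ as $y\to\infty$; and (c) the reflection formula $G_{s}(z,w)-G_{1-s}(z,w)=\tfrac{1}{(2s-1)\varphi(s)}E(z,s)E(w,s)$, which holds because the difference is a temperate solution of $(\Delta+s(1-s))u=0$, hence a scalar multiple of $E(\cdot,s)$ in each variable, the scalar being pinned down by (b). Setting $z\to z_{0}$ in (c) yields the key asymmetry $g_{0}(s)-g_{0}(1-s)=\tfrac{1}{(2s-1)\varphi(s)}E(z_{0},s)^{2}$.

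Next I put $E^{\alpha,z_{0}}(z,s):=E(z,s)+A(s)\,G_{s}(z,z_{0})$ and fix $A(s)$ by requiring that $E^{\alpha,z_{0}}(\cdot,s)$ obey the boundary condition at $z_{0}$ set up in section 2.2. In the local expansion of $E^{\alpha,z_{0}}(\cdot,s)$ at $z_{0}$ only $G_{s}$ is singular, so the coefficient of $-\tfrac{1}{2\pi}\log d$ equals $A(s)$ and the constant term equals $E(z_{0},s)+A(s)g_{0}(s)$; the boundary condition, a one-parameter family whose parameter is the renormalised coupling $\beta(\alpha)=\alpha/(1+c\alpha)$, then forces $A(s)=E(z_{0},s)/D_{\alpha}(s)$ for an explicit meromorphic $D_{\alpha}(s)$ that equals $g_{0}(s)$ up to an additive constant determined by $\beta(\alpha)$. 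Off $z_{0}$ both $E(\cdot,s)$ and $G_{s}(\cdot,z_{0})$ are annihilated by $\Delta+s(1-s)$, so $E^{\alpha,z_{0}}(\cdot,s)$ solves the eigenvalue equation there, meets the boundary condition at $z_{0}$, and has only polynomial growth in the cusp; hence it is a generalised eigenfunction of $\Delta_{\alpha,z_{0}}$, and its meromorphic continuation is inherited from $E$, $G_{s}$, $\varphi$ and $g_{0}$. That these exhaust the generalised eigenfunctions on $\Re s=\tfrac12$ follows from the rank-one nature of the perturbation together with the fact that $E(\cdot,s)$ generates the continuous spectrum of $\Delta$ (cf.\ \cite{CdV}, Thm.~3).

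Finally, the cusp asymptotics drop out of (b):
\[
E^{\alpha,z_{0}}(x+\i y,s)=y^{s}+\Bigl(\varphi(s)+\frac{E(z_{0},s)^{2}}{(2s-1)D_{\alpha}(s)}\Bigr)y^{1-s}+O(\e^{-2\pi y}),
\]
so $\varphi_{\alpha,z_{0}}(s)=\varphi(s)+E(z_{0},s)^{2}/\bigl((2s-1)D_{\alpha}(s)\bigr)$. For the functional equation I expand $E^{\alpha,z_{0}}(z,1-s)$ using $E(\cdot,1-s)=\varphi(1-s)E(\cdot,s)$, the reflection formula (c), and the identity $D_{\alpha}(1-s)=D_{\alpha}(s)+E(z_{0},s)^{2}/\bigl((2s-1)\varphi(s)\bigr)$ coming from the asymmetry of $g_{0}$; the $G_{s}$-terms then reassemble and, after cancellation, one obtains $E^{\alpha,z_{0}}(z,1-s)=\tfrac{D_{\alpha}(s)}{\varphi(s)D_{\alpha}(1-s)}E^{\alpha,z_{0}}(z,s)$, i.e.\ the asserted functional equation with $\varphi_{\alpha,z_{0}}(s)=\varphi(s)D_{\alpha}(1-s)/D_{\alpha}(s)$; one checks this agrees with the coefficient read off above and that $\varphi_{\alpha,z_{0}}(s)\varphi_{\alpha,z_{0}}(1-s)=1$. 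The main obstacle is bookkeeping rather than estimation: the normalisations in (c) and in the boundary condition must be arranged consistently, because it is precisely the $s\mapsto1-s$ asymmetry of $g_{0}$ that makes the two descriptions of $\varphi_{\alpha,z_{0}}$ coincide and the functional equation close; a secondary point, which I would settle via the rank-one structure and \cite{CdV}, is confirming that the $\Re s=\tfrac12$ boundary values really are the generalised eigenfunctions belonging to the continuous spectrum of $\Delta_{\alpha,z_{0}}$.
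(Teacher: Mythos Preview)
Your proposal is correct and follows essentially the same route as the paper: define $E^{\alpha,z_{0}}(z,s)$ as $E(z,s)$ plus a multiple of $G^{\Gamma}_{s}(z,z_{0})$ fixed by the boundary condition at $z_{0}$ (your $D_{\alpha}(s)$ is the paper's $S_{\alpha,z_{0}}(s)$ up to normalisation), read off the cusp asymptotics from the zeroth Fourier coefficient of $G^{\Gamma}_{s}$, and derive the functional equation from the reflection identity for the Green's function together with the $s\mapsto1-s$ asymmetry of the regularised diagonal value. The only organisational difference is that the paper passes through an auxiliary relation expressing $G^{\Gamma}_{1-s}$ in terms of $G^{\Gamma}_{s}$ and $E^{\alpha,z_{0}}(\cdot,1-s)$ before extracting the functional equation, whereas you expand $E^{\alpha,z_{0}}(\cdot,1-s)$ directly; both lead to $\varphi_{\alpha,z_{0}}(s)=\varphi(s)\,D_{\alpha}(1-s)/D_{\alpha}(s)$.
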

The cuspidal part of the discrete spectrum includes possible degenerate eigenvalues of the Laplacian, but may also include new eigenvalues. To simplify notation we will only list the new eigenvalues which are nondegenerate (see section 3, Proposition \ref{eigencond})
\begin{equation}
\lambda^{\alpha}_{-M}<0<\lambda^{\alpha}_{-M+1}<\cdots<\lambda^{\alpha}_{0}<\lambda^{\alpha}_{1}<\cdots<\lambda^{\alpha}_{j}<\cdots.
\end{equation}
The small eigenvalues $\lbrace\lambda^{\alpha}_{j}\rbrace_{j=-M}^{-1}$ correspond to eigenfunctions which are either residues of $E^{\alpha,z_{0}}(z,s)$ or perturbed cusp forms. We will refer to the residues as residual Maass forms. We will sometimes refer to the small eigenvalues as the residual spectrum, although some of them may be associated with cusp forms instead of residues of the Eisenstein series. The remaining eigenvalues $\lbrace\lambda^{\alpha}_{j}\rbrace_{j\geq0}$ correspond to eigenfunctions which decay exponentially in the cusp and have a logarithmic singularity at $z_{0}$. We will refer to such eigenfunctions as pseudo cusp forms and the associated eigenvalues as the pseudo cuspidal spectrum. For a detailed discussion of the perturbed discrete spectrum see section 2. In section 3 we will prove that for almost all pairs $(\alpha,z_{0})$ the perturbed cuspidal spectrum is empty.

We now turn to our second main result, the trace formula for surfaces with one cusp, which we will prove in section 7. We interpret the sum over closed geodesics in the trace formula as a sum over diffractive orbits, where we sum over the number of visits to $z_{0}$ and all combinations of orbits connecting $z_{0}$ to itself on $\Gamma\backslash\HH$.

\begin{thm}\label{thm3}
Suppose $\Gamma\backslash\HH$ has one cusp. Let $(\alpha,z_{0})\in(\RR\backslash\lbrace0\rbrace)\times\Gamma\backslash\HH$. Let $\sigma>C(\Gamma,\alpha,z_{0})$, where $C(\Gamma,\alpha,z_{0})$ is defined as before. For any $\delta>0$ and $h\in H_{\sigma,\,\delta}$ we have the identity
\begin{equation}\label{trace}
\begin{split}
\sum_{j\geq-M}h(\rho^{\alpha}_{j})-\sum_{j\geq-M}h(\rho_{j})
=\;&\frac{1}{2\pi}\int_{-\i\tilde{\sigma}-\infty}^{-\i\tilde{\sigma}+\infty}h(\rho)\frac{m_{\Gamma}\beta\psi'(\tfrac{1}{2}+\i\rho)}{1+m_{\Gamma}\beta\psi(\tfrac{1}{2}+\i\rho)}d\rho\\
&+\tfrac{1}{2}\delta_{\Gamma}h(0)+\frac{1}{4\pi}\int_{-\infty}^{\infty}h(\rho)\left\{\frac{\varphi_{\alpha,\,z_{0}}'}{\varphi_{\alpha,\,z_{0}}}(\tfrac{1}{2}+\i\rho)-\frac{\varphi'}{\varphi}(\tfrac{1}{2}+\i\rho)\right\}d\rho\\
&+\sum_{k=1}^{\infty}\,\beta(\alpha)^{k}\sum_{\gamma_{1},\cdots,\gamma_{k}\in\Gamma\backslash\left\{\id\right\}}\,\int_{l_{\gamma_{1},z_{0}}}^{\infty}\cdots\int_{l_{\gamma_{k},z_{0}}}^{\infty}\frac{g_{\beta,k}(t_{1}+...+t_{k})\prod_{n=1}^{k}dt_{n}}{\prod_{n=1}^{k}\sqrt{\cosh t_{n}-\cosh l_{\gamma_{n},z_{0}}}}
\end{split}
\end{equation}
where $\delta_{\Gamma}=1$ if $\lambda=\tfrac{1}{4}$ is not an eigenvalue of the Laplacian, and $\delta_{\Gamma}=0$ otherwise.
\end{thm}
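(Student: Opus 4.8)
The plan is to derive the noncompact trace formula by the same resolvent-difference strategy that yields Theorem \ref{thm1}, but now carefully tracking the continuous-spectrum contributions that arise from the Eisenstein series. The starting point is Krein's formula for the rank-one perturbation: if $R_{\alpha}(s)=(\Delta_{\alpha,z_{0}}+s(1-s))^{-1}$ and $R(s)$ is the free resolvent, then $R_{\alpha}(s)-R(s)$ is of rank one and its integral kernel is an explicit multiple of $G_{s}(z,z_{0})G_{s}(z_{0},w)$, where $G_{s}$ is the Green's function of $\Delta$ on $\Gamma\backslash\HH$, with the scalar factor built from $\beta(\alpha)$ and the regularised self-energy at $z_{0}$ (this is where the renormalisation $\beta(\alpha)=\alpha/(1+c\alpha)$ and the function $1+m_{\Gamma}\beta\psi(\tfrac12+\i\rho)$ enter, exactly as in the compact case and in Colin de Verdi\`ere \cite{CdV}). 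First I would write $\tr(R_{\alpha}(s)-R(s))$ as $-\frac{d}{ds}\log(1+m_{\Gamma}\beta(\alpha)\psi(s))$ up to the diagonal-Green's-function piece, using the Maass--Selberg machinery to make sense of the trace despite the continuous spectrum: the difference of resolvents is genuinely trace class even though neither resolvent is, because the leading $y^{s}$-asymptotics in the cusp cancel in the difference.

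Next I would feed this into a contour-integral representation of $\sum_{j}\{h(\rho^{\alpha}_{j})-h(\rho_{j})\}$. Writing each spectral sum as a contour integral of $h(\rho)$ against the appropriate logarithmic derivative of a Fredholm-type determinant, and shifting contours, produces three kinds of residue contributions: (i) the poles of $(1+m_{\Gamma}\beta\psi(\tfrac12+\i\rho))^{-1}$, which assemble into the first integral on the right of \eqref{trace} (the contour at $\Im\rho=-\tilde\sigma$ being chosen to sit below the single relevant zero $-\i v_{\beta}$, as discussed after \eqref{trans}); (ii) the diagonal Green's-function term, which after inserting the Selberg/Fourier expansion of $G_{s}(z_{0},z_{0})$ and expanding $(1+m_{\Gamma}\beta\psi)^{-k}$ as a geometric series in $\beta$ unfolds into the sum over $k$ and over $k$-tuples $\gamma_{1},\dots,\gamma_{k}\in\Gamma\backslash\{\id\}$, with the diffractive-orbit kernel $g_{\beta,k}(t_{1}+\dots+t_{k})/\prod\sqrt{\cosh t_{n}-\cosh l_{\gamma_{n},z_{0}}}$ coming out of the known Harish-Chandra/Mehler--Fock transform of the hyperbolic Green's function restricted to the pair of points $z_{0},\gamma_{n}z_{0}$; and (iii) a new contribution supported on the critical line $\Re s=\tfrac12$, coming from the branch cut of the continuous spectrum. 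Isolating (iii) is exactly Theorem \ref{thm2}: the perturbed Eisenstein series $E^{\alpha,z_{0}}(z,s)$ and its scattering coefficient $\varphi_{\alpha,z_{0}}(s)$ are defined there, and the standard computation (as in Selberg's original derivation, cf. \cite{Iw}) of the continuous-spectrum trace as $\frac{1}{4\pi}\int h(\rho)\,\frac{\varphi'}{\varphi}(\tfrac12+\i\rho)\,d\rho$ applies to both $\varphi_{\alpha,z_{0}}$ and $\varphi$; taking the difference gives the scattering term in \eqref{trace}. The term $\tfrac12\delta_{\Gamma}h(0)$ is the familiar boundary contribution at $s=\tfrac12$, $\rho=0$: when $\lambda=\tfrac14$ is not an $L^{2}$-eigenvalue of $\Delta$, the half-residue of $E(z,s)$ at $s=\tfrac12$ contributes $\tfrac12 h(0)$, and this term is unchanged by the rank-one perturbation since the perturbation only modifies $\varphi$, not the order of its pole at $\tfrac12$; hence it appears with the unperturbed indicator $\delta_{\Gamma}$.

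Finally I would check convergence and justify every contour shift. The decay hypotheses (i)--(iii) defining $H_{\sigma,\delta}$ with $\sigma>C(\Gamma,\alpha,z_{0})$ guarantee that $h'(\rho)$ decays fast enough for $g_{\beta,k}$ in \eqref{trans} to be well-defined and for the $t_{n}$-integrals over $[l_{\gamma_{n},z_{0}},\infty)$ to converge; the sum over $\gamma_{n}$ converges because the number of $\gamma$ with $l_{\gamma,z_{0}}\le L$ grows like $e^{L}$ while $g_{\beta,k}$ decays exponentially in its argument, and the sum over $k$ converges geometrically once $\sigma$ exceeds the threshold $C(\Gamma,\alpha,z_{0})$ that controls $|\beta(\alpha)|$ against the orbit-counting rate. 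The main obstacle is point (iii) together with the trace-class justification: making the difference of Eisenstein-series projections rigorous requires a careful truncation (Maass--Selberg relations) so that one may interchange the spectral integral with the trace, and then showing that the perturbed scattering coefficient $\varphi_{\alpha,z_{0}}$ has no spurious poles on $\Re s=\tfrac12$ that would change the $\delta_{\Gamma}$ term — this is where Theorem \ref{thm2}, and in particular the functional equation $E^{\alpha,z_{0}}(z,s)=\varphi_{\alpha,z_{0}}(s)E^{\alpha,z_{0}}(z,1-s)$, does the essential work by pinning down the analytic structure of $\varphi_{\alpha,z_{0}}$ near $s=\tfrac12$.
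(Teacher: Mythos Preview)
Your outline gestures at the right objects but diverges from the paper's actual argument in ways that leave real gaps. The paper does \emph{not} proceed via Krein's formula and trace-class differences of resolvents; in fact the introduction explicitly notes that Krein's abstract trace formula fails to produce the diffractive-orbit and scattering terms. Instead the whole proof is organised around the single meromorphic function $S_{\alpha,z_{0}}(s)$ (the relative zeta function of Section~3), whose zeros and poles in the strip encode perturbed eigenvalues, unperturbed eigenvalues, and both sets of resonances. One writes the symmetrised product $\Psi(s)=S_{\alpha,z_{0}}(s)S_{\alpha,z_{0}}(1-s)$, integrates $h(\rho)\,d\log\Psi(\tfrac12+\i\rho)$ around a box $B(T)$, and then uses the factorisation $\Psi=S_{\alpha,z_{0}}^{2}\theta_{\alpha,z_{0}}$ with $\theta_{\alpha,z_{0}}=S_{\alpha,z_{0}}(1-s)/S_{\alpha,z_{0}}(s)=\varphi_{\alpha,z_{0}}/\varphi$ to split off the scattering integral along $\Re s=\tfrac12$. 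No Maass--Selberg truncation is used; the continuous-spectrum contribution appears not as a ``branch-cut'' trace but simply because $\theta_{\alpha,z_{0}}$ is exactly the ratio of scattering coefficients. Your account of $\tfrac12\delta_{\Gamma}h(0)$ is also off: it is not a half-residue of $E(z,s)$ but comes from the pole of $S_{\alpha,z_{0}}$ at $s=\tfrac12$, which has order~$2$ when $\lambda=\tfrac14$ is an eigenvalue and order~$1$ otherwise (Theorem~\ref{prop}(b)(iii)); the residue bookkeeping in the $\Psi$-integral then produces the stated $\delta_{\Gamma}$.

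The step you underestimate is letting $T\to\infty$. The boundary integrals $\int_{T-\i\sigma}^{T}h(\rho)\,S'_{\alpha,z_{0}}/S_{\alpha,z_{0}}\,d\rho$ must be shown to vanish, and this is the technical heart of Section~6 and~7: one needs a uniform bound $|S_{\alpha,z_{0}}(\tfrac12+\i\rho)|\ll e^{cT_{N}^{2}\log T_{N}}$ on a well-chosen sequence of vertical segments, which in turn requires bounds on $\varphi(s)$ near the critical line (Lemma~\ref{scatbound}), on $E(z_{0},s)$ in short windows (Lemma~\ref{Ebound}), and a careful contour-shift of the continuous part of the spectral expansion of $S_{\alpha,z_{0}}$ (Proposition~\ref{seqbound}). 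This is then combined with a tailor-made test function (Lemma~\ref{testf}) to force $\int|\log|S_{\alpha,z_{0}}||\ll T_{N}^{2+\epsilon}$ and hence kill the boundary term. Your convergence paragraph addresses only the absolute convergence of the diffractive-orbit expansion, which is the easy half; the genuinely hard analytic input---controlling $S_{\alpha,z_{0}}$ across the strip where resonances live---is missing from your plan.
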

\begin{remark}
On the spectral side it does not matter whether we count multiplicities when summing over the eigenvalues. If we count multiplicities, we get cancellation with the degenerate eigenvalues of the Laplacian appearing in the spectrum of $\Delta_{\alpha,z_{0}}$. Please see section 2.3. for details.
\end{remark}

We conclude with a brief overview of the structure of this paper. Section 2 introduces the machinery which we will use throughout the paper such as automorphic functions and self-adjoint extensions of the Laplacian. Section 3 proceeds to introduce the key tool in the proof of the trace formula, the relative zeta function, and states its main properties which we will use later on. Of particular importance here is Theorem 11 which locates the zeros and poles of the relative function which correspond to perturbed and unperturbed eigenvalues and, in the noncompact case, resonances. Section 4 gives the proof of Theorem 1 and illustrates the general strategy of the proof of the trace formula in the case of compact surfaces. Section 5 concentrates on the derivation of the perturbative analogue of the classical non-holomorphic Eisenstein series (Lemma 20) which plays a key role in the spectral theory of singular perturbations of the Laplacian on noncompact hyperbolic surfaces. In particular we prove a functional equation (Corollary 21) for perturbed Eisenstein series. To prove the final trace formula for surfaces with a cusp we require a bound on the continous part of the relative zeta function which we prove in Section 6 (Proposition 24). In Section 7 we use the results of the previous section to give a full proof of Theorem 3. In analogy with the case of the unperturbed Laplacian it is possible to relate the leading term on the RHS of \eqref{trace} to the trace formula for singular perturbations on the sphere. We outline this analogy in section 8. Section 9 uses our previous work on the trace formula to derive an analogue of Selberg's zeta function for singular perturbations.

\section{Self-adjoint extensions on automorphic functions}

\subsection{Automorphic functions}

We define the space of automorphic functions to be
\begin{equation}
\scrA(\Gamma\backslash\HH)=\left\{f:\HH\to\CC\mid\forall\gamma\in\Gamma:f(\gamma z)=f(z)\right\}.
\end{equation}
We denote the space of twice continuously differentiable automorphic functions by 
\begin{equation}
C^{2}(\Gamma\backslash\HH)=\scrA(\Gamma\backslash\HH)\cap C^{2}(\HH).
\end{equation}
Furthermore we define the space of automorphic forms of eigenvalue $\lambda=s(1-s)$ by
\begin{equation}
\scrA_{s}(\Gamma\backslash\HH)=\left\{f\in\scrA(\Gamma\backslash\HH)\mid f\in C^{2}(\Gamma\backslash\HH),\;(\Delta+s(1-s))f=0\right\}.
\end{equation}
and the space of infinitely differentiable automorphic functions as 
\begin{equation}
C^{\infty}(\Gamma\backslash\HH)=\scrA(\Gamma\backslash\HH)\cap C^{\infty}(\HH).
\end{equation}
Let $\scrF$ be a connected fundamental domain for $\Gamma$. Let $f\in\scrA(\Gamma\backslash\HH)$ be a measurable function. We introduce the norm
\begin{equation}
\left\|f\right\|=\left(\int_{\scrF}|f(z)|^{2}d\mu(z)\right)^{1/2}
\end{equation}
and the space of square-integrable functions
\begin{equation}
L^{2}(\Gamma\backslash\HH)=\left\{f\in\scrA(\Gamma\backslash\HH)\mid\left\|f\right\|<+\infty\right\}.
\end{equation}
Let $f,g\in\scrA(\Gamma\backslash\HH)$ be measurable functions. We define an inner product on $L^{2}(\Gamma\backslash\HH)$ by
\begin{equation}
(f,g)=\int_{\scrF}f(z)\overline{g(z)}d\mu(z).
\end{equation}
We define the second Sobolev space on $\Gamma\backslash\HH$ by
\begin{equation}
H^{2}(\Gamma\backslash\HH)=\left\{f\in L^{2}(\Gamma\backslash\HH)\mid\Delta f\in L^{2}(\Gamma\backslash\HH)\right\}.
\end{equation}
If $\Gamma\backslash\HH$ has a cusp, every $f\in C^{\infty}(\Gamma\backslash\HH)$ is periodic in the $x$-variable and can therefore be expanded in a Fourier series
\begin{equation}
f(x+\i y)=\sum_{n\in\ZZ}f_{n}(y)e(nx),\qquad e(x)=\exp(2\pi\i x),
\end{equation}
where
\begin{equation}
f_{n}(y)=\int_{0}^{1}f(x+\i y)e(nx)dx
\end{equation}
and the series converges absolutely. Any $f\in\scrA_{s}(\Gamma\backslash\HH)$ with $f(x+\i y)=o(e^{2\pi y})$ has the asymptotics
\begin{equation}
f(x+\i y)=A_{f}(s)y^{s}+B_{f}(s)y^{1-s}+O(\e^{-2\pi y}), \qquad y\to\infty,
\end{equation}
in the cusp, where $A_{f}(s),B_{f}(s)\in\CC$. Let $f_{0}(y)=A_{f}y^{s}+B_{f}y^{1-s}$. We define the space of cuspidal functions to be
\begin{equation}
C^{\infty}_{cusp}(\Gamma\backslash\HH)=\left\{f\in C^{\infty}\left(\Gamma\backslash\HH\right)|f_{0}=0\right\}.
\end{equation}

An important example of an automorphic function is the automorphic Green's function $G^{\Gamma}_{s}(\cdot,z_{0})$ which is defined by the method of images for $\Re s>1$
\begin{equation}
G^{\Gamma}_{s}(z,z_{0})=\sum_{\gamma\in\Gamma}G_{s}(z,\gamma z_{0})
\end{equation}
where $G_{s}(\cdot,w)$ denotes the Green's function on $\HH$. It solves the equation 
\begin{equation}
(\Delta+s(1-s))G^{\Gamma}_{s}(\cdot,z_{0})=\delta_{z_{0}},
\end{equation}
and has a meromorphic continuation to all of $\CC$ (cf. \cite{Hj3}). 

If $\Gamma\backslash\HH$ is compact, we have
\begin{equation}
G^{\Gamma}_{1-s}(z,z_{0})=G^{\Gamma}_{s}(z,z_{0})
\end{equation}
which follows from the spectral expansion
\begin{equation}
G^{\Gamma}_{s}(z,z_{0})=\sum_{j=-M}^{\infty}\frac{\varphi_{j}(z)\overline{\varphi_{j}(z_{0})}}{\lambda_{j}-s(1-s)}.
\end{equation}
If $\Gamma\backslash\HH$ has a cusp, it satisfies the functional equation (cf. (7.19), p. 105, \cite{Iw})
\begin{equation}
G^{\Gamma}_{1-s}(z,z_{0})=G^{\Gamma}_{s}(z,z_{0})-\frac{E(z_{0},s)E(z,1-s)}{1-2s}.
\end{equation}
In the cusp $G^{\Gamma}_{s}(z,z_{0})$ has the asymptotics
\begin{equation}
G^{\Gamma}_{s}(z,z_{0})=\frac{E(z_{0},s)}{1-2s}y^{1-s}+O(\e^{-2\pi y}),\qquad y\to\infty,
\end{equation}
which shows straightaway that $G^{\Gamma}_{s}(\cdot,z_{0})\in L^{2}(\Gamma\backslash\HH)$ if $\Re s>\tfrac{1}{2}$ or $E(z_{0},s)=0$.

\subsection{Self-adjoint extensions}

Let $z_{0}\in\Gamma\backslash\HH$. In the following it will be clear from the context when we mean by $z_{0}$ a $\Gamma$-orbit or a representative in $\HH$. We consider the one-parameter family of rank-one perturbations $\Delta_{\alpha,z_{0}}=\Delta+\alpha\delta_{z_{0}}(\delta_{z_{0}},\cdot)$, $\alpha\in\RR$, where $\delta_{z_{0}}$ denotes the Dirac functional at $z_{0}$. It is known (cf. \cite{CdV}, section 1 or \cite{Al}, section 1.3.2) that $\lbrace\Delta_{\alpha,z_{0}}\rbrace_{\alpha\in\RR}$ can be realised as the family of self-adjoint extensions of $\Delta$ from the space 
\begin{equation}
D_{0}=\lbrace f\in H^{2}(\Gamma\backslash\HH)\mid f(z_{0})=0\rbrace. 
\end{equation}
It can easily be checked that $\Delta:D_{0}\to L^{2}(\Gamma\backslash\HH)$ is symmetric with respect to $(\cdot,\cdot)$. Also $\Delta_{\alpha,z_{0}}|_{D_{0}}=\Delta|_{D_{0}}$ for any $\alpha\in\RR$. It is well-known that $\Delta|_{\D_{0}}$ has deficiency indices $(1,1)$. The deficiency elements are the automorphic Green's function $\G^{\Gamma}_{\,t}(z,z_{0})$ and its conjugate $\G^{\Gamma}_{\,\bar{t}}(z,z_{0})$, where $t(1-t)=\i$ and $\Re t>\tfrac{1}{2}$. $\G^{\Gamma}_{\,t}(\cdot,z_{0})$ uniquely satisfies
\begin{equation}
(\Delta+t(1-t))G^{\Gamma}_{\,t}(\cdot,z_{0})=\delta_{z_{0}},
\end{equation}
and has the asymptotics
\begin{equation}
G^{\Gamma}_{\,t}(z,z_{0})=\frac{m}{2\pi}\log d(z,\gamma z_{0})+O(1),\qquad z\to z_{0},
\end{equation}
where $m=\ord(z_{0})$ if $z_{0}$ corresponds to an elliptic fixed point in $\HH$, and otherwise $m=1$. $G^{\Gamma}_{t}(\cdot,z_{0})$ is automorphic in the sense that for any $z\notin\Gamma z_{0}$
\begin{equation}
G^{\Gamma}_{t}(\gamma z,z_{0})=G^{\Gamma}_{t}(z,z_{0}).
\end{equation}

With respect to the graph inner product $\left\langle\cdot,\cdot\right\rangle=(\cdot,(\Delta\Delta^{*}+1)\cdot)$ we have the direct sum
\begin{equation}
\overline{\D_{0}}\oplus\scrL\lbrace G^{\Gamma}_{\,t}(\cdot,z_{0})\rbrace\oplus\scrL\lbrace G^{\Gamma}_{\,\bar{t}}(\cdot,z_{0})\rbrace
\end{equation}
which forms a subspace of $H^{2}((\Gamma\backslash\HH)\backslash\lbrace z_{0}\rbrace)$. Define the subspace $\D_{\varphi}\subset\overline{D_{0}}\oplus\scrL\lbrace G^{\Gamma}_{\,t}\left(\cdot,z_{0}\right)\rbrace\oplus\scrL\lbrace G^{\Gamma}_{\,\bar{t}}\left(\cdot,z_{0}\right)\rbrace$ by
\begin{equation}
\D_{\varphi}=\left\{g+cG^{\Gamma}_{\,t}(\cdot,z_{0})+c\,\e^{\i\varphi}G^{\Gamma}_{\,\bar{t}}(\cdot,z_{0})|g\in\D_{0}, c\in\CC\right\}
\end{equation}
with $\varphi\in(-\pi,\pi)$. This means any $f\in\D_{\varphi}$ satisfies the boundary condition
\begin{equation}
f(z)=c\left\{\frac{m}{2\pi}\log d(z,z_{0})+\Re k+\i\Im k\tan\frac{\varphi}{2}\right\}+o(1),\qquad z\to z_{0},
\end{equation}
where $c,k\in\CC$ are constants and $k$ depends on $t$. Let us introduce the function
\begin{equation}
A\left(s,t\right)=\frac{1}{2}\lim_{z\to z_{0}}(G^{\Gamma}_{s}(z,z_{0})-G^{\Gamma}_{t}(z,z_{0})).
\end{equation}
which is meromorphic in both variables. The limit clearly exists, since we have removed the logarithmic singularity at $z_{0}$. $\Delta|_{D_{0}}$ admits a one parameter family of self-adjoint extensions $\left\{\Delta_{\varphi}\right\}$, $\varphi\in\left(0,2\pi\right)$. The operator $\Delta_{\varphi}:\D_{\varphi}\to L^{2}(\Gamma\backslash\HH)$ is defined (cf. \cite{Al}, section 1.2.5) by
\begin{equation}
\Delta_{\varphi}f=\Delta g - c\,t(1-t)G^{\Gamma}_{\,t}(z,z_{0})
-c\,\bar{t}(1-\bar{t})\e^{\i\varphi}G^{\Gamma}_{\,\bar{t}}(z,z_{0}).
\end{equation}
We have the following theorem.
\begin{thm}
The operator $\Delta_{\varphi}$ is self-adjoint on the domain $D_{\varphi}$.
\end{thm}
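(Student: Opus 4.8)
The plan is to verify self-adjointness of $\Delta_\varphi$ via the standard von Neumann criterion for symmetric operators with deficiency indices $(1,1)$: since $\Delta|_{D_0}$ is closed, symmetric, and has deficiency indices $(1,1)$ with deficiency subspaces spanned by $G^\Gamma_{\,t}(\cdot,z_0)$ and $G^\Gamma_{\,\bar t}(\cdot,z_0)$, every self-adjoint extension is given by a unitary map between the two one-dimensional deficiency spaces, and $\Delta_\varphi$ is precisely the extension corresponding to the unitary $\e^{\i\varphi}$. Concretely, I would first establish that $\Delta_\varphi$ is symmetric on $D_\varphi$, then show that its deficiency indices have dropped to $(0,0)$, equivalently that $\Ran(\Delta_\varphi \pm \i) = L^2(\Gamma\backslash\HH)$.

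First I would verify symmetry directly. Take $f_1 = g_1 + c_1 G^\Gamma_{\,t} + c_1 \e^{\i\varphi} G^\Gamma_{\,\bar t}$ and $f_2 = g_2 + c_2 G^\Gamma_{\,t} + c_2 \e^{\i\varphi} G^\Gamma_{\,\bar t}$ in $D_\varphi$, and compute $(\Delta_\varphi f_1, f_2) - (f_1, \Delta_\varphi f_2)$. Using that $\Delta$ is symmetric on $D_0$, that $G^\Gamma_{\,t}$ solves $(\Delta + t(1-t))G^\Gamma_{\,t}(\cdot,z_0) = \delta_{z_0}$, and the defining formula for $\Delta_\varphi$, the cross terms reduce via Green's identity on $(\Gamma\backslash\HH)\setminus\{z_0\}$: the only boundary contributions come from a small circle around $z_0$, where the logarithmic asymptotics $G^\Gamma_{\,t}(z,z_0) = \frac{m}{2\pi}\log d(z,z_0) + O(1)$ produce finite residue-type terms. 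The key cancellation is that the $\log$-singular parts of all functions in $D_\varphi$ have the same coefficient structure $c\cdot\frac{m}{2\pi}\log d(z,z_0)$, so the bilinear boundary form depends only on $c_1, c_2$ and the regularized constants; the phase $\e^{\i\varphi}$ is exactly what makes the resulting expression Hermitian-symmetric. This is where the function $A(s,t)$ and the boundary condition involving $\Re k + \i\Im k\tan\frac{\varphi}{2}$ enter: they encode that the "Lagrangian" subspace picked out by $\varphi$ is isotropic for the boundary symplectic form.

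Next I would show the extension is maximal, i.e. that $D_\varphi$ cannot be enlarged within $\overline{D_0}\oplus\scrL\{G^\Gamma_{\,t}\}\oplus\scrL\{G^\Gamma_{\,\bar t}\}$ while preserving symmetry. Since the full domain of $\Delta^*$ is this three-term direct sum (deficiency indices $(1,1)$), and $D_\varphi$ has codimension one in it (it is $D_0$ plus a one-complex-dimensional space), any symmetric extension must equal $D_\varphi$ for some choice of unitary parameter. The map $cG^\Gamma_{\,t} \mapsto c\,\e^{\i\varphi}G^\Gamma_{\,\bar t}$ is a genuine unitary $U_\varphi: \ker(\Delta^* - t(1-t)) \to \ker(\Delta^* - \bar t(1-\bar t))$ between the deficiency spaces (both one-dimensional, and $|\e^{\i\varphi}| = 1$ ensures the graph-norm isometry after normalizing $\|G^\Gamma_{\,t}\| = \|G^\Gamma_{\,\bar t}\|$, which holds by conjugation symmetry since $\bar t = 1 - $ conjugate-partner and the Green's function norms match). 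By von Neumann's theorem, the extension $\Delta_{U_\varphi}$ with domain $D_0 \oplus \{cG^\Gamma_{\,t} + cU_\varphi G^\Gamma_{\,t}\}$ is self-adjoint, and one checks this domain is exactly $D_\varphi$ and the operator action matches the displayed formula for $\Delta_\varphi$ (using $(\Delta + t(1-t))G^\Gamma_{\,t} = \delta_{z_0}$ cancels against the $\delta_{z_0}$ produced by $\Delta$ acting distributionally on the singular part).

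The main obstacle I expect is the careful bookkeeping of the boundary term at $z_0$ in the symmetry computation — specifically, extracting the finite part correctly from Green's identity when both arguments have logarithmic singularities, and checking that the regularized constant $k$ (defined through $A(s,t)$) enters with the precise coefficient $\tan\frac{\varphi}{2}$ appearing in the boundary condition, so that the boundary symplectic form vanishes identically on $D_\varphi \times D_\varphi$. A secondary technical point is confirming $G^\Gamma_{\,t}(\cdot,z_0) \in H^2((\Gamma\backslash\HH)\setminus\{z_0\})$ and that the direct sum decomposition relative to the graph inner product is orthogonal as claimed, which is needed to know $D_\varphi$ really does have the right codimension; but this is standard deficiency-space theory and is essentially asserted in the setup. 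Once the boundary form is pinned down, the rest follows formally from von Neumann's extension theory.
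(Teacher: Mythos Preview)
Your proposal is correct and follows the same strategy as the paper: verify symmetry of $\Delta_\varphi$ on $D_\varphi$ by direct computation, then invoke von Neumann's extension theorem (the paper cites Theorem~X.2 in Reed--Simon) to conclude self-adjointness.

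However, the ``main obstacle'' you anticipate --- extracting boundary terms at $z_0$ from Green's identity when both arguments carry logarithmic singularities --- does not actually arise, and the paper's computation is cleaner than what you sketch. The key point is that $\Delta_\varphi$ acts on the deficiency elements \emph{by definition} as multiplication by $-t(1-t)$ and $-\bar t(1-\bar t)$, not by applying the distributional Laplacian. Hence in $(\Delta_\varphi f_1, f_2) - (f_1, \Delta_\varphi f_2)$ the cross terms between deficiency elements are just ordinary $L^2$ inner products such as $(G^\Gamma_t, G^\Gamma_{\bar t})$ and $\|G^\Gamma_t\|^2 = \|G^\Gamma_{\bar t}\|^2$, which cancel algebraically once you impose $b_j = a_j \e^{\i\varphi}$. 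The only place the distributional equation $(\Delta+\i)G^\Gamma_t = \delta_{z_0}$ enters is in mixed terms like $(\Delta g_1, G^\Gamma_t)$ versus $(g_1, -\i G^\Gamma_t)$, where $g_1 \in D_0$ is regular and $g_1(z_0)=0$ kills the $\delta_{z_0}$ contribution; integration by parts is then the ordinary one with boundary terms vanishing by automorphicity. No small-circle analysis, no regularized constants, and no appearance of $A(s,t)$ or $\tan\tfrac{\varphi}{2}$ is needed for the symmetry check itself.
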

\begin{proof}
Let $f_{1},f_{2}\in D_{\varphi}$. Then we have the unique representations
\begin{equation}
\begin{split}
&f_{1}=g_{1}+a_{1}G^{\Gamma}_{t}(\cdot,z_{0})+b_{1}G^{\Gamma}_{\bar{t}}(\cdot,z_{0}),
\qquad \text{where}\;b_{1}=a_{1}\e^{\i\varphi}\;\text{and}\;g_{1}(z_{0})=0,\\
&f_{2}=g_{2}+a_{2}G^{\Gamma}_{t}(\cdot,z_{0})+b_{2}G^{\Gamma}_{\bar{t}}(\cdot,z_{0}),
\qquad \text{where}\;b_{2}=a_{2}\e^{\i\varphi}\text{and}\;g_{2}(z_{0})=0.
\end{split}
\end{equation}
We make the calculation
\begin{equation}
\begin{split}
&(\Delta_{\varphi}f_{1},f_{2})-(f_{1},\Delta_{\varphi}f_{2})\\
=\;&(\Delta_{\varphi}(g_{1}+a_{1}G^{\Gamma}_{t}(\cdot,z_{0})+b_{1}G^{\Gamma}_{\bar{t}}(\cdot,z_{0})),g_{2}+a_{2}G^{\Gamma}_{t}(\cdot,z_{0})+b_{2}G^{\Gamma}_{\bar{t}}(\cdot,z_{0}))\\
&-(g_{1}+a_{1}G^{\Gamma}_{t}(\cdot,z_{0})+b_{1}G^{\Gamma}_{\bar{t}}(\cdot,z_{0}),
\Delta_{\varphi}(g_{2}+a_{2}G^{\Gamma}_{t}(\cdot,z_{0})+b_{2}G^{\Gamma}_{\bar{t}}(\cdot,z_{0})))\\
=\;&(\Delta g_{1}-\i a_{1}G^{\Gamma}_{t}(\cdot,z_{0})+\i b_{1}G^{\Gamma}_{\bar{t}}(\cdot,z_{0}) ,g_{2}+a_{2}G^{\Gamma}_{t}(\cdot,z_{0})+b_{2}G^{\Gamma}_{\bar{t}}(\cdot,z_{0}))\\
&-(g_{1}+a_{1}G^{\Gamma}_{t}(\cdot,z_{0})+b_{1}G^{\Gamma}_{\bar{t}}(\cdot,z_{0}),
\Delta g_{2}-\i a_{2}G^{\Gamma}_{t}(\cdot,z_{0})+\i b_{2}G^{\Gamma}_{\bar{t}}(\cdot,z_{0})).
\end{split}
\end{equation}
We expand the last line to obtain
\begin{equation}
\begin{split}
&(\Delta_{\varphi}f_{1},f_{2})-(f_{1},\Delta_{\varphi}f_{2})\\
=\;&(\Delta g_{1},g_{2})+a_{1}(-\i G^{\Gamma}_{t}(\cdot,z_{0}),g_{2})+ b_{1}(\i G^{\Gamma}_{\bar{t}}(\cdot,z_{0}),g_{2})\\
&+\overline{a_{2}}(\Delta g_{1},G^{\Gamma}_{t}(\cdot,z_{0}))
-\i a_{1}\overline{a_{2}}\left\|G^{\Gamma}_{t}(\cdot,z_{0})\right\|^{2}
+\i b_{1}\overline{a_{2}}(G^{\Gamma}_{\bar{t}}(\cdot,z_{0}),G^{\Gamma}_{t}(\cdot,z_{0}))\\
&+\overline{b_{2}}(\Delta g_{1},G^{\Gamma}_{\bar{t}}(\cdot,z_{0}))
-\i a_{1}\overline{b_{2}}(G^{\Gamma}_{t}(\cdot,z_{0}),G^{\Gamma}_{\bar{t}}(\cdot,z_{0}))
+\i b_{1}\overline{b_{2}}\left\|G_{\bar{t}}(\cdot,z_{0})\right\|^{2}\\
&-(g_{1},\Delta g_{2})-a_{1}(G^{\Gamma}_{t}(\cdot,z_{0}),\Delta g_{2})-b_{1}(G^{\Gamma}_{\bar{t}}(\cdot,z_{0}),\Delta g_{2})\\
&-\overline{a_{2}}(g_{1},-\i G^{\Gamma}_{t}(\cdot,z_{0}))
-\i a_{1}\overline{a_{2}}\left\|G^{\Gamma}_{t}(\cdot,z_{0})\right\|^{2}
-\i b_{1}\overline{a_{2}}(G^{\Gamma}_{\bar{t}}(\cdot,z_{0}),G^{\Gamma}_{t}(\cdot,z_{0}))\\
&-\overline{b_{2}}(g_{1},\i G^{\Gamma}_{\bar{t}}(\cdot,z_{0}))
+\i a_{1}\overline{b_{2}}(G^{\Gamma}_{t}(\cdot,z_{0}),G^{\Gamma}_{\bar{t}}(\cdot,z_{0}))
+\i b_{1}\overline{b_{2}}\left\|G^{\Gamma}_{\bar{t}}(\cdot,z_{0})\right\|^{2}
\end{split}
\end{equation}
Now observe that the property $\overline{G^{\Gamma}_{t}(z,z_{0})}=G^{\Gamma}_{\bar{t}}(z,z_{0})$ implies
\begin{equation}
\left\|G^{\Gamma}_{t}(\cdot,z_{0})\right\|^{2}
=\int_{\Gamma\backslash\HH}|G^{\Gamma}_{t}(z,z_{0})|^{2}d\mu(z)
=\int_{\Gamma\backslash\HH}|G^{\Gamma}_{\bar{t}}(z,z_{0})|^{2}d\mu(z)
=\left\|G^{\Gamma}_{\bar{t}}(\cdot,z_{0})\right\|^{2}.
\end{equation}
Consequently, we obtain
\begin{equation}
\begin{split}
(\Delta_{\varphi}f_{1},f_{2})-(f_{1},\Delta_{\varphi}f_{2})
=\;&(\Delta g_{1},g_{2})-(g_{1},\Delta g_{2})\\
&+a_{1}[(-\i G^{\Gamma}_{t}(\cdot,z_{0}),g_{2})-(G^{\Gamma}_{t}(\cdot,z_{0}),\Delta g_{2})]\\
&+b_{1}[(\i G^{\Gamma}_{\bar{t}}(\cdot,z_{0}),g_{2})-(G^{\Gamma}_{\bar{t}}(\cdot,z_{0}),\Delta g_{2})]\\
&+\overline{a_{2}}[(\Delta g_{1},G^{\Gamma}_{t}(\cdot,z_{0}))-(g_{1},-\i G^{\Gamma}_{t}(\cdot,z_{0}))]\\
&+\overline{b_{2}}[(\Delta g_{1},G^{\Gamma}_{\bar{t}}(\cdot,z_{0}))-(g_{1},\i G^{\Gamma}_{\bar{t}}(\cdot,z_{0}))]\\
&+2\i(b_{1}\overline{b_{2}}-a_{1}\overline{a_{2}})\left\|G^{\Gamma}_{t}(\cdot,z_{0})\right\|^{2}
\end{split}
\end{equation}
To see that the RHS vanishes let us consider the six terms separately. The first term vanishes by symmetricity of $\Delta$ on $D_{0}$. For the second term we use the distributional identity $(\Delta+\i)G^{\Gamma}_{t}(\cdot,z_{0})=\delta_{z_{0}}$ and $g_{2}(z_{0})=0$ to see
\begin{equation}
(-\i G^{\Gamma}_{t}(\cdot,z_{0}),g_{2})
=(\Delta G^{\Gamma}_{t}(\cdot,z_{0}),g_{2})+(\delta_{z_{0}},g_{2})
=(\Delta G^{\Gamma}_{t}(\cdot,z_{0}),g_{2})
=(G^{\Gamma}_{t}(\cdot,z_{0}),\Delta g_{2})
\end{equation}
where we have used integration by parts twice in the last step
\begin{equation}
\begin{split}
(\Delta G^{\Gamma}_{t}(\cdot,z_{0}),g_{2})
=&\int_{\Gamma\backslash\HH}\Delta G^{\Gamma}_{t}(z,z_{0})g_{2}(z)d\mu(z)\\
=&-\int_{\Gamma\backslash\HH}\nabla G^{\Gamma}_{t}(z,z_{0})\cdot\nabla g_{2}(z)d\mu(z)\\
=&(-1)^{2}\int_{\Gamma\backslash\HH}G^{\Gamma}_{t}(z,z_{0})\Delta g_{2}(z)d\mu(z)\\
=&\;(G^{\Gamma}_{t}(\cdot,z_{0}),\Delta g_{2})
\end{split}
\end{equation}
and the boundary terms vanish because of automorphicity. Terms 3-5 vanish analogously. The last term vanishes because of the constraints $b_{1}=a_{1}\e^{\i\varphi}$ and $b_{2}=a_{2}\e^{\i\varphi}$. Thus $\Delta_{\varphi}$ is symmetric on $D_{\varphi}$. It follows from Theorem X.2, p. 140, in \cite{Si} that $\Delta_{\varphi}$ is self-adjoint.
\end{proof}

The operator $\Delta_{\alpha,z_{0}}=\Delta+\alpha\left\langle\delta_{z_{0}},\cdot\right\rangle\delta_{z_{0}}$ is formally realised (cf. \cite{Al}, Thm. 1.3.2, p. 34) as a self-adjoint extension $\Delta_{\varphi(\alpha)}$ of $\Delta|_{D_{0}}$ where $\varphi(\alpha)\in(-\pi,\pi)$ is uniquely determined from the relation
\begin{equation}\label{ext}
-2\i\alpha \,A(t,\bar{t})=\cot\frac{\varphi(\alpha)}{2}
\end{equation}
which is obtained from the relation
\begin{equation}\label{rel}
(\phi,\tilde{\psi})=(-\tfrac{1}{\alpha}+c)b(\psi)
\end{equation}
in Theorem 1.3.2. In fact we have chosen the real constant $c$ to be zero. This choice simply corresponds to fixing a parameterisation of our one-parameter family of self-adjoint extensions. Following the conventions of section 1.2.5 in \cite{Al} we have
\begin{equation}
\tilde{\psi}(z)=\hat{\psi}(z)+\frac{1-\e^{\i\varphi}}{2}(G^{\Gamma}_{t}-G^{\Gamma}_{\bar{t}})(z,z_{0}),
\qquad\hat{\psi}(z_{0})=0,
\end{equation}
and
\begin{equation}
b(\psi)=\frac{1+\e^{\i\varphi}}{2},\qquad \phi=\delta_{z_{0}}.
\end{equation}
Substituting these expressions in \eqref{rel} gives \eqref{ext}.

\subsection{The spectral decomposition in the noncompact case}

If $\Gamma\backslash\HH$ has one cusp, the space $L^{2}(\Gamma\backslash\HH)$ has the orthogonal decomposition (cf. \cite{Iw}, Thm. 7.3., p. 103)
\begin{equation}
L^{2}(\Gamma\backslash\HH)=\overline{E}(\Gamma\backslash\HH)\oplus R(\Gamma\backslash\HH)\oplus \overline{C^{\infty}_{cusp}}(\Gamma\backslash\HH).
\end{equation}
Here $R(\Gamma\backslash\HH)$ is the space spanned by a finite number of residues of the Eisenstein series $\lbrace u_{-j}\rbrace_{j=1}^{M}$ in the interval $[0,1]$. $C^{\infty}_{cusp}(\Gamma\backslash\HH)$ is spanned by an orthonormal basis of cusp forms $\lbrace u_{j}\rbrace_{j\geq0}$ and $E(\Gamma\backslash\HH)$ is the space of incomplete Eisenstein series.

Any $\psi\in L^{2}(\Gamma\backslash\HH)$ has the expansion
\begin{equation}
\psi(z)=\frac{1}{2\pi}\int_{0}^{\infty}a(\rho)E(z,\tfrac{1}{2}+\i\rho)d\rho+\sum_{j\geq-M}a_{j}u_{j}(z),
\end{equation}
where
\begin{equation}
a(\rho)=(\psi,E(\cdot,\tfrac{1}{2}+\i\rho))
\end{equation}
and
\begin{equation}
a_{j}=(\psi,u_{j})
\end{equation}
which converges in the $L^{2}$-sense
\begin{equation}
\left\|\psi\right\|^{2}=\frac{1}{2\pi}\int_{0}^{\infty}|a(\rho)|^{2}d\rho+\sum_{j\geq-M}|a_{j}|^{2}<+\infty
\end{equation}

The spectrum of the Laplacian on a non-compact quotient $\Gamma\backslash\HH$ with a single cusp is therefore given by the union $[\tfrac{1}{4},\infty)\cup\lbrace\lambda_{j}\rbrace_{j=-M}^{-1}\cup\lbrace\lambda_{j}\rbrace_{j=0}^{\infty}$ of the continuous part corresponding to Eisenstein series along the critical line, the residual part and the cuspidal part of the discrete spectrum.

Formally the space $D_{\varphi}$ has the decomposition
\begin{equation}\label{decomp}
D_{\varphi}=\overline{E_{\varphi,z_{0}}}(\Gamma\backslash\HH)\oplus R_{\varphi,z_{0}}(\Gamma\backslash\HH)\oplus \overline{C_{\varphi,z_{0}}}(\Gamma\backslash\HH)
\end{equation}
where $E_{\varphi,z_{0}}(\Gamma\backslash\HH)$ is the perturbative analogue of the space of incomplete Eisenstein series (which we will not discuss in this paper) and $R_{\varphi,z_{0}}(\Gamma\backslash\HH)$ is spanned by a finite number of perturbed residual Maass forms. $C_{\varphi,z_{0}}(\Gamma\backslash\HH)$ decomposes into eigenspaces of cusp forms and 'pseudo' cusp forms which have a logarithmic singularity at $z_{0}$
\begin{equation}
C_{\varphi,z_{0}}(\Gamma\backslash\HH)=\bigoplus_{i}E_{\lambda_{i}}(\Gamma\backslash\HH)\oplus\bigoplus_{j}E_{\lambda^{\alpha}_{j}}(\Gamma\backslash\HH).
\end{equation}
Here $E_{\lambda_{i}}(\Gamma\backslash\HH)$ denotes an eigenspace corresponding to a degenerate eigenvalue of the Laplacian, whereas $E_{\lambda^{\alpha}_{j}}(\Gamma\backslash\HH)$ denotes an eigenspace corresponding to a new eigenvalue. The eigenspace $E_{\lambda_{j}}(\Gamma\backslash\HH)$ has dimension $m-1$ if $m$ is the multiplicity of the eigenvalue $\lambda_{j}$ and there is at least one eigenfunction in $E_{\lambda_{j}}(\Gamma\backslash\HH)$ that doesn't vanish at $z_{0}$. Otherwise it has dimension $m$. The eigenspaces $E_{\lambda^{\alpha}_{j}}(\Gamma\backslash\HH)$ are all of dimension one. Consult sections 2 and 3 in \cite{CdV} for details. In the proof of Theorem \ref{prop} we derive the crucial link between the relative zeta function and the old and new eigenvalues of $\Delta$ and $\Delta_{\alpha,z_{0}}$. Whereas the poles of the relative zeta function on the real and critical lines correspond to the old eigenvalues its zeros, and in particular those characterised by \eqref{speccond}, correspond to new eigenvalues.

The residual spectrum of $\Delta_{\alpha,z_{0}}$ in fact interlaces with the residual spectrum of $\Delta$
\begin{equation}
\lambda^{\alpha}_{-M}<0=\lambda_{-M}<\lambda_{-M+1}^{\alpha}\leq\cdots\leq\lambda^{\alpha}_{-1}\leq\lambda_{-1}<\tfrac{1}{4}
\end{equation}
and the associated eigenfunctions correspond to residues of the perturbed Eisenstein series if $E(z_{0},s^{\alpha}_{j})\neq0$, which is a consequence of the form of the perturbed Eisenstein series which we will determine in section 7 as scattering solutions of $\Delta_{\alpha,z_{0}}$ on $\Gamma\backslash\HH$. The perturbed Eisenstein series turn out to be singular perturbations of the original Eisenstein series and feature a logarithmic singularity at $z_{0}$.

\section{The relative zeta function}

In this section we determine a meromorphic function $S_{\alpha,z_{0}}\left(s\right)$ which contains in form of its zeros and poles all the information about the perturbed and unperturbed discrete spectrum. In section 9 we will see that $S_{\alpha,z_{0}}\left(s\right)$ has an interpretation as the quotient of a perturbed zeta function and Selberg's zeta function. The following is an analogue of Hilbert's formula for the iterated resolvent in terms of Green's functions.
\begin{lem}\label{Hilbert}
Let $\eta=t\left(1-t\right)$, $\lambda=s\left(1-s\right)$. The following identity holds
\begin{equation}
\left(\Delta+\lambda\right)\left(G^{\Gamma}_{s}\left(\cdot,z_{0}\right)-G^{\Gamma}_{t}\left(\cdot,z_{0}\right)\right)
=\left(\eta-\lambda\right)G^{\Gamma}_{t}\left(\cdot,z_{0}\right).
\end{equation}
\end{lem}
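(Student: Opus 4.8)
The identity to prove is
\[
(\Delta+\lambda)\bigl(G^{\Gamma}_{s}(\cdot,z_{0})-G^{\Gamma}_{t}(\cdot,z_{0})\bigr)=(\eta-\lambda)G^{\Gamma}_{t}(\cdot,z_{0}),
\]
and the plan is to compute the left-hand side by applying $(\Delta+\lambda)$ to each term separately, using the defining equations for the two Green's functions. Recall that $G^{\Gamma}_{s}(\cdot,z_{0})$ satisfies $(\Delta+\lambda)G^{\Gamma}_{s}(\cdot,z_{0})=\delta_{z_{0}}$ and $G^{\Gamma}_{t}(\cdot,z_{0})$ satisfies $(\Delta+\eta)G^{\Gamma}_{t}(\cdot,z_{0})=\delta_{z_{0}}$, where $\lambda=s(1-s)$ and $\eta=t(1-t)$. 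For the first term we get $\delta_{z_{0}}$ directly. For the second term, write $(\Delta+\lambda)=(\Delta+\eta)+(\lambda-\eta)$, so that $(\Delta+\lambda)G^{\Gamma}_{t}(\cdot,z_{0})=\delta_{z_{0}}+(\lambda-\eta)G^{\Gamma}_{t}(\cdot,z_{0})$. Subtracting, the two $\delta_{z_{0}}$ contributions cancel and one is left with $-(\lambda-\eta)G^{\Gamma}_{t}(\cdot,z_{0})=(\eta-\lambda)G^{\Gamma}_{t}(\cdot,z_{0})$, which is exactly the claim.

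The one point requiring a little care is the meaning of $(\Delta+\lambda)$ applied to $G^{\Gamma}_{s}(\cdot,z_{0})-G^{\Gamma}_{t}(\cdot,z_{0})$: the individual Green's functions have a logarithmic singularity at $z_{0}$, but their difference does not (this is precisely the observation used to define $A(s,t)$ earlier), so the difference lies in $H^{2}$ near $z_{0}$ and the operator acts in the ordinary $L^{2}$/distributional sense. Thus the subtraction of the two $\delta_{z_{0}}$'s is legitimate: the distributional Laplacian is linear and the singular parts genuinely cancel. Away from $z_{0}$ everything is smooth and the computation is just the algebraic manipulation above; automorphicity is preserved throughout since all objects involved are $\Gamma$-invariant in the appropriate sense.

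There is essentially no obstacle here — the result is a formal consequence of the two resolvent-type equations, and the only thing to remark on is the cancellation of singularities which makes the left-hand side well defined as an element of $L^{2}(\Gamma\backslash\HH)$ rather than merely a distribution. I would state the proof in three lines: apply $(\Delta+\lambda)$ termwise, use $(\Delta+\eta)G^{\Gamma}_{t}(\cdot,z_{0})=\delta_{z_{0}}$ rewritten as $(\Delta+\lambda)G^{\Gamma}_{t}(\cdot,z_{0})=\delta_{z_{0}}+(\lambda-\eta)G^{\Gamma}_{t}(\cdot,z_{0})$, and cancel the delta functions against $(\Delta+\lambda)G^{\Gamma}_{s}(\cdot,z_{0})=\delta_{z_{0}}$.
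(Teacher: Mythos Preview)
Your proof is correct and is essentially identical to the paper's: the paper writes the one-line chain $(\Delta+\lambda)G^{\Gamma}_{s}(\cdot,z_{0})=\delta_{z_{0}}=(\Delta+\eta)G^{\Gamma}_{t}(\cdot,z_{0})=(\Delta+\lambda)G^{\Gamma}_{t}(\cdot,z_{0})+(\eta-\lambda)G^{\Gamma}_{t}(\cdot,z_{0})$ and subtracts, which is exactly your argument. Your additional remark about the cancellation of the logarithmic singularities is a nice clarification that the paper does not make explicit.
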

\begin{proof}
We have the following chain of identities straight from the definition of the Green's function,
\begin{equation}
\left(\Delta+\lambda\right)G^{\Gamma}_{s}(\cdot,z_{0})=\delta_{z_{0}}=(\Delta+\eta)G^{\Gamma}_{t}(\cdot,z_{0})=(\Delta+\lambda)G^{\Gamma}_{t}(\cdot,z_{0})+(\eta-\lambda)G^{\Gamma}_{t}(\cdot,z_{0}).
\end{equation}
\end{proof}

For $\Re s>1$ we introduce the function
\begin{equation}
S_{\alpha,z_{0}}(s)=\alpha^{-1}+\lim_{z\to z_{0}}(G^{\Gamma}_{s}(z,z_{0})-\tfrac{1}{2}G^{\Gamma}_{t}(z,z_{0})-\tfrac{1}{2}G^{\Gamma}_{\bar{t}}(z,z_{0})),
\end{equation}
which is well defined since the sum over the group elements is well known to converge absolutely for $\Re s>1$. Next we state some important properties of $S_{\alpha,z_{0}}(s)$.

\begin{prop}
We have the following functional equations.
\begin{itemize}
\item[(a)] If $\Gamma\backslash\HH$ is compact, then 
\begin{equation}\label{symid}
S_{\alpha,z_{0}}(s)=S_{\alpha,z_{0}}(1-s).\\
\end{equation}
\item[(b)] If $\Gamma\backslash\HH$ has one cusp, then 
\begin{equation}\label{fe}
S_{\alpha,\,z_{0}}(s)=S_{\alpha,\,z_{0}}(1-s)-\frac{E(z_{0},s)E(z_{0},1-s)}{1-2s}.
\end{equation}
\end{itemize}
\end{prop}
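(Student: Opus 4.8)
The plan is to derive both functional equations directly from the definition
\[
S_{\alpha,z_{0}}(s)=\alpha^{-1}+\lim_{z\to z_{0}}\bigl(G^{\Gamma}_{s}(z,z_{0})-\tfrac{1}{2}G^{\Gamma}_{t}(z,z_{0})-\tfrac{1}{2}G^{\Gamma}_{\bar{t}}(z,z_{0})\bigr)
\]
by substituting the known functional equations for the automorphic Green's function recalled in Section 2.1. The terms involving $t$ and $\bar t$ are fixed (they do not depend on $s$), so everything reduces to comparing $\lim_{z\to z_{0}}G^{\Gamma}_{s}(z,z_{0})$ with $\lim_{z\to z_{0}}G^{\Gamma}_{1-s}(z,z_{0})$. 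Note that $S_{\alpha,z_0}$ has been defined only for $\Re s>1$, but both sides of the desired identities are meromorphic in $s$ (using the meromorphic continuation of $G^{\Gamma}_{s}$, hence of $S_{\alpha,z_0}$, and of $E(z_0,\cdot)$), so it suffices to prove the identity on a set with a limit point, after which it extends by analytic continuation; I will make this remark explicitly.

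For part (a), the compact case, the excerpt already records $G^{\Gamma}_{1-s}(z,z_{0})=G^{\Gamma}_{s}(z,z_{0})$ (valid away from $z=z_0$, and in particular the regularised limits at $z_0$ agree once the common logarithmic singularity is subtracted via the $\tfrac12 G^\Gamma_t+\tfrac12 G^\Gamma_{\bar t}$ term). Substituting $s\mapsto 1-s$ into the definition of $S_{\alpha,z_0}$ and invoking this identity term by term gives $S_{\alpha,z_{0}}(1-s)=S_{\alpha,z_{0}}(s)$ immediately. One small point to check is that subtracting off $\tfrac12(G^\Gamma_t+G^\Gamma_{\bar t})$ indeed removes the singularity uniformly in $s$ — this follows since $G^{\Gamma}_{s}(z,z_{0})=\tfrac{m}{2\pi}\log d(z,z_0)+O(1)$ as $z\to z_0$ with the leading coefficient independent of $s$, so $G^\Gamma_s-\tfrac12 G^\Gamma_t-\tfrac12 G^\Gamma_{\bar t}$ has a finite limit.

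For part (b), the one-cusp case, I would substitute $s\mapsto 1-s$ into the definition and use the functional equation
\[
G^{\Gamma}_{1-s}(z,z_{0})=G^{\Gamma}_{s}(z,z_{0})-\frac{E(z_{0},s)E(z,1-s)}{1-2s}
\]
recalled from (7.19) of \cite{Iw}. Thus
\[
S_{\alpha,z_{0}}(1-s)=\alpha^{-1}+\lim_{z\to z_{0}}\Bigl(G^{\Gamma}_{s}(z,z_{0})-\frac{E(z_{0},s)E(z,1-s)}{1-2s}-\tfrac12 G^{\Gamma}_{t}(z,z_{0})-\tfrac12 G^{\Gamma}_{\bar t}(z,z_{0})\Bigr),
\]
and since $E(z,1-s)$ is smooth near $z_0$ with value $E(z_0,1-s)$, the correction term contributes $-E(z_{0},s)E(z_{0},1-s)/(1-2s)$, giving exactly $S_{\alpha,z_{0}}(s)-E(z_{0},s)E(z_{0},1-s)/(1-2s)$. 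Rearranging yields \eqref{fe}.

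The main obstacle — really the only subtlety — is bookkeeping the regularised limits at $z_0$: one must be sure that (i) the logarithmic singularities of $G^\Gamma_s$, $G^\Gamma_t$, $G^\Gamma_{\bar t}$ genuinely cancel so that $S_{\alpha,z_0}(s)$ is well-defined, and (ii) that passing the limit $z\to z_0$ through the meromorphic continuation in $s$ and through the functional equations is legitimate. Both are handled by the fact that the singular part $\tfrac{m}{2\pi}\log d(z,z_0)$ is independent of the spectral parameter and that $E(\cdot,1-s)$ is real-analytic near $z_0$; I would state these two facts and then the computation is a one-line substitution in each case.
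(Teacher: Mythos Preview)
Your proposal is correct and essentially matches the paper's proof. For part (b) your argument is identical to the paper's: subtract $\tfrac12(G^\Gamma_t+G^\Gamma_{\bar t})$ from both sides of the Green's function functional equation and let $z\to z_0$. For part (a) there is only a cosmetic difference: the paper invokes the spectral expansion of $G^\Gamma_s$ to write $S_{\alpha,z_0}(s)$ as a series depending only on $s(1-s)$, whereas you cite the consequence $G^\Gamma_{1-s}=G^\Gamma_s$ (which the paper has already noted follows from that same spectral expansion) and substitute directly---these are the same argument, just phrased at slightly different levels.
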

\begin{proof}
(a) The identity \eqref{symid} follows straightaway from the spectral expansion of the automorphic Green function which gives 
\begin{equation}\label{specexp}
S_{\alpha,\,z_{0}}(s)=\alpha^{-1}+\sum_{j=-M}^{\infty}|\varphi_{j}(z_{0})|^{2}\left\{\frac{1}{\lambda_{j}-s(1-s)}-\Re\left\{\frac{1}{\lambda_{j}-t(1-t)}\right\}\right\}.\\
\end{equation}
(b) The automorphic Green function satisfies the functional equation
\begin{equation}\label{Greenfe}
G^{\Gamma}_{1-s}(z,z_{0})=G^{\Gamma}_{s}(z,z_{0})-\frac{E(z_{0},s)E(z,1-s)}{1-2s}.
\end{equation}
The result is obtained by subtracting $\tfrac{1}{2}(G^{\Gamma}_{t}(z,z_{0})+G^{\Gamma}_{\bar{t}}(z,z_{0}))$ on both sides and taking the limit as $z\to z_{0}$.
\end{proof}

\begin{prop}
$S_{\alpha,z_{0}}(s)$ has a meromorphic continuation to the whole of $\CC$. 
\end{prop}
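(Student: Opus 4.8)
The strategy is to reduce the meromorphic continuation of $S_{\alpha,z_0}(s)$ to the already-known meromorphic continuation of the automorphic Green's function $G^\Gamma_s(\cdot,z_0)$. The difficulty is that $S_{\alpha,z_0}(s)$ is defined via the diagonal limit $z\to z_0$ of a combination of Green's functions, and the individual terms have logarithmic singularities; so the first task is to exhibit the limit as a genuine meromorphic object rather than just a pointwise limit for $\Re s>1$.

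First I would recall from section 2.2 that $G^\Gamma_s(z,z_0)=\tfrac{1}{2\pi}\log d(z,z_0)+O(1)$ (or with the factor $m=\ord(z_0)$ in the elliptic case) as $z\to z_0$, with the $O(1)$ term depending meromorphically on $s$; this is part of the construction in \cite{Hj3}. Since the $\log d(z,z_0)$ singularity is \emph{independent of $s$}, in the combination $G^\Gamma_s(z,z_0)-\tfrac12 G^\Gamma_t(z,z_0)-\tfrac12 G^\Gamma_{\bar t}(z,z_0)$ the logarithmic parts cancel (the coefficients $1-\tfrac12-\tfrac12=0$), so the limit exists and equals the difference of the regularized values. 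Concretely, writing $G^\Gamma_s(z,z_0)=\tfrac{m}{2\pi}\log d(z,z_0)+\widehat G^\Gamma_s(z,z_0)$ with $\widehat G^\Gamma_s$ continuous near $z_0$ and meromorphic in $s$, we get
\begin{equation}
S_{\alpha,z_0}(s)=\alpha^{-1}+\widehat G^\Gamma_s(z_0,z_0)-\tfrac12\widehat G^\Gamma_t(z_0,z_0)-\tfrac12\widehat G^\Gamma_{\bar t}(z_0,z_0).
\end{equation}
The last two terms are constants (fixed $t$ with $t(1-t)=\i$), so everything reduces to showing that $s\mapsto\widehat G^\Gamma_s(z_0,z_0)$ is meromorphic on $\CC$.

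For the compact case this is immediate from the spectral expansion \eqref{specexp}: the series $\sum_j|\varphi_j(z_0)|^2\{(\lambda_j-s(1-s))^{-1}-\Re(\lambda_j-\i)^{-1}\}$ converges locally uniformly on compact subsets of $\CC$ away from the points $s(1-s)=\lambda_j$ (by Weyl's law the $\varphi_j$ are polynomially bounded at $z_0$ and the subtracted term makes the tail converge), hence defines a meromorphic function with simple poles exactly at the spectral parameters. For the noncompact case one argues differently: one uses the known meromorphic continuation of $G^\Gamma_s(z,z_0)$ for $z\neq z_0$ fixed (constructed in \cite{Hj3}, \cite{Iw}), together with uniformity of the remainder in the local expansion at $z_0$. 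Specifically, the difference $G^\Gamma_s(z,z_0)-\tfrac{m}{2\pi}\log d(z,z_0)$ extends continuously to $z=z_0$, and a standard argument — differentiating the defining elliptic equation $(\Delta+s(1-s))G^\Gamma_s(\cdot,z_0)=\delta_{z_0}$ and using elliptic regularity away from the remaining singularity — shows that the regularized diagonal value inherits meromorphy in $s$ from $G^\Gamma_s$ off the diagonal. Alternatively, and perhaps most cleanly, one can invoke the functional equation \eqref{fe}: it expresses $S_{\alpha,z_0}(s)$ in the region $\Re s<0$ in terms of its values in $\Re s>1$ plus the term $-E(z_0,s)E(z_0,1-s)/(1-2s)$, which is meromorphic on $\CC$ because $E(z,s)$ has meromorphic continuation; this continues $S_{\alpha,z_0}$ to $\{\Re s<0\}\cup\{\Re s>1\}$, and the two definitions agree on the overlap region of the strip by analytic continuation of $G^\Gamma_s$ restricted to $z\neq z_0$, which is already meromorphic in the strip. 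Patching gives meromorphy on all of $\CC$.

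\textbf{Main obstacle.} The technical heart is justifying that the \emph{diagonal} limit $\lim_{z\to z_0}$ commutes with meromorphic continuation in $s$ — i.e., that the $O(1)$ remainder in $G^\Gamma_s(z,z_0)=\tfrac{m}{2\pi}\log d(z,z_0)+O(1)$ is locally uniform in $s$ on compacta avoiding the poles. In the compact case the spectral expansion sidesteps this entirely. In the noncompact case one must either appeal carefully to the explicit construction of $G^\Gamma_s$ (which already builds in the uniform local behavior at $z_0$ along with the continuation in $s$, cf. \cite{Hj3}) or bootstrap from the functional equation \eqref{fe} as above. I would present the compact case via \eqref{specexp} in one line and handle the noncompact case by combining the off-diagonal meromorphic continuation of $G^\Gamma_s(\cdot,z_0)$ with \eqref{fe}, noting that together these cover $\CC$ and that the resulting poles lie precisely at the spectral/resonance parameters, which is exactly what Theorem \ref{prop} (referred to as Theorem 11 in the introduction) will exploit.
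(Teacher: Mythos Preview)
Your compact case via \eqref{specexp} matches the paper exactly. In the noncompact case, however, your functional-equation route has a genuine gap: from the original definition you only have $S_{\alpha,z_0}(s)$ for $\Re s>1$, and applying \eqref{fe} then gives it for $\Re s<0$. These two regions do \emph{not} overlap, so there is no ``overlap region of the strip'' on which to patch; the critical strip $0\le\Re s\le 1$ remains uncovered. Your fallback to ``analytic continuation of $G^\Gamma_s$ restricted to $z\neq z_0$, which is already meromorphic in the strip'' does not help either, since the quantity you need is precisely the regularized \emph{diagonal} value, and you have already flagged the diagonal-limit/continuation interchange as the main unresolved obstacle.

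The paper closes this gap by a more explicit argument. It invokes the spectral expansion of $G^\Gamma_s$ (Hejhal, Thm.~3.5), which is valid for $\Re s>\tfrac12$ --- not merely $\Re s>1$ --- and writes $S_{\alpha,z_0}(s)$ there as the discrete sum over cusp/residual forms plus the continuous integral $\int_0^\infty |E(z_0,\tfrac12+\i w)|^2\{\ldots\}\,dw$. To reach the critical line itself one deforms this $w$-contour slightly into the lower half-plane near $w=s-\tfrac12$, picking up the residue $E(z_0,s)E(z_0,1-s)/(1-2s)$; this gives an explicit meromorphic representation valid on all of $\Re s\ge\tfrac12$. Only then does the functional equation \eqref{fe} finish the job, reflecting $\Re s\ge\tfrac12$ onto $\Re s\le\tfrac12$. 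Your first alternative (citing the Hejhal construction for uniform local behavior of $\widehat G^\Gamma_s$ at $z_0$) could in principle be made to work, but as written it is an assertion rather than an argument; the paper's contour-shift is the concrete mechanism that makes the strip accessible.
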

\begin{proof}
If $\Gamma\backslash\HH$ is compact the meromorphic continuation is immediately obtained from \eqref{specexp}. So suppose for the remainder of the proof that $\Gamma\backslash\HH$ has a cusp. The meromorphic continuation can be constructed from the spectral expansion of the automorphic Green function (cf. Thm. 3.5, p. 250 \cite{Hj3}), which is valid for $\Re s>\tfrac{1}{2}$. We obtain the meromorphic continuation to the critical line via contour integration. Let $\Re s\geq\tfrac{1}{2},\;\Im s>0$.
\begin{equation}
\begin{split}
S_{\alpha,z_{0}}(s)=&\;\alpha^{-1}+\sum_{j\geq-M}|\varphi_{j}(z_{0})|^{2}\left(\frac{1}{\lambda_{j}-s(1-s)}-\Re\left[\frac{1}{\lambda_{j}-t(1-t)}\right]\right)\\
&+\delta(s)\frac{E(z_{0},s)E(z_{0},1-s)}{1-2s}\\
&+\frac{1}{2\pi}\int_{\Gamma(s)}E(z_{0},\tfrac{1}{2}+\i w)E(z_{0},\tfrac{1}{2}-\i w)\left(\frac{1}{\tfrac{1}{4}+w^{2}-s(1-s)}-\Re\left[\frac{1}{\tfrac{1}{4}+w^{2}-t(1-t)}\right]\right)dw
\end{split}
\end{equation}
where $\Gamma(s):\RR_{+}\to\CC_{-}$ is defined by
\begin{equation}
\Gamma(s)(u)=
\begin{cases}
u-\i\epsilon(u),\qquad\text{if}\;\Re s-\tfrac{1}{2}<\epsilon(\Im s)\\
\\
u,\qquad\text{otherwise}
\end{cases}
\end{equation}
and 
\begin{equation}
\delta(s)=
\begin{cases}
1,\qquad\text{if}\;\Re s-\tfrac{1}{2}<\epsilon(\Im s)\\
\\
0,\qquad\text{otherwise}
\end{cases}
\end{equation}
where $\epsilon:\RR_{+}\to\RR_{+}$, $\lim_{u\to\infty}\epsilon(u)=0$ is a smooth function such that the region $\tfrac{1}{2}\leq\Re s\leq\epsilon(\Im s)$ contains no poles of the Eisenstein series. We obtain the meromorphic continuation to the full complex plane from the functional equation \eqref{fe}.
\end{proof}

In the previous section we saw that the discrete spectrum of $\Delta_{\alpha,z_{0}}$ consists of two parts: degenerate eigenvalues of the Laplacian (if any) that are inherited by $\Delta_{\alpha,z_{0}}$, and new eigenvalues. We refer to the latter part of the discrete spectrum of $\Delta_{\alpha,z_{0}}$ as the new part. $S_{\alpha,z_{0}}(s)$ contains information about these eigenvalues in form of its zeros.
\begin{prop}\label{eigencond}
We have the following spectral interpretation of zeros of $S_{\alpha,z_{0}}(s)$.
\begin{itemize}
\item[(a)] If $\Gamma\backslash\HH$ is compact, then
$\lambda=s(1-s)$ is in the new part of the discrete spectrum of $\Delta_{\alpha,z_{0}}$ if, and only if, $S_{\alpha,z_{0}}(s)=0$.\\
\item[(b)] If $\Gamma\backslash\HH$ has one cusp, then
$\lambda=s(1-s)$ is in the new part of the discrete spectrum of $\Delta_{\alpha,z_{0}}$ if, and only if, $S_{\alpha,z_{0}}(s)=0$ and $\Re s\geq\tfrac{1}{2}$. Zeros of $S_{\alpha,z_{0}}(s)$ in $\Re s<\tfrac{1}{2}$ correspond to resonances. 
\end{itemize}
The corresponding eigenfunctions are given by automorphic Green functions $G^{\Gamma}_{s}(\cdot,z_{0})$.
\end{prop}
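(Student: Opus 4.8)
The plan is to produce the eigenfunction attached to a prospective new eigenvalue $\lambda=s(1-s)$ explicitly as a scalar multiple of the automorphic Green's function $G^{\Gamma}_{s}(\cdot,z_{0})$, and then to recognise the condition $G^{\Gamma}_{s}(\cdot,z_{0})\in D_{\varphi(\alpha)}$ as the single scalar equation $S_{\alpha,z_{0}}(s)=0$. Suppose first that $f\in D_{\varphi}$, $\varphi=\varphi(\alpha)$, satisfies $(\Delta_{\varphi}+s(1-s))f=0$ and belongs to the new part, so that in its unique representation $f=g+cG^{\Gamma}_{\,t}(\cdot,z_{0})+c\,\e^{\i\varphi}G^{\Gamma}_{\,\bar t}(\cdot,z_{0})$, $g\in\overline{D_{0}}$, we have $c\ne0$. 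Then $f$ carries a genuine logarithmic singularity at $z_{0}$ (leading coefficient a nonzero multiple of $c$) and solves $(\Delta+s(1-s))f=0$ away from $z_{0}$; comparing with $(\Delta+s(1-s))G^{\Gamma}_{s}(\cdot,z_{0})=\delta_{z_{0}}$ and the identical type of singularity, the difference $f-c'\,G^{\Gamma}_{s}(\cdot,z_{0})$, with $c'\ne0$ chosen to cancel the $\delta_{z_{0}}$, is a smooth $L^{2}$ solution of $(\Delta+s(1-s))u=0$ on all of $\Gamma\backslash\HH$, hence an $L^{2}$ eigenfunction of the unperturbed Laplacian. Since in the new part $s(1-s)$ lies outside the unperturbed discrete spectrum --- equivalently $S_{\alpha,z_{0}}$, which by \eqref{specexp} (and its noncompact analogue) has poles exactly at the old eigenvalues, has no pole at $s$ --- this forces $u\equiv0$, so $f=c'G^{\Gamma}_{s}(\cdot,z_{0})$. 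Consequently $\lambda=s(1-s)$ lies in the new part if and only if $G^{\Gamma}_{s}(\cdot,z_{0})\in D_{\varphi(\alpha)}$ (and, in the noncompact case, is square-integrable; see below).

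To test the membership $G^{\Gamma}_{s}(\cdot,z_{0})\in D_{\varphi}$ I would decompose $G^{\Gamma}_{s}(\cdot,z_{0})=g+cG^{\Gamma}_{\,t}(\cdot,z_{0})+c'G^{\Gamma}_{\,\bar t}(\cdot,z_{0})$ inside $\overline{D_{0}}\oplus\scrL\lbrace G^{\Gamma}_{\,t}(\cdot,z_{0})\rbrace\oplus\scrL\lbrace G^{\Gamma}_{\,\bar t}(\cdot,z_{0})\rbrace$. Matching the logarithmic coefficients at $z_{0}$ (all three Green's functions have the same leading log term) gives $c+c'=1$, and membership in $D_{\varphi}$ imposes $c'=c\,\e^{\i\varphi}$, so $c=(1+\e^{\i\varphi})^{-1}$; the remaining requirement is $g(z_{0})=0$. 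Letting $z\to z_{0}$ in $g=G^{\Gamma}_{s}-cG^{\Gamma}_{\,t}-c'G^{\Gamma}_{\,\bar t}$ the logarithms cancel, and rewriting the resulting finite limit in terms of $A(s,t)=\tfrac12\lim_{z\to z_{0}}(G^{\Gamma}_{s}(z,z_{0})-G^{\Gamma}_{t}(z,z_{0}))$ and its $s\leftrightarrow t$, $t\leftrightarrow\bar t$ variants yields a linear relation, with coefficient $\tan(\varphi/2)$, between $\lim_{z\to z_{0}}\bigl(G^{\Gamma}_{s}-\tfrac12 G^{\Gamma}_{\,t}-\tfrac12 G^{\Gamma}_{\,\bar t}\bigr)$ and $A(t,\bar t)$; substituting the defining relation \eqref{ext} between $\varphi(\alpha)$ and $\alpha$ collapses this relation precisely to $\alpha^{-1}+\lim_{z\to z_{0}}\bigl(G^{\Gamma}_{s}-\tfrac12 G^{\Gamma}_{\,t}-\tfrac12 G^{\Gamma}_{\,\bar t}\bigr)=0$, i.e. $S_{\alpha,z_{0}}(s)=0$. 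It then only remains to verify that $G^{\Gamma}_{s}(\cdot,z_{0})$ really is an eigenfunction: from the formula $\Delta_{\varphi}f=\Delta g-c\,t(1-t)G^{\Gamma}_{\,t}-c\,\bar t(1-\bar t)\e^{\i\varphi}G^{\Gamma}_{\,\bar t}$, Lemma \ref{Hilbert}, the defining equations for the Green's functions, and $t(1-t)=\i$, one finds $\Delta_{\varphi}G^{\Gamma}_{s}(\cdot,z_{0})=-s(1-s)G^{\Gamma}_{s}(\cdot,z_{0})$, the distributional $\delta_{z_{0}}$ contributions cancelling because $c(1+\e^{\i\varphi})=1$. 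In the compact case $S_{\alpha,z_{0}}(s)=0$ already implies $s(1-s)$ is not an old eigenvalue (no pole of $S_{\alpha,z_{0}}$ there), so $G^{\Gamma}_{s}(\cdot,z_{0})\in L^{2}(\Gamma\backslash\HH)$ automatically and part (a) follows.

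In the noncompact case the only additional ingredient is the eigenvalue/resonance dichotomy. By the cusp asymptotics $G^{\Gamma}_{s}(z,z_{0})=\tfrac{E(z_{0},s)}{1-2s}y^{1-s}+O(\e^{-2\pi y})$, the function $G^{\Gamma}_{s}(\cdot,z_{0})$ is square-integrable precisely when $\Re s>\tfrac12$ or $E(z_{0},s)=0$; hence the argument above yields a genuine $L^{2}$ eigenfunction of $\Delta_{\alpha,z_{0}}$, and a new point of its discrete spectrum, exactly at the zeros of $S_{\alpha,z_{0}}$ with $\Re s\ge\tfrac12$ (the line $\Re s=\tfrac12$ being treated as the limiting case approached from $\Re s>\tfrac12$). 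For a zero in $\Re s<\tfrac12$ with $E(z_{0},s)\ne0$ the function $G^{\Gamma}_{s}(\cdot,z_{0})$ is no longer square-integrable; to see that it corresponds to a resonance I would invoke Krein's rank-one resolvent formula for $\Delta_{\alpha,z_{0}}$, in which the rank-one correction carries the scalar factor $S_{\alpha,z_{0}}(s)^{-1}$, and continue the resolvent meromorphically across the continuous spectrum using the meromorphic continuations of $S_{\alpha,z_{0}}$ and of $G^{\Gamma}_{s}(\cdot,z_{0})$ established earlier; its poles --- the resonances, by definition --- then sit exactly at the zeros of $S_{\alpha,z_{0}}$. I expect the genuinely delicate points to be the constant bookkeeping in the expansion at $z_{0}$, in particular pinning down the normalisation of the deficiency elements $G^{\Gamma}_{\,t}(\cdot,z_{0}),G^{\Gamma}_{\,\bar t}(\cdot,z_{0})$ so that the boundary condition defining $D_{\varphi}$ together with \eqref{ext} combine to give $S_{\alpha,z_{0}}(s)=0$ on the nose, and, in the noncompact setting, the clean separation of eigenvalues from resonances across $\Re s=\tfrac12$.
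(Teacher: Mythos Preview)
Your approach is correct and essentially matches the paper's. The one difference worth noting is in the forward direction: you subtract $c'G^{\Gamma}_{s}(\cdot,z_{0})$ from $f$ and argue the remainder is a smooth $L^{2}$ unperturbed eigenfunction (hence zero), whereas the paper applies the resolvent $(\Delta+\lambda)^{-1}$ to the decomposed eigenvalue equation and invokes Lemma~\ref{Hilbert}; both routes yield $f=c(1+\e^{\i\varphi})G^{\Gamma}_{s}(\cdot,z_{0})$, after which the membership test in $D_{\varphi}$ reduces to $S_{\alpha,z_{0}}(s)=0$ exactly as you outline. For $\Re s=\tfrac{1}{2}$ the paper does not argue by a limiting procedure but forward-references (to the proof of Theorem~\ref{prop}) the fact that a zero of $S_{\alpha,z_{0}}$ on the critical line forces $E(z_{0},s)=0$, so that $G^{\Gamma}_{s}(\cdot,z_{0})$ is genuinely $L^{2}$; and the resonance claim for $\Re s<\tfrac{1}{2}$ is stated essentially by definition (the candidate eigenfunction is not square-integrable) rather than via Krein's resolvent formula.
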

\begin{proof}
(a) and (b). Let $\lambda=s(1-s)$ and $\eta=t(1-t)$. Assume that $f_{s}\in\D_{\varphi(\alpha)}\subset L^{2}(\Gamma\backslash\HH)$ is an eigenfunction of $\Delta_{\alpha,z_{0}}$ with eigenvalue $\lambda$ and that $\lambda$ does not lie in the discrete spectrum of $\Delta$. By definition
\begin{equation}
(\Delta_{\alpha,z_{0}}+\lambda)f_{s}=0.
\end{equation}
We may write equivalently, using the decomposition of $D_{\varphi\left(\alpha\right)}$,
\begin{equation}
(\Delta+\lambda)g+c(\lambda-\eta)G^{\Gamma}_{t}(\cdot,z_{0})+c\e^{\i\varphi}(\lambda-\bar{\eta})G^{\Gamma}_{\bar{t}}(\cdot,z_{0})=0.
\end{equation}
Applying the resolvent on both sides we get
\begin{equation}
g_{s}+c\frac{\lambda-\eta}{\Delta+\lambda}G_{t}^{\Gamma}(\cdot,z_{0})+c\e^{\i\varphi}\frac{\lambda-\bar{\eta}}{\Delta+\lambda}G_{t}^{\Gamma}(\cdot,z_{0})=0,
\end{equation}
and using Lemma \ref{Hilbert} we rewrite this as
\begin{equation}
g_{s}+c(G^{\Gamma}_{t}(\cdot,z_{0})-G^{\Gamma}_{s}(\cdot,z_{0}))
+c\e^{\i\varphi}(G^{\Gamma}_{\bar{t}}(\cdot,z_{0})-G^{\Gamma}_{s}(\cdot,z_{0}))=0.
\end{equation}
We take the limit as $z\to z_{0}$ and obtain
\begin{equation}
cA(s,t)+c\e^{\i\varphi}A(s,\bar{t})=0.
\end{equation}
At this point we can divide by $c$ since $c\neq0$. To see this suppose the contrary. It follows that $f=g\in D_{0}$. Therefore $0=(\Delta_{\alpha,z_{0}}+\lambda)f_{s}=(\Delta+\lambda)f_{s}$ which contradicts the assumption that $\lambda$ does not lie in the discrete spectrum of $\Delta$. After dividing we have
\begin{equation}\label{altform}
A(s,t)+\e^{\i\varphi}A(s,\bar{t})=0,
\end{equation}
which we rewrite as
\begin{equation}\label{speceqn}
\lim_{z\to z_{0}}(G^{\Gamma}_{s}-\tfrac{1}{2}\lbrace G^{\Gamma}_{t}+G^{\Gamma}_{\bar{t}}\rbrace)(z,z_{0}) -\i\tan\frac{\varphi}{2}A(t,\bar{t})=0.
\end{equation}
Finally, using \eqref{ext},
\begin{equation}
\lim_{z\to z_{0}}(G^{\Gamma}_{s}-\tfrac{1}{2}\lbrace G^{\Gamma}_{t}+G^{\Gamma}_{\bar{t}}\rbrace)(z,z_{0})+\alpha^{-1}=0.
\end{equation}

Let us now assume that $S_{\alpha,z_{0}}(s)=0$ and, if $\Gamma\backslash\HH$ has a cusp, $\Re s\geq\tfrac{1}{2}$. We claim that $G^{\Gamma}_{s}(\cdot,z_{0})$ is an eigenfunction of $\Delta_{\alpha,z_{0}}$. We have the decomposition
\begin{equation}
G^{\Gamma}_{s}(z,z_{0})
=\frac{1}{1+\e^{\,\i\varphi(\alpha)}}\left\{S_{\,\alpha,\,z_{0}}(z,s)+G^{\Gamma}_{t}(z,z_{0})+\e^{\,\i\varphi(\alpha)}G^{\Gamma}_{\bar{t}}(z,z_{0})\right\},
\end{equation}
where we have introduced
\begin{equation}
S_{\alpha,z_{0}}(z,s)=(G^{\Gamma}_{s}-G^{\Gamma}_{t})(z,z_{0})+\e^{\,\i\varphi(\alpha)}(G^{\Gamma}_{s}-G^{\Gamma}_{\bar{t}})(z,z_{0}).
\end{equation}
We see from \eqref{altform} and \eqref{speceqn} that $\lim_{z\to z_{0}}S_{\alpha,z_{0}}(z,s)=S_{\alpha,z_{0}}(s)=0$. So $G^{\Gamma}_{s}(\cdot,z_{0})\in D_{\varphi(\alpha)}$. In the above decomposition we multiply through by $1+\e^{\,\i\varphi(\alpha)}$ and by definition of $\Delta_{\alpha,z_{0}}$ obtain
\begin{equation}
(1+\e^{\,\i\varphi(\alpha)})(\Delta_{\alpha,z_{0}}+\lambda)G^{\Gamma}_{s}(\cdot,z_{0})
=(\Delta+\lambda)S_{\alpha,z_{0}}(\cdot,s)+(\lambda-\eta)G^{\Gamma}_{t}(\cdot,z_{0})+\e^{\i\varphi(\alpha)}(\lambda-\bar{\eta})G^{\Gamma}_{t}(\cdot,z_{0}).
\end{equation}
We apply Lemma \ref{Hilbert} to see
\begin{equation}
(\Delta+\lambda)S_{\alpha,z_{0}}(\cdot,s)=(\eta-\lambda)G^{\Gamma}_{t}(\cdot,z_{0})+\e^{\i\varphi(\alpha)}(\bar{\eta}-\lambda)G^{\Gamma}_{t}(\cdot,z_{0})
\end{equation}
which implies
\begin{equation}
(1+\e^{\,\i\varphi(\alpha)})(\Delta_{\alpha,z_{0}}+\lambda)G^{\Gamma}_{s}(\cdot,z_{0})=0.
\end{equation}
It follows $(\Delta_{\alpha,z_{0}}+\lambda)G^{\Gamma}_{s}(\cdot,z_{0})=0$ since $1+\e^{\,\i\varphi(\alpha)}\neq0$.

(b) If $\Gamma\backslash\HH$ has one cusp, then the zeroth Fourier coefficient of $G^{\Gamma}_{s}(\cdot,z_{0})$ is given by $(1-2s)^{-1}E(z_{0},s)y^{1-s}$ and we see straightaway that $s>\tfrac{1}{2}$ implies $G^{\Gamma}_{s}(\cdot,z_{0})\in L^{2}(\Gamma\backslash\HH)$. We shall see in the proof of Lemma \ref{prop} that for $\Re s=\tfrac{1}{2}$, $S_{\alpha,z_{0}}(s)=0$ implies $E(z_{0},s)=0$, so $G^{\Gamma}_{s}(\cdot,z_{0})\in L^{2}(\Gamma\backslash\HH)$ since it is a cusp form of $\Delta_{\alpha,z_{0}}$. It also follows for $\Re s=\tfrac{1}{2}$ that $G^{\Gamma}_{s}(\cdot,z_{0})=G^{\Gamma}_{1-s}(\cdot,z_{0})$ by the functional equation $G^{\Gamma}_{s}(\cdot,z_{0})-G^{\Gamma}_{1-s}(\cdot,z_{0})=(1-2s)^{-1}(E(z_{0},s)E(z,1-s))$. If $\Re s<\tfrac{1}{2}$ it follows from the form of the zeroth Fourier coefficient of $G^{\Gamma}_{s}(\cdot,z_{0})$ that $G^{\Gamma}_{s}(\cdot,z_{0})\notin L^{2}(\Gamma\backslash\HH)$. Thus zeros of $S_{\alpha,z_{0}}(s)$ in $\Re s<\tfrac{1}{2}$ correspond to resonances.
\end{proof}

Define
\begin{equation}
\scrI=\lbrace\gamma\in\Gamma\mid\gamma z_{0}=z_{0}\rbrace.
\end{equation}
$\scrI=\lbrace\id\rbrace$ unless $z_{0}$ is an elliptic fixed point in which case $\scrI$ equals the finite cyclic group generated by the corresponding elliptic group element. We can write $S_{\alpha,z_{0}}(s)$ in a more convenient form if $\Re s>1$.
\begin{prop}
Let $\Re s>1$. $S_{\alpha,z_{0}}(s)$ can be written in the form
\begin{equation}
S_{\alpha,z_{0}}(s)=\beta^{-1}+m\psi(s)+\sum_{\gamma\in\Gamma\backslash\scrI}G_{s}(z_{0},\gamma z_{0}),
\end{equation}
where $\beta=\beta(\alpha)$ and $m=|\scrI|$.
\end{prop}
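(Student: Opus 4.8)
The plan is to expand the automorphic Green's functions by the method of images, isolate the contribution of the stabiliser $\scrI$, and then read off the closed form from the near‑diagonal asymptotics of the free resolvent kernel on $\HH$.

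For $\Re s>1$ the series $G^{\Gamma}_{s}(z,z_{0})=\sum_{\gamma\in\Gamma}G_{s}(z,\gamma z_{0})$ converges absolutely and locally uniformly off the orbit $\Gamma z_{0}$. Since $\gamma z_{0}=z_{0}$ precisely when $\gamma\in\scrI$ and $|\scrI|=m$, I would split
\[
G^{\Gamma}_{s}(z,z_{0})=m\,G_{s}(z,z_{0})+\sum_{\gamma\in\Gamma\backslash\scrI}G_{s}(z,\gamma z_{0}),
\]
and likewise for $G^{\Gamma}_{t}$ and $G^{\Gamma}_{\bar t}$. For $\gamma\in\Gamma\backslash\scrI$ the point $\gamma z_{0}$ lies at positive distance from $z_{0}$, so each such term is smooth at $z=z_{0}$; using the exponential decay of $G_{s}$ at large distance together with absolute convergence for $\Re s>1$, the tail sum converges uniformly on a small ball about $z_{0}$, so the limit $z\to z_{0}$ may be taken termwise. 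Substituting the three expansions into the definition of $S_{\alpha,z_{0}}(s)$ then gives
\[
S_{\alpha,z_{0}}(s)=\alpha^{-1}+m\lim_{z\to z_{0}}\bigl(G_{s}-\tfrac12 G_{t}-\tfrac12 G_{\bar t}\bigr)(z,z_{0})+\sum_{\gamma\in\Gamma\backslash\scrI}\bigl(G_{s}-\tfrac12 G_{t}-\tfrac12 G_{\bar t}\bigr)(z_{0},\gamma z_{0}).
\]

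The only input that is not pure bookkeeping is the near‑coincidence expansion of the free Green's function on $\HH$: as $d(z,w)\to0$ one has $G_{s}(z,w)=\tfrac1{2\pi}\log d(z,w)+\psi(s)+\mathfrak{c}+o(1)$, where $\psi(s)=\tfrac1{2\pi}\Gamma'(s)/\Gamma(s)$ and $\mathfrak{c}$ is a constant independent of $s$ and of the points; this follows from the explicit formula for $G_{s}$ (cf. \cite{Iw}) and is exactly why $\psi$ is normalised as it is. In the combination $G_{s}-\tfrac12 G_{t}-\tfrac12 G_{\bar t}$ both the logarithmic singularity and the constant $\mathfrak{c}$ cancel, leaving $\psi(s)-\tfrac12\bigl(\psi(t)+\psi(\bar t)\bigr)$. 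Inserting this, and using $\bar t=\overline t$, $\overline{\psi(t)}=\psi(\bar t)$ and $\overline{G_{t}(z_{0},\gamma z_{0})}=G_{\bar t}(z_{0},\gamma z_{0})$, the $s$‑independent part collects into the real constant
\[
c=-\Re\Bigl(m\psi(t)+\sum_{\gamma\in\Gamma\backslash\scrI}G_{t}(z_{0},\gamma z_{0})\Bigr),
\]
which depends only on $\Gamma$ and $z_{0}$ (recall $t$ is fixed by $t(1-t)=\i$, $\Re t>\tfrac12$), and one is left with $S_{\alpha,z_{0}}(s)=\alpha^{-1}+c+m\psi(s)+\sum_{\gamma\in\Gamma\backslash\scrI}G_{s}(z_{0},\gamma z_{0})$. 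Since $\alpha^{-1}+c=\beta^{-1}$ for $\beta=\beta(\alpha)=\alpha/(1+c\alpha)$, this is the asserted identity, and the computation simultaneously pins down the constant $c$ of Theorem \ref{thm1}. The main obstacle, such as it is, is thus the hyperbolic resolvent asymptotics --- in particular confirming that the coefficient of $\psi(s)$ is exactly $1$, which requires the precise form of $G_{s}$ and not merely its leading logarithmic behaviour; everything else is rearrangement of absolutely convergent series.
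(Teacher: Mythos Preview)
Your argument is correct and follows essentially the same route as the paper: both expand $G^{\Gamma}_{s}$ by the method of images, separate the stabiliser contribution, invoke the near-diagonal asymptotic $\lim_{z\to z_{0}}(G_{s}-G_{t})(z,z_{0})=\psi(s)-\psi(t)$ for the free resolvent, and absorb the $s$-independent remainder into the reparametrisation $\beta^{-1}=\alpha^{-1}+c$. The only cosmetic difference is that the paper quotes the difference identity directly rather than passing through an intermediate constant $\mathfrak{c}$ in the expansion of a single $G_{s}$.
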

\begin{proof}
It can be seen (cf. \cite{Mf}, p. 17, (116-18)) that
\begin{equation}
\psi(s)-\psi(t)=\lim_{z\to z_{0}}(G_{s}(z,z_{0})-G_{t}(z,z_{0})).
\end{equation}
where $\psi=(2\pi)^{-1}\Gamma/\Gamma'$. From the definition of $S_{\alpha,z_{0}}(s)$ we have for $\Re s>1$
\begin{equation}
\begin{split}
S_{\alpha,z_{0}}(s)&\;=
\alpha^{-1}+\lim_{z\to z_{0}}(G^{\Gamma}_{s}(z,z_{0})-\tfrac{1}{2}G^{\Gamma}_{t}(z,z_{0})-\tfrac{1}{2}G^{\Gamma}_{\bar{t}}(z,z_{0}))\\
&\;=\alpha^{-1}+|\scrI|\psi(s)-|\scrI|\Re\psi(t)+\sum_{\gamma\in\Gamma\backslash\scrI}
\lbrace G_{s}(z_{0},\gamma z_{0})-\Re G_{t}(z_{0},\gamma z_{0})\rbrace
\end{split}
\end{equation}
and we let
\begin{equation}
c(t)=|\scrI|\Re\psi(t)+\Re\sum_{\gamma\in\Gamma\backslash\scrI}G_{t}(z_{0},\gamma z_{0}).
\end{equation}
At this point we choose to reparameterise the coupling constant $\alpha$ according to
\begin{equation}
\alpha^{-1}-c(t)=\beta^{-1}
\end{equation}
or
\begin{equation}
\beta=\frac{\alpha}{1-\alpha c(t)}.
\end{equation}
We obtain the expression
\begin{equation}
S_{\alpha,z_{0}}(s)=\beta^{-1}+|\scrI|\psi(s)+\sum_{\gamma\in\Gamma\backslash\scrI}G_{s}(z_{0},\gamma z_{0})
\end{equation}
for $\Re s>1$.
\end{proof}

We have a uniform bound on the geometrical terms in the function $S^{\alpha,z_{0}}(s)$ for $\Re s>1$.
\begin{lem}\label{Greenbound}
For all $\rho\in\CC$ with $\Im\rho=-\sigma<-\tfrac{1}{2}$ we have the uniform bound
\begin{equation}
\sum_{\gamma\in\Gamma\backslash\scrI}\left|G_{\tfrac{1}{2}+\i\rho}\left(z_{0},\gamma z_{0}\right)\right|<\!\!<_{\Gamma,z_{0}}\sigma^{-1/2}
\end{equation}
\end{lem}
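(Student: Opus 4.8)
The plan is to estimate the automorphic Green's function $G_s(z_0,\gamma z_0)$ pointwise in terms of the geodesic distance $l_\gamma = d(z_0,\gamma z_0)$, and then sum the resulting bound over $\gamma\in\Gamma\backslash\scrI$ using the standard lattice-point counting estimate for Fuchsian groups. Recall that on $\HH$ the Green's function has the explicit representation $G_s(z,w) = -\tfrac{1}{4\pi}\int_0^\infty (\sinh(u/2))^{-1}\,Q_{s-1}(\cosh u)\,\cdots$, or more usefully $G_s(z,w) = -\tfrac{1}{2\pi}Q_{s-1}(\cosh d(z,w))$ where $Q_{s-1}$ is the Legendre function of the second kind (up to normalisation conventions). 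The key analytic input is therefore a bound on $|Q_{s-1}(\cosh r)|$ for $s=\tfrac{1}{2}+\i\rho$ with $\Im\rho=-\sigma$, i.e. $\Re s = \tfrac{1}{2}+\sigma > 1$, uniformly for $r\geq r_0>0$ and showing the correct decay as $\sigma\to\infty$. For large argument $Q_{s-1}(\cosh r)\sim c\,\e^{-sr}$ times lower-order factors, and one has an integral representation $Q_{s-1}(\cosh r) = \int_r^\infty \e^{-(s-1/2)u}(2\cosh u - 2\cosh r)^{-1/2}\,du \cdot(\text{const})$ from which $|Q_{s-1}(\cosh r)| \ll \sigma^{-1/2}\e^{-(\sigma+1/2)r}$ follows by the change of variables $u = r + v/\sigma$ and bounding the remaining integral by a constant (the $\sigma^{-1/2}$ comes from the square-root singularity $(2\cosh u-2\cosh r)^{-1/2}\sim (\sinh r)^{-1/2}(u-r)^{-1/2}$ near $u=r$, integrated against $\e^{-\sigma(u-r)}$).

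First I would fix the integral representation of $G_s(z_0,\gamma z_0)$ and reduce the claim to the pointwise bound $|G_{\tfrac12+\i\rho}(z_0,\gamma z_0)| \ll_{z_0} \sigma^{-1/2}\,\e^{-(\sigma+1/2)\,l_\gamma}$, valid once $l_\gamma$ is bounded below — which holds for all but finitely many $\gamma$ since $\Gamma$ is discrete and $\gamma z_0\neq z_0$ for $\gamma\notin\scrI$ (the finitely many small-$l_\gamma$ terms, where $\gamma z_0$ is close to but distinct from $z_0$, contribute $O(\sigma^{-1/2})$ each and can be absorbed, though one should note $l_\gamma$ stays bounded away from $0$ on $\Gamma\backslash\scrI$ because $z_0$ has a positive injectivity-type radius relative to its non-stabilising translates). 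Second, I would invoke the lattice-point bound: for a Fuchsian group of the first kind, $\#\{\gamma\in\Gamma : d(z_0,\gamma z_0)\leq T\} \ll_{\Gamma,z_0} \e^{T}$, hence by partial summation $\sum_{\gamma\in\Gamma\backslash\scrI}\e^{-(\sigma+1/2)l_\gamma} \ll_{\Gamma,z_0} \sum_{n\geq n_0}\e^{n}\,\e^{-(\sigma+1/2)n} = \sum_{n\geq n_0}\e^{-(\sigma-1/2)n} \ll 1$ uniformly for $\sigma>\tfrac12+\epsilon$ (and in fact the sum is $O(1)$, so combined with the $\sigma^{-1/2}$ prefactor we get the stated $\ll_{\Gamma,z_0}\sigma^{-1/2}$). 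Third, assemble: pulling the uniform factor $\sigma^{-1/2}$ out of the pointwise bound and using the summed geometric series gives exactly $\sum_{\gamma\in\Gamma\backslash\scrI}|G_{\tfrac12+\i\rho}(z_0,\gamma z_0)| \ll_{\Gamma,z_0}\sigma^{-1/2}$.

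The main obstacle is establishing the pointwise bound on $|Q_{s-1}(\cosh r)|$ with the precise $\sigma^{-1/2}$ dependence uniform in $r$ and in $\rho$ along the line $\Im\rho=-\sigma$ — in particular making sure the constant in $\ll_{z_0}$ does not secretly depend on $\sigma$, and handling the transition between the "moderate $r$" regime (where the square-root singularity governs the $\sigma^{-1/2}$) and the "large $r$" regime (where exponential decay dominates and the $\sigma^{-1/2}$ is more than enough). One clean way to do this uniformly is: write $G_s(z_0,\gamma z_0) = -\tfrac{1}{2\pi}\int_{l_\gamma}^\infty \frac{\e^{-(s-1/2)u}}{\sqrt{2\cosh u - 2\cosh l_\gamma}}\,du$ up to normalisation, bound $|\e^{-(s-1/2)u}| = \e^{-\sigma u}$, substitute $u\mapsto u+l_\gamma$, and bound $\int_0^\infty \e^{-\sigma u}(2\cosh(u+l_\gamma)-2\cosh l_\gamma)^{-1/2}\,du \leq \e^{-\sigma l_\gamma/1}\cdot(\text{something})$; more carefully, using $2\cosh(u+l_\gamma) - 2\cosh l_\gamma \geq c\,u(u + 2l_\gamma)\geq c\,u^2$ for $u$ near $0$ one gets the $\int_0^\infty \e^{-\sigma u}u^{-1/2}\,du = \Gamma(1/2)\sigma^{-1/2}$ contribution, while the tail is exponentially small, uniformly. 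Once this lemma is in place the rest is the routine lattice-point/partial-summation argument above, and the $\Gamma,z_0$-dependence of the final constant enters only through the lattice-point counting constant and the lower bound $n_0$ on $l_\gamma$.
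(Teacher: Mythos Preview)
Your approach is essentially the same as the paper's: the same integral representation of the free Green's function, the same substitution $u\mapsto u+l_\gamma$ to extract the factor $\e^{-\sigma l_\gamma}$, the same $\int_0^\infty \e^{-\sigma u}u^{-1/2}\,du=\sqrt{\pi}\,\sigma^{-1/2}$ for the remaining integral, and the same lattice-point bound to control $\sum_\gamma \e^{-\sigma l_\gamma}$. One slip to fix: in your chain $2\cosh(u+l_\gamma)-2\cosh l_\gamma\geq c\,u(u+2l_\gamma)\geq c\,u^2$ you took the wrong lower bound---$u^2$ gives $(\cdots)^{-1/2}\leq c' u^{-1}$, whose integral against $\e^{-\sigma u}$ diverges at $0$; you want instead $u(u+2l_\gamma)\geq 2l_0\,u$ (using $l_\gamma\geq l_0>0$), which yields the $u^{-1/2}$ you correctly state next. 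The paper does this step via the mean-value inequality $\cosh(\tau_\gamma+t)-\cosh\tau_\gamma\geq t\sinh\tau_\gamma\geq t\sinh\tau_0$.
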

\begin{proof}
Let $\tau_{0}=\inf\lbrace d(\gamma z_{0},z_{0})\mid\gamma\in\Gamma\backslash\scrI\rbrace$. Discreteness of $\Gamma$ implies $\tau_{0}>0$. We will make use of the integral representation of the free Green's function
\begin{equation}\label{intrep}
G_{\tfrac{1}{2}+\i\rho}(z,w)=-\frac{1}{2\pi\sqrt{2}}\int_{d(z,w)}^{\infty}\frac{\e^{-\i\rho t}dt}{\sqrt{\cosh t-\cosh d(z,w)}}
\end{equation}
which is valid for $\Im\rho<-\tfrac{1}{2}$. We have
\begin{equation}\label{exactbound}
\begin{split}
\sum_{\gamma\in\Gamma\backslash\scrI}\left|G_{\tfrac{1}{2}+\i\rho}(z_{0},\gamma z_{0})\right|
&\leq\frac{1}{2\pi\sqrt{2}}\sum_{\gamma\in\Gamma\backslash\scrI}\int_{\tau_{\gamma}}^{\infty}\frac{\e^{-\sigma t}dt}{\sqrt{\cosh t-\cosh\tau_{\gamma}}}\\
&=\frac{1}{2\pi\sqrt{2}}\sum_{\gamma\in\Gamma\backslash\scrI}\e^{-\sigma\tau_{\gamma}}\int_{0}^{\infty}\frac{\e^{-\sigma t}dt}{\sqrt{\cosh (t+\tau_{\gamma})-\cosh\tau_{\gamma}}}\\
&\leq\,C_{\epsilon}\frac{1}{2\pi\sqrt{2\sinh\tau_{0}}}\int_{0}^{\infty}\frac{\e^{-\sigma t}dt}{\sqrt{t}}\\
&=\frac{C_{\epsilon}\,\sigma^{-1/2}}{2\pi\sqrt{4\pi\sinh\tau_{0}}}=C(\Gamma,z_{0})\,\sigma^{-1/2}
\end{split}
\end{equation}
since for $t>0$ and any $\gamma\in\Gamma\backslash\scrI$
\begin{equation}
\sinh\tau_{0}\leq\sinh\tau_{\gamma}\leq\frac{\cosh(\tau_{\gamma}+t)-\cosh\tau_{\gamma}}{t}
\end{equation}
and where $-\sigma<\epsilon<-1$ such that
\begin{equation}
\sum_{\gamma\in\Gamma\backslash\scrI}\e^{-\sigma\tau_{\gamma}}\leq\sum_{\gamma\in\Gamma\backslash\scrI}\e^{\epsilon\tau_{\gamma}}
\leq C_{\epsilon}
\end{equation}
where for $r<\frac{1}{2}\tau_{0}$, cf. Lemma 5 in \cite{Mf}, p. 19,
\begin{equation}
C_{\epsilon}=\frac{2\pi\e^{-2\pi\epsilon r}}{\Area(r)}\int_{0}^{\infty}\e^{\epsilon\tau}\sinh\tau d\tau.
\end{equation}
\end{proof}

$S_{\alpha,z_{0}}(s)$ contains all information about the unperturbed and perturbed discrete spectrum, as well as unperturbed and perturbed resonances in form of its poles and zeros. The following Theorem locates those and gives their spectral interpretation.

\begin{thm}\label{prop}
$S_{\alpha,z_{0}}(s)$ has the following zeros and poles.\\
\begin{itemize}
\item[(a)] Suppose $\Gamma\backslash\HH$ is compact.
\begin{itemize}
\item[(i)] There are simple poles at $\tfrac{1}{2}+\i\rho_{j}$ and $\tfrac{1}{2}-\i\rho_{j}$ corresponding to eigenvalues $\lambda_{j}=\tfrac{1}{4}+\rho_{j}^{2}$, $\rho_{j}\in\RR\cup\i\RR$.
\item[(ii)] There are simple zeros at $\tfrac{1}{2}+\i\rho^{\alpha}_{j}$ and $\tfrac{1}{2}-\i\rho^{\alpha}_{j}$, located in between the poles above, corresponding to perturbed eigenvalues $\lambda^{\alpha}_{j}=\tfrac{1}{4}+{\rho^{\alpha}_{j}}^{2}$, $\rho^{\alpha}_{j}\in\RR\cup\i\RR$. In particular one has for the lowest perturbed eigenvalue $|\Im\rho^{\alpha}_{0}|>\tfrac{1}{2}$.\\
\end{itemize}
\item[(b)] Suppose $\Gamma\backslash\HH$ has one cusp.
\begin{itemize}
\item[(i)] $\Re s>1$: There is a zero of order one at $\tfrac{1}{2}+(\tfrac{1}{4}-\lambda_{-M}^{\alpha})^{1/2}$ corresponding to the lowest perturbed eigenvalue $\lambda_{-M}^{\alpha}<0$.
\item[(ii)] $\tfrac{1}{2}<\Re s\leq1$: We have zeros of order one and simple poles in $\left(\tfrac{1}{2},1\right]$ corresponding to perturbed and unperturbed eigenvalues $<\tfrac{1}{4}$.
\item[(iii)] $\Re s=\tfrac{1}{2}$: There are simple poles corresponding to unperturbed eigenvalues $\geq\tfrac{1}{4}$ and a pole at $s=\tfrac{1}{2}$ which is of order 2 if $\lambda=\tfrac{1}{4}$ is in the discrete spectrum, otherwise it is simple. We also have zeros of order one corresponding to solutions of $E(z_{0},\tfrac{1}{2}+\i t)=0$ $\wedge$ $\Re\left\{S^{\alpha,z_{0}}(\tfrac{1}{2}+\i t)\right\}=0$, i. e. new eigenvalues.
\item[(iv)] $\Re s<\tfrac{1}{2}$: We have zeros corresponding to perturbed resonances and poles corresponding to unperturbed resonances. There are also simple poles in $[0,\tfrac{1}{2})$ which correspond to the small eigenvalues as in $(ii)$.
\end{itemize}
\end{itemize}
\end{thm}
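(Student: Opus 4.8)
The plan is to exploit the two explicit representations of $S_{\alpha,z_0}(s)$ that we have already established: the spectral expansion \eqref{specexp} (valid on the relevant half-plane, together with its contour-deformed version from the preceding proposition) and the geometric expansion $S_{\alpha,z_0}(s)=\beta^{-1}+m\psi(s)+\sum_{\gamma\in\Gamma\backslash\scrI}G_s(z_0,\gamma z_0)$ for $\Re s>1$, and to analyse the zeros and poles region by region. The functional equation from the first proposition (cases (a) and (b)) then transports whatever we find for $\Re s\ge\tfrac12$ into $\Re s<\tfrac12$, so it suffices to work in the closed right half-plane and read off the rest by symmetry.

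For the compact case (a), I would argue directly from \eqref{specexp}: each summand $|\varphi_j(z_0)|^2/(\lambda_j-s(1-s))$ contributes a simple pole at $s=\tfrac12\pm\i\rho_j$ with residue a positive multiple of $|\varphi_j(z_0)|^2$ (collecting the finitely many $j$ with the same eigenvalue), giving (i). For (ii), restrict $S_{\alpha,z_0}$ to the real segment $s\in(\tfrac12,\infty)$ (equivalently $\lambda<\tfrac14$) and to the critical line via $\lambda=\tfrac14+\rho^2$; there $S_{\alpha,z_0}$ is real-valued and, by differentiating \eqref{specexp} in $\lambda$, strictly monotone between consecutive poles, hence has exactly one zero in each gap — this is the standard interlacing/Aronszajn–Krein rank-one perturbation argument, and it also pins $\lambda^\alpha_{-M}<0$, i.e. $|\Im\rho^\alpha_0|>\tfrac12$, because $\psi(s)\to-\infty$ as $s\to\tfrac12^+$ forces $S_{\alpha,z_0}$ to change sign below the bottom of the spectrum. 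The identification of the new eigenvalues with the zeros is exactly Proposition \ref{eigencond}(a), and the eigenfunction is $G^\Gamma_s(\cdot,z_0)$ as proved there.

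For the noncompact case (b) the same mechanism runs, but the continuous spectrum splits the analysis. In region (i), $\Re s>1$: here the geometric series converges, $\psi(s)=\log s+O(|s|^{-1})$ and the $G_s$-sum is $O(\sigma^{-1/2})$ by Lemma \ref{Greenbound}, so $S_{\alpha,z_0}(s)$ is real and monotone on $(2,\infty)$ and, using $\psi\to+\infty$ and the sign of $\beta^{-1}$, has at most one zero there; that zero is the bottom eigenvalue $\lambda^\alpha_{-M}<0$ via Proposition \ref{eigencond}. In region (ii), $\tfrac12<\Re s\le1$: the contour-integral continuation has an integrand that is regular here (the contour $\Gamma(s)$ stays in $\CC_-$), so poles come only from the discrete sum — these are the old eigenvalues $<\tfrac14$ — and the interlacing argument on the real interval gives the new eigenvalues as simple zeros between them. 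In region (iii), $\Re s=\tfrac12$: writing $s=\tfrac12+\i t$ and using $\overline{G^\Gamma_{\bar t}}=G^\Gamma_t$, the discrete sum contributes simple poles at the cuspidal/residual eigenvalues $\ge\tfrac14$; the term $\delta(s)E(z_0,s)E(z_0,1-s)/(1-2s)$ together with $(1-2s)^{-1}=(-2\i t)^{-1}$ produces the pole at $s=\tfrac12$, of order $2$ precisely when $E(z_0,\cdot)$ vanishes to compensate — no, rather: the order-$2$ pole occurs when $\lambda=\tfrac14$ is an eigenvalue (an extra $1/(s(1-s)-\tfrac14)$ factor), matching the statement. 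The zeros on the critical line are characterised by combining Proposition \ref{eigencond}(b) with the functional equation \eqref{fe}: on $\Re s=\tfrac12$, \eqref{fe} reads $S_{\alpha,z_0}(s)-\overline{S_{\alpha,z_0}(s)} = -E(z_0,s)E(z_0,1-s)/(1-2s)$, so a zero of $S_{\alpha,z_0}$ there forces the right side to vanish, i.e. $E(z_0,\tfrac12+\i t)=0$, and then $\Re S_{\alpha,z_0}=0$ is the remaining (automatically satisfied real) condition — giving exactly the stated new eigenvalues. Finally region (iv), $\Re s<\tfrac12$: apply \eqref{fe} to move everything from $\Re(1-s)>\tfrac12$; poles of $S_{\alpha,z_0}(1-s)$ become poles here (unperturbed resonances, plus the small-eigenvalue poles in $[0,\tfrac12)$ coming from the poles in $(\tfrac12,1]$), and zeros of $S_{\alpha,z_0}(1-s)$ that are \emph{not} killed by the extra term $-E(z_0,s)E(z_0,1-s)/(1-2s)$ become perturbed resonances, as already noted in Proposition \ref{eigencond}(b).

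\textbf{Main obstacle.} The routine parts are the convergence/monotonicity estimates and the residue bookkeeping. The genuinely delicate point is region (iii): one must justify the precise \emph{order} of the pole at $s=\tfrac12$ (simple versus double depending on whether $\tfrac14$ is an eigenvalue) by carefully tracking how the $(1-2s)^{-1}$ in both the discrete term for $\lambda_j=\tfrac14$ and in the scattering term $E(z_0,s)E(z_0,1-s)/(1-2s)$ interact with the factor $\delta(s)$ and with the contour-shift in the continuation, and to show rigorously that on $\Re s=\tfrac12$ a zero of $S_{\alpha,z_0}$ forces $E(z_0,s)=0$ rather than merely making the functional-equation correction term real — this uses that $S_{\alpha,z_0}(s)-\overline{S_{\alpha,z_0}(s)}$ is purely imaginary together with $(1-2s)^{-1}$ being purely imaginary on the line, so $E(z_0,s)E(z_0,1-s)=|E(z_0,\tfrac12+\i t)|^2\ge 0$ must vanish. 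Making all of this airtight, and handling the (finitely many) exceptional points where an old and a new eigenvalue could collide so that a zero meets a pole, is where the real work lies.
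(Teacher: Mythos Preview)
Your overall strategy—spectral expansion plus monotonicity for poles/zeros on the real and critical lines, combined with Proposition \ref{eigencond} and the functional equation \eqref{fe}—is the same as the paper's. Most of the bookkeeping is right. There are, however, two genuine problems.

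\textbf{Region (iv) is misargued.} You write that for $\Re s<\tfrac12$ the unperturbed resonance poles of $S_{\alpha,z_0}(s)$ come from ``poles of $S_{\alpha,z_0}(1-s)$'' via \eqref{fe}. This is false: for $\Re s<\tfrac12$ we have $\Re(1-s)>\tfrac12$, and there $S_{\alpha,z_0}(1-s)$ has poles only at the discrete eigenvalues (the small ones on $(\tfrac12,1]$ and the cuspidal ones on $\Re=\tfrac12$), never at resonances. The resonance poles of $S_{\alpha,z_0}(s)$ in $\Re s<\tfrac12$ enter through the second term of \eqref{fe}, namely through the poles of $E(z_0,s)$ (equivalently of $\varphi(s)$) in the continuation across the critical line; or, more directly, they are inherited from the known pole structure of the meromorphically continued Green's function, which is how the paper handles it. Your transport argument as written produces the wrong set of poles.

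\textbf{Simplicity of zeros on $\Re s=\tfrac12$ is not established.} You correctly derive from \eqref{fe} that a zero $\tfrac12+it$ of $S_{\alpha,z_0}$ forces $|E(z_0,\tfrac12+it)|^2=0$ and hence $\Re S_{\alpha,z_0}(\tfrac12+it)=0$; the paper does the same. But you never argue that such a zero is of order one. The paper does this by differentiating the spectral expansion at $\rho=\rho_{\alpha,z_0}$: once $E(z_0,\tfrac12+i\rho_{\alpha,z_0})=0$, the continuous-spectrum integrand is regular there and one obtains
\[
\frac{d}{d\rho}\Big|_{\rho=\rho_{\alpha,z_0}}\Re S_{\alpha,z_0}(\tfrac12+i\rho)
=\frac{\rho_{\alpha,z_0}}{2\pi}\int_{-\infty}^{\infty}\frac{|E(z_0,\tfrac12+i\rho')|^2}{(\rho'^2-\rho_{\alpha,z_0}^2)^2}\,d\rho'
+2\rho_{\alpha,z_0}\sum_{j}\frac{|\varphi_j(z_0)|^2}{(\rho_j^2-\rho_{\alpha,z_0}^2)^2}\neq 0,
\]
which gives simplicity. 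Your ``main obstacle'' paragraph flags the vicinity of this issue but does not supply the argument.

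Two smaller points. In (a), the claim ``$\psi(s)\to-\infty$ as $s\to\tfrac12^+$'' is simply wrong ($\psi(\tfrac12)=-\gamma-2\log2$ is finite); fortunately you don't need it—the inequality $|\Im\rho^\alpha_{-M}|>\tfrac12$ follows from interlacing alone, since $\lambda_{-M}=0$ gives $|\Im\rho_{-M}|=\tfrac12$. And in (b)(i) you reach for the geometric expansion and Lemma \ref{Greenbound}, whereas the paper handles all of $s>\tfrac12$ uniformly by the spectral-expansion monotonicity argument; your route works too but is unnecessary here.
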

\begin{proof}
(a) We have from \eqref{specexp}, for $t(1-t)=\tfrac{1}{4}+\rho(t)^{2}$,
\begin{equation}
S_{\alpha,z_{0}}(\tfrac{1}{2}+\i\rho)=\alpha^{-1}+\sum_{j=0}^{\infty}|\varphi(z_{0})|^{2}\left\{\frac{1}{\rho_{j}^{2}-\rho^{2}}-\Re\left\{\frac{1}{\rho_{j}^{2}-\rho(t)^{2}}\right\}\right\}
\end{equation}
and, for $\rho\in\RR\cup\i\RR$, and depending on whether $\rho=v$ or $\rho=\i v$, for $v\in\RR$,
\begin{equation}
\frac{d}{dv}S_{\alpha,z_{0}}(\tfrac{1}{2}+\i\rho(v))=\pm v\sum_{j=-M}^{\infty}\frac{|\varphi_{j}(z_{0})|^{2}}{(\rho_{j}^{2}\pm v^{2})^{2}}
\end{equation}
which shows that the zeros of $S_{\alpha,z_{0}}(s)$ lie in between the poles on the critical line and the real line. In particular one has $|\Im\rho^{\alpha}_{-M}|>|\Im\rho_{-M}|=\tfrac{1}{2}$.

(b) It can be seen from its meromorphic continuation that the regularised automorphic Green's function $(G^{\Gamma}_{s}-G^{\Gamma}_{t})(z,w)$ has simple poles corresponding to the eigenvalues of the Laplacian on the critical line and the interval $\left[0,1\right]$, as well as simple poles corresponding to the resonances of the Laplacian in $\Re s<\tfrac{1}{2}$. If $\lambda=\tfrac{1}{4}$ is an eigenvalue then it follows that $\G^{\Gamma}_{s}(z,w)$ has a pole of order 2 at $s=\tfrac{1}{2}$ and otherwise a simple pole (cf. \cite{Iw}, p.124-125, Thm. 9.2.). The function $S_{\alpha,\,z_{0}}(s)$ inherits these poles. See for instance Thm. 3.5, p. 250, in \cite{Hj3}.

We proceed with locating the zeros of $S_{\alpha,z_{0}}(s)$. The proof of Proposition \ref{eigencond} shows that the zeros of $S_{\alpha,z_{0}}(s)$ correspond to eigenfunctions of $\Delta_{\alpha,z_{0}}$. If $\Re s\geq\tfrac{1}{2}$ the zeros correspond to new eigenvalues and the corresponding eigenfunctions are in $L^{2}(\Gamma\backslash\HH)$ and so self-adjointness of $\Delta_{\alpha,z_{0}}$ rules out any zeros $s$ in $\Re s\geq\tfrac{1}{2}$, $\Im s\neq0$. However, if $\Re s<\tfrac{1}{2}$ the zeros correspond to resonances and generalised non-$L^{2}$ eigenfunctions which may have $\Im s\neq0$.

Let us consider the critical line. We recall that $S_{\alpha,z_{0}}(s)$ satisfies the functional equation
\begin{equation}\label{FE}
S_{\alpha,z_{0}}(s)=S_{\alpha,z_{0}}(1-s)-\frac{E(z_{0},s)E(z_{0},1-s)}{1-2s}.
\end{equation}
Now letting $s=\tfrac{1}{2}+\i t$, $t\in\RR$, we compute the imaginary part of $S_{\alpha,\,z_{0}}\left(\tfrac{1}{2}+\i t\right)$
\begin{equation}
\Im\left\{S_{\alpha,\,z_{0}}(\tfrac{1}{2}+\i t)\right\}
=\frac{1}{2\i}\left\{S^{\alpha,\,z_{0}}(\tfrac{1}{2}+\i t)-S_{\alpha,\,z_{0}}(\tfrac{1}{2}-\i t)\right\}
=\frac{\left|E(z_{0},\tfrac{1}{2}+\i t)\right|^{2}}{4t}
\end{equation}
and write
\begin{equation}
S_{\alpha,\,z_{0}}(\tfrac{1}{2}+\i t)=\Re\left\{S_{\alpha,\,z_{0}}(\tfrac{1}{2}+\i t)\right\}+\i\frac{\left|E(z_{0},\tfrac{1}{2}+\i t)\right|^{2}}{4t}
\end{equation}
so, since $\alpha$ is real, $S_{\alpha,\,z_{0}}(s)$ has a zero $\tfrac{1}{2}+\i t$ on the critical line if and only if
\begin{equation}\label{speccond}
E(z_{0},\tfrac{1}{2}+\i t)=0\,\wedge\,\Re\left\{S_{\alpha,z_{0}}(\tfrac{1}{2}+\i t)\right\}=0.
\end{equation}
Let $\rho_{\alpha,z_{0}}$ be a solution to the above equation. We recall the spectral expansion of the automorphic Green's function for $\Re s>\tfrac{1}{2}$
\begin{equation}
G^{\Gamma}_{\frac{1}{2}+\i\rho}(z,w)=\frac{1}{2\pi}\int_{0}^{\infty}\frac{E(z,\tfrac{1}{2}+\i\rho')E(w,\tfrac{1}{2}-\i\rho')}{\rho'^{2}-\rho^{2}}d\rho'+\sum_{j=-M}^{\infty}\frac{\varphi_{j}(z)\overline{\varphi_{j}(w)}}{\rho_{j}^{2}-\rho^{2}},
\end{equation}
where $\lbrace\varphi_{j}\rbrace_{j=-M}^{\infty}$ is an orthonormal basis of Maass forms. Since $E(z_{0},\tfrac{1}{2}+\i\rho^{\alpha,z_{0}})=0$, we have
\begin{equation}
\frac{d}{d\rho}|_{\rho=\rho_{\alpha,z_{0}}}\Re\left\{S_{\alpha,\,z_{0}}(\tfrac{1}{2}+\i\rho)\right\}=\frac{\rho_{\alpha,z_{0}}}{2\pi}\int_{-\infty}^{\infty}\frac{|E(z_{0},\tfrac{1}{2}+\i\rho')|^{2}}{(\rho'^{2}-\rho_{\alpha,z_{0}}^{2})^{2}}d\rho'
+2\rho_{\alpha,z_{0}}\sum_{j=-M}^{\infty}\frac{|\varphi_{j}(z_{0})|^{2}}{(\rho_{j}^{2}-\rho_{\alpha,z_{0}}^{2})^{2}},
\end{equation}
which shows that $\rho_{\alpha,z_{0}}$ must be a zero of order one.

Now let us look at zeros on the real line. Let $\rho=-\i v$, $v>0$ to ensure $s=\tfrac{1}{2}+v>\tfrac{1}{2}$. We obtain for the derivative of $S_{\alpha,\,z_{0}}(\frac{1}{2}+v)$ with respect to $v$
\begin{equation}
\frac{d}{dv}S_{\alpha,\,z_{0}}(\frac{1}{2}+v)=-\frac{v}{\pi}\int_{0}^{\infty}\frac{\left|E(z_{0},\tfrac{1}{2}+\i\rho')\right|^{2}}{(\rho'^{2}+v^{2})^{2}}d\rho'-2v\sum_{j=-M}^{\infty}\frac{\left|\varphi_{j}(z_{0})\right|^{2}}{(\rho_{j}^{2}+v^{2})^{2}}.
\end{equation}
Since $S_{\alpha,\,z_{0}}(\frac{1}{2}+v)$ is a real function on the real line, has poles in the interval $[0,\tfrac{1}{2}]$ and is monotonic in between these poles, we conclude that it must take its zeros in between these poles. Since $\lim_{v\to+\infty}$ $|S(\tfrac{1}{2}+v)|=\infty$ there is a zero of order one at $\rho^{\alpha}_{-M}>\rho_{-M}$. Because of monotonicity of $S_{\alpha,z_{0}}(\tfrac{1}{2}+v)$ on the half-line $(\tfrac{1}{2},\infty)$ we conclude that there are only finitely many zeros in $\Re s>\tfrac{1}{2}$. They correspond to perturbed eigenvalues less than $\tfrac{1}{4}$. As we will see later the corresponding eigenfunctions, which are automorphic Green's functions, are residues of the perturbed Eisenstein series unless $E(z_{0},\tfrac{1}{2}+v)=0$.

With regard to zeros in $(-\infty,\tfrac{1}{2})$ no such argument works, since the spectral expansion is not valid. Such zeros correspond to perturbed resonances if $E(z_{0},s)\neq0$ since the associated Green's functions fail to be in $L^{2}(\Gamma\backslash\HH)$. If $E(z_{0},s)=0$ the zero corresponds to a zero $1-s\in(\tfrac{1}{2},\infty)$ since the functional equation for $S_{\alpha,z_{0}}(s)$ implies $S_{\alpha,z_{0}}(s)=S_{\alpha,z_{0}}(1-s)$.
\end{proof}

For the remainder of this section we will be concerned with noncompact $\Gamma\backslash\HH$.
\begin{remark}
It is an interesting question to ask how the zeros of $S_{\alpha,z_{0}}(s)$, which correspond to eigenvalues or resonances, behave under continuous perturbation of the pair $(\alpha,z_{0})\in(\RR\backslash\lbrace0\rbrace)\times\Gamma\backslash\HH$. In view of the Phillips-Sarnak conjecture for smooth perturbations of the metric (cf. \cite{PhSa}) the case of eigenvalues of pseudo cusp forms which correspond to zeros on the critical line is particularly interesting. One can see (but we will not give a proof) from the dependence of $S_{\alpha,z_{0}}(s)$ on $(\alpha,z_{0})$ that under continous perturbations of $(\alpha,z_{0})$ in $(\RR\backslash\lbrace0\rbrace)\times\Gamma\backslash\HH$ the zeros of $S_{\alpha,z_{0}}(s)$ depend continously on $(\alpha,z_{0})$. Because of self-adjointness of $\Delta_{\alpha,z_{0}}$ and the condition $\Re S_{\alpha,z_{0}}(\tfrac{1}{2}+\i t)=0\;\wedge\;E(z_{0},\tfrac{1}{2}+\i t)=0$ for the existence of pseudo cusp forms the corresponding zeros will move continuously into the halfplane $\Re s<\tfrac{1}{2}$ under continuous perturbations about non-generic points in $(\RR\backslash\lbrace0\rbrace)\times\Gamma\backslash\HH$ for which pseudo cusp forms exist. So the associated eigenvalues turn into resonances.
\end{remark}

Colin de Verdiere observes in \cite{CdV} that for almost all pairs $(\alpha,z_{0})\in(\RR\backslash\lbrace0\rbrace)\times\Gamma\backslash\HH$ the new cuspidal part of the perturbed discrete spectrum is empty. We easily confirm this from the particular form of the relative zeta function $S_{\alpha,z_{0}}(s)$ on the critical line.
\begin{prop}\label{finite}
The equation
\begin{equation}\label{cond}
E(z_{0},\tfrac{1}{2}+\i r)=0\,\wedge\, \Re\left\{S_{\alpha,z_{0}}(\tfrac{1}{2}+\i r)\right\}=0,\qquad r\in\RR,
\end{equation}
has no solutions for almost all $(\alpha,z_{0})\in(\RR\backslash\lbrace0\rbrace)\times\Gamma\backslash\HH$.
\end{prop}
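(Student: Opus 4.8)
The plan is to fix the base point $z_0$ and run a Fubini argument in the coupling parameter $\alpha$, exploiting that $\alpha$ enters $S_{\alpha,z_0}$ only through the \emph{real} additive constant $\alpha^{-1}$. Concretely, since $\alpha\in\RR$,
\[
\Re\bigl\{S_{\alpha,z_0}(\tfrac12+\i r)\bigr\}=\alpha^{-1}+F_{z_0}(r),\qquad
F_{z_0}(r):=\Re\lim_{z\to z_0}\bigl(G^{\Gamma}_{\tfrac12+\i r}(z,z_0)-\tfrac12 G^{\Gamma}_{t}(z,z_0)-\tfrac12 G^{\Gamma}_{\bar t}(z,z_0)\bigr),
\]
where $F_{z_0}(r)$ does not depend on $\alpha$ and is a finite real number for every $r$ such that $\tfrac12+\i r$ is not a pole of $S_{\alpha,z_0}$. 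Hence for such $r$ the second equation in \eqref{cond} has \emph{at most one} solution $\alpha\in\RR\setminus\{0\}$, namely $\alpha=-F_{z_0}(r)^{-1}$ (and none when $F_{z_0}(r)=0$), while if $\tfrac12+\i r$ is a pole of $S_{\alpha,z_0}$ then $\Re\{S_{\alpha,z_0}(\tfrac12+\i r)\}$ is infinite and \eqref{cond} fails there.

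Next I would cut down the $r$-variable using the first equation in \eqref{cond}. For fixed $z_0$ the function $s\mapsto E(z_0,s)$ is meromorphic on $\CC$ and not identically zero, since $E(z_0,s)=\sum_{\gamma\in\Gamma_\infty\backslash\Gamma}(\Im\gamma z_0)^s>0$ for real $s>1$; therefore its zeros form a discrete subset of $\CC$, so
\[
R(z_0):=\{\,r\in\RR\mid E(z_0,\tfrac12+\i r)=0\,\}
\]
is discrete in $\RR$, in particular countable. Combining this with the first paragraph, for each fixed $z_0$ the set of $\alpha\in\RR\setminus\{0\}$ for which \eqref{cond} admits a solution $r$ is contained in the image of the countable set $R(z_0)$ under $r\mapsto -F_{z_0}(r)^{-1}$, hence is countable and has Lebesgue measure zero.

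Finally I would conclude by Fubini's theorem. The set $B\subseteq(\RR\setminus\{0\})\times\Gamma\backslash\HH$ of pairs $(\alpha,z_0)$ for which \eqref{cond} has a solution is measurable, being the projection onto the first two coordinates of the set of triples $(\alpha,z_0,r)$ satisfying the two equations in \eqref{cond}, which away from the discrete set where $S_{\alpha,z_0}$ has poles is the common zero set of the continuous functions $E(z_0,\tfrac12+\i r)$ and $\Re\{S_{\alpha,z_0}(\tfrac12+\i r)\}$; since all of its $z_0$-slices $B_{z_0}$ are null we get $|B|=\int_{\Gamma\backslash\HH}|B_{z_0}|\,d\mu(z_0)=0$. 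I do not expect a genuine obstacle: the only points requiring a little care are bookkeeping, namely checking that the poles of $S_{\alpha,z_0}$ on the critical line (by Theorem \ref{prop}(b)(iii) these sit at unperturbed eigenvalues $\ge\tfrac14$ and at $s=\tfrac12$) really force $\Re\{S_{\alpha,z_0}\}$ to be infinite rather than finite there — which follows from the identity $\Im\{S_{\alpha,z_0}(\tfrac12+\i t)\}=|E(z_0,\tfrac12+\i t)|^{2}/(4t)$, showing their principal parts are real so the blow-up sits entirely in the real part — and the routine measurability of the projection defining $B$. The essential content is simply that $E(z_0,\cdot)\not\equiv0$ removes all but countably many $r$, after which the realness of $\alpha$ removes all but countably many $\alpha$.
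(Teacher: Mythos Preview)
Your proof is correct and follows essentially the same approach as the paper: fix $z_0$, use meromorphicity of $s\mapsto E(z_0,s)$ to reduce to countably many $r$, then use that $\alpha$ enters only via the additive real constant $\alpha^{-1}$ to get at most one bad $\alpha$ per $r$. The paper's version is terser---it simply defines $A_{z_0}=\{-g(r)^{-1}:r\in S_{z_0},\,g(r)\neq 0\}$ (your $F_{z_0}$ is their $g$) and stops at ``$A_{z_0}$ is countable''---whereas you make the Fubini step and the pole bookkeeping explicit, which is a slight improvement in completeness rather than a difference in strategy.
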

\begin{proof}
Fix $z_{0}\in\Gamma\backslash\HH$. Let 
\begin{equation}
S_{z_{0}}=\lbrace r\in\RR\mid E(z_{0},\tfrac{1}{2}+\i r)=0\rbrace
\end{equation}
which is countable since $E(z_{0},s)$ is meromorphic in $s$. Also let
\begin{equation}
g(r)=\Re\lim_{z\to z_{0}}\lbrace G^{\Gamma}_{1/2+\i r}-\tfrac{1}{2}G^{\Gamma}_{t}-\tfrac{1}{2}G^{\Gamma}_{\bar{t}}\rbrace(z,z_{0}).
\end{equation}
We define
\begin{equation}
A_{z_{0}}=\lbrace-g(r)^{-1}\mid r\in S_{z_{0}}\wedge g(r)\neq0\rbrace,
\end{equation}
and claim that for any $\alpha\notin A_{z_{0}}\cup\lbrace0\rbrace$ equation \eqref{cond} has no solution. To see this suppose the contrary. So there exists $r_{0}\in\RR$ such that $E(z_{0},\tfrac{1}{2}+\i r_{0})=0$ and $g(r_{0})=-\alpha^{-1}\neq0$ which leads to a contradiction since $\alpha\notin A_{z_{0}}$. The result follows since $A_{z_{0}}$ is countable.
\end{proof}

\begin{remark}
It seems very likely that even for $\alpha\in A_{z_{0}}$ the new part of the perturbed cuspidal spectrum is always finite. This is motivated from the intuition that $g$ cannot map an infinite subset of $A_{z_{0}}$ to $\lbrace-\alpha^{-1}\rbrace\subset\RR\backslash\lbrace0\rbrace$. We point out the analogy with the Phillips-Sarnak-Conjecture \cite{PhSa} for smooth perturbations of the metric of a hyperbolic surface.
\end{remark}

\begin{remark}
In what follows we will denote the small unperturbed eigenvalues by $\lbrace\rho_{j}\rbrace_{-M}^{j=-1}$, the large unperturbed eigenvalues by $\lbrace\rho_{j}\rbrace_{j=0}^{\infty}$. Similarly we write for the small and large perturbed eigenvalues $\lbrace\rho^{\alpha}_{j}\rbrace_{-M}^{j=-1}$ and $\lbrace\rho^{\alpha}_{j}\rbrace_{j}$. The unperturbed resonances we denote by $\lbrace r_{j}\rbrace_{j=0}^{\infty}$ and the perturbed resonances by $\lbrace r^{\alpha}_{j}\rbrace_{j=0}^{\infty}$.
\end{remark}

\section{The trace formula for compact surfaces}

In this section we will give the proof of Theorem \ref{thm1}. Throughout the proof we will count the perturbed and unperturbed eigenvalues ignoring degeneracies to keep notation simple. Since in Theorem \ref{thm1} only the difference of perturbed and unperturbed trace appears it does not matter whether one counts degeneracies (cf. section 2.3. which explains this in the noncompact case and holds analogously in the compact case).

Let $\Gamma\backslash\HH$ be compact. We first prove a truncated trace formula. Recall
\begin{equation}
S_{\alpha,z_{0}}(s)=\alpha^{-1}+\lim_{z\to z_{0}}\lbrace G_{s}^{\Gamma}(z,z_{0})-\Re G_{t}^{\Gamma}(z,z_{0})\rbrace.
\end{equation}

\begin{prop}
Let $h\in H_{\sigma,\delta}$ and $T>0$. Define 
\begin{equation}
B(T)=\lbrace\rho\in\CC\mid|\Im\rho|<\sigma,\;|\Re\rho|<T\rbrace.
\end{equation} 
Then 
\begin{equation}\label{trunc}
\begin{split}
\sum_{\rho_{j}^{\alpha}\in B(T)}h(\rho^{\alpha}_{j})-\sum_{\rho_{j}\in B(T)}h(\rho_{j})=&\frac{1}{2\pi\i}\int_{-\i\sigma-T}^{-\i\sigma+T}h(\rho)\frac{S'_{\alpha,z_{0}}}{S_{\alpha,z_{0}}}(\tfrac{1}{2}+\i\rho)d\rho\\
&+\frac{1}{2\pi\i}\int_{-\i\sigma+T}^{\i\sigma+T}h(\rho)\frac{S'_{\alpha,z_{0}}}{S_{\alpha,z_{0}}}(\tfrac{1}{2}+\i\rho)d\rho.
\end{split}
\end{equation}
\end{prop}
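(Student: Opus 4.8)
The plan is to obtain \eqref{trunc} from the argument principle applied to the meromorphic function $g(\rho)=h(\rho)\,\tfrac{S'_{\alpha,z_{0}}}{S_{\alpha,z_{0}}}(\tfrac{1}{2}+\i\rho)$ on the rectangle $B(T)$, with Theorem~\ref{prop} supplying the location of its poles and the symmetry $S_{\alpha,z_{0}}(s)=S_{\alpha,z_{0}}(1-s)$ folding $\partial B(T)$ down to the two contours appearing on the right of \eqref{trunc}. First I would observe that, by properties (i)--(ii) of $H_{\sigma,\delta}$, $h$ is even and holomorphic on the whole closed strip $|\Im\rho|\le\sigma$, so $g$ can be singular inside $B(T)$ only where $s\mapsto S_{\alpha,z_{0}}(s)$ has a zero or a pole at $s=\tfrac{1}{2}+\i\rho$. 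By Theorem~\ref{prop}(a) these occur exactly at $\rho=\pm\rho_{j}$ (a pair of simple poles for each unperturbed eigenvalue $\lambda_{j}=\tfrac{1}{4}+\rho_{j}^{2}$) and at $\rho=\pm\rho_{j}^{\alpha}$ (a pair of simple zeros for each new perturbed eigenvalue $\lambda_{j}^{\alpha}=\tfrac{1}{4}+(\rho_{j}^{\alpha})^{2}$). It then remains to check that $\partial B(T)$ itself meets none of them and carries no other singularity: on the bottom edge $\Im\rho=-\sigma$ one has $\Re(\tfrac{1}{2}+\i\rho)=\tfrac{1}{2}+\sigma>1$, where the representation $S_{\alpha,z_{0}}(s)=\beta^{-1}+m\psi(s)+\sum_{\gamma\in\Gamma\backslash\scrI}G_{s}(z_{0},\gamma z_{0})$ (with $m=|\scrI|$) is valid, and combining $\psi(s)=\tfrac{1}{2\pi}\log s+O(|s|^{-1})$ with the uniform bound $\sum_{\gamma\in\Gamma\backslash\scrI}|G_{\tfrac{1}{2}+\i\rho}(z_{0},\gamma z_{0})|\ll_{\Gamma,z_{0}}\sigma^{-1/2}$ of Lemma~\ref{Greenbound} shows that $S_{\alpha,z_{0}}(\tfrac{1}{2}+\i\rho)$ stays finite and bounded away from $0$ there once $\sigma>C(\Gamma,\alpha,z_{0})$ (with $C(\Gamma,\alpha,z_{0})$ as in \eqref{large1}); the functional equation \eqref{symid} transports the same conclusion to the top edge $\Im\rho=\sigma$; and, restricting $T$ to avoid the discrete set $\{\,|\rho_{j}|,|\rho_{j}^{\alpha}|\mid\rho_{j},\rho_{j}^{\alpha}\in\RR\,\}$, the vertical edges $\Re\rho=\pm T$ are clear as well, the finitely many exceptional $T$ being recovered by a limiting argument.

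Next I would compute residues and use the symmetry to halve the computation. Writing $F(\rho)=S_{\alpha,z_{0}}(\tfrac{1}{2}+\i\rho)$, the chain rule gives $F'(\rho)=\i\,S'_{\alpha,z_{0}}(\tfrac{1}{2}+\i\rho)$, so $\tfrac{S'_{\alpha,z_{0}}}{S_{\alpha,z_{0}}}(\tfrac{1}{2}+\i\rho)$ is $\tfrac{1}{\i}$ times the logarithmic derivative $F'/F$; hence it has a simple pole of residue $\tfrac{1}{\i}$ at every simple zero of $F$ and of residue $-\tfrac{1}{\i}$ at every simple pole of $F$. Thus $\tfrac{1}{2\pi\i}\oint_{\partial B(T)}g\,d\rho$ is, up to that constant, the sum of $h$ over the zeros of $F$ in $B(T)$ minus the sum over the poles of $F$ in $B(T)$; since the zeros form the pairs $\pm\rho_{j}^{\alpha}$, the poles the pairs $\pm\rho_{j}$, and $h$ is even, this equals $2\sum_{\rho_{j}^{\alpha}\in B(T)}h(\rho_{j}^{\alpha})-2\sum_{\rho_{j}\in B(T)}h(\rho_{j})$. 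On the other hand, \eqref{symid} at $s=\tfrac{1}{2}+\i\rho$ reads $F(\rho)=F(-\rho)$, so $F$ is even, $F'/F$ is odd, and together with the evenness of $h$ the integrand $g$ is odd in $\rho$. Parametrising the four sides of $\partial B(T)$ and substituting $\rho\mapsto-\rho$, the oddness of $g$ identifies the integral along the top edge with that along the bottom edge and the integral along the left edge with that along the right edge, so that $\oint_{\partial B(T)}g\,d\rho=2\bigl(\int_{-\i\sigma-T}^{-\i\sigma+T}+\int_{-\i\sigma+T}^{\i\sigma+T}\bigr)g(\rho)\,d\rho$.

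Finally I would combine the two computations: the factor $2$ from the $\pm\rho$ pairing on the spectral side cancels the factor $2$ from the top/bottom and left/right identification on the contour side, and after incorporating the elementary constant produced by $\tfrac{d}{d\rho}(\tfrac{1}{2}+\i\rho)=\i$ one is left with exactly \eqref{trunc}. The only analytic input beyond Theorem~\ref{prop} is the residue theorem on a finite rectangle; for fixed $T$ each of the two sums in \eqref{trunc} has only finitely many terms (by Weyl's law) and the integrand is continuous on $\partial B(T)$, so no convergence question arises. I expect the main obstacle to be purely the verification that $\partial B(T)$ is disjoint from the zeros and poles of $S_{\alpha,z_{0}}$ — this is precisely what forces the hypothesis $\sigma>C(\Gamma,\alpha,z_{0})$, through Lemma~\ref{Greenbound} and the asymptotics of $\psi$ — together with the careful bookkeeping of signs and of the symmetry factors in the residue count.
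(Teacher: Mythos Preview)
Your proposal is correct and follows essentially the same approach as the paper: contour integration of $h(\rho)\tfrac{S'_{\alpha,z_{0}}}{S_{\alpha,z_{0}}}(\tfrac{1}{2}+\i\rho)$ along $\partial B(T)$, with the zeros and poles of $S_{\alpha,z_{0}}$ identified via the spectral expansion \eqref{specexp} (equivalently, via Theorem~\ref{prop}(a)), and the symmetry \eqref{symid} used to fold the four sides down to two. The paper's proof is simply a two-line sketch of this, while you have supplied the details; your extra discussion of why the horizontal edges $\Im\rho=\pm\sigma$ are free of singularities (invoking Lemma~\ref{Greenbound} and the condition $\sigma>C(\Gamma,\alpha,z_{0})$) is more than the Proposition itself requires---in the compact case Theorem~\ref{prop}(a) already confines all zeros and poles of $S_{\alpha,z_{0}}(\tfrac{1}{2}+\i\rho)$ to $\RR\cup\i\RR$, so the only genuine constraint is that $T$ and $\sigma$ avoid the discrete sets $\{|\rho_{j}|,|\rho_{j}^{\alpha}|\}$ and $\{|\Im\rho_{-k}|,|\Im\rho_{-k}^{\alpha}|\}$ respectively.
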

\begin{proof}
Let $t=\tfrac{1}{2}+\i\xi$. Since $\Gamma\backslash\HH$ is compact, we have for an orthonormal basis of eigenfunctions of the Laplacian $\lbrace\varphi_{j}\rbrace_{j=0}^{\infty}$ the spectral expansion
\begin{equation}
S_{\alpha,z_{0}}(\tfrac{1}{2}+\i\rho)=\alpha^{-1}+\sum_{j=-M}^{\infty}|\varphi_{j}(z_{0})|^{2}\left\{\frac{1}{\rho_{j}^{2}-\rho^{2}}-\Re\left\{\frac{1}{\rho_{j}^{2}-\xi^{2}}\right\}\right\}.
\end{equation}
We obtain the result by contour integration along $\partial B(T)$.
\end{proof}

In order to prove the full trace formula we must show
\begin{equation}
\lim_{T\to\infty}\int_{-\i\sigma+T}^{\i\sigma+T}h(\rho)\frac{S'_{\alpha,z_{0}}}{S_{\alpha,z_{0}}}(\tfrac{1}{2}+\i\rho)d\rho=0
\end{equation}
for $T$ such that $\partial B(T)$ does not contain any zeros or poles of $S_{\alpha,z_{0}}$. 

We require the following bound.
\begin{prop}\label{compbound}
There exists an increasing sequence $\lbrace T_{n}\rbrace_{n=0}^{\infty}$, such that $\lim_{n\to\infty}T_{n}=+\infty$, and for any $\epsilon>0$
\begin{equation}
\int_{-\i\sigma+T_{n}}^{\i\sigma+T_{n}}\big|\log|S_{\alpha,z_{0}}(\tfrac{1}{2}+\i\rho)|\big||d\rho|<\!\!<T_{n}^{2+\epsilon}.
\end{equation}
\end{prop}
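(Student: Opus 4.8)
The plan is to control the growth of $S_{\alpha,z_0}(s)$ along horizontal segments at carefully chosen heights $T_n$ by combining an upper bound (from the defining series/spectral expansion) with a lower bound obtained via a Jensen-type / minimum-modulus argument applied to a function of finite order. First I would observe that for $\Re s$ bounded, $S_{\alpha,z_0}(s)$ is a meromorphic function of finite order: indeed its poles are the points $\tfrac12\pm\i\rho_j$ coming from the eigenvalues $\lambda_j=\tfrac14+\rho_j^2$, and by the Weyl law $\#\{j:\lambda_j\le X\}\sim cX$ these poles have counting function growing like $R^2$ in a disc of radius $R$, so the order is at most $2$. More precisely, writing $S_{\alpha,z_0}(s)=P(s)/Q(s)$ with $Q$ an entire function of order $2$ whose zeros cancel the poles (e.g. built from the spectral data, or simply $Q(s)=$ the relevant product over eigenvalues regularised as in Selberg's construction), one reduces the problem to bounding $\log|P|$ and $\log|Q|$ separately.

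The key steps, in order, would be: (1) establish an \emph{upper} bound $|S_{\alpha,z_0}(\tfrac12+\i\rho)|\ll e^{C|\rho|^{2+\epsilon'}}$, or rather the two-sided statement that both $P$ and $Q$ are entire (or meromorphic with controlled poles) of order $\le 2+\epsilon$; for $\Re s$ large this is immediate from the absolutely convergent series $\beta^{-1}+m\psi(s)+\sum_{\gamma}G_s(z_0,\gamma z_0)$ together with the bound in Lemma \ref{Greenbound}, and for general $\Re s$ in the strip one uses the functional equation \eqref{FE} and the meromorphic continuation together with a Phragmén–Lindelöf argument. (2) Apply Jensen's formula (or the Borel–Carathéodory theorem combined with the minimum modulus theorem for functions of finite order) on discs of radius $\sim T_n$ centred on the critical line. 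Since a function of order $\rho_0\le 2+\epsilon$ satisfies, outside a union of discs of small total radius, a lower bound $\log|f(z)|\ge -C|z|^{\rho_0+\epsilon}$, one can choose the heights $T_n\to\infty$ so that the horizontal segment $\Re\rho=T_n$, $|\Im\rho|\le\sigma$, avoids all these exceptional discs (and all zeros/poles of $S_{\alpha,z_0}$), giving $\big|\log|S_{\alpha,z_0}(\tfrac12+\i\rho)|\big|\ll T_n^{2+\epsilon}$ pointwise on that segment. (3) Integrate over the segment of bounded length $2\sigma$ to conclude $\int_{-\i\sigma+T_n}^{\i\sigma+T_n}\big|\log|S_{\alpha,z_0}(\tfrac12+\i\rho)|\big|\,|d\rho|\ll T_n^{2+\epsilon}$.

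The main obstacle I anticipate is step (2), specifically the \emph{lower} bound: ruling out that $S_{\alpha,z_0}$ is extremely small on the chosen segments. This requires knowing not just that the poles of $S_{\alpha,z_0}$ (i.e. the $\rho_j$) are sparse, but that $S_{\alpha,z_0}$ written as a ratio of order-$2$ functions has a genuine product representation so that the minimum-modulus theorem applies; one cannot just invoke ``finite order'' abstractly for a meromorphic function without first peeling off the poles. A clean way around this is to note that $S_{\alpha,z_0}(s)\cdot\prod_j(\cdots)$ — or more naturally $S_{\alpha,z_0}(s)$ multiplied by a suitable ratio of Selberg zeta values, anticipating Section 9 where $S_{\alpha,z_0}$ is identified with a quotient of zeta functions — is entire of order $2$, and then the Hadamard factorisation plus the standard estimate ``$\log|f(z)|\gg -|z|^{2+\epsilon}$ away from a sparse exceptional set of discs'' is directly available. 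Choosing $\{T_n\}$ to avoid both the exceptional discs of $f$ and those of the reciprocal factor, which together have finite linear density, then yields the claim. The pigeonhole choice of $T_n$ in a dyadic-type interval $[n,n+1]$ (or $[2^n,2^{n+1}]$) is routine once the exceptional set is known to be small.
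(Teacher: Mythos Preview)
Your approach via Hadamard factorisation and the minimum-modulus theorem is genuinely different from the paper's, and if carried through it would in fact give a stronger, \emph{pointwise} bound $|\log|S_{\alpha,z_0}||\ll T_n^{2+\epsilon}$ on the segment rather than merely the $L^1$ bound that is stated and needed. The gap in your sketch is exactly where you locate it: establishing that $S_{\alpha,z_0}$ times a pole-cancelling factor is entire of order $\leq 2$. Your suggested Phragm\'en--Lindel\"of step is blocked by the poles on the critical line before you clear them, and invoking the identification with a ratio of Selberg zeta functions from Section~9 is circular, since the properties of $Z_\Gamma^{\alpha,z_0}$ there are \emph{defined} through $S_{\alpha,z_0}$ and deduced from the trace formula you are trying to prove. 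One can rescue the order-$2$ claim by independent means (e.g.\ heat-kernel estimates for the regularised Green function at coincident points, or the known order of $Z_\Gamma$ together with a direct estimate of $|S_{\alpha,z_0}\cdot Z_\Gamma|$ on vertical lines outside the critical strip plus an a priori crude bound inside), but this is real additional work you have not supplied.

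The paper's argument sidesteps all of this and never establishes a lower bound on $|S_{\alpha,z_0}|$. It proceeds by a bootstrap: first (Proposition~\ref{polybound}, Lemma~\ref{spacing}) a direct estimate on the spectral expansion gives a polynomial \emph{upper} bound $|S_{\alpha,z_0}(\tfrac12+\i\rho)|\leq c\,T_N^5$ on segments $[T_N,T_N-\i\sigma]$ with $T_N$ chosen (via Weyl's law and pigeonhole) to stay $\gg T_N^{-1}$ away from every $\rho_j$. Next, the truncated trace formula already shows that $\int_{[T_N-\i\sigma,T_N]} h(\rho)\,S'/S\,d\rho$ has a limit for every admissible $h$. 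The key device is then Lemma~\ref{testf}: a specific test function $h_\epsilon\in H_{\sigma,\epsilon}$ is constructed so that $\Re h_\epsilon'(\rho)$ has \emph{constant sign} on the segment and satisfies $|\Re h_\epsilon'(\rho)|\gg T_{N(j)}^{-2-\epsilon}$ there. After integration by parts, the convergence of the boundary term for this particular $h_\epsilon$, combined with the fact that $\log\big(c^{-1}T_N^{-5}|S_{\alpha,z_0}|\big)\leq 0$ on the segment, lets one pull the absolute value outside the integral and conclude
\[
T_{N(j)}^{-2-\epsilon}\int\big|\log|S_{\alpha,z_0}|\big|\,|d\rho|
\;\ll\;\int|\Re h_\epsilon'|\,\big|\log(c^{-1}T_N^{-5}|S_{\alpha,z_0}|)\big|\,|d\rho|
\;=\;\Big|\int\Re h_\epsilon'\cdot\log(c^{-1}T_N^{-5}|S_{\alpha,z_0}|)\,d\rho\Big|\;=\;O(1).
\]
So the $L^1$ bound is extracted from convergence of a single weighted integral plus sign-definiteness, with no minimum-modulus input at all. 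Your route buys a cleaner, more classical structure and a pointwise conclusion; the paper's route buys complete self-containment, needing only the polynomial upper bound and the trace-formula machinery already in hand.
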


In order to prove Proposition \ref{compbound} we require a bound on the relative zeta function on a sequence of intervals $[T_{N}-\i\sigma,T_{N}+\i\sigma]$ crossing the strip $|\Im\rho|\leq\sigma$. In fact $S_{\alpha,z_{0}}(\tfrac{1}{2}+\i\rho)$ admits a uniform bound of polynomial growth for a suitably chosen sequence of intervals.
\begin{prop}\label{polybound}
Let $\eta\in\CC$. Then $\exists\lbrace T_{N}\rbrace_{N}$ in $\RR_{+}$, $\lim_{N} T_{N}=+\infty$ such that uniformly $\forall N,\forall t\in[-\sigma,\sigma]$
\begin{equation}\label{polyn}
\sum_{j\geq-M}|\varphi_{j}(z_{0})|^{2}\left|\frac{1}{\lambda_{j}-\mu_{N}(t)}-\frac{1}{\lambda_{j}-\eta}\right|<\!\!<_{\eta,\Gamma}T_{N}^{5}.
\end{equation}
where $\mu_{N}(t)=\tfrac{1}{4}+(T_{N}+\i t)^{2}$.
\end{prop}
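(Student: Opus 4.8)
The plan is to exploit the fact that the quantity on the left of \eqref{polyn} is, up to the fixed term involving $\eta$, essentially the local Weyl-type counting function for the eigenvalues of the Laplacian weighted by $|\varphi_j(z_0)|^2$, and that this weighted counting function has controlled average growth. First I would split the sum at $\lambda_j \leq 2T_N^2$ and $\lambda_j > 2T_N^2$. For the tail $\lambda_j > 2T_N^2$, the denominators satisfy $|\lambda_j - \mu_N(t)| \gg \lambda_j$ uniformly in $t\in[-\sigma,\sigma]$ (since $\Re\mu_N(t) = \tfrac14 + T_N^2 - t^2 \leq 2T_N^2$ for $N$ large), and likewise $|\lambda_j-\eta|\gg\lambda_j$, so the tail is bounded by $\sum_{\lambda_j>2T_N^2}|\varphi_j(z_0)|^2\lambda_j^{-1}$; by the pointwise Weyl law $\sum_{\lambda_j\leq X}|\varphi_j(z_0)|^2 \ll_{\Gamma,z_0} X$ (which follows from the local heat-kernel / pretrace inequality applied at the single point $z_0$ on a compact surface), partial summation gives a bound of $O(\log T_N) = O(T_N^5)$, which is more than enough.

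For the main part $\lambda_j \leq 2T_N^2$, the issue is the eigenvalues $\lambda_j$ that lie close to $\Re\mu_N(t) = \tfrac14 + T_N^2 - t^2$; here the denominator $|\lambda_j - \mu_N(t)|$ can be as small as $|\Im\mu_N(t)| = 2T_N|t|$, which degenerates as $t\to 0$. This is where the freedom to choose the sequence $\{T_N\}$ enters, and I expect it to be the crux of the argument. The idea is: by the pointwise Weyl law the total weight $\sum_{\lambda_j \leq 2T_N^2}|\varphi_j(z_0)|^2$ is $O(T_N^2)$, so on average over a unit window the weight of eigenvalues in an interval of length $1$ around $T_N^2$ is $O(1)$; one then uses a pigeonhole / mean-value argument over $T \in [N, N+1]$ (or over the dyadic-type windows $[2^N, 2^{N+1}]$) to select $T_N$ so that no eigenvalue $\lambda_j$ lies within distance $\gg T_N^{-3}$, say, of the vertical segment $\{\tfrac14 + T_N^2 - t^2 : |t|\leq\sigma\}$. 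Concretely, the "bad" set of $T$ in a unit interval for which some $\lambda_j$ is within $\eta_0$ of the relevant arc has measure $\ll \eta_0 \cdot \#\{j : \lambda_j \approx T^2\} \ll \eta_0 T^2$; choosing $\eta_0 \asymp T^{-3}$ makes this measure $\ll T^{-1} \to 0$, so for $N$ large a valid $T_N$ exists in each window. With such a choice the terms with $|\lambda_j - \Re\mu_N(t)| \leq 1$ contribute at most $\sum |\varphi_j(z_0)|^2 \cdot (T_N^{-3})^{-1} \ll T_N^2 \cdot T_N^3 = T_N^5$, and the remaining terms with $|\lambda_j - \Re\mu_N(t)| > 1$ are bounded by $\sum_{\lambda_j \leq 2T_N^2}|\varphi_j(z_0)|^2 \cdot (\text{bounded}) \ll T_N^2$ via a standard dyadic decomposition of the distance $|\lambda_j - \Re\mu_N(t)|$ together with the Weyl law on each dyadic block.

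Finally, the uniformity in $t\in[-\sigma,\sigma]$ is handled by noting that the arc $\{\tfrac14 + T_N^2 - t^2 : |t|\leq\sigma\}$ has length $O(\sigma^2)$ independent of $N$, so enlarging the excluded neighbourhood by this fixed amount does not change the order of magnitude of the "bad" measure estimate; and the $\eta$-dependence is harmless since $\eta$ is fixed and $|\lambda_j - \eta| \gg 1$ for all $j$ with at most finitely many exceptions (absorbed into the implied constant). The main obstacle, as indicated, is organizing the pigeonhole selection of $\{T_N\}$ cleanly so that a single sequence works simultaneously for all $t$ in the compact interval; everything else is routine once the pointwise Weyl bound $\sum_{\lambda_j \leq X}|\varphi_j(z_0)|^2 = O_{\Gamma,z_0}(X)$ is invoked.
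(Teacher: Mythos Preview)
Your approach is sound and reaches the same $T_N^5$ bound, but it differs from the paper's argument in two substantive ways, and there is one imprecision worth flagging.

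\textbf{Comparison with the paper.} The paper does \emph{not} use the pointwise Weyl law $\sum_{\lambda_j\leq X}|\varphi_j(z_0)|^2\ll X$; instead it invokes the individual sup-norm bound $|\varphi_j(z_0)|^2\ll\lambda_j^{1/2}$ (Iwaniec (8.3')) and then factors the sum as $|\eta-\mu_N(t)|\sum_j \lambda_j^{1/2}/(|\lambda_j-\mu_N(t)||\lambda_j-\eta|)$. For the selection of $\{T_N\}$ the paper proves a short spacing lemma (Lemma~\ref{spacing}): from the crude count $\#\{j:\lambda_j\leq T\}\ll T$ one extracts an infinite set of \emph{consecutive} eigenvalue pairs $\lambda_{j_k},\lambda_{j_k+1}$ with gap $\geq c_0(\Gamma)$, and takes $T_N$ at the midpoint in the $\rho$-variable. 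This yields the stronger separation $|\lambda_j-\mu_N(t)|\gg 1$ for \emph{all} $j$, rather than your $\gg T_N^{-3}$. The paper then splits according to whether $\inf_t|\lambda_j-\mu_N(t)|<\lambda_j^\beta$ for a fixed $\beta\in(\tfrac12,1)$: the central block has $\ll T_N^2$ terms each of size $\ll T_N$, and the tail $\sum_j\lambda_j^{1/2-\beta}/|\lambda_j-\eta|$ converges absolutely. Your pigeonhole produces a smaller gap but compensates with the stronger aggregate bound on $\sum|\varphi_j(z_0)|^2$; both routes land exactly at $T_N^5$. Your argument is arguably more natural given that the weights $|\varphi_j(z_0)|^2$ are already present, while the paper's route gives a cleaner explicit description of the sequence $\{T_N\}$, which is convenient when the same sequence is reused later (e.g.\ in Lemma~\ref{testf} and in Section~6).

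\textbf{The imprecision.} Your treatment of uniformity in $t$ is not quite right as written. You propose to avoid all eigenvalues from the \emph{entire} real segment $\{\tfrac14+T_N^2-t^2:|t|\leq\sigma\}$, which has fixed length $\sigma^2$; enlarging the excluded neighbourhood by $\sigma^2$ makes the bad-measure estimate of order $\sigma^2$ rather than $o(1)$, so the pigeonhole as stated does not close. The fix is immediate: since $|\Im\mu_N(t)|=2T_N|t|$, you automatically have $|\lambda_j-\mu_N(t)|\geq 2T_N|t|\gg T_N^{-3}$ whenever $|t|\gg T_N^{-4}$, so the pigeonhole is only needed at the single point $t=0$ (equivalently, for the single real value $\tfrac14+T_N^2$), and then your measure estimate goes through. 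The paper sidesteps this entirely because its gap $|\lambda_j-\mu_N(t)|\gg 1$ already dominates the $O(\sigma^2)$ drift of $\Re\mu_N(t)$.
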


Before we give the proof of Proposition \ref{polybound} we state a Lemma which will play a central role in the proof.
\begin{lem}\label{spacing}
Let $\lbrace\varphi_{j}\rbrace_{j}$ be the set of eigenfunctions on $\Gamma\backslash\HH$ with $(\Delta+\lambda_{j})\varphi_{j}=0$. Then there exists a subsequence $\lbrace\lambda_{j_{k}}\rbrace_{k=0}^{\infty}\subset\lbrace\lambda_{j}\rbrace_{j=0}^{\infty}$ and a constant $c_{0}(\Gamma)>0$ s. t. $|\lambda_{j_{k}+1}-\lambda_{j_{k}}|\geq c_{0}(\Gamma)$.
\end{lem}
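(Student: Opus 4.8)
This lemma is a soft consequence of Weyl's law combined with a pigeonhole argument, and I do not expect any genuine obstacle: the only non-trivial input is an external, classical fact. Concretely, the plan is to use Weyl's law for the Laplacian on the compact surface $\Gamma\backslash\HH$, namely that with $N(T)=\#\{j\geq 0\mid\lambda_{j}\leq T\}$ (counting multiplicities) one has
\begin{equation}
N(T)=\frac{\Area(\Gamma\backslash\HH)}{4\pi}\,T+O\!\left(T^{1/2}\log T\right),\qquad T\to\infty,
\end{equation}
so in particular there are infinitely many eigenvalues and $\lambda_{n}/n\to 4\pi/\Area(\Gamma\backslash\HH)$ as $n\to\infty$. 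This follows, e.g., from the Selberg trace formula applied to a heat test function, or from a Tauberian argument applied to $\sum_{j}\e^{-\lambda_{j}t}$; it will be invoked without proof.

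\textbf{Key steps.} First I would set $c_{0}(\Gamma)=\dfrac{2\pi}{\Area(\Gamma\backslash\HH)}$, which depends only on $\Gamma$, and reduce the lemma to the combinatorial claim that infinitely many indices $j\geq 0$ satisfy $\lambda_{j+1}-\lambda_{j}\geq c_{0}(\Gamma)$: indeed, enumerating such indices in increasing order as $j_{0}<j_{1}<j_{2}<\cdots$ then yields the desired subsequence $\{\lambda_{j_{k}}\}_{k\geq 0}$. To prove the claim I would argue by contradiction. Suppose only finitely many $j$ satisfy $\lambda_{j+1}-\lambda_{j}\geq c_{0}(\Gamma)$, and let $C'$ denote the (finite) sum of those exceptional gaps. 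Then for every $n$, telescoping and splitting the gaps according to size gives
\begin{equation}
\lambda_{n}-\lambda_{0}=\sum_{j=0}^{n-1}(\lambda_{j+1}-\lambda_{j})\leq c_{0}(\Gamma)\,n+C'.
\end{equation}
Dividing by $n$ and letting $n\to\infty$, Weyl's law forces
\begin{equation}
\frac{4\pi}{\Area(\Gamma\backslash\HH)}=\lim_{n\to\infty}\frac{\lambda_{n}}{n}\leq c_{0}(\Gamma)=\frac{2\pi}{\Area(\Gamma\backslash\HH)},
\end{equation}
a contradiction. Hence the claim holds, and the lemma follows.

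\textbf{Remarks on the execution.} The presence of degenerate eigenvalues is harmless, since a zero gap contributes nothing to the sum above; likewise the finitely many small eigenvalues $\lambda_{-M},\dots,\lambda_{-1}$ below $\tfrac14$ play no role and are absorbed into the $O(1)$ error in Weyl's law. If one wishes to minimise the input, only the one-sided bound $N(T)\leq C_{\Gamma}T$ is really needed: it gives $\lambda_{n}\geq n/C_{\Gamma}$, and the same contradiction then works with $c_{0}(\Gamma):=\tfrac12 C_{\Gamma}^{-1}$. The main (and essentially only) point where anything substantial is used is thus the invocation of Weyl's law; everything else is elementary.
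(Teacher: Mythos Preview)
Your proof is correct and uses the same essential input as the paper---the linear upper bound $N(T)\leq c(\Gamma)T$ on the counting function---combined with a pigeonhole/telescoping argument. The paper's execution is slightly different: rather than a single global contradiction, it shows constructively that every dyadic block $[\lambda_{n},2\lambda_{n}]$ must contain at least one gap of size $\geq c_{0}(\Gamma)=\tfrac{1}{3}c(\Gamma)^{-1}$ (otherwise summing the gaps would give $|\lambda_{\max}-\lambda_{n}|<\tfrac{2}{3}\lambda_{n}$, contradicting $|\lambda_{\max}-\lambda_{n}|\geq\lambda_{n}$), and then iterates over successive blocks to build the subsequence. Your global telescoping version is arguably cleaner; the paper's local version has the minor advantage of producing the subsequence explicitly and giving some control on where the $\lambda_{j_k}$ lie, which is convenient when the lemma is later reused (e.g.\ in the noncompact case with the combined set of eigenvalues and resonance ordinates).
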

\begin{proof}
We have the standard upper bound on the number of eigenvalues $\#\lbrace j\mid\lambda_{j}\leq T\rbrace\leq c(\Gamma)T$ for some constant $c(\Gamma)>0$. Let $c_{0}(\Gamma)=\tfrac{1}{3}c(\Gamma)^{-1}$. Pick any $\lambda_{n_{1}}\geq c_{0}(\Gamma)$. If $|\lambda_{n_{1}}-\lambda_{n_{1}+1}|>\lambda_{n_{1}}$ then choose $\lambda_{j_{1}}=\lambda_{n_{1}}$ to be the first member of our subsequence and proceed according to the last line of this proof. So suppose that $|\lambda_{n_{1}}-\lambda_{n_{1}+1}|\leq\lambda_{n_{1}}$. We claim that there exists $\lambda_{j_{1}}\in[\lambda_{n_{1}},2\lambda_{n_{1}}]$ s. t. $|\lambda_{j_{1}}-\lambda_{j_{1}+1}|\geq c_{0}(\Gamma) =\tfrac{1}{3}c(\Gamma)^{-1}$. To see this suppose the contrary for a contradiction. So suppose that for all $\lambda_{j}\in[\lambda_{n_{1}},2\lambda_{n_{1}}]$ we have $|\lambda_{j}-\lambda_{j+1}|<c_{0}(\Gamma)$. Let
\begin{equation}
\lambda_{max,1}=\min\lbrace\lambda_{j}\mid\lambda_{j}\geq2\lambda_{n_{1}}\rbrace.
\end{equation}
We have the estimate
\begin{equation}
\begin{split}
|\lambda_{n_{1}}-\lambda_{max,1}|\leq&\;\sum_{\lambda_{j}\in[\lambda_{n_{1}},2\lambda_{n_{1}}]}|\lambda_{j}-\lambda_{j+1}|\\
<&\;c_{0}(\Gamma)\#\lbrace j\mid\lambda_{j}\in[\lambda_{n_{1}},2\lambda_{n_{1}}]\rbrace\\
\leq&\;2c_{0}(\Gamma)c(\Gamma)\lambda_{n_{1}}=\tfrac{2}{3}\lambda_{n_{1}}
\end{split}
\end{equation}
Which is a contradiction $|\lambda_{n_{1}}-\lambda_{max,1}|\geq|2\lambda_{n_{1}}-\lambda_{n_{1}}|=\lambda_{n_{1}}$. We conclude that there must exist $\lambda_{j_{1}}\in[\lambda_{n_{1}},2\lambda_{n_{1}}]$ s. t. $|\lambda_{j_{1}}-\lambda_{j_{1}+1}|\geq c_{0}(\Gamma)$.

Pick $\lambda_{n_{2}}>2\lambda_{n_{1}}$ and proceed as above. We can continue this procedure indefinitely and in such a way construct the sequence $\lbrace\lambda_{j_{k}}\rbrace_{k=0}^{\infty}\subset\lbrace\rho_{j}\rbrace_{j=0}^{\infty}$ with the desired property.
\end{proof}

We apply Lemma \ref{spacing} to prove the Proposition.\\

\textit{Proof of Proposition \ref{polybound}.}
By Lemma \ref{spacing} we can choose an infinite sequence $T_{N}=\tfrac{1}{2}\lbrace\rho_{N}+\rho_{N+1}\rbrace$ with $T_{N}^{-1}<\!\!<|\rho_{N}-\rho_{N+1}|$. Let $\mu_{N}(t)=\tfrac{1}{4}+(T_{N}+\i t)^{2}$, $t\in[-\sigma,0]$. We have
\begin{equation}
\sum_{j=-M}^{\infty}|\varphi_{j}(z_{0})|^{2}\left|\frac{1}{\lambda_{j}-\mu_{N}(t)}-\frac{1}{\lambda_{j}-\eta}\right|
<\!\!< |\eta-\mu_{N}(t)|\sum_{j=-M}^{\infty}\frac{\lambda_{j}^{1/2}}{|\lambda_{j}-\mu_{N}(t)||\lambda_{j}-\eta|}
\end{equation}
where we have used the bound $|\varphi_{j}(z_{0})|^{2}<\!\!<\lambda_{j}^{1/2}$ (cf. \cite{Iw}, p. 108, (8.3')). Fix $\beta\in(\tfrac{1}{2},1)$. We split the sum into a central part satisfying $\inf_{t\in[-\sigma,0]}|\lambda_{j}-\mu_{N}(t)|<\lambda_{j}^{\beta}$ and a corresponding tail. For convenience we let $I_{N}(\lambda_{j})=\inf_{t\in[-\sigma,0]}|\lambda_{j}-\mu_{N}(t)|$. The first sum is estimated by
\begin{equation}
\sum_{I_{N}(\lambda_{j})<\lambda_{j}^{\beta}}\frac{\lambda_{j}^{1/2}}{|\lambda_{j}-\mu_{N}(t)||\lambda_{j}-\eta|}
\leq\#\lbrace j\mid I_{N}(\lambda_{j})<\lambda_{j}^{\beta}\rbrace\, \max_{I_{N}(\lambda_{j})<\lambda_{j}^{\beta}}\,\sup_{t\in[-\sigma,0]}\left\{\frac{\lambda_{j}^{1/2}}{|\lambda_{j}-\mu_{N}(t)||\lambda_{j}-\eta|}\right\}.
\end{equation}
Now if $\lambda_{j}>\tfrac{1}{4}+T_{N}^{2}$ then $I_{N}(\lambda_{j})=\lambda_{j}-\tfrac{1}{4}-T_{N}^{2}$. It follows
\begin{equation}
\begin{split}
\#\lbrace j\mid I_{N}(\lambda_{j})<\lambda_{j}^{\beta}\rbrace
\leq&\#\lbrace j\mid\lambda_{j}\leq\tfrac{1}{4}+T_{N}^{2}\rbrace
+\#\lbrace j\mid \lambda_{j}-\lambda_{j}^{\beta}<\tfrac{1}{4}+T_{N}^{2}\rbrace
\end{split}
\end{equation}
Let
\begin{equation}
C(\beta)=\#\lbrace j\mid\lambda_{j}\leq 2^{1/(1-\beta)}\rbrace
\end{equation}
and observe that $\lambda_{j}>2^{1/(1-\beta)}$ implies $\lambda_{j}^{\beta-1}<\tfrac{1}{2}$. So $\lambda_{j}>2^{1/(1-\beta)}$ together with $\lambda_{j}(1-\lambda_{j}^{\beta-1})<\tfrac{1}{4}+T_{N}^{2}$ implies
\begin{equation}
\lambda_{j}<2\lambda_{j}(1-\lambda_{j}^{\beta-1})<\tfrac{1}{2}+2T_{N}^{2}.
\end{equation}
Hence
\begin{equation}
\begin{split}
\#\lbrace j\mid\lambda_{j}(1-\lambda_{j}^{\beta-1})<\tfrac{1}{4}+T_{N}^{2}\rbrace
\leq&\,\#\lbrace j\mid\lambda_{j}\leq2^{1/(1-\beta)}\wedge\lambda_{j}(1-\lambda_{j}^{\beta-1})<\tfrac{1}{4}+T_{N}^{2}\rbrace\\
&+\#\lbrace j\mid\lambda_{j}>2^{1/(1-\beta)}\wedge\lambda_{j}(1-\lambda_{j}^{\beta-1})<\tfrac{1}{4}+T_{N}^{2}\rbrace\\
\leq&\,C(\beta)+\#\lbrace j\mid2^{1/(1-\beta)}<\lambda_{j}<\tfrac{1}{2}+2T_{N}^{2}\rbrace\\
\leq&\,C(\beta)+\tfrac{1}{2}c+2cT_{N}^{2}.
\end{split}
\end{equation}
It follows that
\begin{equation}
\#\lbrace j\mid I_{N}(\lambda_{j})<\lambda_{j}^{\beta}\rbrace<\!\!<_{\beta}T_{N}^{2}.
\end{equation}
By the same observations as above we see that $I(\lambda_{j})<\lambda_{j}^{\beta}$ implies $\lambda_{j}\leq\max\lbrace2^{1/(1-\beta)},\tfrac{1}{2}+2T_{N}^{2}\rbrace$. Also for any $j\geq0$ we have $|\rho_{j}-T_{N}|\geq\tfrac{1}{2}|\rho_{N}-\rho_{N+1}|>\!\!>T_{N}^{-1}$ which implies
\begin{equation}
\begin{split}
|\lambda_{j}-\mu_{N}(t)|=|\rho_{j}^{2}-(T_{N}+\i t)^{2}|
=&\;|\rho_{j}-T_{N}-\i t||\rho_{j}+T_{N}+\i t|\\
\geq&\;|\rho_{j}-T_{N}|(\rho_{j}+T_{N})\\
>\!\!>&\;1.
\end{split}
\end{equation}
Since $|\lambda_{j}-\eta|\geq|\Im\eta|>0$ we have
\begin{equation}
\max_{I_{N}(\lambda_{j})<\lambda_{j}^{\beta}}\,\sup_{t\in[-\sigma,0]}\left\{\frac{\lambda_{j}^{1/2}}{|\lambda_{j}-\mu_{N}(t)||\lambda_{j}-\eta|}\right\}
<\!\!<_{\eta}T_{N}.
\end{equation}

The tail can be bounded as follows
\begin{equation}
\begin{split}
\sum_{I(\lambda_{j})\geq\lambda_{j}^{\beta}}\frac{\lambda_{j}^{1/2}}{|\lambda_{j}-\mu_{N}(t)||\lambda_{j}-\eta|}
\leq&\sum_{I(\lambda_{j})\geq\lambda_{j}^{\beta}}\frac{\lambda_{j}^{1/2-\beta}}{|\lambda_{j}-\eta|}\\
\leq&\sum_{j=0}^{\infty}\frac{\lambda_{j}^{1/2-\beta}}{|\lambda_{j}-\eta|}<+\infty.
\end{split}
\end{equation}
Finally note that $|\mu_{N}(t)-\eta|<\!\!<_{\eta}T_{N}^{2}$.
\begin{flushright}
$\square$
\end{flushright}

The following Lemma establishes the existence of a test function in the space $H_{\sigma,\delta}$ with certain properties which we will use in the proof of Proposition \ref{compbound} as well as in the proof of the noncompact case. 
\begin{lem}\label{testf}
Let $\epsilon>0$. There exists $h_{\epsilon}\in H_{\sigma,\delta}$ such that
\begin{equation}\label{sym2}
\overline{h_{\epsilon}(\rho)}=h_{\epsilon}(\overline{\rho})
\end{equation}
and for some subsequence $\lbrace T_{N(j)}\rbrace_{j}\subset\lbrace T_{N}\rbrace_{N}$, $\lim_{j\to\infty}T_{N(j)}=\infty$ and $\rho\in[T_{N(j)},T_{N(j)}-\i\sigma]$
\begin{equation}
|\Re h'_{\epsilon}(\rho)|>\!\!>T_{N(j)}^{-2-\epsilon}
\end{equation}
uniformly in $j$ and $\rho$.
\end{lem}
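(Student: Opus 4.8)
The plan is to exhibit $h_{\epsilon}$ essentially by hand: a rational function decaying at the rate prescribed by $H_{\sigma,\delta}$, multiplied by a trigonometric factor that is bounded on the strip but whose derivative can be kept away from $0$ along the relevant vertical segments. Fix $c>\sigma$ with $c^{2}-\sigma^{2}>1$ and set
\begin{equation*}
h_{\epsilon}(\rho)=\frac{\cos\rho}{(\rho^{2}+c^{2})^{1+\delta/2}} .
\end{equation*}
The first step is to verify that $h_{\epsilon}\in H_{\sigma,\delta}$ and that it is real--symmetric. It is even since both factors are. For $|\Im\rho|\le\sigma<c$ one has $\Re(\rho^{2}+c^{2})=(\Re\rho)^{2}-(\Im\rho)^{2}+c^{2}\ge(\Re\rho)^{2}+1>0$, so $\rho^{2}+c^{2}$ stays in the open right half--plane; there the principal power is analytic and satisfies $\overline{(\rho^{2}+c^{2})^{1+\delta/2}}=(\overline{\rho}^{2}+c^{2})^{1+\delta/2}$, which together with $\overline{\cos\rho}=\cos\overline{\rho}$ gives $\overline{h_{\epsilon}(\rho)}=h_{\epsilon}(\overline{\rho})$. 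On the strip $|\cos\rho|\le\cosh\sigma$ while $|\rho^{2}+c^{2}|\ge(\Re\rho)^{2}+1\gg(1+|\Re\rho|)^{2}$, hence $|h_{\epsilon}(\rho)|<\!\!<(1+|\Re\rho|)^{-2-\delta}$; so $h_{\epsilon}\in H_{\sigma,\delta}$.

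Next I would differentiate and isolate the dominant term on a segment $\rho=T+\i t$, $t\in[-\sigma,0]$, as $T\to\infty$. Since
\begin{equation*}
h_{\epsilon}'(\rho)=-\sin\rho\,(\rho^{2}+c^{2})^{-1-\delta/2}-(2+\delta)\,\rho\cos\rho\,(\rho^{2}+c^{2})^{-2-\delta/2},
\end{equation*}
and on the segment $|\rho^{2}+c^{2}|\asymp T^{2}$ with $\arg(\rho^{2}+c^{2})=O(1/T)$, the last term is $O(T^{-3-\delta})$ and, using $\Re\sin\rho=\sin T\cosh t$,
\begin{equation*}
\Re h_{\epsilon}'(T+\i t)=-\frac{\sin T\,\cosh t}{|\rho^{2}+c^{2}|^{1+\delta/2}}+O(T^{-3-\delta}).
\end{equation*}
Because $\cosh t\ge1$ and $|\rho^{2}+c^{2}|^{1+\delta/2}\asymp T^{2+\delta}$, this gives $|\Re h_{\epsilon}'(\rho)|>\!\!>_{\delta}|\sin T|\,T^{-2-\delta}$ uniformly along the segment, as long as $|\sin T|$ is bounded away from $0$; and $T^{-2-\delta}\ge T^{-2-\epsilon}$ once $\delta\le\epsilon$ and $T\ge1$, which is the only relevant range.

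It remains to select the subsequence. By Bolzano--Weierstrass on $\RR/2\pi\ZZ$, pass to a subsequence $\{T_{N(j)}\}$ with $T_{N(j)}\to\theta_{0}$ modulo $2\pi$. Since $\max\{|\sin\theta_{0}|,|\cos\theta_{0}|\}\ge2^{-1/2}$, either $\sin\theta_{0}\neq0$, in which case $|\sin T_{N(j)}|\ge\tfrac12|\sin\theta_{0}|$ for $j$ large and the estimate above yields $|\Re h_{\epsilon}'(\rho)|>\!\!>T_{N(j)}^{-2-\epsilon}$ on $[T_{N(j)},T_{N(j)}-\i\sigma]$ uniformly in (large) $j$ and in $\rho$; or $\sin\theta_{0}=0$, so $|\cos\theta_{0}|=1$, and I would instead take
\begin{equation*}
\tilde h_{\epsilon}(\rho)=\frac{\sin\rho}{\rho}\,(\rho^{2}+c^{2})^{-(1+\delta)/2},
\end{equation*}
which is even, real--symmetric, analytic on the strip, and obeys $|\tilde h_{\epsilon}(\rho)|<\!\!<(1+|\Re\rho|)^{-1}(1+|\Re\rho|)^{-1-\delta}$, hence lies in $H_{\sigma,\delta}$; the same computation shows that the leading part of $\Re\tilde h_{\epsilon}'(T+\i t)$ is a positive quantity of size $\asymp T^{-2-\delta}$ times $\cos T\,\cosh t$, so $|\cos T_{N(j)}|\ge\tfrac12$ again gives $|\Re\tilde h_{\epsilon}'(\rho)|>\!\!>T_{N(j)}^{-2-\epsilon}$ on the segments. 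The one real obstacle is that the sequence $\{T_{N}\}$ produced by Proposition \ref{polybound} comes from spectral gaps and its residues modulo $2\pi$ are entirely out of our control, so no fixed $\cos\rho$ (or $\cos(a\rho)$) can be guaranteed nondegenerate along it; the passage to a convergent subsequence of $\{T_{N}\bmod 2\pi\}$ together with the two candidates $h_{\epsilon},\tilde h_{\epsilon}$ --- at least one of whose trigonometric factors is nonzero at $\theta_{0}$ --- is exactly what removes this difficulty, and everything else is the elementary asymptotics of a single rational function on a vertical segment receding to $+\infty$.
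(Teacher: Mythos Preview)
Your argument is correct and takes a genuinely different route from the paper's own proof.

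The paper constructs $h_{\epsilon}$ ad hoc from the sequence $\{T_{N}\}$ itself: it first passes to a dyadically separated subsequence $T_{N(k)}\in[2^{n(k)-1}-1,2^{n(k)}-1]$ with $n(k+1)>n(k)+1$, and then sets
\[
h_{\epsilon}(\rho)=\sum_{k\ge1}2^{-n(k)(2+\epsilon)}\sum_{\pm,\pm}\bigl(\rho\mp(T_{N(k)}+1)\mp\i\sigma_{0}\bigr)^{-4},\qquad \sigma_{0}>\sigma,
\]
a sum of fourth--order poles just outside the strip. Near the segment $[T_{N(j)},T_{N(j)}-\i\sigma]$ the $k=j$ term dominates all others because of the dyadic spacing, and an explicit computation of the real part of a single $(\,\cdot\,)^{-5}$ term gives the lower bound $\gg T_{N(j)}^{-2-\epsilon}$. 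No case distinction is needed: one function serves the whole subsequence.

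Your construction is more economical. You take a single universal function $\cos\rho\,(\rho^{2}+c^{2})^{-1-\delta/2}$ (or its companion with $\sin\rho/\rho$), whose derivative along a far vertical segment has real part $-\sin T\,\cosh t\cdot T^{-2-\delta}+O(T^{-3-\delta})$; then a Bolzano--Weierstrass pass on $T_{N}\bmod 2\pi$ produces a subsequence along which $|\sin T_{N(j)}|$ (or $|\cos T_{N(j)}|$) stays bounded below, and you pick whichever of the two candidates matches. The asymptotic analysis is shorter, and the test function does not encode the sequence at all. The price is the dichotomy between the $\cos$ and $\sin$ variants, and the implicit choice $\delta\le\epsilon$ (consistent with how the lemma is actually used, where one takes $h_{\epsilon}\in H_{\sigma,\epsilon}$). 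Both approaches yield exactly what Propositions~\ref{compbound} and~\ref{prop24} need.
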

\begin{proof}
Since $\lim_{T\to\infty}T_{N}=\infty$, we can pick a subsequence $\lbrace T_{N(k)}\rbrace_{k=1}^{\infty}\subset\lbrace T_{N}\rbrace_{N=1}^{\infty}$ such that $2^{n(k)-1}\leq T_{N(k)}+1\leq 2^{n(k)}$ for some integer $n(k)\geq1$ with $n(k+1)>n(k)+1$ for all $k\geq1$. Now let $\sigma_{0}>\sigma$ and consider the test function
\begin{equation}
\begin{split}
h_{\epsilon}(\rho)=\sum_{k=1}^{\infty}2^{-n(k)(2+\epsilon)}\bigg\{
& \frac{1}{(\rho-T_{N(k)}-1-\i\sigma_{0})^{4}}+\frac{1}{(\rho-T_{N(k)}-1+\i\sigma_{0})^{4}}\\
& +\frac{1}{(\rho+T_{N(k)}+1+\i\sigma_{0})^{4}}+\frac{1}{(\rho+T_{N(k)}+1-\i\sigma_{0})^{4}}\bigg\}.
\end{split}
\end{equation}
By construction the property \eqref{sym2} is fulfilled. $h_{\epsilon}$ is even and analytic in the strip $\Im\rho\leq\sigma$. We will show that it also satisfies $h_{\epsilon}(\rho)<\!\!<(1+|\Re\rho|)^{-2-\epsilon}$ uniformly in the strip $\Im\rho\leq\sigma$. Because of evenness we only have to prove the bound for $\Re\rho>0$. We have
\begin{equation}
|h_{\epsilon}(\rho)|\leq2\sum_{k=1}^{\infty}2^{-n(k)(2+\epsilon)}\left\{\frac{1}{|\rho-T_{N(k)}-1+\i\sigma_{0}|^{4}}+\frac{1}{|\rho+T_{N(k)}+1+\i\sigma_{0}|^{4}}\right\}.
\end{equation}
We will estimate the two parts separately. For the second sum we have
\begin{equation}
\begin{split}
\sum_{k=1}^{\infty}2^{-n(k)(2+\epsilon)}\frac{1}{|\rho+T_{N(k)}+1+\i\sigma_{0}|^{4}}
&\leq\sum_{k=1}^{\infty}2^{-n(k)(2+\epsilon)}\frac{1}{(\Re\rho+T_{N(k)}+1)^{4}}\\
&\leq|\Re\rho|^{-4}\sum_{k=1}^{\infty}2^{-k(2+\epsilon)}.
\end{split}
\end{equation}
The first sum is more difficult to estimate. Since $h_{\epsilon}$ is analytic in the strip $\Im\rho\leq\sigma$ it suffices to prove the bound for $\Re\rho\geq1$. So there exist and integer $j\geq1$ such that $2^{j-1}\leq\Re\rho\leq2^{j}$. We obtain
\begin{equation}
\begin{split}
\sum_{k=1}^{\infty}2^{-n(k)(2+\epsilon)}\frac{1}{|\rho-T_{N(k)}-1+\i\sigma_{0}|^{4}}
\leq&\sum_{n(k)\neq j-1,j,j+1}2^{-n(k)(2+\epsilon)}\frac{1}{|\Re\rho-T_{N(k)}-1|^{4}}\\
&+\frac{2^{(-j+1)(2+\epsilon)}+2^{-j(2+\epsilon)}+2^{-(j+1)(2+\epsilon)}}{|\Im\rho+\sigma_{0}|^{4}}.
\end{split}
\end{equation}
Now, since $\Re\rho\geq2^{j-1}$ and $|\Im\rho|\leq\sigma<\sigma_{0}$,
\be
\frac{2^{(-j+1)(2+\epsilon)}+2^{-j(2+\epsilon)}+2^{-(j+1)(2+\epsilon)}}{(\Im\rho+\sigma_{0})^{4}}
\leq\frac{3}{(\sigma_{0}-\sigma)^{4}}|\Re\rho|^{-2-\epsilon}.
\ee
Next we estimate the sum. First observe that for $n(k)\leq j-2$
\be
|T_{N(k)}+1-\Re\rho|\geq2^{j-1}-2^{n(k)}=2^{j-1}(1-2^{n(k)-j+1})\geq2^{j-2}
\ee
and for $n(k)\geq j+2$
\be
|T_{N(k)}+1-\Re\rho|\geq2^{n(k)-1}-2^{j}=2^{j}(2^{n(k)-j-1}-1)\geq2^{j},
\ee
which implies
\begin{equation}\label{estrem}
\begin{split}
\sum_{k\neq j-1,j,j+1}2^{-n(k)(2+\epsilon)}\frac{1}{|\Re\rho-T_{N(k)}-1|^{4}}
\leq &2^{8-4j}\sum_{k\neq j-1,j,j+1}2^{-n(k)(2+\epsilon)}\\
\leq &2^{8-4j}\sum_{k=1}^{\infty}2^{-k(2+\epsilon)}\\
\leq &2^{4}|\Re\rho|^{-4}\sum_{k=1}^{\infty}2^{-k(2+\epsilon)}.
\end{split}
\end{equation}
Next we prove that for $\rho\in[T_{N(j)},T_{N(j)}-\sigma\i]$ we have the lower bound $\Re h_{\epsilon}'(\rho)>\!\!>T_{N(j)}^{-2-\epsilon}$ uniformly in $j$ and $\rho$. We have
\begin{equation}\label{testder}
\begin{split}
h'_{\epsilon}(\rho)=-4\sum_{k=1}^{\infty}2^{-n(k)(2+\epsilon)}\bigg\{
& \frac{1}{(\rho-T_{N(k)}-1-\i\sigma_{0})^{5}}+\frac{1}{(\rho-T_{N(k)}-1+\i\sigma_{0})^{5}}\\
& +\frac{1}{(\rho+T_{N(k)}+1+\i\sigma_{0})^{5}}+\frac{1}{(\rho+T_{N(k)}+1-\i\sigma_{0})^{5}}\bigg\}.
\end{split}
\end{equation}
Since
\begin{equation}\label{realp}
\begin{split}
\Re\left[\frac{1}{(\rho-T_{N(k)}-1-\i\sigma_{0})^{5}}\right]
=&\frac{(\Re\rho-T_{N(k)}-1)^{5}}{((\Re\rho-T_{N(k)}-1)^{2}+(\Im\rho-\sigma_{0})^{2})^{5}}\\
&+\frac{10(\Re\rho-T_{N(k)}-1)^{3}(\Im\rho-\sigma_{0})^{2}}{((\Re\rho-T_{N(k)}-1)^{2}+(\Im\rho-\sigma_{0})^{2})^{5}}\\
&+\frac{5(\Re\rho-T_{N(k)}-1)(\Im\rho-\sigma_{0})^{4}}{((\Re\rho-T_{N(k)}-1)^{2}+(\Im\rho-\sigma_{0})^{2})^{5}}
\end{split}
\end{equation}
we have for $\rho\in[T_{N(j)},T_{N(j)-\i\sigma}]$
\begin{equation}
\begin{split}
&\sum_{k=1}^{\infty}2^{-n(k)(2+\epsilon)}\Re\left[\frac{1}{(\rho-T_{N(k)}-1-\i\sigma_{0})^{5}}\right]\\
=\;&-2^{-n(j)(2+\epsilon)}\frac{1+10(\Im\rho-\sigma_{0})^{2}+5(\Im\rho-\sigma_{0})^{4}}{(1+(\Im\rho-\sigma_{0})^{2})^{5}}\\
&+\sum_{k\neq j}2^{-n(k)(2+\epsilon)}\Re\left[\frac{1}{(\rho-T_{N(k)}-1-\i\sigma_{0})^{5}}\right].
\end{split}
\end{equation}
Now it follows from \eqref{realp} and $\Re\rho=T_{N(j)}$ that
\begin{equation}
\left|\Re\left[\frac{1}{(\rho-T_{N(k)}-1-\i\sigma_{0})^{5}}\right]\right|<\!\!<|T_{N(j)}-T_{N(k)-1}|^{-5}
\end{equation}
and thus
\begin{equation}
\left|\sum_{k\neq j}2^{-n(k)(2+\epsilon)}\Re\left[\frac{1}{(\rho-T_{N(k)}-1-\i\sigma_{0})^{5}}\right]\right|
<\!\!<\sum_{k\neq j}2^{-n(k)(2+\epsilon)}|T_{N(j)}-T_{N(k)-1}|^{-5}.
\end{equation}
To bound this we distinguish two cases. First assume $j>k$. Then $n(j)>n(k)+1$ and in particular $n(j)>2$. So we have $T_{N(k)}+1\leq2^{n(k)}<2^{n(j)-1}-1\leq T_{N(j)}$. This implies
\begin{equation}
|T_{N(k)}+1-T_{N(j)}|\geq 2^{n(j)-1}-2^{n(k)}-1=2^{n(j)-1}(1-2^{n(k)-n(j)+1}-2^{-n(j)+1})\geq2^{n(j)-3}.
\end{equation}
Now assume $j<k$. Then $n(k)>n(j)+1$. In this case we have $T_{N(j)}\leq2^{n(j)}-1<2^{n(j)}<2^{n(k)-1}\leq T_{N(k)}+1$. This implies
\begin{equation}
|T_{N(k)}+1-T_{N(j)}|\geq2^{n(k)-1}-2^{n(j)}=2^{n(j)}(2^{n(k)-n(j)-1})-1)>2^{n(j)}.
\end{equation}
It follows from $T_{N(j)}\geq 2^{n(j)-1}-1>2^{n(j)-2}$ (for $n(j)>2$) that
\begin{equation}
\sum_{k\neq j}2^{-n(k)(2+\epsilon)}|T_{N(j)}-T_{N(k)-1}|^{-5}
\leq2^{15}2^{-5n(j)}\sum_{k\neq j}2^{-n(k)(2+\epsilon)}
\leq2^{15}T_{N(j)}^{-5}\sum_{k=1}^{\infty}2^{-k(2+\epsilon)}.
\end{equation}
So for large $T_{N(j)}$ we have
\begin{equation}\label{est}
\begin{split}
&\left|\sum_{k=1}^{\infty}2^{-n(k)(2+\epsilon)}\Re\left[\frac{1}{(\rho-T_{N(k)}-1-\i\sigma_{0})^{5}}\right]\right|\\
&=2^{-n(j)(2+\epsilon)}\frac{1+10(\Im\rho-\sigma_{0})^{2}+5(\Im\rho-\sigma_{0})^{4}}{(1+(\Im\rho-\sigma_{0})^{2})^{5}}+O(T_{N(j)}^{-5}),
\end{split}
\end{equation}
and thus, since $T_{N(j)}\leq2^{n(j)}-1<2^{n(j)}$,
\begin{equation}
2^{-n(j)(2+\epsilon)}\frac{1+10(\Im\rho-\sigma_{0})^{2}+5(\Im\rho-\sigma_{0})^{4}}{(1+(\Im\rho-\sigma_{0})^{2})^{5}}
\geq 2^{-2-\epsilon}T_{N(j)}^{-2-\epsilon}\frac{1+10(\sigma-\sigma_{0})^{2}+5(\sigma-\sigma_{0})^{4}}
{(1+4\sigma_{0}^{2})^{5}},
\end{equation}
which implies together with \eqref{est}
\begin{equation}
\left|\sum_{k=1}^{\infty}2^{-n(k)(2+\epsilon)}\Re\left[\frac{1}{(\rho-T_{N(k)}-1-\i\sigma_{0})^{5}}\right]\right|
>\!\!>T_{N(j)}^{-2-\epsilon}
\end{equation}
uniformly in $j$ and $\rho$. The other three terms in \eqref{testder} can be estimated in an analogous way. It follows that for $T_{N}(j)$ large and $\rho\in[T_{N}(j),T_{N(j)}-\i\sigma]$ we have $|\Re h'_{\epsilon}(\rho)|>\!\!>T_{N(j)}^{-2-\epsilon}$ uniformly in $j$ and $\rho$.
\end{proof}

We can now use Lemma \ref{testf} to prove the Proposition.\\

\textit{Proof of Proposition \ref{compbound}.} 
There exists (cf. Proposition \ref{polybound}) $\lbrace T_{N}\rbrace_{n=0}^{\infty}$ such that $|S_{\alpha,z_{0}}(\tfrac{1}{2}+\i\rho)|< c(\Gamma)T_{N}^{5}$ for $\rho\in[T_{N},T_{N}-\i\sigma]$. We conclude from \eqref{trunc} that
\begin{equation}
\lim_{N\to\infty}\int_{-\i\sigma+T_{N}}^{\i\sigma+T_{N}}h(\rho)\frac{S'_{\alpha,z_{0}}}{S_{\alpha,z_{0}}}(\tfrac{1}{2}+\i\rho)d\rho
\end{equation}
exists. Now $\arg(S_{\alpha,z_{0}}(\tfrac{1}{2}+\i\rho))<\!\!<1$ along the edges of $B(T)$ because the winding number about the poles is less than $1$. Thus, for $N\to\infty$,
\begin{equation}
\log S_{\alpha,z_{0}}(\tfrac{1}{2}\pm\sigma+\i T_{N})
<\!\!<\log|S_{\alpha,z_{0}}(\tfrac{1}{2}\pm\sigma+\i T_{N})|\sim\log\log T_{N}
\end{equation}
which follows from
\begin{equation}
\begin{split}
|S_{\alpha,z_{0}}(\tfrac{1}{2}-\sigma+\i T_{N})|
=|S_{\alpha,z_{0}}(\tfrac{1}{2}+\sigma-\i T_{N})|
=&|\overline{S_{\alpha,z_{0}}(\tfrac{1}{2}+\sigma-\i T_{N})}|\\
=&|S_{\alpha,z_{0}}(\tfrac{1}{2}+\sigma+\i T_{n})|\sim\log T_{N}.
\end{split}
\end{equation} 
Since $h(\pm\i\sigma+T_{N})<\!\!<T_{N}^{-2-\delta}$, we conclude by integration by parts that
\begin{equation}
\lim_{N\to\infty}\int_{-\i\sigma+T_{N}}^{\i\sigma+T_{N}}h'(\rho)\log|S_{\alpha,z_{0}}(\tfrac{1}{2}+\i\rho)|d\rho
\end{equation}
exists. Now we know (cf. Lemma \ref{testf}) that there is a subsequence $\lbrace T_{N(j)}\rbrace_{j}\subset\lbrace T_{N}\rbrace_{N}$ and $h_{\epsilon}\in H_{\sigma,\epsilon}$ such that
\begin{equation}\label{sym}
h'_{\epsilon}(\bar{\rho})=-h'_{\epsilon}(\rho)
\end{equation}
and for $\rho\in[T_{N(j)},T_{N(j)}-\i\sigma]$ we have
\begin{equation}
\Re h'_{\epsilon}(\rho)>\!\!>T_{N(j)}^{-2-\epsilon}.
\end{equation}
We have, since the integrand is odd and because of \eqref{sym},
\begin{equation}\label{id}
\begin{split}
&\int_{-\i\sigma+T_{N(j)}}^{\i\sigma+T_{N(j)}}h'_{\epsilon}(\rho)\log|S_{\alpha,z_{0}}(\tfrac{1}{2}+\i\rho)|d\rho\\
=&\left\{\int_{-T_{N(j)}}^{-\i\sigma-T_{N(j)}}+\int_{-\i\sigma+T_{N(j)}}^{T_{N(j)}}\right\}h'_{\epsilon}(\rho)\log|S_{\alpha,z_{0}}(\tfrac{1}{2}+\i\rho)|d\rho\\
=&-2\i\int_{-\i\sigma+T_{N(j)}}^{T_{N(j)}}\Re h'_{\epsilon}(\rho)\log|S_{\alpha,z_{0}}(\tfrac{1}{2}+\i\rho)|d\rho.
\end{split}
\end{equation}
Since $|\Re h'_{\epsilon}(\rho)|<\!\!<(1+|\Re\rho|)^{-2-\epsilon}$ we conclude that
\begin{equation}
\lim_{j\to\infty}\int_{-\i\sigma+T_{N(j)}}^{T_{N(j)}}\Re h'_{\epsilon}(\rho)\log\left\{ c(\Gamma)^{-1}T_{N(j)}^{-5}|S_{\alpha,z_{0}}(\tfrac{1}{2}+\i\rho)|\right\}d\rho
\end{equation}
exists. And since for all $\rho\in[T_{N(j)},T_{N(j)}-\i\sigma]$
\begin{equation}
\Re h'_{\epsilon}(\rho)\log\left\{c(\Gamma)^{-1}T_{N(j)}^{-5}|S_{\alpha,z_{0}}(\tfrac{1}{2}+\i\rho)|\right\}<0
\end{equation}
we have
\begin{equation}\label{last}
\begin{split}
&T_{N(j)}^{-2-\epsilon}\int_{0}^{\sigma}\log\left\{c(\Gamma)^{-1}T_{N(j)}^{-5}|S_{\alpha,z_{0}}(\tfrac{1}{2}+\sigma-t+\i T_{N(j)})|\right\}dt\\
&<\!\!<\int_{0}^{\sigma}|\Re h'_{\epsilon}(\i(t-\sigma)+T_{N(j)})|\big|\log c(\Gamma)^{-1}T_{N(j)}^{-5}|S_{\alpha,z_{0}}(\tfrac{1}{2}+\sigma-t+\i T_{N(j)})|\big|dt\\
&<\!\!<\left|\int_{0}^{\sigma}\Re h'_{\epsilon}(\i(t-\sigma)+T_{N(j)})\log|S_{\alpha,z_{0}}(\tfrac{1}{2}+\sigma-t+\i T_{N(j)})|dt\right|+O(T_{N(j)}^{-2-\epsilon}\log T_{N(j)}).
\end{split}
\end{equation}
The first term on the RHS of \eqref{last} converges as $j\to\infty$, which implies
\begin{equation}
\int_{0}^{\sigma}\log|S_{\alpha,z_{0}}(\tfrac{1}{2}+\sigma-t+\i T_{N(j)})|dt<\!\!<T_{N(j)}^{2+\epsilon}.\\
\end{equation}
\begin{flushright}
$\square$
\end{flushright}

We can now apply Proposition \ref{compbound} to derive the vanishing of the sequence of boundary terms \eqref{bdterms}.
\begin{thm}\label{van}
Let $\delta,\epsilon>0$ and $\lbrace T_{N(j)}\rbrace_{N(j)}$ as above. Then for any $h\in H_{\sigma+\epsilon,\delta}$
\begin{equation}
\lim_{T_{N(j)}\to\infty}\int_{T_{N(j)}-\i\sigma}^{T_{N(j)}}h(\rho)\frac{S'_{\alpha,z_{0}}}{S_{\alpha,z_{0}}}{(\tfrac{1}{2}+\i\rho)}d\rho=0.
\end{equation}
\end{thm}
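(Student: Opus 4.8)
\emph{Proof proposal.} The plan is to integrate by parts, trading $S'_{\alpha,z_0}/S_{\alpha,z_0}$ -- of which there is no usable pointwise bound near the zeros of $S_{\alpha,z_0}$ -- for $\log S_{\alpha,z_0}$, which Proposition \ref{compbound} controls in $L^{1}$ on the relevant segment and Proposition \ref{polybound} controls pointwise at its endpoints. Write $\gamma_{j}$ for the segment from $T_{N(j)}-\i\sigma$ to $T_{N(j)}$ in the $\rho$-plane; it is the lower half of the right edge of $B(T_{N(j)})$. By the construction of the sequence (Lemma \ref{spacing}, Proposition \ref{polybound}, Lemma \ref{testf}) the only zero or pole of $S_{\alpha,z_0}(\tfrac{1}{2}+\i\,\cdot\,)$ that could meet $\gamma_{j}$ is a perturbed eigenvalue at its upper endpoint $\rho=T_{N(j)}$, and I would rule that out by choosing $T_{N(j)}$ at distance $\gg T_{N(j)}^{-1}$ from the (at most one) perturbed eigenvalue lying between the consecutive unperturbed eigenvalues $\rho_{N},\rho_{N+1}$ as well as from $\rho_{N},\rho_{N+1}$ themselves; this leaves a single-valued holomorphic branch of $\log S_{\alpha,z_0}(\tfrac{1}{2}+\i\rho)$ on $\gamma_{j}$ with $\tfrac{d}{d\rho}\log S_{\alpha,z_0}(\tfrac{1}{2}+\i\rho)=\i\,(S'_{\alpha,z_0}/S_{\alpha,z_0})(\tfrac{1}{2}+\i\rho)$. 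Integration by parts then gives
\[
\int_{T_{N(j)}-\i\sigma}^{T_{N(j)}}h(\rho)\,\frac{S'_{\alpha,z_0}}{S_{\alpha,z_0}}(\tfrac{1}{2}+\i\rho)\,d\rho
=-\i\Big[h(\rho)\log S_{\alpha,z_0}(\tfrac{1}{2}+\i\rho)\Big]_{T_{N(j)}-\i\sigma}^{T_{N(j)}}
+\i\int_{\gamma_{j}}h'(\rho)\log S_{\alpha,z_0}(\tfrac{1}{2}+\i\rho)\,d\rho ,
\]
so the claim reduces to showing both terms on the right tend to $0$. Since $h\in H_{\sigma+\epsilon,\delta}$ is holomorphic on $|\Im\rho|\le\sigma+\epsilon$ with $h(\rho)\ll(1+|\Re\rho|)^{-2-\delta}$ there, Cauchy's estimate on circles of radius $\epsilon$ gives the same decay for $h'$ uniformly on $|\Im\rho|\le\sigma$; in particular $|h(\rho)|,|h'(\rho)|\ll T_{N(j)}^{-2-\delta}$ on $\gamma_{j}$.

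For the integral term, split $\log S_{\alpha,z_0}=\log|S_{\alpha,z_0}|+\i\arg S_{\alpha,z_0}$. Along the edge of $B(T_{N(j)})$ one has $\arg S_{\alpha,z_0}(\tfrac{1}{2}+\i\rho)\ll1$ by the winding-number bound already used in the proof of Proposition \ref{compbound}, so the imaginary part contributes at most $\ll T_{N(j)}^{-2-\delta}\cdot\sigma\to0$. For the real part, fix any $\epsilon_{0}\in(0,\delta)$; since $\gamma_{j}$ is contained in the vertical segment from $-\i\sigma+T_{N(j)}$ to $\i\sigma+T_{N(j)}$, Proposition \ref{compbound} (with exponent $\epsilon_{0}$) gives $\int_{\gamma_{j}}\big|\log|S_{\alpha,z_0}(\tfrac{1}{2}+\i\rho)|\big|\,|d\rho|\ll T_{N(j)}^{2+\epsilon_{0}}$, and multiplying by $\max_{\gamma_{j}}|h'|\ll T_{N(j)}^{-2-\delta}$ bounds this contribution by $\ll T_{N(j)}^{\epsilon_{0}-\delta}\to0$.

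There remains the boundary term, i.e.\ control of $\log S_{\alpha,z_0}(\tfrac{1}{2}+\i\rho)$ at the two endpoints of $\gamma_{j}$, where $h$ is $\ll T_{N(j)}^{-2-\delta}$. At $\rho=T_{N(j)}-\i\sigma$ one has $s=\tfrac{1}{2}+\sigma+\i T_{N(j)}$ with $\Re s>1$, so the representation $S_{\alpha,z_0}(s)=\beta^{-1}+m\psi(s)+\sum_{\gamma\in\Gamma\backslash\scrI}G_{s}(z_{0},\gamma z_{0})$ together with $\psi(s)=(2\pi)^{-1}\log s+O(|s|^{-1})$ and the uniform bound $\sum_{\gamma\in\Gamma\backslash\scrI}|G_{s}(z_{0},\gamma z_{0})|\ll\sigma^{-1/2}$ of Lemma \ref{Greenbound} yields $|S_{\alpha,z_0}(s)|\sim\tfrac{m}{2\pi}\log T_{N(j)}$ and hence $\log S_{\alpha,z_0}(s)\ll\log\log T_{N(j)}$. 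At the other endpoint $\rho=T_{N(j)}$ one has $s=\tfrac{1}{2}+\i T_{N(j)}$; the polynomial bound $|S_{\alpha,z_0}(\tfrac{1}{2}+\i T_{N(j)})|\ll T_{N(j)}^{5}$ of Proposition \ref{polybound} handles the upper side, while for a matching lower bound $|S_{\alpha,z_0}(\tfrac{1}{2}+\i T_{N(j)})|\gg T_{N(j)}^{-A}$ with $A$ fixed I would refine the choice of $T_{N(j)}$, using that $t\mapsto S_{\alpha,z_0}(\tfrac{1}{2}+\i t)$ is real, smooth and strictly monotone on $(\rho_{N},\rho_{N+1})$ with a single zero there (Theorem \ref{prop}(a)) and that its derivative $\tfrac{d}{dt}S_{\alpha,z_0}(\tfrac{1}{2}+\i t)=2t\sum_{j\ge-M}|\varphi_{j}(z_{0})|^{2}(\rho_{j}^{2}-t^{2})^{-2}$ is controlled from below at a suitable point of the gap, so that $\big|\log|S_{\alpha,z_0}(\tfrac{1}{2}+\i T_{N(j)})|\big|\ll\log T_{N(j)}$; this refinement can be made for infinitely many $N$ and is compatible with the earlier constraints on the sequence. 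In either case the boundary term is $\ll T_{N(j)}^{-2-\delta}\log T_{N(j)}\to0$, which with the preceding step proves the theorem. The one place requiring genuine care -- everything else being bookkeeping once Propositions \ref{polybound} and \ref{compbound} are available -- is this last lower bound on $|S_{\alpha,z_0}|$ at the critical-line endpoint; it is also the interplay $\epsilon_{0}<\delta$ above that forces the hypothesis $h\in H_{\sigma+\epsilon,\delta}$ rather than merely $h\in H_{\sigma,\delta}$.
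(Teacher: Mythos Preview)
Your overall strategy---integrate by parts, bound the $\log|S_{\alpha,z_0}|$ integral via Proposition~\ref{compbound}, and use the Cauchy estimate on $h'$ afforded by the extra $\epsilon$ in $H_{\sigma+\epsilon,\delta}$---matches the paper. The difference, and the source of the gap, is that you work on the single segment $\gamma_j=[T_{N(j)}-\i\sigma,\,T_{N(j)}]$, whereas the paper treats the symmetric pair
\[
\int_{T_{N(j)}-\i\sigma}^{T_{N(j)}}+\int_{-T_{N(j)}}^{-T_{N(j)}-\i\sigma}
\]
together (equivalently, via oddness, the full right edge $[-\i\sigma+T_{N(j)},\,\i\sigma+T_{N(j)}]$ as in identity~\eqref{id}). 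The point of this symmetrisation is precisely to kill the boundary term you flag as ``the one place requiring genuine care'': in the compact case $S_{\alpha,z_0}(\tfrac12+\i T)=S_{\alpha,z_0}(\tfrac12-\i T)$ by the functional equation, and $h$ is even, so the two critical-line endpoint contributions $h(\pm T_{N(j)})\log S_{\alpha,z_0}(\tfrac12\pm\i T_{N(j)})$ cancel exactly. No lower bound on $|S_{\alpha,z_0}(\tfrac12+\i T_{N(j)})|$ is ever needed; only the easy endpoints at $\Im\rho=-\sigma$ survive, and those are handled by $|S_{\alpha,z_0}(\tfrac12+\sigma+\i T)|\sim\tfrac{m}{2\pi}\log T$.

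Your proposed fix---refining the sequence to force $|S_{\alpha,z_0}(\tfrac12+\i T_{N(j)})|\gg T_{N(j)}^{-A}$ via a derivative lower bound---is not established. The derivative $2t\sum_{j}|\varphi_j(z_0)|^2(\rho_j^2-t^2)^{-2}$ has no a~priori lower bound away from the poles: there is no control on $|\varphi_j(z_0)|^2$ from below, so the argument ``derivative is controlled from below at a suitable point of the gap, hence $|S|\gg T^{-A}$'' is incomplete. Moreover, the theorem is stated for the sequence $\{T_{N(j)}\}$ already fixed by Lemma~\ref{testf} and Proposition~\ref{compbound}; a further refinement changes the statement. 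The symmetrisation trick removes this difficulty for free, and is what you should use here.
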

\begin{proof}
We follow the same lines as in Proposition \ref{compbound}. In exactly the same way as in the proof above we obtain the identity \eqref{id} for $h\in H_{\sigma+\epsilon,\delta}\subset H_{\sigma,\delta}$. So
\begin{equation}
\begin{split}
&\lim_{j\to\infty}\left\{\int_{T_{N(j)}-\i\sigma}^{T_{N(j)}}+\int_{-T_{N(j)}}^{-T_{N(j)}-\i\sigma}\right\}h(\rho)\frac{d}{d\rho}\log S_{\alpha,z_{0}}(\tfrac{1}{2}+\i\rho)d\rho\\
=&\lim_{j\to\infty}\int_{T_{N(j)}-\i\sigma}^{T_{N(j)}}\lbrace h'(\rho)-h'(-\bar{\rho})\rbrace\log|S_{\alpha,z_{0}}(\tfrac{1}{2}+\i\rho)|d\rho.
\end{split}
\end{equation}
The limit vanishes since
\begin{equation}
\begin{split}
&\int_{T_{N(j)}-\i\sigma}^{T_{N(j)}}|\lbrace h'(\rho)-h'(-\bar{\rho})\rbrace|\big|\log|S_{\alpha,z_{0}}(\tfrac{1}{2}+\i\rho)|\big||d\rho|\\
<\!\!<\;&T_{N(j)}^{-2-\delta}\int_{T_{N(j)}-\i\sigma}^{T_{N(j)}}\big|\log|S_{\alpha,z_{0}}(\tfrac{1}{2}+\i\rho)|\big|d\rho|.
\end{split}
\end{equation}
where we have used Proposition \ref{compbound} and observe that by Cauchy's theorem $h\in H_{\sigma+\epsilon,\delta}$ implies
\begin{equation}
h'(\rho)<\!\!<(1+|\Re\rho|)^{-2-\delta}
\end{equation}
uniformly in $|\Im\rho|\leq\sigma$.
\end{proof}

As an application of Theorem \ref{van} we can now prove the trace formula. Let $h\in H_{\sigma,\delta}$ for any $\sigma>\tfrac{1}{2}$ and $\delta>0$. We will make a specific choice of $\sigma$ below. We observe that
\begin{equation}
\begin{split}
\int_{-\i\sigma-T}^{-\i\sigma+T}h(\rho)\frac{S'_{\alpha,\,z_{0}}}{S_{\alpha,\,z_{0}}}(\tfrac{1}{2}+\i\rho)d\rho
=&h(-\i\sigma+T)\log S_{\alpha,z_{0}}(\tfrac{1}{2}+\sigma+\i T)\\
&-h(-\i\sigma-T)\log S_{\alpha,z_{0}}(\tfrac{1}{2}+\sigma-\i T)\\
&-\int_{-\i\sigma-T}^{-\i\sigma+T}h'(\rho)\log S_{\alpha,\,z_{0}}(\tfrac{1}{2}+\i\rho)d\rho
\end{split}
\end{equation}
and, because of \eqref{log}, \eqref{Im} and the decay of $h$,
\begin{equation}
\lim_{T\to\infty}h(-\i\sigma\pm T)\log S_{\alpha,z_{0}}(\tfrac{1}{2}+\sigma\pm\i T)=0.
\end{equation}
Thus we can take $T\to\infty$ in \eqref{trunc} to obtain
\begin{equation}\label{comppretrace}
\begin{split}
\sum_{j\geq-M}h(\rho^{\alpha}_{j})-\sum_{j\geq-M}h(\rho_{j})
=\;&-\frac{1}{2\pi}\int_{-\i\sigma-\infty}^{-\i\sigma+\infty}h'(\rho)\log S_{\alpha,z_{0}}(\tfrac{1}{2}+\i\rho)d\rho.\\
\end{split}
\end{equation}
where we have used Theorem \ref{van}, the existence of the limit \eqref{limbound} and Weyl's law for the unperturbed spectrum.

In order to complete the proof of Theorem \ref{thm1} we need to expand the RHS of \eqref{comppretrace} into an identity term and diffractive orbit terms. We define
\begin{equation}
G_{s}^{\Gamma\backslash\scrI}(z,w)=\sum_{\gamma\in\Gamma\backslash\scrI}G_{s}(z,\gamma w).
\end{equation}
In the first term of \eqref{comppretrace} we can expand the logarithm in a power series
\begin{equation}\label{series}
\log\left[1+m\beta\psi(\tfrac{1}{2}+\i\rho)+\beta G^{\Gamma\backslash\scrI}_{\tfrac{1}{2}+\i\rho}\left(z_{0}, z_{0}\right)\right]\\
=\log\left[1+m\beta\psi(\tfrac{1}{2}+\i\rho)\right]-\sum_{k=1}^{\infty}\frac{(-1)^{k}}{k}\left(\frac{\beta G^{\Gamma\backslash\scrI}_{\tfrac{1}{2}+\i\rho}(z_{0}, z_{0})}{1+m\beta\psi(\tfrac{1}{2}+\i\rho)}\right)^{k}.
\end{equation}
This series converges absolutely and uniformly for all $\rho\in\CC$ with $\Im\rho=-\sigma$ if $\sigma$ is sufficiently large. To see this consider the following estimate
\begin{equation}
\begin{split}
&\left|\beta^{-1}+m\psi(\tfrac{1}{2}+\sigma+\i t)\right|\\
\geq& -\left|\beta\right|^{-1}+m\left|\psi(\tfrac{1}{2}+\sigma+\i t)\right|\\
\geq&-\left|\beta\right|^{-1}+m\left|\log(\tfrac{1}{2}+\sigma+\i t)\right|-\frac{m}{2\left|\tfrac{1}{2}+\sigma+\i t\right|}-m\sum_{n=1}^{\infty}\frac{B_{2n}}{n!}\left|\tfrac{1}{2}+\sigma+\i t\right|^{-2n}\\
\geq&-\left|\beta\right|^{-1}+m\log(\tfrac{1}{2}+\sigma)-\frac{m\pi}{2}-\frac{m}{1+2\sigma}-m\sum_{n=1}^{\infty}\frac{B_{2n}}{n!}(\tfrac{1}{2}+\sigma)^{-2n}.
\end{split}
\end{equation}
Since
\begin{equation}
\frac{1}{1+2\sigma}+\sum_{n=1}^{\infty}\frac{B_{2n}}{n!}\sigma^{-2n}=O(\sigma^{-1})
\end{equation}
we infer from the above for large enough $\sigma>\tfrac{1}{2}$ that
\begin{equation}
\left|\beta^{-1}+m\psi(\tfrac{1}{2}+\sigma+\i t)\right|\geq q(\beta)\log(\tfrac{1}{2}+\sigma)
\end{equation}
for some constant $0<q(\beta)<m$. Combining this with Lemma \ref{Greenbound} we obtain for $\Im\rho=-\sigma$ the estimate
\begin{equation}\label{bigbound}
\left|\frac{\beta G^{\Gamma\backslash\scrI}_{\tfrac{1}{2}+\i\rho}(z_{0},z_{0})}{1+m\beta\psi(\tfrac{1}{2}+\i\rho)}\right|
\leq\frac{q(\beta)^{-1}C(\Gamma,z_{0})}{(\tfrac{1}{2}+\sigma)^{1/2}\log(\tfrac{1}{2}+\sigma)}.
\end{equation}
We can now choose $\tilde{\sigma}(\beta)>\tfrac{1}{2}$ large enough such that
\begin{equation}\label{large1}
\frac{q(\beta)^{-1}C(\Gamma,z_{0})}{(\tfrac{1}{2}+\tilde{\sigma}(\beta))^{1/2}\log(\tfrac{1}{2}+\tilde{\sigma}(\beta))}<1
\end{equation}
and $\tilde{\sigma}(\beta)>\left|\Im\rho^{\alpha}_{-M}\right|$. This choice ensures that the series \eqref{series} converges absolutely and uniformly for all $\rho\in\CC$ with $\Im\rho=-\tilde{\sigma}(\beta)$.
Let $\sigma(\beta)=\tilde{\sigma}(\beta)+\epsilon$ for some $\epsilon>0$ and $h\in H_{\sigma(\beta),\delta}$. We now have for the RHS of \eqref{pretrace}
\begin{equation}\label{tr2}
\begin{split}
&\frac{1}{2\pi\i}\int_{-\i\tilde{\sigma}-\infty}^{-\i\tilde{\sigma}+\infty}h(\rho)\frac{m\beta\psi'(\tfrac{1}{2}+\i\rho)}{1+m\beta\psi(\tfrac{1}{2}+\i\rho)}d\rho\\
+&\frac{1}{2\pi\i}\sum_{k=1}^{\infty}\frac{(-\beta)^{k}}{k}\sum_{\gamma_{1},...,\gamma_{k}\in\Gamma\backslash\scrI}\int_{-\i\tilde{\sigma}-\infty}^{-\i\tilde{\sigma}+\infty}h'(\rho)\frac{\prod_{j=1}^{k}G_{\tfrac{1}{2}+\i\rho}(z_{0},\gamma_{j}z_{0})}{(1+m\beta\psi(\tfrac{1}{2}+\i\rho))^{k}}d\rho.
\end{split}
\end{equation}
Substituting the integral representation \eqref{intrep} of the free Green's function and changing order of integration gives
\begin{equation}\label{tr3}
\begin{split}
&\frac{1}{2\pi\i}\int_{-\i\tilde{\sigma}-\infty}^{-\i\tilde{\sigma}+\infty}h(\rho)\frac{m\beta\psi'(\tfrac{1}{2}+\i\rho)}{1+m\beta\psi(\tfrac{1}{2}+\i\rho)}d\rho
+\frac{1}{2\pi\i}\sum_{k=1}^{\infty}\frac{(-\beta)^{k}}{k}\\ &\times\sum_{\gamma_{1},\cdots,\gamma_{k}\in\Gamma\backslash\scrI}\int_{-\i\tilde{\sigma}-\infty}^{-\i\tilde{\sigma}+\infty}\int_{\tau_{1}}^{\infty}\cdots\int_{\tau_{k}}^{\infty}\frac{h'(\rho)}{(1+m\beta\psi(\tfrac{1}{2}+\i\rho))^{k}}\prod_{j=1}^{k}\frac{\,\e^{-\i\rho t_{j}}}{\sqrt{\cosh t_{j}-\cosh\tau_{j}}}\prod_{j=1}^{k}dt_{j}\,d\rho
\end{split}
\end{equation}
Note that we have the bound
\begin{equation}
\begin{split}
&\frac{|h'(\rho)|}{|1+m\beta\psi(\tfrac{1}{2}+\i\rho)|^{k}}\prod_{j=1}^{k}\left|\frac{\,\e^{-\i\rho t_{j}}}{\sqrt{\cosh t_{j}-\cosh\tau_{j}}}\right|\\
<\!\!<&\e^{-(\sigma+\tfrac{1}{4})(t_{1}+\cdots+t_{k})}\,(1+\left|\Re\rho\right|)^{-2-\delta}(\log\left|\Re\rho\right|)^{-k}
\end{split}
\end{equation}
on the integrand. So we can exchange integration by Fubini's Theorem. Recall (cf. section 1.2.) that the only zero of $1+m\beta\psi(\tfrac{1}{2}+\i\rho)$ in the halflane $\Im\rho<0$ is given by $\rho=-\i v_{\beta}$. So we have by shifting the contour from $\Im\rho=-\tilde{\sigma}$ to $\Im\rho=-\nu$
\begin{equation}
g_{\beta,k}(t)=\frac{(-1)^{k}}{2\pi\i k}\int_{-\i\tilde{\sigma}-\infty}^{-\i\tilde{\sigma}+\infty}h'(\rho)\frac{\e^{-\i\rho t}}{(1+m\beta\psi(\tfrac{1}{2}+\i\rho))^{k}}d\rho.
\end{equation}
So \eqref{tr3} equals
\begin{equation}
\begin{split}
&\frac{1}{2\pi\i}\int_{-\i\tilde{\sigma}-\infty}^{-\i\tilde{\sigma}+\infty}h(\rho)\frac{m\beta\psi'(\tfrac{1}{2}+\i\rho)}{1+m\beta\psi(\tfrac{1}{2}+\i\rho)}d\rho\\
+&\frac{1}{2\pi\i}\sum_{k=1}^{\infty}\frac{(-\beta)^{k}}{k} \sum_{\gamma_{1},\cdots,\gamma_{k}\in\Gamma\backslash\scrI}\int_{\tau_{1}}^{\infty}\cdots\int_{\tau_{k}}^{\infty}\prod_{j=1}^{k}\frac{g_{\beta,k}(t_{1}+\cdots+t_{k})}{\sqrt{\cosh t_{j}-\cosh\tau_{j}}}\prod_{j=1}^{k}dt_{j}.
\end{split}
\end{equation}

We have the bound
\begin{equation}
|g_{\beta,k}(t)|\leq\frac{\e^{-\tilde{\sigma} t}\int_{-\infty}^{\infty}|h'(\rho)|d\rho}{2\pi k\beta^{k}q(\beta)^{k}(\log(\tfrac{1}{2}+\tilde{\sigma}))^{k}}.
\end{equation}
We define the subset $\Gamma_{r,z_{0}}=\left\{\gamma\in\Gamma\backslash\scrI\mid d(z_{0},\gamma z_{0})<r\right\}$ of $\Gamma$ which is clearly finite because of discreteness of the group. For some $r>0$ we obtain the estimate
\begin{equation}
\begin{split}
&\sum_{\gamma_{1},\cdots,\gamma_{k}\in\Gamma_{r,z_{0}}}\int_{\tau_{1}}^{\infty}\cdots\int_{\tau_{k}}^{\infty}\frac{\left|g_{\beta,k}(t_{1}+\cdots+t_{k})\right|\prod_{n=1}^{k}dt_{n}}{\prod_{n=1}^{k}\sqrt{\cosh t_{n}-\cosh\tau_{n}}}\\
<\!\!<&\;\beta^{-k}(q(\beta)\log(\tfrac{1}{2}+\tilde{\sigma}))^{-k}\left[\sum_{\gamma\in\Gamma_{r,z_{0}}}\int_{\tau_{\gamma}}^{\infty}\frac{\e^{-\tilde{\sigma} t}dt}{\sqrt{\cosh t-\cosh\tau_{\gamma}}}\right]^{k}\\
\leq&\;\beta^{-k}(q(\beta)\log(\tfrac{1}{2}+\tilde{\sigma}))^{-k}\left[\sum_{\gamma\in\Gamma\backslash\scrI}\int_{\tau_{\gamma}}^{\infty}\frac{\e^{-\tilde{\sigma} t}dt}{\sqrt{\cosh t-\cosh\tau_{\gamma}}}\right]^{k}\\
\leq&\;\beta^{-k}\left(\frac{q(\beta)^{-1}C(\Gamma,z_{0})}{(\tfrac{1}{2}+\tilde{\sigma})^{1/2}
\log(\tfrac{1}{2}+\tilde{\sigma})}\right)^{k}<\;\beta^{-k}
\end{split}
\end{equation}
where we follow the same lines as in the proof of Lemma \ref{Greenbound}. So independently of $r>0$ the sum over $k$ converges absolutely. Taking $r\to\infty$ we see that
\begin{equation}\label{expansion}
\begin{split}
&\frac{1}{2\pi\i}\int_{-\i\tilde{\sigma}-\infty}^{-\i\tilde{\sigma}+\infty}h(\rho)\frac{m\beta\psi'(\tfrac{1}{2}+\i\rho)}{1+m\beta\psi(\tfrac{1}{2}+\i\rho)}d\rho\\
+&\frac{1}{2\pi\i}\sum_{k=1}^{\infty}\frac{(-\beta)^{k}}{k}\sum_{\gamma_{1},\cdots,\gamma_{k}\in\Gamma\backslash\scrI}\int_{\tau_{1}}^{\infty}\cdots\int_{\tau_{k}}^{\infty}\frac{g_{\beta,k}(t_{1}+\cdots+t_{n})\prod_{k=1}^{n}dt_{n}}{\prod_{n=1}^{k}\sqrt{\cosh t_{n}-\cosh\tau_{n}}}
\end{split}
\end{equation}
converges absolutely.

\section{Perturbed Eisenstein series}

In the remainder of this paper we will be concerned with $\Gamma\backslash\HH$ having a single cusp. The traditional Eisenstein series $E(z,s)$ are generalised eigenfunctions of the Laplacian on such a surface. In analogy with this definition we introduce an analogue of the Eisenstein series for the operator $\Delta_{\alpha,z_{0}}$. In order to admit non-$L^{2}$ eigenfunctions we enrich the domain $D_{\varphi(\alpha)}$ of $\Delta_{\alpha,z_{0}}$ by introducing the larger space
\begin{equation}
D^{*}_{0}=\left\{f\in C^{2}(\Gamma\backslash\HH)|f(z_{0})=0\right\},\qquad D^{*}_{0}\supset D_{0},
\end{equation}
where we have dropped the condition of square-integrability. We introduce the larger domain
\begin{equation}
D^{*}_{\varphi(\alpha)}=\lbrace g+cG^{\Gamma}_{\,t}(\cdot,z_{0})+c\,\e^{\i\varphi(\alpha)}G^{\Gamma}_{\,\bar{t}}(\cdot,z_{0})|g\in\D^{*}_{0}, c\in\CC\rbrace
\end{equation}
As before $\Delta_{\alpha,z_{0}}$ acts on $f\in D^{*}_{\varphi}$ as
\begin{equation}
\Delta_{\varphi}f=\Delta g - c\,t(1-t)G^{\Gamma}_{\,t}(z,z_{0})
-c\,\bar{t}(1-\bar{t})\e^{\i\varphi}G^{\Gamma}_{\,\bar{t}}(z,z_{0}).
\end{equation}

We define the perturbed Eisenstein series on $\Gamma\backslash\HH$ to be the solution to 
\begin{equation}\label{Eisenstein}
(\Delta_{\alpha,z_{0}}+s(1-s))\E^{\,\alpha,\,z_{0}}(z,s)=0,
\end{equation}
for any $s\in\CC$ with
\begin{equation}\label{dom}
E^{\,\alpha,\,z_{0}}(\cdot,s)\in D^{*}_{\varphi(\alpha)}.
\end{equation}
In the cusp we have the asymptotics
\begin{equation}\label{asmpt}
E^{\,\alpha,\,z_{0}}(x+\i y,s)=y^{s}+\varphi_{\,\alpha,z_{0}}(s)y^{1-s}+O(\e^{-2\pi y}),\qquad y\to\infty,
\end{equation}
for some $\varphi_{\,\alpha,z_{0}}(s)\in\CC$ which will be a meromorphic function in $s$. A standard argument shows that $\varphi_{\,\alpha,z_{0}}(s)$ is well-defined:
\begin{lem}
A solution to \eqref{Eisenstein} and \eqref{dom} with asymptotics \eqref{asmpt} uniquely defines the scattering coefficient $\varphi_{\alpha,z_{0}}(s)$.
\end{lem}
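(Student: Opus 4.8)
The plan is to run the standard uniqueness argument: if two solutions of \eqref{Eisenstein} in $D^{*}_{\varphi(\alpha)}$ with asymptotics \eqref{asmpt} had different scattering coefficients, their difference would be a nonzero solution of the homogeneous equation whose constant term in the cusp is a nonzero multiple of $y^{1-s}$, and this is ruled out by self-adjointness of $\Delta_{\alpha,z_{0}}$. Concretely, let $E_{1}$ and $E_{2}$ be two such solutions with coefficients $\varphi_{1}(s)$ and $\varphi_{2}(s)$, and set $u=E_{1}-E_{2}$. Then $u\in D^{*}_{\varphi(\alpha)}$ and $(\Delta_{\alpha,z_{0}}+s(1-s))u=0$. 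For $y>\Im z_{0}$ there is no Dirac contribution, so the zeroth Fourier coefficient $u_{0}(y)$ solves $y^{2}u_{0}''=s(1-s)u_{0}$ and is \emph{exactly} of the form $Ay^{s}+By^{1-s}$; comparing with \eqref{asmpt}, and using $s\neq\tfrac{1}{2}$ so that $y^{s}$ and $y^{1-s}$ are independent, we get $A=0$ and $B=\varphi_{1}(s)-\varphi_{2}(s)$, the higher Fourier modes contributing only the $O(\e^{-2\pi y})$ term.

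For $\Re s>\tfrac{1}{2}$ the function $u$ is then square-integrable: $y^{1-s}$ is $L^{2}$ near the cusp since $\Re(1-s)<\tfrac{1}{2}$, the higher modes decay exponentially, and on the compact part of $\scrF$ the only singularity of $u$ is the logarithmic one carried by $G^{\Gamma}_{t}(\cdot,z_{0})$ and $G^{\Gamma}_{\bar t}(\cdot,z_{0})$ (recall $G^{\Gamma}_{s}(\cdot,z_{0})\in L^{2}(\Gamma\backslash\HH)$ for $\Re s>\tfrac{1}{2}$), which is locally square-integrable. Hence $u$ lies in the domain $D_{\varphi(\alpha)}$ of the self-adjoint operator $\Delta_{\alpha,z_{0}}=\Delta_{\varphi(\alpha)}$, so $u$ is zero or an eigenfunction with the \emph{real} eigenvalue $s(1-s)$. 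When $\Im s\neq0$ this eigenvalue is not real, so $u=0$; when $s$ is real with $s>\tfrac{1}{2}$ the value $s(1-s)\in(-\infty,\tfrac{1}{4})$ is a discrete eigenvalue of $\Delta_{\alpha,z_{0}}$ (with eigenfunction $G^{\Gamma}_{s}(\cdot,z_{0})$, by Theorem \ref{prop}) for only finitely many $s$. Thus $\varphi_{1}(s)=\varphi_{2}(s)$ for $\Re s>\tfrac{1}{2}$ away from a finite set, and hence, since the perturbed Eisenstein series and its scattering coefficient will be constructed below as a meromorphic family in $s$, for all $s\in\CC$ by uniqueness of meromorphic continuation. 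The line $\Re s=\tfrac{1}{2}$ can also be treated directly by a Maass--Selberg computation on $\scrF$ truncated at height $Y$, in which the terms at $z_{0}$ cancel by the boundary conditions defining $D^{*}_{\varphi(\alpha)}$ as in the proof of self-adjointness of $\Delta_{\varphi}$, and the surviving boundary term reduces to a nonzero multiple of $|\varphi_{1}(s)-\varphi_{2}(s)|^{2}$.

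The step I expect to be delicate is the square-integrability claim: since the solutions live in the enlarged, non-$L^{2}$ domain $D^{*}_{\varphi(\alpha)}$, one must verify that after the constant term is pinned to $(\varphi_{1}(s)-\varphi_{2}(s))y^{1-s}$ the remaining part $g\in D^{*}_{0}$ in the representation $u=g+cG^{\Gamma}_{t}(\cdot,z_{0})+c\,\e^{\i\varphi}G^{\Gamma}_{\bar t}(\cdot,z_{0})$ is forced to be square-integrable, and in fact to lie in $D_{0}\subset H^{2}(\Gamma\backslash\HH)$, so that the spectral theorem for $\Delta_{\alpha,z_{0}}$ really applies.
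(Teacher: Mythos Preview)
Your argument is correct and follows essentially the same route as the paper: take the difference of two putative solutions, observe that for $\Re s>\tfrac12$ and $s(1-s)\notin\RR$ it is an $L^{2}$-eigenfunction of the self-adjoint operator $\Delta_{\alpha,z_{0}}$ with nonreal eigenvalue, conclude it vanishes, and extend by meromorphic continuation. You are in fact more careful than the paper on two points: you treat real $s>\tfrac12$ separately (the paper simply restricts to $\Im s>0$), and you flag the passage from $D^{*}_{\varphi(\alpha)}$ to $D_{\varphi(\alpha)}$, which the paper glosses over by writing ``$H\in H^{2}(\Gamma\backslash\HH)$'' without comment. Your worry there is easily settled: writing $u=g+cG^{\Gamma}_{t}+c\e^{\i\varphi}G^{\Gamma}_{\bar t}$ with $g\in D^{*}_{0}$, the eigenvalue equation gives $\Delta g=c\,t(1-t)G^{\Gamma}_{t}+c\,\e^{\i\varphi}\bar t(1-\bar t)G^{\Gamma}_{\bar t}-s(1-s)u$, and each term on the right is in $L^{2}$ once you have checked $u\in L^{2}$, so $g\in H^{2}$ and hence $g\in D_{0}$.
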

\begin{proof}
Suppose there exist two solutions $F(z,s)$ and $G(z,s)$ to \eqref{Eisenstein} and \eqref{asmpt} with asymptotics
\begin{equation}
\begin{split}
&F(z,s)=y^{s}+\varphi_{F}(s)y^{1-s}+O(\e^{-2\pi y})\\
&G(z,s)=y^{s}+\varphi_{G}(s)y^{1-s}+O(\e^{-2\pi y})
\end{split}
\end{equation}
in the cusp. Now assume $\Re s>\tfrac{1}{2}$, $\Im s>0$ and consider $H=F-G$. Then $H\in H^{2}(\Gamma\backslash\HH)$ and therefore $H\in D_{\varphi(\alpha)}$. Also \eqref{Eisenstein} implies that $H$ is an eigenfunction of $\Delta_{\alpha,z_{0}}$ with eigenvalue $s(1-s)\notin\RR$. Because $\Delta_{\alpha,z_{0}}$ is self-adjoint on $D_{\varphi\left(\alpha\right)}$, it follows that $H(\cdot,s)=0$. In particular
\begin{equation}
0=\lim_{y\to+\infty}[(\varphi_{F}(s)-\varphi_{G}(s))y^{1-s}+O(\e^{-2\pi y})]
=\lim_{y\to+\infty}[(\varphi_{F}(s)-\varphi_{G}(s))y^{1-s}]
\end{equation}
which shows that $\varphi_{F}(s)=\varphi_{G}(s)$ for $\Re s>\tfrac{1}{2}$ and $\Im s>0$. The result follows for all $s\in\CC$ by analytic continuation.
\end{proof}

We now derive $E^{\,\alpha,\,z_{0}}(z,s)$ rigorously and prove a perturbative analogue of the functional equation \eqref{Greenfe} which explains the connection between $E^{\,\alpha,\,z_{0}}(z,s)$ and the automorphic Green's function.
\begin{lem}
The perturbed Eisenstein series is given by
\begin{equation}\label{eisen}
E^{\,\alpha,\,z_{0}}(z,s)=E(z,s)-\lbrace1+\e^{\i\varphi\left(\alpha\right)}\rbrace\frac{E(z_{0},s)}{S_{\,\alpha,\,z_{0}}(s)}\,G^{\Gamma}_{s}(z,z_{0})
\end{equation}
and satisfies in analogy with \eqref{Greenfe}
\begin{equation}\label{Greenfe2}
G^{\Gamma}_{1-s}(z,z_{0})=\theta_{\,\alpha,\,z_{0}}(s)G^{\Gamma}_{s}(z,z_{0})-\frac{E(z_{0},s)E^{\,\alpha,\,z_{0}}(z,1-s)}{1-2s},
\end{equation}
where
\begin{equation}
\theta_{\,\alpha,\,z_{0}}(s)=\frac{S_{\,\alpha,\,z_{0}}(1-s)}{S_{\,\alpha,\,z_{0}}(s)}.
\end{equation}
\end{lem}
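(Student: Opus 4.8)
The plan is to verify that the function $F(z,s)$ given by the right-hand side of \eqref{eisen} has the three properties characterizing the perturbed Eisenstein series --- it solves the eigenvalue equation \eqref{Eisenstein}, it lies in $D^{*}_{\varphi(\alpha)}$, i.e. satisfies \eqref{dom}, and it has the cusp asymptotics \eqref{asmpt} --- and then to identify $F$ with $E^{\alpha,z_{0}}(\cdot,s)$ by uniqueness; the functional equation \eqref{Greenfe2} will then follow from \eqref{eisen} together with the functional equations \eqref{Greenfe} and \eqref{fe}.

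For \eqref{dom}, I would use the decomposition $\{1+\e^{\i\varphi(\alpha)}\}G^{\Gamma}_{s}(z,z_{0})=S_{\alpha,z_{0}}(z,s)+G^{\Gamma}_{t}(z,z_{0})+\e^{\i\varphi(\alpha)}G^{\Gamma}_{\bar t}(z,z_{0})$ from the proof of Proposition \ref{eigencond}, with $S_{\alpha,z_{0}}(z,s)=(G^{\Gamma}_{s}-G^{\Gamma}_{t})(z,z_{0})+\e^{\i\varphi(\alpha)}(G^{\Gamma}_{s}-G^{\Gamma}_{\bar t})(z,z_{0})$, to rewrite
\[
F(z,s)=g(z)+c\,G^{\Gamma}_{t}(z,z_{0})+c\,\e^{\i\varphi(\alpha)}G^{\Gamma}_{\bar t}(z,z_{0}),\qquad c=-\frac{E(z_{0},s)}{S_{\alpha,z_{0}}(s)},
\]
with $g=E(\cdot,s)-\tfrac{E(z_{0},s)}{S_{\alpha,z_{0}}(s)}S_{\alpha,z_{0}}(\cdot,s)$. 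Since the two logarithmic singularities inside $S_{\alpha,z_{0}}(\cdot,s)$ cancel, $g\in C^{2}(\Gamma\backslash\HH)$, and $\lim_{z\to z_{0}}S_{\alpha,z_{0}}(z,s)=S_{\alpha,z_{0}}(s)$ forces $g(z_{0})=0$, so $g\in D^{*}_{0}$ and $F(\cdot,s)\in D^{*}_{\varphi(\alpha)}$. The eigenvalue equation is then a short computation: applying Lemma \ref{Hilbert} to each of $G^{\Gamma}_{s}-G^{\Gamma}_{t}$ and $G^{\Gamma}_{s}-G^{\Gamma}_{\bar t}$ gives $(\Delta+s(1-s))g=c(\eta-s(1-s))G^{\Gamma}_{t}(\cdot,z_{0})+c\,\e^{\i\varphi(\alpha)}(\bar\eta-s(1-s))G^{\Gamma}_{\bar t}(\cdot,z_{0})$ (with $\eta=t(1-t)$, $\bar\eta=\bar t(1-\bar t)$), which exactly cancels the terms produced by the action of $\Delta_{\varphi}$ on the $G^{\Gamma}_{t}$- and $G^{\Gamma}_{\bar t}$-parts of $F$ once shifted by $s(1-s)$; hence $(\Delta_{\alpha,z_{0}}+s(1-s))F(\cdot,s)=0$. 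Finally, feeding the known asymptotics $E(x+\i y,s)=y^{s}+\varphi(s)y^{1-s}+O(\e^{-2\pi y})$ and $G^{\Gamma}_{s}(x+\i y,z_{0})=\tfrac{E(z_{0},s)}{1-2s}y^{1-s}+O(\e^{-2\pi y})$ into \eqref{eisen} yields \eqref{asmpt} with the meromorphic coefficient $\varphi_{\alpha,z_{0}}(s)=\varphi(s)-\{1+\e^{\i\varphi(\alpha)}\}\tfrac{E(z_{0},s)^{2}}{(1-2s)S_{\alpha,z_{0}}(s)}$. By the uniqueness furnished by the preceding lemma --- for $\Re s>\tfrac12$, $\Im s>0$ the difference of two solutions is an $L^{2}$ eigenfunction of the self-adjoint operator $\Delta_{\alpha,z_{0}}$ at a nonreal eigenvalue and hence vanishes, and one extends by analytic continuation --- this shows $E^{\alpha,z_{0}}(\cdot,s)=F(\cdot,s)$, proving \eqref{eisen}.

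To prove \eqref{Greenfe2} I would substitute \eqref{eisen} at the argument $1-s$ into its right-hand side and eliminate the two inhomogeneous factors: the classical functional equation \eqref{Greenfe} rewrites $-\tfrac{E(z_{0},s)E(z,1-s)}{1-2s}$ as $G^{\Gamma}_{1-s}(z,z_{0})-G^{\Gamma}_{s}(z,z_{0})$, and the functional equation \eqref{fe} for the relative zeta function rewrites $\tfrac{E(z_{0},s)E(z_{0},1-s)}{1-2s}$ as $S_{\alpha,z_{0}}(1-s)-S_{\alpha,z_{0}}(s)$. With $\theta_{\alpha,z_{0}}(s)=S_{\alpha,z_{0}}(1-s)/S_{\alpha,z_{0}}(s)$, together with $\theta_{\alpha,z_{0}}(1-s)=\theta_{\alpha,z_{0}}(s)^{-1}$, the resulting $G^{\Gamma}_{s}$- and $G^{\Gamma}_{1-s}$-terms then reorganize to $G^{\Gamma}_{1-s}(z,z_{0})$, which is the asserted identity.

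I expect the main obstacle to be the domain step, specifically the justification of $\lim_{z\to z_{0}}S_{\alpha,z_{0}}(z,s)=S_{\alpha,z_{0}}(s)$ needed for $g(z_{0})=0$, which rests on the boundary condition defining $D_{\varphi(\alpha)}$ and the parameter relation \eqref{ext} between $\alpha$ and $\varphi(\alpha)$; a secondary difficulty is the careful bookkeeping of the $\e^{\i\varphi(\alpha)}$-factors in the functional-equation computation so that the $G^{\Gamma}$-terms cancel exactly.
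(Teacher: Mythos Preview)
Your proposal is correct and follows essentially the same route as the paper: decompose $\{1+\e^{\i\varphi(\alpha)}\}G^{\Gamma}_{s}$ to exhibit the $D^{*}_{\varphi(\alpha)}$-structure, check $g(z_{0})=0$ via $\lim_{z\to z_{0}}S_{\alpha,z_{0}}(z,s)=S_{\alpha,z_{0}}(s)$, and use Lemma~\ref{Hilbert} for the eigenvalue equation. The paper does not invoke the uniqueness lemma or separately verify the cusp asymptotics --- it simply checks that the given formula satisfies the defining conditions --- but your extra steps are harmless. For \eqref{Greenfe2} the paper instead starts from $E^{\alpha,z_{0}}(z,s)$, uses \eqref{Greenfe} to rewrite $E(z,s)$ as $\tfrac{1-2s}{E(z_{0},1-s)}(G^{\Gamma}_{s}-G^{\Gamma}_{1-s})$, and then reads off $\theta_{\alpha,z_{0}}(s)$; your approach of substituting \eqref{eisen} at $1-s$ into the right-hand side and simplifying with \eqref{Greenfe} and \eqref{fe} is an equivalent algebraic rearrangement, and in fact lands more directly on the stated form of \eqref{Greenfe2}. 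Your flagged concern --- that $\lim_{z\to z_{0}}S_{\alpha,z_{0}}(z,s)=S_{\alpha,z_{0}}(s)$ rests on the parameter relation \eqref{ext} --- is exactly the point the paper also leans on.
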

\begin{proof}
Let $s\in\CC$ and $S_{\alpha,\,z_{0}}(s)\neq0$. First of all we show that $E^{\,\alpha,\,z_{0}}(z,s)$ as given by \eqref{eisen} is in $D^{*}_{\varphi(\alpha)}$. For the automorphic Green's function we recall the decomposition
\begin{equation}
G^{\Gamma}_{s}(z,z_{0})=
\frac{1}{1+\e^{\i\varphi(\alpha)}}\left\{S_{\,\alpha,\,z_{0}}(z,s)+G^{\Gamma}_{t}(z,z_{0})+\e^{\i\varphi(\alpha)}G^{\Gamma}_{\bar{t}}(z,z_{0})\right\}.
\end{equation}
Substituting this back into \eqref{eisen} we obtain the decomposition of $E^{\,\alpha,\,z_{0}}\left(z,s\right)$
\begin{equation}
E^{\,\alpha,\,z_{0}}(z,s)=E(z,s)-E(z_{0},s)\frac{S_{\,\alpha,\,z_{0}}(z,s)}{S_{\,\alpha,\,z_{0}}(s)}
-\frac{E(z_{0},s)}{S_{\,\alpha,\,z_{0}}(s)}\,\left\{G^{\Gamma}_{t}(z,z_{0})+\e^{\i\varphi(\alpha)}G^{\Gamma}_{\bar{t}}(z,z_{0})\right\},
\end{equation}
which shows that $E^{\,\alpha,\,z_{0}}(\cdot,s)\in D^{*}_{\varphi(\alpha)}$ since
\begin{equation}
\lim_{z\to z_{0}}\left\{E(z,s)-E(z_{0},s)\frac{S_{\,\alpha,\,z_{0}}(z,s)}{S_{\,\alpha,\,z_{0}}(s)}\right\}
=\,E(z_{0},s)\left\{1-\frac{\lim_{z\to z_{0}}S_{\,\alpha,\,z_{0}}(z,s)}{S_{\alpha,\,z_{0}}(s)}\right\}=0.
\end{equation}

We also see that
\begin{equation}
\begin{split}
(\Delta_{\alpha,z_{0}}+s(1-s))E^{\,\alpha,\,z_{0}}(z,s)
=&\,(\Delta+s(1-s))\left\{E(z,s)-E(z_{0},s)\frac{S_{\,\alpha,\,z_{0}}(z,s)}{S_{\,\alpha,\,z_{0}}(s)}\right\}\\
&-\frac{E(z_{0},s)}{S_{\,\alpha,\,z_{0}}(s)}(s(1-s)-t(1-t))G^{\Gamma}_{t}(z,z_{0})\\
&-\e^{\i\varphi(\alpha)}\frac{E(z_{0},s)}{S_{\,\alpha,\,z_{0}}(s)}(s(1-s)-\bar{t}(1-\bar{t}))G^{\Gamma}_{\bar{t}}(z,z_{0}),
\end{split}
\end{equation}
and because of
\begin{equation}
(\Delta+s(1-s))E(z,s)=0,
\end{equation}
and
\begin{equation}
\begin{split}
(\Delta+s(1-s))\,S_{\alpha,z_{0}}(z,s)=&(-s(1-s)+t(1-t))G^{\Gamma}_{t}(z,z_{0})\\
&+\e^{\i\varphi(\alpha)}(-s(1-s)+\bar{t}(1-\bar{t}))G^{\Gamma}_{\bar{t}}(z,z_{0}),
\end{split}
\end{equation}
which follows from Lemma \ref{Hilbert}, we have
\begin{equation}
(\Delta_{\alpha,z_{0}}+s(1-s))E^{\alpha,\,z_{0}}(z,s)=0.
\end{equation}

In order to prove \eqref{Greenfe2} we rewrite $E^{\,\alpha,\,z_{0}}(z,s)$ as
\begin{equation}
\begin{split}
E^{\,\alpha,\,z_{0}}(z,s)&=E(z,s)-\frac{E(z_{0},s)}{F_{\,\alpha,\,z_{0}}(s)}G^{\Gamma}_{s}(z,z_{0})\\
&=\left\{\frac{1-2s}{E(z_{0},1-s)}-\frac{E(z_{0},s)}{F_{\,\alpha,\,z_{0}}(s)}\right\}G^{\Gamma}_{s}(z,z_{0})-\frac{1-2s}{E(z_{0},1-s)}G^{\Gamma}_{1-s}(z,z_{0}),
\end{split}
\end{equation}
where \eqref{Greenfe} was substituted and for convenience we let
\begin{equation}
F_{\,\alpha,\,z_{0}}(s)=\frac{S_{\,\alpha,\,z_{0}}(s)}{1+\e^{\i\varphi(\alpha)}}.
\end{equation}
Consequently one gets the equation
\begin{equation}
\frac{E^{\,\alpha,\,z_{0}}(z,s)E(z_{0},1-s)}{1-2s}=\left\{1-\frac{E(z_{0},1-s)E(z_{0},s)}{(1-2s)F_{\,\alpha,\,z_{0}}(s)}\right\}G^{\Gamma}_{s}(z,z_{0})-G^{\Gamma}_{1-s}(z,z_{0}).
\end{equation}
So let
\begin{equation}
\begin{split}
\theta_{\,\alpha,\,z_{0}}(s)&=1-\frac{E(z_{0},1-s)E(z_{0},s)}{(1-2s)F_{\,\alpha,\,z_{0}}(s)}\\
&=1-\frac{\lim_{z\to z_{0}}(G^{\Gamma}_{s}-G^{\Gamma}_{1-s})(z,z_{0})}{F_{\,\alpha,\,z_{0}}(s)}\\
&=\frac{F_{\,\alpha,\,z_{0}}(s)-\lim_{z\to z_{0}} (G^{\Gamma}_{s}-G^{\Gamma}_{1-s})(z,z_{0})}{F_{\,\alpha,\,z_{0}}(s)}\\
&=\frac{F_{\,\alpha,\,z_{0}}(1-s)}{F_{\,\alpha,\,z_{0}}(s)}\\
&=\frac{S_{\,\alpha,\,z_{0}}(1-s)}{S_{\,\alpha,\,z_{0}}(s)}.
\end{split}
\end{equation}
\end{proof}
\begin{remark}
We note from \eqref{eisen} that the eigenvalues in the residual spectrum $R_{\alpha,z_{0}}(\Gamma\backslash\HH)$ have an interpretation as residues of $E^{\,\alpha,\,z_{0}}(z,s)$ provided $E(z_{0},s)\neq0$.
\end{remark}
It is now a straightforward corollary of the previous lemma to derive the functional equation and therefore the scattering coefficient $\varphi_{\,\alpha,\,z_{0}}\left(s,z_{0}\right)$ for the perturbed Eisenstein series $\E^{\,\alpha,\,z_{0}}\left(z,s\right)$.
\begin{cor}
\begin{equation}
\E^{\alpha,\,z_{0}}(z,s)=\varphi_{\,\alpha,\,z_{0}}(s)\E^{\alpha,\,z_{0}}(z,1-s),
\end{equation}
where the scattering coefficient is given by
\begin{equation}\label{scatt}
\varphi_{\,\alpha,\,z_{0}}(s)=\varphi(s)\frac{S_{\,\alpha,\,z_{0}}(1-s,z_{0})}{S_{\,\alpha,\,z_{0}}(s,z_{0})}.
\end{equation}
\end{cor}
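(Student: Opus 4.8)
The plan is to read the functional equation straight off the explicit representation \eqref{eisen} of the perturbed Eisenstein series obtained in the previous lemma, using only the classical functional equation $E(z,s)=\varphi(s)E(z,1-s)$, the Green's function functional equation \eqref{Greenfe}, and the functional equation \eqref{fe} of the relative zeta function $S_{\alpha,z_{0}}$.

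First I would solve \eqref{eisen} for the automorphic Green's function,
\begin{equation*}
G^{\Gamma}_{s}(z,z_{0})=\frac{S_{\alpha,z_{0}}(s)}{(1+\e^{\i\varphi(\alpha)})\,E(z_{0},s)}\bigl(E(z,s)-E^{\alpha,z_{0}}(z,s)\bigr),
\end{equation*}
and write down the same identity with $s$ replaced by $1-s$. Substituting both expressions into the functional equation \eqref{Greenfe2} of the previous lemma removes $G^{\Gamma}_{s}$ and $G^{\Gamma}_{1-s}$ completely. Then I would invoke the classical functional equation, both at $z$ and at $z_{0}$, in the form $E(z,1-s)=\varphi(s)^{-1}E(z,s)$ and $E(z_{0},1-s)=\varphi(s)^{-1}E(z_{0},s)$: after clearing denominators all contributions of the unperturbed Eisenstein series $E(z,\cdot)$ cancel, and what remains is a linear relation between $E^{\alpha,z_{0}}(z,s)$ and $E^{\alpha,z_{0}}(z,1-s)$ whose coefficients involve only $\varphi(s)$ and the values $S_{\alpha,z_{0}}(s)$, $S_{\alpha,z_{0}}(1-s)$.

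The coefficient is then simplified with \eqref{fe}. Since $\tfrac{1}{1-2s}E(z_{0},s)E(z_{0},1-s)$ measures exactly the defect between $S_{\alpha,z_{0}}(s)$ and $S_{\alpha,z_{0}}(1-s)$, the ratio of the two coefficients collapses to $\varphi(s)\,S_{\alpha,z_{0}}(1-s)/S_{\alpha,z_{0}}(s)$, and we obtain
\begin{equation*}
E^{\alpha,z_{0}}(z,s)=\varphi(s)\,\frac{S_{\alpha,z_{0}}(1-s)}{S_{\alpha,z_{0}}(s)}\,E^{\alpha,z_{0}}(z,1-s).
\end{equation*}
This is valid first for $s$ with $S_{\alpha,z_{0}}(s)\neq0\neq S_{\alpha,z_{0}}(1-s)$ — division by $E^{\alpha,z_{0}}(z,1-s)$ is legitimate there because by \eqref{asmpt} that function has leading cusp term $y^{1-s}$ and so is not identically zero — and then for all $s\in\CC$ by meromorphic continuation. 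To finish I would match the prefactor with the scattering coefficient of \eqref{asmpt}: letting $y\to\infty$ in the displayed identity, the left-hand side is $y^{s}+\varphi_{\alpha,z_{0}}(s)y^{1-s}+O(\e^{-2\pi y})$ and the right-hand side is $\varphi(s)S_{\alpha,z_{0}}(1-s)S_{\alpha,z_{0}}(s)^{-1}\bigl(y^{1-s}+\varphi_{\alpha,z_{0}}(1-s)y^{s}\bigr)+O(\e^{-2\pi y})$, so comparing coefficients of $y^{1-s}$ gives \eqref{scatt}, while comparing coefficients of $y^{s}$ gives the consistency relation $\varphi_{\alpha,z_{0}}(s)\varphi_{\alpha,z_{0}}(1-s)=1$.

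There is no serious obstacle here — it genuinely is a corollary — but the computation requires care in two places. One must keep the factor $\theta_{\alpha,z_{0}}(s)=S_{\alpha,z_{0}}(1-s)/S_{\alpha,z_{0}}(s)$ correctly oriented under the reflection $s\leftrightarrow 1-s$ (note $\theta_{\alpha,z_{0}}(1-s)=\theta_{\alpha,z_{0}}(s)^{-1}$ and that the term $\tfrac{1}{1-2s}E(z_{0},s)E(z_{0},1-s)$ is antisymmetric under this reflection), and one must check that the meromorphic identity extends across the zeros and poles of $S_{\alpha,z_{0}}$. As an alternative that bypasses \eqref{Greenfe2}, one can instead insert the known cusp asymptotics $E(z,s)=y^{s}+\varphi(s)y^{1-s}+O(\e^{-2\pi y})$ and $G^{\Gamma}_{s}(z,z_{0})=\tfrac{E(z_{0},s)}{1-2s}y^{1-s}+O(\e^{-2\pi y})$ directly into \eqref{eisen}, read off $\varphi_{\alpha,z_{0}}(s)$, rewrite it as \eqref{scatt} using \eqref{fe}, and then deduce the functional equation from the uniqueness lemma above, since $\varphi_{\alpha,z_{0}}(s)E^{\alpha,z_{0}}(\cdot,1-s)$ solves \eqref{Eisenstein}, lies in $D^{*}_{\varphi(\alpha)}$, and has the same leading cusp term $y^{s}$ as $E^{\alpha,z_{0}}(\cdot,s)$.
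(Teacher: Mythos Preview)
Your main approach is correct and is essentially the paper's own: both pivot on the perturbed Green's-function functional equation \eqref{Greenfe2} from the previous lemma together with the classical relation $E(z,s)=\varphi(s)E(z,1-s)$ and $\theta_{\alpha,z_0}(s)\theta_{\alpha,z_0}(1-s)=1$. The paper organizes the algebra slightly more directly---it rewrites \eqref{Greenfe2} as $E^{\alpha,z_0}(z,1-s)=\tfrac{1-2s}{E(z_0,s)}\bigl[\theta_{\alpha,z_0}(s)G^\Gamma_s-G^\Gamma_{1-s}\bigr]$, applies the same identity with $s\leftrightarrow 1-s$, and compares---rather than first solving \eqref{eisen} for $G^\Gamma_s$ and substituting, but the content is identical.

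Your alternative route (read off $\varphi_{\alpha,z_0}(s)$ from the cusp asymptotics of \eqref{eisen}, rewrite via \eqref{fe}, then invoke the uniqueness lemma) is a genuinely different and somewhat cleaner argument that the paper does not use: it avoids \eqref{Greenfe2} altogether and makes the role of uniqueness explicit. The paper's route is a pure algebraic manipulation of identities already established; yours trades that algebra for an appeal to self-adjointness via the uniqueness lemma. Both are short, and either would be acceptable here.
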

\begin{proof}
We use the relation \eqref{Greenfe2} to derive the functional equation for $E^{\alpha,z_{0}}(z,s)$.
\begin{equation}
\begin{split}
E^{\alpha,\,z_{0}}(z,s)&=-\frac{1-2s}{E(z_{0},1-s)}\left[\theta^{\,\alpha,\,z_{0}}(s)G^{\Gamma}_{s}(z,z_{0})-G^{\Gamma}_{1-s}(z,z_{0})\right]\\
&=-\frac{1-2s}{\varphi(1-s)E(z_{0},s)}\left[\theta^{\,\alpha,\,z_{0}}(s)G^{\Gamma}_{s}(z,z_{0})-G^{\Gamma}_{1-s}(z,z_{0})\right]\\
&=\varphi(s)\frac{1-2s}{E(z_{0},s)}\left[\theta^{\,\alpha,\,z_{0}}(s)\theta^{\,\alpha,\,z_{0}}(1-s)G^{\Gamma}_{1-s}(z,z_{0})-\theta_{\alpha,\,z_{0}}(s)G^{\Gamma}_{s}(z,z_{0})\right]\\
&=\varphi(s)\theta_{\,\alpha,\,z_{0}}\left(s\right)E^{\alpha,\,z_{0}}(z,1-s).
\end{split}
\end{equation}
\end{proof}

\section{Bounds on the relative zeta function for surfaces with one cusp}

We continue with the construction of a sequence of intervals $\lbrace[T_{N},T_{N}-\i\sigma]\rbrace_{N\in\NN}$ crossing the strip $\RR\times[0,-\i\sigma]$ on which the relative zeta function $S_{\alpha,z_{0}}(\tfrac{1}{2}+\i\rho)$ admits a uniform upper bound of order $e^{c(\Gamma,\alpha,z_{0})T_{N}^{2}\ln T_{N}}$ for some positive constant $c(\Gamma,\alpha,z_{0})$. We suspect that at least in arithmetic cases it is possible to do much better and in fact achieve a polynomial bound. However, for a generic non-compact surface the general bound  on the respective Eisenstein series (cf. Thm. 12.9(d) in \cite{Hj3}) gives the above bound which will suffice for our purposes. 

As in the compact case the proof uses the spectral expansion of $S_{\alpha,z_{0}}(s)$. The discrete part of the relative zeta function admits a uniform bound of polynomial growth for a suitably chosen sequence of intervals and the proof follows exactly the same lines as the proof of Proposition \ref{polybound}. The case of the existence of only finitely many cusp forms is trivial.

In order to obtain the analogous bound for the continuous part we require a bound on the scattering coefficient $\varphi(s)$ close to the critical line.
\begin{lem}\label{scatbound}
Let $\xi=s-\tfrac{1}{2}$ and $\sigma\geq\tfrac{1}{2}$. If $\;0\leq-\Re\xi\leq\sigma$ and $|\Im\xi|\to\infty$ then there exists a positive constant $C_{1}(\Gamma)$ s. t.
\begin{equation}
|\varphi(s)|<\!\!<_{\Gamma,\sigma} e^{C_{1}(\Gamma)(\sigma+|\Im\xi|)}\prod_{\gamma_{j}<2|Im\xi|+2}\left|\frac{\xi-\eta_{j}-\i\gamma_{j}}{\xi+\eta_{j}+\i\gamma_{j}}\right|\left|\frac{\xi-\eta_{j}+\i\gamma_{j}}{\xi+\eta_{j}-\i\gamma_{j}}\right|.
\end{equation}
\end{lem}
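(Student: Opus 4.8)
The plan is to derive the bound from the Maass–Selberg / Hadamard factorisation of the scattering coefficient $\varphi(s)$, following the approach of Hejhal (cf. Thm.~12.9 in \cite{Hj3}). Recall that $\varphi(s)$ is meromorphic of finite order, that it satisfies $\varphi(s)\varphi(1-s)=1$ and $\overline{\varphi(\bar s)}=\varphi(s)$, and that all its poles lie in $\Re s>\tfrac12$ (with the finitely many poles in $(\tfrac12,1]$ corresponding to the residual spectrum and the remaining ones, which I write as $\tfrac12+\eta_j\pm\i\gamma_j$ with $\gamma_j>0$, being the resonances). A function of finite order with this symmetry admits a representation of the form
\begin{equation}
\varphi(s)=e^{A+Bs}\prod_{j}\frac{s-\rho_j}{s-(1-\rho_j)}
\end{equation}
where the product runs over the poles $\rho_j$ of $\varphi$; writing $\rho_j=\tfrac12+\eta_j+\i\gamma_j$ and $\xi=s-\tfrac12$ this becomes $e^{A+B\xi}\prod_j\frac{\xi-\eta_j-\i\gamma_j}{\xi+\eta_j+\i\gamma_j}$, together with the conjugate factor coming from the poles $\overline{\rho_j}$ (the poles occur in conjugate pairs since the coefficients of the Fourier expansion of $E$ are real on the critical line). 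This is the origin of the product that appears in the statement.

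First I would split the Hadamard product at the natural scale $\gamma_j\approx 2|\Im\xi|+2$: the finitely many factors with $\gamma_j<2|\Im\xi|+2$ are kept explicitly (these are exactly the ones displayed in the Lemma), while for the \emph{tail} $\gamma_j\ge 2|\Im\xi|+2$ I would show that the product of $\left|\frac{\xi-\eta_j-\i\gamma_j}{\xi+\eta_j+\i\gamma_j}\right|$ over such $j$ is bounded above. This uses the resonance counting estimate — for a surface with one cusp one has $\#\{j:\gamma_j\le T\}=O(T^2)$ (this is the Weyl law for resonances / the estimate on the number of poles of $\varphi$, which can be read off from the lower bound $\frac{\varphi'}{\varphi}(\tfrac12+\i t)\gg -\log t$ together with the winding-number argument, cf.~\cite{Iw}, \cite{Hj3}) — combined with the elementary inequality that when $\gamma_j$ is large compared to $|\xi|$ the factor $\left|\frac{\xi-\eta_j-\i\gamma_j}{\xi+\eta_j+\i\gamma_j}\right|$ differs from $1$ by $O(|\Re\xi|\,\eta_j/\gamma_j^2 + \ldots)$; summing the logarithms against the $O(T^2)$ density produces a contribution $O(\sigma+|\Im\xi|)$, hence a factor $e^{C(\Gamma)(\sigma+|\Im\xi|)}$ after exponentiating. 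The linear factor $e^{A+B\xi}$ in the Hadamard product contributes another $e^{O(|\xi|)}=e^{O(\sigma+|\Im\xi|)}$, and the constants $A,B$ depend only on $\Gamma$.

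The main obstacle, and the step requiring the most care, is controlling the tail of the Hadamard product \emph{uniformly} for $\xi$ ranging over the whole strip $0\le-\Re\xi\le\sigma$ rather than just on the critical line: one must be sure that the zeros of $\varphi$ in $\Re s<\tfrac12$ (i.e.\ the reflected poles $1-\rho_j$, which is where $\varphi$ vanishes) do not come close to the contour, and that the cancellation $|\xi-\eta_j-\i\gamma_j|/|\xi+\eta_j+\i\gamma_j|$ in the tail is genuinely close to $1$; here one invokes that all the $\eta_j$ with $\gamma_j$ large satisfy $0<\eta_j\ll 1$ (poles of $\varphi$ off the critical line and off $[\tfrac12,1]$ have $\Re s$ bounded, by the functional equation and standard bounds on $E(z_0,s)$), so that $|\xi\pm(\eta_j+\i\gamma_j)|=\gamma_j(1+O(1/\gamma_j))$ and the ratio is $1+O((|\Re\xi|\eta_j+|\Im\xi|\gamma_j^{0})/\gamma_j^2)$, summable against the $O(T^2)$ counting function to give $O(\sigma+|\Im\xi|)$ as claimed. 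Once the tail and the linear factor are absorbed into $e^{C_1(\Gamma)(\sigma+|\Im\xi|)}$, the remaining explicit finite product is precisely what is displayed, and the Lemma follows.
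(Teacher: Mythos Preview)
Your overall strategy coincides with the paper's: start from the Hadamard-type product representation of $\varphi$ (Hejhal, Prop.~12.6), split the infinite product at the scale $\gamma_j = 2|\Im\xi|+2$, keep the finite piece explicitly, and show that the tail contributes at most $e^{C(\Gamma)|\xi|}$. The exponential factor $e^{A\xi}$ and the finitely many residual-spectrum factors are handled exactly as you describe.

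There is, however, a genuine gap in your control of the tail. You propose to bound $\sum_{\gamma_j\ge 2|\Im\xi|+2}\log\bigl|\tfrac{\xi-\eta_j\mp\i\gamma_j}{\xi+\eta_j\pm\i\gamma_j}\bigr|$ using only the resonance counting estimate $\#\{j:\gamma_j\le T\}=O(T^2)$ together with $\eta_j=O(1)$. This is not enough. The paired factor satisfies
\[
\left|\frac{\xi-\eta_j-\i\gamma_j}{\xi+\eta_j+\i\gamma_j}\cdot\frac{\xi-\eta_j+\i\gamma_j}{\xi+\eta_j-\i\gamma_j}\right|
=\left|1-\frac{4\xi\eta_j}{(\eta_j+\xi)^2+\gamma_j^2}\right|,
\]
so the log of each tail term is of size $O(|\xi|\,\eta_j/\gamma_j^2)$. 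If you only input $\eta_j=O(1)$ and $N(T)=O(T^2)$, then by partial summation $\sum_{\gamma_j\ge R}\gamma_j^{-2}$ is comparable to $\int_R^\infty t^{-1}\,dt$, which diverges; the tail bound would then be $|\xi|\cdot\infty$, not $O(|\xi|)$. (Your suggested error term ``$|\Im\xi|\gamma_j^{0}/\gamma_j^{2}$'' is also spurious: it cancels once you multiply the two conjugate factors, as the display above shows.)

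What the paper actually uses, and what you are missing, is the finer input
\[
\sum_{j\ge 0}\frac{\eta_j}{1+\gamma_j^2}<\infty
\]
(Hejhal, Prop.~12.5). This encodes that the poles of $\varphi$ cluster toward the critical line on average, not merely that their real parts are bounded. With this in hand one gets directly
\[
\sum_{\gamma_j\ge 2|\Im\xi|+2}\left|\log\Bigl(1-\tfrac{4\xi\eta_j}{(\eta_j+\xi)^2+\gamma_j^2}\Bigr)\right|
\le 8|\xi|\sum_{\gamma_j\ge 2|\Im\xi|+2}\frac{\eta_j}{\tfrac34\gamma_j^2+1}
\le \tfrac{32}{3}|\xi|\sum_{j\ge 0}\frac{\eta_j}{1+\gamma_j^2}=C_1(\Gamma)|\xi|,
\]
after first checking that $\bigl|\tfrac{4\xi\eta_j}{(\eta_j+\xi)^2+\gamma_j^2}\bigr|<\tfrac12$ in the tail range (which uses $\gamma_j\ge 2|\Im\xi|+2$). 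Replace the counting-function step in your argument with this convergence result and the proof goes through exactly as in the paper.
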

\begin{proof}
The scattering coefficient $\varphi(s)$ is (see \cite{Hj3}, Prop. 12.6, p. 157) of the general form
\begin{equation}\label{prod}
\varphi(s)=\varphi(\tfrac{1}{2})e^{A\xi}\prod_{k=1}^{M}\frac{\xi-\i\rho_{-k}}{\xi+\i\rho_{-k}}\prod_{j=0}^{\infty}\frac{\xi-\eta_{j}-\i\gamma_{j}}{\xi+\eta_{j}+\i\gamma_{j}}\;\frac{\xi-\eta_{j}+\i\gamma_{j}}{\xi+\eta_{j}-\i\gamma_{j}},\;A>0,
\end{equation}
where the infinite product is convergent whenever $\xi\neq\eta_{j}\pm\i\gamma_{j}$. For all $j$ we have $\eta_{j}>0$ and $\gamma_{j}\geq0$. We also have (cf. Prop. 12.5, p. 156 \cite{Hj3})
\begin{equation}
\sum_{j=0}^{+\infty}\frac{\eta_{j}}{1+\gamma_{j}^{2}}<+\infty.
\end{equation}
The first two terms are trivially bounded. We split the infinite product into two parts. The first term is estimated as follows (cf. \cite{Hj3}, p. 159, (**) and the fourth line from the bottom)
\begin{equation}
\left|\log\prod_{\gamma_{j}\geq2|Im\xi|+2}\left|\frac{\xi-\eta_{j}-\i\gamma_{j}}{\xi+\eta_{j}+\i\gamma_{j}}\;\frac{\xi-\eta_{j}+\i\gamma_{j}}{\xi+\eta_{j}-\i\gamma_{j}}\right|\right|
\leq\sum_{\gamma_{j}\geq2|Im\xi|+2}\left|\log\left(1-\frac{4\xi\eta_{j}}{(\eta_{j}+\xi)^{2}+\gamma_{j}^{2}}\right)\right|.
\end{equation}
We note that for $|\Im\xi|$ large we have
\begin{equation}
\begin{split}
\frac{4|\xi|\eta_{j}}{|(\eta_{j}+\xi)^{2}+\gamma_{j}^{2}|}
&\leq\frac{4|\xi|\eta_{j}}{|\Re\lbrace(\eta_{j}+\xi)^{2}+\gamma_{j}^{2}\rbrace|}\\
&\leq\frac{4|\xi|\eta_{j}}{(\eta_{j}+\Re\xi)^{2}+\gamma_{j}^{2}-(\Im\xi)^{2}}\\
&\leq\frac{4|\xi|\eta_{j}}{\gamma_{j}^{2}-(\Im\xi)^{2}}
\leq\frac{4|\xi|\eta_{j}}{\tfrac{3}{4}\gamma_{j}^{2}+1}\\
&\leq\frac{4|\xi|\eta_{j}}{3(\Im\xi)^{2}+4}\leq\frac{4(\sigma+|\Im\xi|)}{3(\Im\xi)^{2}+4}<\tfrac{1}{2}
\end{split}
\end{equation}
where we have used that $\gamma_{j}\geq2|\Im\xi|+2$ implies $\tfrac{1}{4}\gamma_{j}^{2}\geq(\Im\xi)^{2}+1$. Note that for $|z|<\tfrac{1}{2}$ we have
\begin{equation}
|\log(1-z)|\leq\sum_{k=1}^{\infty}\frac{|z|^{k}}{k}\leq2|z|,
\end{equation}
and as a consequence we have the estimate
\begin{equation}
\begin{split}
\sum_{\gamma_{j}\geq2|Im\xi|+2}\left|\log\left(1-\frac{4\xi\eta_{j}}{(\eta_{j}+\xi)^{2}+\gamma_{j}^{2}}\right)\right|
&\leq8|\xi|\sum_{\gamma_{j}\geq2|Im\xi|+2}\frac{\eta_{j}}{|(\eta_{j}+\xi)^{2}+\gamma_{j}^{2}|}\\
&\leq8|\xi|\sum_{\gamma_{j}\geq2|Im\xi|+2}\frac{\eta_{j}}{1+\tfrac{3}{4}\gamma_{j}^{2}}\\
&\leq\frac{32}{3}|\xi|\sum_{j=0}^{+\infty}\frac{\eta_{j}}{1+\gamma_{j}^{2}}.
\end{split}
\end{equation}
Let $C_{1}(\Gamma)=\tfrac{32}{3}\sum_{j=0}^{+\infty}\frac{\eta_{j}}{1+\gamma_{j}^{2}}$. Then
\begin{equation}
\prod_{\gamma_{j}\geq2|Im\xi|+2}\left|\frac{\xi-\eta_{j}-\i\gamma_{j}}{\xi+\eta_{j}+\i\gamma_{j}}\;\frac{\xi-\eta_{j}+\i\gamma_{j}}{\xi+\eta_{j}-\i\gamma_{j}}\right|\leq e^{C_{1}(\Gamma)|\xi|}\leq e^{C_{1}(\Gamma)(\sigma+|\Im\xi|)}.
\end{equation}
\end{proof}

To be able to control a residual term which, as we will see, arises from the continous part of the relative zeta function we also require a bound on the Eisenstein series on subsets of the strip $\tfrac{1}{2}\leq\Re s\leq\tfrac{3}{2}$.
\begin{lem}\label{Ebound}
Let $c_{0}>0$. For any $T\in\RR$ there exists $T_{0}$ with $|T-T_{0}|\leq c_{0}T_{0}^{-2}$ such that
\begin{equation}
|E(z_{0},\tfrac{1}{2}+t+\i T_{0})|<\!\!<T_{0}^{2}e^{3T_{0}}
\end{equation}
for all $t\in[0,1]$.
\end{lem}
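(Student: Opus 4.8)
\emph{Proof idea.}
The estimate is the general growth bound for the Eisenstein series in vertical strips, specialised to our fixed base point $z_0$, to the strip $\tfrac12\le\Re s\le\tfrac32$, and to a carefully chosen height $T_0$. The plan has two parts: first reduce the bound on $E(z_0,\cdot)$ to a bound on the scattering coefficient $\varphi$; then choose $T_0$ near $T$ so as to avoid the scattering poles.

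\emph{Step 1 (reduction to $\varphi$).} Expand $E(z,s)$ in its Fourier series about the cusp. Writing $z_0=x_0+\i y_0$, the constant term contributes $y_0^{s}+\varphi(s)y_0^{1-s}$, which is $\ll_{\Gamma,z_0}1+|\varphi(s)|$ throughout $\tfrac12\le\Re s\le\tfrac32$; each nonzero Fourier coefficient has the shape $a_n(s)\sqrt{y_0}\,K_{s-1/2}(2\pi|n|y_0)$, and by Stirling's formula the factor $\e^{\pi|\Im s|/2}$ arising from $1/\Gamma(s)$ inside $a_n(s)$ is cancelled by the matching factor in $K_{\i\Im s}$, leaving only polynomial dependence on $|\Im s|$, while the $n$-sum converges because $K_{s-1/2}(2\pi|n|y_0)\ll\e^{-\pi n y_0}$ for large $n$. (Should $y_0$ lie below the injectivity height of the cusp, one first conjugates $z_0$ into the cuspidal sector; equivalently one invokes Theorem 12.9(d) of \cite{Hj3} directly, which is precisely such a bound, uniform for $z$ in a fixed compact set.) It therefore suffices to bound $|\varphi(\tfrac12+t+\i T_0)|$ for $t\in[0,1]$ by $\ll_{\Gamma}T_0^{2}\e^{3T_0}$.

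\emph{Step 2 (bound on $\varphi$ and choice of $T_0$).} Starting from the product representation \eqref{prod} of $\varphi$ and the splitting employed in the proof of Lemma \ref{scatbound}, the contribution of the scattering poles $\eta_j\pm\i\gamma_j$ with $\gamma_j$ bounded away from $T_0$ is at most $\e^{O_{\Gamma}(T_0)}$, so the whole matter reduces to controlling the finitely many factors coming from poles with $|\gamma_j-T_0|\ll1$. Using the polynomial density estimate $\#\{j:\gamma_j\le X\}\ll_{\Gamma}X^{2}$ (cf.\ \cite{Hj3}) one checks that the set of heights $T_0$ at which some such factor exceeds the admissible size lies in a union of short intervals about the $\gamma_j$, of total length $o(T^{-2})$ near height $T$, hence failing to cover $[T-c_0T^{-2},T+c_0T^{-2}]$; any $T_0$ in the complement is admissible, and for it the product is $\ll_{\Gamma}T_0^{2}$. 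Together with the exponential factor and Step 1 this yields the claim.

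\emph{Main obstacle.} The substance of the proof is Step 2: one must produce, for \emph{every} $T$ (not merely for a sparse sequence of heights), a $T_0$ within $c_0T_0^{-2}$ of $T$ at which the product over the scattering poles collapses from its generic size — which, as recalled before Lemma \ref{scatbound}, can be as large as $\e^{c(\Gamma)T^{2}\ln T}$ — down to $T_0^{2}$. This hinges on the polynomial density of the scattering poles together with a careful bookkeeping of how closeness to a pole with a small $\eta_j$ inflates $|\varphi|$; reconciling the exponents in that bookkeeping with the precise numerology of the statement (the $3$, the $T_0^{2}$, and the proximity $c_0T_0^{-2}$) is where the real work lies, the remaining estimates being routine.
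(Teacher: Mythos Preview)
Your approach diverges from the paper's and has a genuine gap in Step~1. The paper never reduces to $\varphi$: it quotes the ready-made bound $|E(z_0,\sigma+it)|<\!\!<\sqrt{w(t)}\,\e^{3|t|}$, uniform in $\tfrac12\le\sigma\le\tfrac32$, where $w\ge1$ and $\int_{-R}^{R}w(t)\,dt<\!\!<R^2$ (Hejhal, Thm.~12.9(d), combined with the Maass--Selberg mean-square estimate), and then pigeonholes: if $w(r)>\tfrac{c_1}{2c_0}T^4$ held throughout $[T-c_0T^{-2},\,T+c_0T^{-2}]$ the integral over that interval would exceed $c_1T^2$, so some $T_0$ there has $w(T_0)<\!\!<T^4$, and the lemma follows. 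Your reduction to $\varphi$, by contrast, asserts that the nonzero Fourier coefficients contribute only polynomially in $|\Im s|$ throughout the strip; but after extracting the $1/\Gamma(s)$ normalisation these coefficients are analytic continuations of Kloosterman-type Dirichlet series that converge only for $\Re s>1$, and their polynomial boundedness down to $\Re s=\tfrac12$ is essentially the content of Hejhal's bound itself, not a consequence of Stirling. Your parenthetical ``equivalently one invokes Theorem~12.9(d) directly'' is therefore not an equivalence --- it \emph{is} the paper's argument, after which $\varphi$ plays no role.

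Step~2 then attacks a non-issue. For $\Re\xi\ge0$ each conjugate pair of factors in the infinite product \eqref{prod} has modulus at most $1$ (since $(a-\eta_j)^2\le(a+\eta_j)^2$ when $a,\eta_j\ge0$), while $\e^{A\xi}$ and the finite residual product are bounded for $\Re\xi\in[0,1]$ and $|\Im\xi|$ large; hence $|\varphi(s)|<\!\!<_\Gamma 1$ in $\tfrac12\le\Re s\le\tfrac32$ away from the finitely many residual poles on $(\tfrac12,1]$, and no choice of $T_0$ is needed to control it. The size $\e^{cT^2\ln T}$ you recall pertains to $\varphi$ in $\Re s<\tfrac12$ (the regime of Lemma~\ref{scatbound}), not to the present strip. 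That the lemma genuinely requires a special $T_0$ already signals that the largeness of $E(z_0,\cdot)$ lives in the weight $w$ --- equivalently in the nonzero Fourier modes --- and not in the constant term.
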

\begin{proof}
We have the following bounds (cf. Thm. 12.9.(d), p. 164, Prop. 12.7, p. 161 \cite{Hj3}, (10.13), p. 142 \cite{Iw})
\begin{equation}\label{bound1}
|E(z_{0},\sigma+\i t)|<\!\!<\sqrt{w(t)}\,e^{3|t|},
\end{equation}
uniformly for $\tfrac{1}{2}\leq\sigma\leq\tfrac{3}{2}$, where $w$ satisfies 
\begin{equation}
\forall t\in\RR:\,w(t)\geq1
\end{equation}
and, making use of (10.13) in \cite{Iw} and running through the same argument as on pp. 161-62 in \cite{Hj3}, we have
\begin{equation}\label{bound2}
\int^{R}_{-R}w(t)dt<\!\!<R^{2}.
\end{equation}
Using bound \eqref{bound2} we infer
\begin{equation}
\int_{T-c_{0}T^{-2}}^{T+c_{0}T^{-2}}w(r)dr\leq c_{1}T^{2}
\end{equation}
for some uniform constant $c_{1}>0$. Now suppose for a contradiction that for all $r$ with $|r-T|\leq c_{0}T^{-2}$ we have $w(r)>\frac{c_{1}c_{0}^{-1}}{2}T^{4}$. This implies
\begin{equation}
\int_{T-c_{0}T^{-2}}^{T+c_{0}T^{-2}}w(r)dr>2c_{0}T^{-2}\,\frac{c_{1}c_{0}^{-1}}{2}T^{4}=c_{1}T^{2}
\end{equation}
which is a contradiction to the above bound. So we conclude that there exists $T_{0}$ with $|T-T_{0}|\leq c_{0}T^{-2}$ such that $w(T_{0})\leq\frac{c_{1}c_{0}^{-1}}{2}T^{4}$. This implies by bound \eqref{bound1}
\begin{equation}
|E(z_{0},\tfrac{1}{2}+t+\i T_{0})|<\!\!<T_{0}^{2}e^{3T_{0}}
\end{equation}
for $t\in[0,1]$.
\end{proof}

Using Proposition \ref{polybound} and Lemmas \ref{scatbound} and \ref{Ebound} we can establish a bound on the relative zeta function on intervals crossing the strip $|\Im\rho|\leq\sigma$ in between its poles and zeros.
\begin{prop}\label{seqbound}
There exists a sequence $\lbrace T_{N}\rbrace_{N\in\NN}$ in $\RR_{+}$, $\lim_{N\to\infty}T_{N}=+\infty$, such that for all $N$ and $t\in[0,\sigma]$ we have the uniform bound
\begin{equation}
|S_{\alpha,z_{0}}(\tfrac{1}{2}+\i\rho_{N}(t))|<\!\!<e^{C_{2}(\Gamma)T_{N}^{2}\ln T_{N}},
\end{equation}
where $\rho_{N}(t)=T_{N}-\i t$ and $C_{2}(\Gamma)$ is some positive constant.
\end{prop}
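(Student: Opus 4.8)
The plan is to estimate, term by term, the meromorphic continuation of $S_{\alpha,z_{0}}(s)$ obtained in the proof of the meromorphic-continuation Proposition above, specialised to $s=\tfrac{1}{2}+\i\rho_{N}(t)=\tfrac{1}{2}+t+\i T_{N}$ with $t\in[0,\sigma]$, and to produce $\{T_{N}\}$ by passing successively to subsequences on which the discrete part, the Eisenstein series near the critical line, and the scattering coefficient are simultaneously controlled. The constant $\alpha^{-1}$ is harmless. For the discrete sum $\sum_{j\geq-M}|\varphi_{j}(z_{0})|^{2}\bigl(\tfrac{1}{\lambda_{j}-s(1-s)}-\Re\tfrac{1}{\lambda_{j}-\eta}\bigr)$ I would argue exactly as in the proof of Proposition~\ref{polybound}: Weyl's law $\#\{j:\lambda_{j}\leq T\}\ll_{\Gamma}T$ for the discrete spectrum of $\Delta$ on $\Gamma\backslash\HH$, the pointwise bound $|\varphi_{j}(z_{0})|^{2}\ll\lambda_{j}^{1/2}$ (valid for cusp forms and residual forms), and the spacing Lemma~\ref{spacing} produce an increasing $T_{N}\to\infty$ along which this sum is $\ll_{\eta,\Gamma}T_{N}^{5}$; if there are only finitely many cusp forms this is trivial. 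This polynomial contribution is negligible against the claimed bound.

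The continuous part is the genuinely new ingredient: the contour integral $\tfrac{1}{2\pi}\int_{\Gamma(s)}E(z_{0},\tfrac{1}{2}+\i w)E(z_{0},\tfrac{1}{2}-\i w)\bigl(\tfrac{1}{\tfrac14+w^{2}-s(1-s)}-\Re\tfrac{1}{\tfrac14+w^{2}-\eta}\bigr)\,dw$ together with the residual term $\delta(s)\tfrac{E(z_{0},s)E(z_{0},1-s)}{1-2s}$. I would dispose of the integral by splitting at $|w|=2T_{N}$: on $|w|\geq2T_{N}$ the kernel is $O(T_{N}^{2}|w|^{-4})$ and $E(z_{0},\tfrac{1}{2}+\i w)$ obeys the standard polynomial bound on the critical line (a consequence of $|\varphi(\tfrac{1}{2}+\i w)|=1$ and the Fourier expansion at the cusp), so the tail is polynomially bounded; on $|w|\leq2T_{N}$ the same polynomial bound for $E(z_{0},\cdot)$ near $\Re=\tfrac12$ applies, while the kernel pole, which sits at $w=\pm(T_{N}-\i t)$, is by the construction of $\epsilon$ (which may be taken to decay only polynomially) kept at distance $\gg T_{N}^{-c}$ from the contour $\Gamma(s)$, so this part of the integral is again polynomial in $T_{N}$. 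Hence the whole contour integral is polynomially bounded.

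The residual term, which vanishes unless $t<\epsilon(T_{N})$, is where the exponential factor enters. Here $|1-2s|^{-1}=|2(t+\i T_{N})|^{-1}\leq(2T_{N})^{-1}$; refining the sequence by Lemma~\ref{Ebound} we may assume $|E(z_{0},\tfrac{1}{2}+t+\i T_{N})|\ll T_{N}^{2}e^{3T_{N}}$ (only $t$ within $\epsilon(T_{N})=o(1)$ of $0$ occurs, so the strip $[\tfrac12,\tfrac32]$ of Lemma~\ref{Ebound} suffices). Writing $E(z_{0},1-s)=\varphi(1-s)E(z_{0},s)$ by the classical functional equation, the point $\xi=(1-s)-\tfrac12=-t-\i T_{N}$ has $0\leq-\Re\xi\leq\sigma$ and $|\Im\xi|=T_{N}\to\infty$, so Lemma~\ref{scatbound} gives $|\varphi(1-s)|\ll e^{C_{1}(\Gamma)(\sigma+T_{N})}\prod_{\gamma_{j}<2T_{N}+2}\bigl|\tfrac{\xi-\eta_{j}-\i\gamma_{j}}{\xi+\eta_{j}+\i\gamma_{j}}\bigr|\bigl|\tfrac{\xi-\eta_{j}+\i\gamma_{j}}{\xi+\eta_{j}-\i\gamma_{j}}\bigr|$. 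Since the number of scattering poles with $\gamma_{j}<2T_{N}+2$ is $\ll_{\Gamma}T_{N}^{2}$, it is enough to choose $T_{N}$ so that every factor in this finite product is at most polynomially large in $T_{N}$; the product is then $\ll\exp(CT_{N}^{2}\ln T_{N})$, and collecting the three estimates yields $|S_{\alpha,z_{0}}(\tfrac{1}{2}+\i\rho_{N}(t))|\ll e^{C_{2}(\Gamma)T_{N}^{2}\ln T_{N}}$ uniformly in $t\in[0,\sigma]$.

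The \textbf{main obstacle} is precisely the construction of $\{T_{N}\}$: one must choose $T_{N}$ simultaneously obeying the spacing requirement behind Proposition~\ref{polybound}, the Eisenstein estimate of Lemma~\ref{Ebound}, \emph{and} keeping $-t-\i T_{N}$ (and $t-\i T_{N}$) out of a $T_{N}^{-c}$-neighbourhood of each $\pm(\eta_{j}+\i\gamma_{j})$ with $\gamma_{j}\lesssim T_{N}$, so that the denominators $\xi+\eta_{j}-\i\gamma_{j}$ — whose imaginary part $T_{N}-\gamma_{j}$ degenerates for $\gamma_{j}\approx T_{N}$ — stay bounded below by an inverse power of $T_{N}$. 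This is handled by a pigeonhole argument in each dyadic range $[2^{n},2^{n+1})$: the union of the finitely many forbidden intervals there has measure $o(2^{n})$, so an admissible $T_{N}$ exists, and since each of the three refinements perturbs $T_{N}$ by far less than the tolerance of the previous one, they can all be met along a common subsequence.
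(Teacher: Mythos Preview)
Your overall strategy matches the paper's: control the discrete part via Proposition~\ref{polybound}, the continuous part via the meromorphically continued spectral expansion, and choose $T_N$ by a combined spacing/pigeonhole argument avoiding eigenvalues \emph{and} real parts $\gamma_j$ of resonances simultaneously (the paper applies the spacing lemma to the union $K=\{\rho_i\}\cup\{\gamma_j\}$, whose counting function is $\ll R^2$, and then refines by Lemma~\ref{Ebound}). Your identification of the residue term as carrying the factor $e^{C T_N^2\ln T_N}$ via Lemma~\ref{scatbound}, together with the observation that one must keep $T_N$ at distance $\gg T_N^{-c}$ from each $\gamma_j$ so that the finitely many Blaschke-type factors are each only polynomially large, is exactly right.

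There is, however, a genuine gap in your treatment of the contour integral. You invoke a ``standard polynomial bound on the critical line'' for $E(z_0,\tfrac12+\i w)$, deduced from $|\varphi(\tfrac12+\i w)|=1$ and the Fourier expansion; for a generic cofinite $\Gamma$ no such pointwise bound is available --- what one has is Hejhal's estimate $|E(z_0,\sigma+\i t)|\ll \sqrt{w(t)}\,e^{3|t|}$ with only the mean control $\int_{-R}^{R}w(t)\,dt\ll R^2$ (this is precisely why Lemma~\ref{Ebound} gives $T_N^2 e^{3T_N}$ rather than a polynomial, and only along a subsequence). The paper therefore handles the integral differently from what you propose: the part $|r-T_N|>cT_N^{-1}/6$ is bounded polynomially using the \emph{mean-square} estimate $\int_0^T|E(z_0,\tfrac12+\i r)|^2\,dr\ll T^2$ together with a dyadic decomposition, not a pointwise bound; the short central arc of length $\asymp T_N^{-1}$, where the contour dips to depth $\asymp T_N^{-1}$ below the real axis, is bounded using Lemma~\ref{Ebound} for $|E(z_0,\tfrac12+\i r)|^2$ \emph{and} Lemma~\ref{scatbound} for the factor $|\varphi(\tfrac12-\i r)|$ that necessarily appears once $r$ is no longer real. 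This central arc therefore contributes the same order $e^{CT_N^2\ln T_N}$ as the residue --- it is not polynomially bounded as you claim. Once you replace the nonexistent pointwise bound by the mean-value argument for the tail, and carry the Eisenstein and scattering bounds through the short deformed arc as well as the residue, your sketch becomes the paper's proof.
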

\begin{proof}
Recall that for $\Im\rho<0$ the relative zeta function $S_{\alpha,z_{0}}(\tfrac{1}{2}+\i\rho)$ is given by
\begin{equation}
\begin{split}
S_{\alpha,z_{0}}(\tfrac{1}{2}+\i\rho)=&\alpha^{-1}+\sum_{j}|\varphi_{j}(z_{0})|^{2}\left(\frac{1}{\rho_{j}^{2}-\rho^{2}}-\Re\left[\frac{1}{\rho_{j}^{2}-\rho(\eta)^{2}}\right]\right)\\
&+\frac{1}{4\pi}\int_{-\infty}^{+\infty}E(z_{0},\tfrac{1}{2}+\i r)E(z_{0},\tfrac{1}{2}-\i r)\left(\frac{1}{r^{2}-\rho^{2}}-\Re\left[\frac{1}{r^{2}-\rho(\eta)^{2}}\right]\right)dr.
\end{split}
\end{equation}
$S_{\alpha,z_{0}}(\tfrac{1}{2}+\i\rho)$ can be continued meromorphically to the full complex plane by shifting the contour of integration and collecting a residue. We will require a continuation to the real line in order to establish the desired bound. We first have to introduce some notation.

Define $\gamma:[-1,1]\mapsto\RR_{-}$
\begin{equation}
\gamma(t)=
\begin{cases}
\begin{split}
&-1, &\text{if}\;t&\in[-\tfrac{1}{3},\tfrac{1}{3}],\\
&-1+\tfrac{3}{2}(t-\tfrac{1}{3}), &\text{if}\;t&\in[\tfrac{1}{3},1],\\
&-1+\tfrac{3}{2}(\tfrac{1}{3}+t), &\text{if}\;t&\in[-1,-\tfrac{1}{3}].
\end{split}
\end{cases}
\end{equation}
We define a contour $\Gamma_{N}(t)$, where $\rho_{N}(t)=T_{N}-\i t$, $t\in[0,\sigma]$, by the parameterisation
\begin{equation}
\Gamma_{N}(t)(u)=
\begin{cases}
\begin{split}
u,\qquad&\text{if}\; |u-T_{N}|>\frac{1}{6}cT_{N}^{-1}\\
&\text{or if}\; |u-T_{N}|\leq\frac{1}{6}cT_{N}^{-1} \;\text{and}\; t\geq\tfrac{1}{12}cT_{N}^{-1},\\
u+\i T_{N}^{-1}&\gamma(6c^{-1}T_{N}(u-T_{N})),\qquad\text{if}\;|u-T_{N}|\leq \frac{1}{6}cT_{N}^{-1}\;\text{and}\; 0\leq t<\tfrac{1}{12}cT_{N}^{-1},
\end{split}
\end{cases}
\end{equation}
for $u\in\RR_{+}$ and $c>0$ is a constant which we will determine later on in the proof. To simplify notation we will also denote this decomposition by
\begin{equation}
\Gamma_{N}(t)=(\RR_{+}\backslash[T_{N}-\tfrac{1}{6}cT_{N}^{-1},T_{N}+\tfrac{1}{6}cT_{N}^{-1}])\cup\gamma_{N}(t).
\end{equation}

Let 
\begin{equation}
\delta_{N}(t)=
\begin{cases}
1,\;\text{if}\;0\leq t<\tfrac{1}{12}cT_{N}^{-1}\\
0,\;\text{if\,}\; t\geq\tfrac{1}{12}cT_{N}^{-1}.
\end{cases}
\end{equation}
Then we have
\begin{equation}\label{rep}
\begin{split}
S_{\alpha,z_{0}}(\tfrac{1}{2}+\i\rho_{N}(t))=&\,\alpha^{-1}+\sum_{j}|\varphi_{j}(z_{0})|^{2}\left(\frac{1}{\rho_{j}^{2}-\rho_{N}(t)^{2}}-\Re\left[\frac{1}{\rho_{j}^{2}-\rho(\eta)^{2}}\right]\right)\\
&-\delta_{N}(t)\frac{E(z_{0},\tfrac{1}{2}+\i\rho_{N}(t))E(z_{0},\tfrac{1}{2}-\i\rho_{N}(t))}{2\i\rho_{N}(t)}\\
&+\frac{1}{2\pi}\int_{\Gamma_{N}(t)}E(z_{0},\tfrac{1}{2}+\i r)E(z_{0},\tfrac{1}{2}-\i r) \left(\frac{1}{r^{2}-\rho_{N}(t)^{2}}-\Re\left[\frac{1}{r^{2}-\rho(\eta)^{2}}\right]\right)dr.
\end{split}
\end{equation}

Recall that the set $\lbrace\gamma_{j}\rbrace_{j=0}^{\infty}$ denotes the real parts of the resonances of $E(z,\tfrac{1}{2}+\i\rho)$ (since $i\rho=\xi=s-\tfrac{1}{2}$). Let $K=\lbrace\gamma_{j}\rbrace_{j=0}^{\infty}\cup\lbrace\rho_{i}\rbrace_{i=0}^{\infty}
=\lbrace\kappa_{l}\rbrace_{l=0}^{\infty}$. One has the upper bound (cf. (7.11), p. 101 \cite{Iw})
\begin{equation}\label{resbound}
\#\lbrace j\mid \gamma_{j}\leq R\rbrace <\!\!< R^{2},
\end{equation}
which implies
\begin{equation}
\#\lbrace l\mid\kappa_{l}\leq R\rbrace=\#\lbrace j\mid \gamma_{j}\leq R\rbrace+\#\lbrace i\mid \rho_{i}\leq R\rbrace<\!\!<R^{2}.
\end{equation}
Therefore, along the lines of the proof of Lemma \eqref{spacing}, we can pick an infinite sequence $\tau_{N}=\tfrac{1}{2}(\kappa_{N}+\kappa_{N+1})$ with $|\kappa_{N}-\kappa_{N+1}|\geq c_{1}\tau_{N}^{-1}$. Take $c_{0}=\tfrac{1}{6}c_{1}$ in Lemma \ref{Ebound}. So for every $N$ there exists $T_{N}$ with $|T_{N}-\tau_{N}|\leq\tfrac{1}{6}c_{1}T_{N}^{-1}$ such that
\begin{equation}\label{bound3}
|E(z_{0},\tfrac{1}{2}+t+\i T_{N})|<\!\!<T_{N}^{2}e^{3T_{N}}.
\end{equation}
We also note $|\kappa_{N}-\kappa_{N+1}|\geq cT_{N}^{-1}$ for some $0<c<c_{1}$ and $T_{N}$ sufficiently large. This implies for all $j$
\begin{equation}
|\rho_{N}(t)-\kappa_{j}|\geq|T_{N}-\kappa_{j}|\geq\tfrac{1}{3}|\kappa_{N}-\kappa_{N+1}|\geq\tfrac{1}{3}cT_{N}^{-1}.
\end{equation}

We now turn to estimating $S_{\alpha,z_{0}}(\tfrac{1}{2}+\i\rho_{N}(t))$ where we use the representation \eqref{rep}. We split the integral into two parts, where we can drop the real part in the regularisation term for simplicity, 
\begin{equation}
\begin{split}
&\int_{\Gamma_{N}(t)}E(z_{0},\tfrac{1}{2}+\i r)E(z_{0},\tfrac{1}{2}-\i r)\left(\frac{1}{r^{2}-\rho_{N}(t)^{2}}-\frac{1}{r^{2}-\rho(\eta)^{2}}\right)dr\\
=&\left\{\int_{|r-T_{N}|>cT_{N}^{-1}/6}+\int_{\gamma_{N}(t)}\right\}E(z_{0},\tfrac{1}{2}+\i r)E(z_{0},\tfrac{1}{2}-\i r)\left(\frac{1}{r^{2}-\rho_{N}(t)^{2}}-\frac{1}{r^{2}-\rho(\eta)^{2}}\right)dr.
\end{split}
\end{equation}
We bound the tail as follows
\begin{equation}
\begin{split}
&|\rho_{N}(t)^{2}-\rho(\eta)^{2}|\int_{T_{N}+cT_{N}^{-1}/6}^{\infty}\frac{\left|E(z_{0},\tfrac{1}{2}+\i r)\right|^{2}dr}{|r^{2}-\rho_{N}(t)^{2}||r^{2}-\rho(\eta)^{2}|}\\
<\!\!<&\,T_{N}^{2}\left\{\int_{T_{N}+\tfrac{1}{6}cT_{N}^{-1}}^{T_{N}+1}+\sum_{k=1}^{\infty}\int_{T_{N}+k}^{T_{N}+k+1}\right\}\frac{\left|E(z_{0},\tfrac{1}{2}+\i r)\right|^{2}dr}{|r^{2}-\rho_{N}(t)^{2}||r^{2}-\rho(\eta)^{2}|}\\
<\!\!<&\,T_{N}^{2}+T_{N}^{2}\sum_{k=2}^{\infty}\frac{(T_{N}+k)^{2}}
{|(T_{N}+k)^{2}-\rho_{N}(t)^{2}||(T_{N}+k)^{2}-\rho(\eta)^{2}|}
\end{split}
\end{equation}
where we have used the mean value bound on Eisenstein series
\begin{equation}
\int_{0}^{T}|E(z_{0},\tfrac{1}{2}+\i t)|^{2}dt<\!\!<T^{2}
\end{equation}
which follows straight away from Prop. 7.2., p. 101 in \cite{Iw}. We continue with the estimate as follows
\begin{equation}
\begin{split}
&T_{N}^{2}\left\{\sum_{k=2}^{\left\lfloor T_{N}\right\rfloor}+\sum_{k=\left\lceil T_{N}\right\rceil}^{\infty}\right\}\frac{(T_{N}+k)^{2}}{|(T_{N}+k)^{2}-\rho_{N}(t)^{2}| |(T_{N}+k)^{2}-\rho(\eta)^{2}|}\\
\leq&\;T_{N}^{3}\max_{2\leq k\leq\left\lfloor T_{N}\right\rfloor}\frac{(T_{N}+k)^{2}}{|(T_{N}+k)^{2}-\Re\lbrace\rho_{N}(t)^{2}\rbrace||(T_{N}+k)^{2}-\Re\lbrace\rho(\eta)^{2}\rbrace|}\\
&\;+T_{N}^{2}\sum_{k=\left\lceil T_{N}\right\rceil}^{\infty}\frac{(T_{N}+k)^{2}} {|(T_{N}+k)^{2}-\Re\lbrace\rho_{N}(t)^{2}\rbrace||(T_{N}+k)^{2}-T_{N}^{2}+t^{2}]|}\\
<\!\!<&\;T_{N}^{2}+T_{N}^{2}\int_{\left\lceil T_{N}\right\rceil}^{\infty}\frac{(T_{N}+x)^{2}dx}{|(T_{N}+x)^{2}-T_{N}^{2}+t^{2}| |(T_{N}+x)^{2}-\Re\lbrace\rho(\eta)^{2}\rbrace|}\\
=&\;T_{N}^{2}+T_{N}\int_{T_{N}^{-1}\left\lceil T_{N}\right\rceil}^{\infty}\frac{(1+y)^{2}dy} {|(1+y)^{2}-1+T_{N}^{-2}t^{2}||(1+y)^{2}-1+T_{N}^{-2}\Re\lbrace\rho(\eta)^{2}\rbrace|}\\
<\!\!<&\;T_{N}^{2}.
\end{split}
\end{equation}
We bound the central part by
\begin{equation}
\begin{split}
&\int_{-1}^{1}|E(z_{0},\tfrac{1}{2}+\i[\gamma_{N}(t)](r))|^{2}\frac{|\varphi(\tfrac{1}{2}-\i[\gamma_{N}(t)](r))|}
{|[\gamma_{N}(t)](r)^{2}-\rho_{N}(t)^{2}||[\gamma_{N}(t)](r)^{2}-\rho(\eta)^{2}|}\left|\frac{d[\gamma_{N}(t)](r)}{dr}\right|dt\\
<\!\!<&\,T_{N}^{2}\,\int_{T_{N}-cT_{N}^{-1}/6}^{T_{N}+cT_{N}^{-1}/6}w(r)e^{6|r|}dr\,\sup_{r\in \,\gamma_{N}(t)}\frac{|\varphi(\tfrac{1}{2}-\i r)|}{|r^{2}-\rho_{N}(t)^{2}||r^{2}-\rho(\eta)^{2}|}\\
<\!\!<&\,T_{N}^{4}e^{6 T_{N}}\sup_{r\in \,\gamma_{N}(t)}\frac{|\varphi(\tfrac{1}{2}-\i r)|}{|r^{2}-\rho_{N}(t)^{2}||r^{2}-\rho(\eta)^{2}|}
\end{split}
\end{equation}
which follows from \eqref{bound1} and \eqref{bound2}.

For $r\in\gamma_{N}(t)$ we have
\begin{equation}
|r-\rho_{N}(t)|\geq\tfrac{1}{12}cT_{N}^{-1}
\end{equation}
and for $T_{N}$ large
\begin{equation}
|r+\rho_{N}(t)|=T_{N}+O(T_{N}^{-1}),\quad|r^{2}-\rho(\eta)^{2}|=T_{N}^{2}+O(1).
\end{equation}
Hence
\begin{equation}
\sup_{r\in\gamma_{N}(t)}\frac{1}{|r^{2}-\rho_{N}(t)^{2}||r^{2}-\rho(\eta)^{2}|}<\!\!<T_{N}.
\end{equation}

Let $\xi=-\i r$ and $r\in\gamma_{N}(t)$. Assume $\gamma_{j}<2|\Im\xi|+2$. Then we have
\begin{equation}
|\xi+\eta_{j}\pm\i\gamma_{j}|=|\Re\xi+\eta_{j}+\i\Im\xi\pm\i\gamma_{j}|
\geq|\Im\xi\pm\gamma_{j}|\geq\tfrac{1}{3}cT_{N}^{-1}
\end{equation}
by the choice of the sequence $\lbrace T_{N}\rbrace_{N}$. Let $\eta=\sup_{j}\eta_{j}$. We know that $\eta<+\infty$ (cf. \cite{Hj3}, last line in the proof of Prop. 12.5, p. 157). We thus have
\begin{equation}
|\xi-\eta_{j}+\i\Im\xi\pm\i\gamma_{j}|\leq|\Re\xi|+\eta_{j}+|\Im\xi|+|\gamma_{j}|
\leq\sigma+\eta+2|\Im\xi|+2<\!\!<T_{N}.
\end{equation}
This implies
\begin{equation}\label{bound4}
\begin{split}
\prod_{\gamma_{j}<2|Im\xi|+2}\left|\frac{\xi-\eta_{j}-\i\gamma_{j}}{\xi+\eta_{j}+\i\gamma_{j}}\right|\left|\frac{\xi-\eta_{j}+\i\gamma_{j}}{\xi+\eta_{j}-\i\gamma_{j}}\right|
<\!\!<&\,T_{N}^{4\#\lbrace j\mid\gamma_{j}<2|Im\xi|+2\rbrace}\\
\leq&\,T_{N}^{4\tilde{c}(\Gamma)(2(T_{N}+cT_{N}^{-1}/6)+2)^{2}}\\
<\!\!<&\,e^{16\tilde{c}(\Gamma)T_{N}^{2}\ln T_{N}}
\end{split}
\end{equation}
for some positive constant $\tilde{c}(\Gamma)$, where the bound in the second line follows from \eqref{resbound}. Together with Lemma \ref{scatbound} we obtain
\begin{equation}
\sup_{r\in\gamma_{N}(t)}|\varphi(\tfrac{1}{2}-\i r)|<\!\!<e^{16cT_{N}^{2}\ln T_{N}}.
\end{equation}
For the residual term we have
\begin{equation}
\begin{split}
\left|\frac{E(z_{0},\tfrac{1}{2}+\i\rho_{N}(t))E(z_{0},\tfrac{1}{2}-\i\rho_{N}(t))}{2\rho_{N}(t)}\right|
=&\;\frac{|E(z_{0},\tfrac{1}{2}+t+\i T_{N})|^{2}|\varphi(\tfrac{1}{2}-t-\i T_{N})|}{2|T_{N}-\i t|}\\
<\!\!<&\; T_{N}^{3}e^{6T_{N}+16cT_{N}^{2}\ln T_{N}}
\end{split}
\end{equation}
where we have used \eqref{bound3} and \eqref{bound4}. Since we know that the sum in \eqref{rep} is polynomially bounded in $T_{N}$ the result follows straightaway.
\end{proof}

\section{The trace formula for surfaces with one cusp}

As before the key idea in the proof of the trace formula is to exploit the properties of the relative zeta function $S^{\alpha,z_{0}}(s)$. In particular this means the zeros and poles of $S^{\alpha,z_{0}}(s)$ which are associated with the perturbed and unperturbed discrete spectrum. We also need to make use of its link with the perturbed Eisenstein series via the scattering coefficient 
\begin{equation}
\varphi_{\alpha,z_{0}}(s)=\varphi(s)\frac{S_{\alpha,z_{0}}(1-s)}{S_{\alpha,z_{0}}(s)}.
\end{equation}
Let $B(T)=[\i\sigma,-\i\sigma]\times[-T,T]$ such that $\partial B(T)$ does not contain any zeros or poles of $S_{\alpha,z_{0}}(s)$. Similarly to the compact case the strategy in the proof of the trace formula is to first of all prove a truncated version. We then absorb the resonances into an integral along the critical line. Finally we prove the necessary bounds to justify the limit $T\to\infty$. Throughout the proof we shall assume that $\alpha\in\RR\backslash\lbrace0\rbrace$ is generic in the sense that there are only finitely many new perturbed eigenvalues (i. e. ignoring possible inherited degenerate eigenvalues of $\Delta$). Proposition \ref{finite} shows that the exceptional set is at most countable. So once we have obtained the proof of the trace formula for generic $\alpha$ we can extend the result to any $\alpha$ by continuity. Here it will be crucial that any resonances near the critical line are controlled by Proposition \ref{lim}.

Let us begin by deriving an expression involving the resonances. We can express the sum over perturbed and unperturbed resonances in the strip $0<\Im\rho<\sigma$ in terms of the following integral. We also pick up the small perturbed eigenvalues, since those are encoded in the perturbation factor of the perturbed scattering coefficient $\varphi_{\alpha,z_{0}}(s)/\varphi(s)=\theta_{\alpha,z_{0}}(s)$.
\begin{lem}\label{formula2}
\begin{equation}
\begin{split}
&-\sum_{r_{j}\in B(T)}h(r_{j})+\sum_{r^{\alpha}_{j}\in B(T)}h(r^{\alpha}_{j}) -\sum_{\rho^{\alpha}_{j}\in(0,-\i\sigma)}h(\rho^{\alpha}_{j})\\
=\,&\frac{1}{2\pi\i}\left[\int_{-\i\sigma-T}^{-\i\sigma+T}-\int_{-T}^{T}+\int_{-T}^{-\i\sigma-T}+\int_{-\i\sigma+T}^{T}\right]h(\rho)\frac{d}{d\rho}\log\theta_{\alpha,z_{0}}(\tfrac{1}{2}+\i\rho)d\rho
\end{split}
\end{equation}
\end{lem}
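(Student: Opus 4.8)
The plan is to prove Lemma~\ref{formula2} by a straightforward application of the argument principle to the meromorphic function $\theta_{\alpha,z_0}(s)=S_{\alpha,z_0}(1-s)/S_{\alpha,z_0}(s)$ over the boundary of the box $B(T)=[\i\sigma,-\i\sigma]\times[-T,T]$, after identifying the zeros and poles of $\theta_{\alpha,z_0}$ inside that box with the quantities on the left-hand side. First I would record the obvious identity
\begin{equation}
\frac{d}{ds}\log\theta_{\alpha,z_0}(s)=-\frac{S_{\alpha,z_0}'(1-s)}{S_{\alpha,z_0}(1-s)}-\frac{S_{\alpha,z_0}'(s)}{S_{\alpha,z_0}(s)},
\end{equation}
so that, passing to the variable $s=\tfrac12+\i\rho$ (hence $1-s=\tfrac12-\i\rho$), the logarithmic derivative of $\theta_{\alpha,z_0}(\tfrac12+\i\rho)$ combines the contributions of $S_{\alpha,z_0}$ at the point $\tfrac12+\i\rho$ and at its reflection $\tfrac12-\i\rho$. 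The contour appearing on the right-hand side, namely $[\,{-\i\sigma-T},{-\i\sigma+T}\,]-[{-T},T]+[{-T},{-\i\sigma-T}]+[{-\i\sigma+T},T]$, is precisely the positively oriented boundary of the lower half of the box, i.e.\ $\partial\bigl([\,0,-\i\sigma\,]\times[-T,T]\bigr)$, so by the argument principle the integral $\frac{1}{2\pi\i}\int h(\rho)\frac{d}{d\rho}\log\theta_{\alpha,z_0}(\tfrac12+\i\rho)\,d\rho$ over this contour equals $\sum h(\rho)$ over the zeros of $\theta_{\alpha,z_0}(\tfrac12+\i\rho)$ in that region minus $\sum h(\rho)$ over the poles, provided $h$ is holomorphic there (which it is, since $h\in H_{\sigma,\delta}$).

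The second and main step is the spectral bookkeeping: I must show that, in the region $0<\Im\rho<\sigma$, the zeros of $\rho\mapsto\theta_{\alpha,z_0}(\tfrac12+\i\rho)$ occur exactly at the perturbed resonances $r_j^\alpha$ together with the small perturbed eigenvalues $\rho_j^\alpha\in(0,-\i\sigma)$, while the poles occur exactly at the unperturbed resonances $r_j$. This is read off from Theorem~\ref{prop}: a zero of $\theta_{\alpha,z_0}(\tfrac12+\i\rho)$ comes either from a zero of $S_{\alpha,z_0}(\tfrac12-\i\rho)$ (the reflected point, at $\Re s>\tfrac12$, hence a genuine perturbed eigenvalue by Proposition~\ref{eigencond}(b), contributing the $\rho_j^\alpha$) or from a pole of $S_{\alpha,z_0}(\tfrac12+\i\rho)$ whose reflection is \emph{not} cancelled --- but by Theorem~\ref{prop}(b)(iv) the poles of $S_{\alpha,z_0}$ in $\Re s<\tfrac12$ are the unperturbed resonances and the small eigenvalues, and the latter cancel against poles of $S_{\alpha,z_0}(1-s)$ in the numerator by the interlacing/functional-equation structure, leaving the perturbed resonances as the zeros and the unperturbed resonances as the poles of $\theta_{\alpha,z_0}$; care is needed to match the sign conventions and to note that zeros of $S_{\alpha,z_0}(s)$ with $\Re s<\tfrac12$ and $E(z_0,s)\ne0$ are exactly the perturbed resonances $r_j^\alpha$. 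One must also check that no poles or zeros lie on $\partial B(T)$, which is the hypothesis on $T$, and that the pieces of the contour on the critical line $[{-T},T]$ and at the corners contribute correctly --- in particular the factor $-\sum_{\rho_j^\alpha\in(0,-\i\sigma)} h(\rho_j^\alpha)$ arises because those small perturbed eigenvalues, encoded as zeros of $\theta_{\alpha,z_0}$ via the numerator $S_{\alpha,z_0}(1-s)$, are to be moved to the left-hand side with the stated sign.

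The main obstacle I anticipate is the precise matching of multiplicities and signs in the cancellation between poles of $S_{\alpha,z_0}(1-s)$ and $S_{\alpha,z_0}(s)$: the relative zeta function has a mix of simple poles (large eigenvalues, resonances), order-two poles (at $s=\tfrac12$ when $\tfrac14$ is in the spectrum), and zeros of various provenance, and in forming $\theta_{\alpha,z_0}(s)=S_{\alpha,z_0}(1-s)/S_{\alpha,z_0}(s)$ many of these must cancel. I would handle this by writing $S_{\alpha,z_0}$ near each relevant point in the factored form furnished by Theorem~\ref{prop} and Proposition~\ref{eigencond}, tracking the order of vanishing of $\theta_{\alpha,z_0}$ point by point in the strip $0<\Im\rho<\sigma$; the symmetry of the unperturbed eigenvalue poles (which sit on the critical line $\Im\rho=0$, hence not strictly inside the contour) and the interlacing of perturbed with unperturbed residual eigenvalues ensure that all residual-spectrum contributions cancel, leaving precisely the resonance terms and the genuinely new small eigenvalues $\rho_j^\alpha$. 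Once the divisor of $\theta_{\alpha,z_0}$ is pinned down, the lemma follows immediately from the argument principle applied with the weight $h$.
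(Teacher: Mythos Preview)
Your approach is exactly the paper's: apply the argument principle to $\theta_{\alpha,z_0}(\tfrac12+\i\rho)$ over the positively oriented boundary of the lower half-box $[0,-\i\sigma]\times[-T,T]$, and read off the divisor of $\theta_{\alpha,z_0}$ from Theorem~\ref{prop}.

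One correction to your bookkeeping: the enclosed region is $-\sigma<\Im\rho<0$, not $0<\Im\rho<\sigma$. For $\rho$ in this lower half-strip one has $\Re(\tfrac12+\i\rho)>\tfrac12$ and $\Re(\tfrac12-\i\rho)<\tfrac12$, so your attributions are reversed. The zeros of the numerator $S_{\alpha,z_0}(\tfrac12-\i\rho)$ lie at points with $\Re s<\tfrac12$ and are therefore the \emph{perturbed resonances} $r_j^\alpha$ (zeros of $\theta$, contributing $+h(r_j^\alpha)$); the poles of the numerator there are the \emph{unperturbed resonances} $r_j$ (poles of $\theta$, contributing $-h(r_j)$). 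The small perturbed eigenvalues $\rho_j^\alpha\in(0,-\i\sigma)$ enter as zeros of the \emph{denominator} $S_{\alpha,z_0}(\tfrac12+\i\rho)$ (since $\Re s>\tfrac12$ there), hence as poles of $\theta$, contributing $-h(\rho_j^\alpha)$. The small unperturbed eigenvalues produce simple poles of both numerator and denominator at the same $\rho=-\i v_j$ and cancel in $\theta$, exactly as you anticipate. Once this orientation is fixed, the rest of your plan goes through verbatim and matches the paper's proof.
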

\begin{proof}
We prove the result by contour integration along the boundary of the box $[0,-\i\sigma]\times[-T,T]$. Since we are integrating $h$ against the logarithmic derivative of the meromorphic function $\theta^{\alpha,z_{0}}\left(\tfrac{1}{2}+\i\rho\right)$, when we shift across the contour from the line $\Im\rho=-\sigma$ to the real line we collect a residue $-kh(\rho')$ at every pole $\rho'$ of order $k$ of $\theta^{\alpha,z_{0}}\left(\tfrac{1}{2}+\i\rho\right)$ and a residue $kh(\rho')$ at every zero $\rho'$ of order $k$. The function
\begin{equation}
\theta_{\alpha,z_{0}}(\tfrac{1}{2}+\i\rho)=\frac{S_{\alpha,\,z_{0}}(\tfrac{1}{2}-\i\rho)}{S_{\alpha,\,z_{0}}(\tfrac{1}{2}+\i\rho)}
\end{equation}
has no zeros or poles corresponding to the unperturbed eigenvalues $\lbrace\rho_{j}\rbrace_{j}$ since the corresponding poles cancel because of symmetry
\begin{equation}
0=S_{\alpha,\,z_{0}}(\tfrac{1}{2}+\i\rho_{j})^{-1}=S_{\alpha,\,z_{0}}(\tfrac{1}{2}-\i\rho_{j})^{-1}
\end{equation}
where the zeros are of order one. $S_{\alpha,\,z_{0}}(\tfrac{1}{2}-\i\rho)$ has poles corresponding to the unperturbed resonances at $\lbrace-r_{j}\rbrace_{j}$ inside the box $[0,-\i\sigma]\times[-T,T]$ and for each of them we collect a residue $-h(-r_{j})=-h(r_{j})$. It also has zeros corresponding to perturbed resonances at $\lbrace -r^{\alpha}_{j}\rbrace_{j}$ inside the box $[0,-\i\sigma]\times[-T,T]$. For each of them we pick up a residue $h(-r^{\alpha}_{j})=h(r^{\alpha}_{j})$. The function $S_{\alpha,\,z_{0}}(\tfrac{1}{2}+\i\rho)^{-1}$ has simple poles in the interval $(0,-\i\sigma)$ corresponding to the small perturbed eigenvalues at $\lbrace\rho^{\alpha}_{j}\rbrace_{j=-M}^{-1}$. For each of them we pick up a residue $-h(-\rho^{\alpha}_{j})=-h(\rho^{\alpha}_{j})$. The result follows from Cauchy's residue theorem where we count multiplicities separately.
\end{proof}

We use the previous Lemma to prove a truncated trace formula. In the following let $\sigma\geq\tfrac{1}{2}$, $\delta>0$ and $h\in H_{\sigma,\delta}$.
\begin{prop}
Let $\delta_{\Gamma}=1$ if $\lambda=\tfrac{1}{4}$ is not an eigenvalue of the Laplacian and $\delta_{\Gamma}=0$ otherwise. Then we have
\begin{equation}\label{prop16}
\begin{split}
\sum_{\rho^{\alpha}_{j}\in B(T)}h(\rho^{\alpha}_{j})-\sum_{\rho_{j}\in B(T)}h(\rho_{j})
=\,&\frac{1}{2\pi}\left[\int_{-\i\sigma-T}^{-\i\sigma+T}+\int_{-T}^{-T-\i\sigma}+\int_{T-\i\sigma}^{T}\right]h(\rho) \frac{S'_{\alpha,\,z_{0}}}{S_{\alpha,\,z_{0}}}(\tfrac{1}{2}+\i\rho)d\rho\\
&+\tfrac{1}{2}\delta_{\Gamma}h(0)+\frac{1}{4\pi}\int_{-T}^{T}h(\rho)\frac{\theta'_{\alpha,z_{0}}}{\theta_{\alpha,z_{0}}}(\tfrac{1}{2}+\i\rho)d\rho.
\end{split}
\end{equation}
\end{prop}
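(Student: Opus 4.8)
The proof is a residue computation, in the same spirit as the compact case but now keeping track of the extra features of $S_{\alpha,z_0}(\tfrac12+\i\rho)$ on the critical line. Let $B^-(T)=[0,-\i\sigma]\times[-T,T]$ denote the lower half of $B(T)$, oriented counterclockwise, so that its top side runs along the real $\rho$-axis (the line $\Re s=\tfrac12$) from $T$ to $-T$ and its bottom side is $\{\Im\rho=-\sigma\}$; as always $\partial B(T)$ is assumed to avoid the zeros and poles of $S_{\alpha,z_0}$. I integrate $\tfrac{1}{2\pi}h(\rho)\tfrac{S'_{\alpha,z_0}}{S_{\alpha,z_0}}(\tfrac12+\i\rho)\,d\rho$ around $\partial B^-(T)$. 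By Theorem \ref{prop}(b), inside $B^-(T)$ (the region $\tfrac12<\Re s<\tfrac12+\sigma$) the function $S_{\alpha,z_0}$ has only simple zeros, at the small perturbed eigenvalues, and simple poles, at the small unperturbed eigenvalues; the resonances, which lie in $\Re s<\tfrac12$, are outside $B^-(T)$ and contribute nothing. On the real-axis side it has simple poles at $\pm\rho_j$ for $\lambda_j\ge\tfrac14$, a pole at $\rho=0$ of order $2$ if $\tfrac14$ is a Laplace eigenvalue and order $1$ otherwise, and simple zeros at the new perturbed eigenvalues $\pm\rho^\alpha_j$; taking the boundary integral as a principal value, each of these real-axis points contributes half its residue.

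Summing, and using that $h$ is even (so that the pairs $\pm\rho_j$ and $\pm\rho^\alpha_j$ each coalesce into a single term), the interior residues produce $\sum h(\rho^\alpha_j)-\sum h(\rho_j)$ over the small perturbed and unperturbed eigenvalues lying in the box, while the real-axis half-residues produce $+h(\rho^\alpha_j)$ per new perturbed eigenvalue with $|\rho^\alpha_j|<T$, $-h(\rho_j)$ per unperturbed eigenvalue $\lambda_j\ge\tfrac14$ with $0<\rho_j<T$, and $-(1-\tfrac12\delta_\Gamma)h(0)$ from the pole at $\rho=0$. Recalling that $\rho=0$ is itself among the $\rho_j$ exactly when $\tfrac14$ is an eigenvalue (that is, when $\delta_\Gamma=0$), this assembles into
\[
\frac{1}{2\pi}\oint_{\partial B^-(T)}h(\rho)\frac{S'_{\alpha,z_0}}{S_{\alpha,z_0}}(\tfrac12+\i\rho)\,d\rho=\sum_{\rho^\alpha_j\in B(T)}h(\rho^\alpha_j)-\sum_{\rho_j\in B(T)}h(\rho_j)-\tfrac12\delta_\Gamma h(0).
\]
Splitting $\partial B^-(T)$ into its bottom edge, its two lower vertical edges, and the real segment $[-T,T]$ traversed from $T$ to $-T$, the first three contributions are exactly the three $S_{\alpha,z_0}$-integrals on the right of \eqref{prop16}, so the proposition reduces to
\[
-\frac{1}{2\pi}\int_{-T}^{T}h(\rho)\frac{S'_{\alpha,z_0}}{S_{\alpha,z_0}}(\tfrac12+\i\rho)\,d\rho=\frac{1}{4\pi}\int_{-T}^{T}h(\rho)\frac{\theta'_{\alpha,z_0}}{\theta_{\alpha,z_0}}(\tfrac12+\i\rho)\,d\rho.
\]

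This last identity is the critical-line content of Lemma \ref{formula2}; it also follows directly from the functional equation \eqref{fe} written as $S_{\alpha,z_0}(1-s)=\theta_{\alpha,z_0}(s)S_{\alpha,z_0}(s)$. Differentiating the logarithm of this relation on $s=\tfrac12+\i\rho$ with $\rho$ real gives $\tfrac{S'_{\alpha,z_0}}{S_{\alpha,z_0}}(\tfrac12-\i\rho)+\tfrac{S'_{\alpha,z_0}}{S_{\alpha,z_0}}(\tfrac12+\i\rho)=-\tfrac{\theta'_{\alpha,z_0}}{\theta_{\alpha,z_0}}(\tfrac12+\i\rho)$; multiplying by $h$ and integrating over $[-T,T]$ in the principal-value sense, the substitution $\rho\mapsto-\rho$ (legitimate because $h$ is even and the singular set is symmetric) shows the two left-hand terms contribute equally, which is the asserted relation. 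Here one uses, as already noted in the proof of Lemma \ref{formula2}, that $\theta_{\alpha,z_0}(\tfrac12+\i\rho)$ has no zeros or poles on the real axis (including at $\rho=0$), since the eigenvalue zeros and poles of $S_{\alpha,z_0}(\tfrac12\pm\i\rho)$ cancel between the numerator and denominator of $\theta_{\alpha,z_0}=S_{\alpha,z_0}(1-\cdot)/S_{\alpha,z_0}(\cdot)$, so the $\theta$-integral is genuinely convergent. Substituting back yields \eqref{prop16}. I expect the one genuinely delicate point to be the principal-value bookkeeping on the critical line — in particular establishing that the pole of $S_{\alpha,z_0}$ at $s=\tfrac12$ has order $2$ precisely when $\tfrac14$ is a Laplace eigenvalue, which (together with whether $\rho=0$ is then counted among the $\rho_j$) is the sole origin of the term $\tfrac12\delta_\Gamma h(0)$; everything else is the routine verification that the $\pm$-symmetric pairing of real-axis zeros and poles is valid, resting only on $h$ being even and $\alpha$ being real.
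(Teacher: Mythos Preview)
Your argument is correct and is in fact a cleaner route than the paper's. The paper does not integrate $S'_{\alpha,z_0}/S_{\alpha,z_0}$ over the half-box $B^-(T)$ and never uses principal values on the critical line; instead it introduces the symmetrized product $\Psi(s)=S_{\alpha,z_0}(s)S_{\alpha,z_0}(1-s)$, integrates $h(\rho)\frac{d}{d\rho}\log\Psi(\tfrac12+\i\rho)$ around the full box $\partial B(T)$, and then has to account explicitly for the perturbed and unperturbed \emph{resonances} that $\Psi$ sees in the lower half-strip (these enter through the factor $S_{\alpha,z_0}(1-s)$). Those resonance contributions are removed by invoking Lemma~\ref{formula2}, and the combination is rewritten via $\Psi(s)=S_{\alpha,z_0}(s)^2\theta_{\alpha,z_0}(s)$ to produce the three $S$-integrals and the $\theta$-integral. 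Your approach sidesteps the resonances entirely by staying in $\Re s\ge\tfrac12$ and replaces the detour through $\Psi$ and Lemma~\ref{formula2} by the single algebraic identity $\theta'_{\alpha,z_0}/\theta_{\alpha,z_0}(s)=-S'_{\alpha,z_0}/S_{\alpha,z_0}(s)-S'_{\alpha,z_0}/S_{\alpha,z_0}(1-s)$ together with the evenness of $h$; this makes the origin of the $\theta$-term completely transparent. The price you pay is the half-residue bookkeeping on the critical line, which is standard (the logarithmic derivative has only simple poles there) but which the paper's symmetrization avoids. Two small remarks: what you call ``the functional equation \eqref{fe}'' in the last paragraph is really just the definition $\theta_{\alpha,z_0}(s)=S_{\alpha,z_0}(1-s)/S_{\alpha,z_0}(s)$, not \eqref{fe} itself; and your appeal to Lemma~\ref{formula2} is unnecessary, since your direct derivation already gives everything you need.
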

\begin{proof}
We introduce the symmetric function
\begin{equation}
\Psi(s)=S_{\alpha,\,z_{0}}(s)S_{\alpha\,z_{0}}(1-s).
\end{equation}
and claim that
\begin{equation}
\begin{split}
&\frac{1}{\pi\i}\left[\int_{-\i\sigma-T}^{-\i\sigma+T}+\int_{-\i\sigma+T}^{\i\sigma+T}\right]h(\rho)\frac{d}{d\rho}\log\Psi(\tfrac{1}{2}+\i\rho)d\rho\\
=\,&-2\delta_{\Gamma}h(0)+4\sum_{\rho^{\alpha}_{j}\in (-T,T)}h(\rho^{\alpha}_{j})-4\sum_{\rho_{j}\in B(T)}h(\rho_{j})-2\sum_{r_{j}\in B(T)}h(r_{j})\\
&+2\sum_{\rho^{\alpha}_{j}\in (0,-\i\sigma)}h(\rho^{\alpha}_{j})+2\sum_{r^{\alpha}_{j}\in B(T)}h(r^{\alpha}_{j}).
\end{split}
\end{equation}
The above identity is easily proven by contour integration along the boundary of the box $[\i\sigma,-\i\sigma]\times[-T,T]$. We simply have to count all zeros and poles of $\Psi(s)$ that lie inside the box. From Lemma \ref{prop} we know that $S_{\alpha,\,z_{0}}(\tfrac{1}{2}+\i\rho)$ has simple poles at $\lbrace\rho_{j}\rbrace_{\rho_{j}\in B(T)}$ and $\lbrace-\rho_{j}\rbrace_{\rho_{j}\in B(T)}$ corresponding to eigenvalues $\lbrace\lambda_{j}\rbrace_{\rho_{j}\in B(T)}$, $\lambda_{j}=\tfrac{1}{4}+\rho_{j}^{2}$. They have the same residues, ie. $-h(\rho_{j})=-h(-\rho_{j})$ and $-h(\rho_{-j})=-h(-\rho_{-j})$, because of evenness of $h$. So we collect residues $-2\sum_{\rho_{j}\in B(T)}h(\rho_{j})$ altogether. We also collect residues $-\sum_{r_{j}\in B(T)}h(r_{j})$ at the unperturbed resonances $\lbrace r_{j}\rbrace_{r_{j}\in B(T)}$ that lie inside the box.

Also from Lemma \ref{prop} we know that $S_{\alpha,\,z_{0}}(s)$ has zeros of order one at $\lbrace\rho^{\alpha}_{j}\rbrace_{\rho^{\alpha}_{j}\in (-T,T)}$ and $\lbrace-\rho^{\alpha}_{j}\rbrace_{\rho^{\alpha}_{j}\in (-T,T)}$ corresponding to perturbed eigenvalues $\lbrace\lambda^{\alpha}_{j}\rbrace_{\rho^{\alpha}_{j}\in (-T,T)}$, $\lambda^{\alpha}_{j}=\tfrac{1}{4}+{\rho^{\alpha}_{j}}^{2}\geq\tfrac{1}{4}$, since the condition for the existence of a zero on the critical line implies $E(z_{0},\tfrac{1}{2}+\i\rho^{\alpha}_{j})=0$ and therefore by \eqref{FE}
\begin{equation}
0=S_{\alpha,\,z_{0}}(\tfrac{1}{2}+\i\rho^{\alpha}_{j})=S_{\alpha,\,z_{0}}(\tfrac{1}{2}-\i\rho^{\alpha}_{j}),
\end{equation}
so the residues are the same and altogether we collect $2\sum_{\rho^{\alpha}_{j}\in(-T,T)}h(\rho^{\alpha}_{j})$. 

It is crucial to note that in the case of the small perturbed eigenvalues it is not a requirement that $E(z_{0},s)$ vanishes at the corresponding point. Therefore we may have only one zero of order one corresponding to a small eigenvalue $\tfrac{1}{4}+{\rho^{\alpha}_{-j}}^{2}$ at $\rho^{\alpha}_{-j}\in(0,-\i\sigma)$. Altogether we collect residues $\sum_{\rho^{\alpha}_{j}\in(0,-\i\sigma)}h(\rho^{\alpha}_{j})$. For the perturbed resonances $\lbrace r^{\alpha}_{j}\rbrace_{r^{\alpha}_{j}\in B(T)}$, which correspond to zeros of $S_{\alpha,z_{0}}(\tfrac{1}{2}+\i\rho)$ in the halfplane $\Im\rho>0$, we collect residues $\sum_{r^{\alpha}_{j}\in B(T)}h(r^{\alpha}_{j})$. Finally, if $\lambda=\tfrac{1}{4}$ is an eigenvalue we collect a residue $-2h(0)$ at $s=\tfrac{1}{2}$ since the pole is of order 2. If $\lambda=\tfrac{1}{4}$ is not an eigenvalue, then we collect a residue $-h(0)$ and we shall write this separately in the final summation over residues, since it has no interpretation with respect to an eigenvalue. 

By evenness of $h$ and $\Psi$ we can write
\begin{equation}
\begin{split}
\frac{1}{2\pi\i}\int_{\partial B}h(\rho)\frac{d}{d\rho}\log\Psi(\tfrac{1}{2}+\i\rho)d\rho
=\,&\frac{1}{\pi\i}\left[\int_{-\i\sigma-T}^{-\i\sigma+T}+\int_{-\i\sigma+T}^{\i\sigma+T}\right]h(\rho)\frac{d}{d\rho}\log\Psi(\tfrac{1}{2}+\i\rho)d\rho
\end{split}
\end{equation}
By Cauchy's residue theorem the RHS equals the sum over the residues collected. By evenness of $h$ and $\Psi(\tfrac{1}{2}+\i\rho)=S_{\alpha,\,z_{0}}(\tfrac{1}{2}+\i\rho) S_{\alpha,\,z_{0}}(\tfrac{1}{2}-\i\rho)$ the sum over the residues equals exactly twice the sums over residues listed above. This proves the above claim. We now apply the previous Lemma in order to express the resonances which appear above in terms of an integral and a finite sum over the small perturbed eigenvalues. We have
\begin{equation}
\begin{split}
&-2\sum_{r_{j}\in B(T)}h(r_{j})+2\sum_{r^{\alpha}_{j}\in B(T)}h(r^{\alpha}_{j})\\
=\,&\frac{1}{\pi\i}\left[\int_{-\i\sigma-T}^{-\i\sigma+T}-\int_{-T}^{T}+\int_{-T}^{-\i\sigma-T}+\int_{-\i\sigma+T}^{T}\right]h(\rho)\frac{d}{d\rho}\log\theta_{\alpha,z_{0}}(\tfrac{1}{2}+\i\rho)d\rho
+2\sum_{\rho^{\alpha}_{j}\in (0,-\i\sigma)}h(\rho^{\alpha}_{j}).
\end{split}
\end{equation}
After substituting and dividing through by $4$ we obtain
\begin{equation}\label{expr01}
\begin{split}
&\frac{1}{4\pi\i}\left[\int_{-\i\sigma-T}^{-\i\sigma+T}+\int_{-\i\sigma+T}^{\i\sigma+T}\right]h(\rho)\frac{d}{d\rho}
\log\Psi(\tfrac{1}{2}+\i\rho)d\rho\\
=&-\tfrac{1}{2}\delta_{\Gamma}h(0)+\sum_{\rho^{\alpha}_{j}\in B(T)}h(\rho^{\alpha}_{j})
-\sum_{\rho_{j}\in B(T)}h(\rho_{j})\\
&+\frac{1}{4\pi\i}\left[\int_{-\i\sigma-T}^{-\i\sigma+T}-\int_{-T}^{T}+\int_{-T}^{-\i\sigma-T}+\int_{-\i\sigma+T}^{T}\right]h(\rho)
\frac{d}{d\rho}\log\theta_{\alpha,z_{0}}(\tfrac{1}{2}+\i\rho)d\rho
\end{split}
\end{equation}
We can rewrite
\begin{equation}
\int_{T}^{\i\sigma+T}h(\rho)\frac{d}{d\rho}\log\Psi(\tfrac{1}{2}+\i\rho)d\rho
=\int_{-T}^{-\i\sigma-T}h(\rho)\frac{d}{d\rho}\log\Psi(\tfrac{1}{2}+\i\rho)d\rho,
\end{equation}
since the integrand is odd. Therefore, observing the relation $\Psi(s)=S_{\alpha,z_{0}}(s)^{2}\theta_{\alpha,z_{0}}(s)$, we have
\begin{equation}
\begin{split}
&\frac{1}{4\pi\i}\int_{-\i\sigma+T}^{\i\sigma+T}h(\rho)\frac{d}{d\rho}
\log\Psi(\tfrac{1}{2}+\i\rho)d\rho
-\frac{1}{4\pi\i}\left[\int_{-T}^{-\i\sigma-T}+\int_{-\i\sigma+T}^{T}\right]h(\rho)
\frac{d}{d\rho}\log\theta_{\alpha,z_{0}}(\tfrac{1}{2}+\i\rho)d\rho\\
=&\;\frac{1}{2\pi\i}\left[\int_{-T}^{-T-\i\sigma}+\int_{T-\i\sigma}^{T}\right]h(\rho)\frac{d}{d\rho}\log S_{\alpha,z_{0}}(\tfrac{1}{2}+\i\rho)d\rho.
\end{split}
\end{equation}
Similarly we have
\begin{equation}
\begin{split}
&\frac{1}{4\pi\i}\int_{-\i\sigma-T}^{-\i\sigma+T}h(\rho)\frac{d}{d\rho}
\log\Psi(\tfrac{1}{2}+\i\rho)d\rho
-\frac{1}{4\pi\i}\int_{-\i\sigma-T}^{-\i\sigma+T}h(\rho)
\frac{d}{d\rho}\log\theta_{\alpha,z_{0}}(\tfrac{1}{2}+\i\rho)d\rho\\
=\;&\frac{1}{2\pi\i}\int_{-\i\sigma-T}^{-\i\sigma+T}h(\rho)\frac{d}{d\rho}\log S_{\alpha,z_{0}}(\tfrac{1}{2}+\i\rho)d\rho.
\end{split}
\end{equation}
So finally we can rewrite equation \eqref{expr01} as
\begin{equation}
\begin{split}
&\frac{1}{2\pi\i}\left[\int_{-\i\sigma-T}^{-\i\sigma+T}+\int_{-T}^{-T-\i\sigma}+\int_{T-\i\sigma}^{T}\right]h(\rho)\frac{d}{d\rho}\log S_{\alpha,z_{0}}(\tfrac{1}{2}+\i\rho)d\rho\\
=\;&-\tfrac{1}{2}\delta_{\Gamma}h(0)+\sum_{\rho^{\alpha}_{j}\in B(T)}h\left(\rho^{\alpha}_{j}\right)-\sum_{\rho_{j}\in B(T)}h\left(\rho_{j}\right)
-\frac{1}{4\pi\i}\int_{-T}^{T}h(\rho)\frac{d}{d\rho}\log\theta_{\alpha,z_{0}}(\tfrac{1}{2}+\i\rho)d\rho,
\end{split}
\end{equation}
which is the desired result.
\end{proof}

For the proof of the trace formula we need to be able to control the term containing the scattering coefficients. We state this result in the following proposition.
\begin{prop}\label{lim}
For any $h\in H_{\sigma,\,\delta}$
\begin{equation}
\frac{1}{4\pi}\int_{-\infty}^{+\infty}h(\rho)\left\{\frac{\varphi_{\alpha,z_{0}}'}{\varphi_{\alpha,z_{0}}}(\tfrac{1}{2}+\i\rho)-\frac{\varphi'}{\varphi}(\tfrac{1}{2}+\i\rho)\right\}d\rho
\end{equation}
converges absolutely.
\end{prop}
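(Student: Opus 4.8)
The plan is to rewrite the integrand as a logarithmic derivative of the relative zeta function along the critical line, identify it with (twice) the derivative of the scattering phase, and then bound it using the growth estimate of Proposition \ref{seqbound} together with a zero‑counting argument.

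First I would invoke the functional equation \eqref{scatt}, which gives $\varphi_{\alpha,z_0}(s)=\varphi(s)\,\theta_{\alpha,z_0}(s)$ with $\theta_{\alpha,z_0}(s)=S_{\alpha,z_0}(1-s)/S_{\alpha,z_0}(s)$, so that
\[
\frac{\varphi_{\alpha,z_0}'}{\varphi_{\alpha,z_0}}(\tfrac12+\i\rho)-\frac{\varphi'}{\varphi}(\tfrac12+\i\rho)
=\frac{\theta_{\alpha,z_0}'}{\theta_{\alpha,z_0}}(\tfrac12+\i\rho)
=-\frac{S_{\alpha,z_0}'}{S_{\alpha,z_0}}(\tfrac12-\i\rho)-\frac{S_{\alpha,z_0}'}{S_{\alpha,z_0}}(\tfrac12+\i\rho)=:L(\rho).
\]
Writing $\omega(\rho)=S_{\alpha,z_0}(\tfrac12+\i\rho)$ and using the reality $S_{\alpha,z_0}(\tfrac12-\i\rho)=\overline{\omega(\rho)}$ (equivalently the identity $\Im S_{\alpha,z_0}(\tfrac12+\i\rho)=|E(z_0,\tfrac12+\i\rho)|^{2}/(4\rho)$ from the proof of Theorem \ref{prop}), a short computation gives $L(\rho)=-2\,\tfrac{d}{d\rho}\arg\omega(\rho)$. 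Crucially $L$ extends to a \emph{continuous} function on all of $\RR$: at the isolated points where $\omega$ vanishes or has a pole — the new eigenvalues $\tfrac12\pm\i\rho^\alpha_j$ (zeros, which come in pairs because $E(z_0,\tfrac12+\i\rho^\alpha_j)=0$ forces $S_{\alpha,z_0}(\tfrac12-\i\rho^\alpha_j)=S_{\alpha,z_0}(\tfrac12+\i\rho^\alpha_j)=0$ via \eqref{fe}), the unperturbed eigenvalues $\tfrac12\pm\i\rho_j$ (paired poles by Theorem \ref{prop}(b)(iii)), and the point $s=\tfrac12$ — the singular parts of the two summands have equal and opposite residues and cancel. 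Thus the problem reduces to the absolute convergence of $\int_{-\infty}^{\infty}h(\rho)\,\tfrac{d}{d\rho}\arg S_{\alpha,z_0}(\tfrac12+\i\rho)\,d\rho$.

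Next I would obtain a local bound for $\tfrac{d}{d\rho}\arg S_{\alpha,z_0}(\tfrac12+\i\rho)$. Since $\Im\omega(\rho)\ge0$ for $\rho>0$, the curve $\rho\mapsto\omega(\rho)$ stays in the closed upper half plane and touches the real axis only at the zeros of $E(z_0,\tfrac12+\i\rho)$, so between consecutive such zeros the net change of $\arg\omega$ is $\le\pi$; the remaining oscillation inside such an interval, hence a pointwise bound for $\omega'/\omega$ away from the zeros and poles of $\omega$, I would control by a Borel--Carath\'eodory/Cauchy estimate on a circle of fixed radius. On that circle $\big|\log|S_{\alpha,z_0}|\big|$ is $<\!\!<T^{2}\ln T$ at height $T$ by the method of proof of Proposition \ref{seqbound}, once the nearby poles are divided out, and the number of zeros and poles of $S_{\alpha,z_0}$ in a unit box at height $T$ is $<\!\!<T^{2}\ln T$ by Jensen's formula combined with the Weyl law for the eigenvalues and the bound \eqref{resbound} for the resonances. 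Each zero or pole $w$ of $S_{\alpha,z_0}$ off the critical line contributes a term $<\!\!< |\tfrac12-\Re w|\big((\rho-\Im w)^{2}+(\tfrac12-\Re w)^{2}\big)^{-1}$ to $\big|\tfrac{d}{d\rho}\arg S_{\alpha,z_0}\big|$, whose integral over $\rho$ is $O(1)$, while the on‑line ones are harmless by the cancellation above. This yields $\int_{n}^{n+1}\big|\tfrac{d}{d\rho}\arg S_{\alpha,z_0}(\tfrac12+\i\rho)\big|\,d\rho\;<\!\!<\;\log n+(\text{number of zeros and poles near height }n)$, with partial sums over $n\le N$ of order $N^{2}\ln N$.

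Finally, combining this with $h(\rho)<\!\!<(1+|\rho|)^{-2-\delta}$ and summing by parts over the unit intervals, the series $\sum_n(1+n)^{-2-\delta}\int_n^{n+1}\big|\tfrac{d}{d\rho}\arg S_{\alpha,z_0}(\tfrac12+\i\rho)\big|\,d\rho$ converges, which proves the proposition. The main obstacle is the third paragraph: one must control $S_{\alpha,z_0}'/S_{\alpha,z_0}$ immediately next to the critical line, where $S_{\alpha,z_0}$ simultaneously vanishes and has poles on the line and where the only available global input is the crude exponential bound of Proposition \ref{seqbound}; pushing this through forces the combination of Jensen's formula with the classical eigenvalue and resonance counts in order to keep the number of interfering zeros polynomially bounded.
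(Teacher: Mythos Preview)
Your reduction in the first paragraph is exactly what the paper does: via \eqref{scatt} the integrand becomes $\theta_{\alpha,z_0}'/\theta_{\alpha,z_0}(\tfrac12+\i\rho)$, and since $S_{\alpha,z_0}(\tfrac12-\i\rho)=\overline{S_{\alpha,z_0}(\tfrac12+\i\rho)}$ on the real line this equals $-2\i\,\tfrac{d}{d\rho}\arg S_{\alpha,z_0}(\tfrac12+\i\rho)$. You also correctly observe from the functional equation that $\Im S_{\alpha,z_0}(\tfrac12+\i\rho)=|E(z_0,\tfrac12+\i\rho)|^2/(4\rho)\ge 0$ for $\rho>0$, so the argument is confined to $[0,\pi]$.

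Where you diverge sharply from the paper is the third paragraph. Having established that the argument is \emph{bounded}, the paper simply integrates by parts once:
\[
\int_{-T}^{T}h(\rho)\,\frac{d}{d\rho}\log\theta_{\alpha,z_0}(\tfrac12+\i\rho)\,d\rho
=\Big[h(\rho)\log\theta_{\alpha,z_0}(\tfrac12+\i\rho)\Big]_{-T}^{T}
-\int_{-T}^{T}h'(\rho)\log\theta_{\alpha,z_0}(\tfrac12+\i\rho)\,d\rho,
\]
and since $\big|\log\theta_{\alpha,z_0}(\tfrac12+\i\rho)\big|=2|\arg S_{\alpha,z_0}(\tfrac12+\i\rho)|\le 2\pi$ while $h,h'\in L^1(\RR)$ by the definition of $H_{\sigma,\delta}$, the tails vanish and the integral converges. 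That is the entire proof: no Borel--Carath\'eodory, no Jensen, no zero-counting, no appeal to Proposition~\ref{seqbound}.

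Your route instead tries to control the total variation $\int_n^{n+1}|\tfrac{d}{d\rho}\arg S_{\alpha,z_0}|\,d\rho$ directly. This could in principle be made to work and would yield a genuinely stronger statement (absolute integrability of $h\cdot L$ rather than convergence of the improper integral), but it is a substantial detour: you must divide out all nearby on-line poles before applying Borel--Carath\'eodory, justify a lower bound on $|S_{\alpha,z_0}|$ on suitable circles (Proposition~\ref{seqbound} only gives an upper bound), and count perturbed resonances via Jensen. None of this is needed once you notice that an antiderivative of the integrand is uniformly bounded. The lesson is that after your correct first step, one integration by parts finishes the argument.
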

\begin{proof}
We recall that the perturbed scattering coefficient is given by
\begin{equation}
\varphi_{\alpha,z_{0}}(s)=\varphi(s)\frac{S_{\alpha,z_{0}}(1-s)}{S_{\alpha,z_{0}}(s)}.
\end{equation}
Consequently,
\begin{equation}
\frac{\varphi_{\alpha,z_{0}}'}{\varphi_{\alpha,z_{0}}}(s)-\frac{\varphi'}{\varphi}(s)=\frac{d}{ds}\log\frac{S_{\alpha,z_{0}}(1-s)}{S_{\alpha,z_{0}}(s)},
\end{equation}
and therefore
\begin{equation}
\begin{split}
&\frac{1}{4\pi}\int_{-T}^{T}h(\rho)\left\{\frac{\varphi_{\alpha,z_{0}}'}{\varphi_{\alpha,z_{0}}}(\tfrac{1}{2}+\i\rho)-\frac{\varphi'}{\varphi}(\tfrac{1}{2}+\i\rho)\right\}d\rho\\
=\,&\frac{1}{4\pi\i}\int_{-T}^{T}h(\rho)\frac{d}{d\rho}\log\frac{S_{\alpha,z_{0}}(\tfrac{1}{2}-\i\rho)}{S_{\alpha,z_{0}}(\tfrac{1}{2}+\i\rho)}d\rho\\
=\,&\frac{1}{2\pi\i}h(T)\log\frac{S_{\alpha,z_{0}}(\tfrac{1}{2}-\i T)}{S_{\alpha,z_{0}}(\tfrac{1}{2}+\i T)}
-\frac{1}{4\pi\i}\int_{-T}^{T}h'(\rho)\log\frac{S_{\alpha,z_{0}}(\tfrac{1}{2}-\i\rho)}{S_{\alpha,z_{0}}(\tfrac{1}{2}+\i\rho)}d\rho.
\end{split}
\end{equation}
So in order to prove the existence of the limit as $T\to\infty$ we have to estimate
\begin{equation}
\log\frac{S_{\alpha,z_{0}}(\tfrac{1}{2}-\i\rho)}{S_{\alpha,z_{0}}(\frac{1}{2}+\i\rho)}
\end{equation}
for $\rho\in\RR$. We know that
\begin{equation}
S_{\alpha,z_{0}}(\tfrac{1}{2}+\i\rho)=r(\rho)\e^{i\theta(\rho)},
\end{equation}
where $\theta(\rho)=\arg S_{\alpha,z_{0}}(\tfrac{1}{2}+\i\rho)$ and $r(\rho)=\left|S_{\alpha,z_{0}}(\tfrac{1}{2}+\i\rho)\right|$. We recall the functional equation
\begin{equation}
S_{\alpha,z_{0}}(s)=S_{\alpha,z_{0}}(1-s)-\frac{E(z_{0},s)E(z_{0},1-s)}{1-2s},
\end{equation}
and see as a consequence
\begin{equation}
r(\rho)\sin\theta(\rho)=\Im S_{\alpha,z_{0}}(\tfrac{1}{2}+\i\rho)=\frac{1}{4\rho}\left|E(z_{0},\tfrac{1}{2}+\i\rho)\right|^{2}.
\end{equation}
This together with meromorphicity of $S_{\alpha,z_{0}}(s)$ implies $\left|\theta(\rho)\right|\leq\pi$. In fact this bound holds in the strip $-\sigma\leq\Im\rho\leq0$. To see this observe that for $\Re\rho>0$
\begin{equation}\label{Im}
\begin{split}
\Im S_{\alpha,z_{0}}(\tfrac{1}{2}+\i\rho)
=\;&-\frac{\Re\rho\;\Im\rho}{2\pi}\int_{0}^{+\infty}\frac{|E(z_{0},\tfrac{1}{2}+\i r)|^{2}dr}{(\Re\rho)^{2}-r^{2}-(\Im\rho)^{2})^{2}+4(\Re\rho)^{2}(\Im\rho)^{2}}\\
&-\Re\rho\;\Im\rho\sum_{j=-M}^{\infty}\frac{|\varphi_{j}(z_{0})|^{2}} {((\Re\rho)^{2}-\rho_{j}^{2}-(\Im\rho)^{2})^{2}+4(\Re\rho)^{2}(\Im\rho)^{2}}\geq0
\end{split}
\end{equation}
and from the functional equation we see
\begin{equation}
\lim_{\Im\rho\to0}\Im S_{\alpha,z_{0}}(\tfrac{1}{2}+\i\rho)=\frac{|E(z_{0},\tfrac{1}{2}+\i\Re\rho)|^{2}}{2\Re\rho}\geq0.
\end{equation}
From this observation and since $S_{\alpha,z_{0}}(\tfrac{1}{2}+\i\rho)$ is meromorphic, it follows that $S_{\alpha,z_{0}}(\tfrac{1}{2}+\i\cdot)$ maps any smooth curve in the strip $-\sigma\leq\Im\rho\leq0$ which contains no poles of $S_{\alpha,z_{0}}(\tfrac{1}{2}+\i\rho)$ to a smooth curve in the upper halfplane. In particular this implies that the winding number of any such curve is 0. This implies $0\leq\arg S_{\alpha,z_{0}}(\tfrac{1}{2}+\i\rho)\leq\pi$ for $\Re\rho>0$ and $-\sigma\leq\Im\rho\leq0$ away from poles. In the same way one sees that $-\pi\leq\arg S_{\alpha,z_{0}}(\tfrac{1}{2}+\i\rho)\leq0$ for $\Re\rho<0$ and $-\sigma\leq\Im\rho\leq0$ away from poles.

So we have
\begin{equation}
\left|\log\frac{S_{\alpha,z_{0}}(\tfrac{1}{2}-\i\rho)}{S_{\alpha,z_{0}}(\tfrac{1}{2}+\i\rho)}\right|=\left|\log\frac{r(-\rho)\e^{-\i\theta(\rho)}}{r(\rho)\e^{\i\theta(\rho)}}\right|=2\left|\theta(\rho)\right|\leq 2\pi,
\end{equation}
where we have used
\begin{equation}
r(\rho)=|S_{\alpha,z_{0}}(\tfrac{1}{2}+\i\rho)|=|\overline{S_{\alpha,z_{0}}(\tfrac{1}{2}+\i\rho)}|
=|S_{\alpha,z_{0}}(\tfrac{1}{2}-\i\rho)|=r(-\rho).
\end{equation}
and the fact that $S_{\alpha,z_{0}}(\tfrac{1}{2}+\i\rho)S_{\alpha,z_{0}}(\tfrac{1}{2}-\i\rho)$ is analytic on the real line.

Now we can estimate the following tail
\begin{equation}
\begin{split}
&\frac{1}{4\pi}\left|\int_{T}^{\infty}h(\rho)\left\{\frac{\varphi_{\alpha,z_{0}}'}{\varphi_{\alpha,z_{0}}}(\tfrac{1}{2}+\i\rho)-\frac{\varphi'}{\varphi}(\tfrac{1}{2}+\i\rho)\right\}d\rho\right|\\
=\,&\frac{1}{4\pi}\left|\int_{T}^{\infty}h(\rho)\frac{d}{d\rho}\log\frac{S_{\alpha,z_{0}}(\tfrac{1}{2}-\i\rho)}{S_{\alpha,z_{0}}(\tfrac{1}{2}+\i\rho)}d\rho\right|\\
\leq\,&\frac{1}{4\pi}\left|h(T)\right|\left|\log\frac{S_{\alpha,z_{0}}(\tfrac{1}{2}-\i T)}{S_{\alpha,z_{0}}(\tfrac{1}{2}+\i T)}\right|
+\frac{1}{4\pi}\left|\int_{T}^{\infty}h'(\rho)\log\frac{S_{\alpha,z_{0}}(\tfrac{1}{2}-\i\rho)}{S_{\alpha,z_{0}}(\tfrac{1}{2}+\i\rho)}d\rho\right|\\
\leq\,&\frac{1}{4\pi}\left|h(T)\right|\left|\log\frac{S_{\alpha,z_{0}}(\tfrac{1}{2}-\i T)}{S_{\alpha,z_{0}}(\tfrac{1}{2}+\i T)}\right|
+\frac{1}{4\pi}\int_{T}^{\infty}|h'(\rho)|\left|\log\frac{S_{\alpha,z_{0}}(\tfrac{1}{2}-\i\rho)}{S_{\alpha,z_{0}}(\tfrac{1}{2}+\i\rho)}\right|d\rho.
\end{split}
\end{equation}
We only have to estimate one tail because the relation
\begin{equation}
\frac{\varphi'(s)}{\varphi(s)}=\frac{\varphi'(1-s)}{\varphi(1-s)}
\end{equation}
implies that the integrand is even on the critical line. As an immediate consequence of the decay of $h$ and the above bound on the logarithm we conclude
\begin{equation}
\lim_{T\to\infty}\frac{1}{4\pi}\left|h(T)\right|\left|\log\frac{S_{\alpha,z_{0}}(\tfrac{1}{2}-\i T)}{S_{\alpha,z_{0}}(\tfrac{1}{2}+\i T)}\right|=0.
\end{equation}
$h\in H_{\sigma,\delta}$ implies that we have the estimate
\begin{equation}
h'(\rho)<\!\!<(1+|\Re\rho|)^{-2-\delta}
\end{equation}
uniformly in the strip $|\Im\rho|\leq\sigma-\epsilon$ for any $0<\epsilon<\sigma$. So we obtain
\begin{equation}
\lim_{T\to\infty}\frac{1}{4\pi}\int_{T}^{\infty}|h'(\rho)|\left|\log\frac{S_{\alpha,z_{0}}(\tfrac{1}{2}-\i\rho)}{S_{\alpha,z_{0}}(\tfrac{1}{2}+\i\rho)}\right|d\rho=0.
\end{equation}
We conclude
\begin{equation}
\begin{split}
&\lim_{T\to\infty}\frac{1}{4\pi}\int_{-T}^{T}h(\rho)\left\{\frac{\varphi_{\alpha,z_{0}}'}{\varphi_{\alpha,z_{0}}}(\tfrac{1}{2}+\i\rho)-\frac{\varphi'}{\varphi}(\tfrac{1}{2}+\i\rho)\right\}d\rho\\
=&\frac{1}{4\pi}\int_{-\infty}^{\infty}h(\rho)\left\{\frac{\varphi_{\alpha,z_{0}}'}{\varphi_{\alpha,z_{0}}}(\tfrac{1}{2}+\i\rho)-\frac{\varphi'}{\varphi}(\tfrac{1}{2}+\i\rho)\right\}d\rho.
\end{split}
\end{equation}
\end{proof}

With the previous work we are able to justify the existence of the limit $T\to\infty$ of the truncated trace formula \eqref{prop16}.
\begin{cor}
Let $\lbrace T_{n}\rbrace$ be any sequence in $\RR_{+}$ in between eigenvalues $\lbrace\rho_{j}\rbrace_{j=0}^{+\infty}$, $\lbrace\rho_{j}^{\alpha}\rbrace_{j=0}^{+\infty}$ and real parts of resonances $\lbrace\Re r_{j}\rbrace_{j=0}^{+\infty}$ and accumulating at $+\infty$. Then the limit
\begin{equation}\label{bdterms}
\frac{1}{2\pi\i}\lim_{n\to\infty}\left\{\int_{-\i\sigma+T_{n}}^{T_{n}}+\int_{-T_{n}}^{-\i\sigma-T_{n}}\right\}h(\rho)\frac{d}{d\rho}\log S_{\alpha,\,z_{0}}(\tfrac{1}{2}+\i\rho)d\rho
\end{equation}
exists.
\end{cor}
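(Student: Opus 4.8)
The plan is to read the assertion directly off the truncated trace formula \eqref{prop16}. By the hypothesis on $\lbrace T_{n}\rbrace$ the boundary $\partial B(T_{n})$ contains no zero or pole of $S_{\alpha,z_{0}}$, so \eqref{prop16} applies at each $T=T_{n}$. Writing $\frac{S'_{\alpha,z_{0}}}{S_{\alpha,z_{0}}}(\tfrac{1}{2}+\i\rho)=\i^{-1}\frac{d}{d\rho}\log S_{\alpha,z_{0}}(\tfrac{1}{2}+\i\rho)$, and the analogous identity for $\theta_{\alpha,z_{0}}$, I would solve \eqref{prop16} for the two corner integrals; since $\int_{T_{n}-\i\sigma}^{T_{n}}=\int_{-\i\sigma+T_{n}}^{T_{n}}$ and $\int_{-T_{n}}^{-T_{n}-\i\sigma}=\int_{-T_{n}}^{-\i\sigma-T_{n}}$, the quantity in \eqref{bdterms} equals
\[
\sum_{\rho^{\alpha}_{j}\in B(T_{n})}h(\rho^{\alpha}_{j})-\sum_{\rho_{j}\in B(T_{n})}h(\rho_{j})-\tfrac{1}{2}\delta_{\Gamma}h(0)-\frac{1}{2\pi\i}\int_{-\i\sigma-T_{n}}^{-\i\sigma+T_{n}}h(\rho)\frac{d}{d\rho}\log S_{\alpha,z_{0}}(\tfrac{1}{2}+\i\rho)\,d\rho-\frac{1}{4\pi\i}\int_{-T_{n}}^{T_{n}}h(\rho)\frac{d}{d\rho}\log\theta_{\alpha,z_{0}}(\tfrac{1}{2}+\i\rho)\,d\rho .
\]
It therefore suffices to show that each of these four terms has a limit as $n\to\infty$; the limit of \eqref{bdterms} will then in particular be the same for every admissible sequence.

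For the two spectral sums I would note that the small eigenvalues are finite in number and, since $\sigma>C(\Gamma,\alpha,z_{0})$ exceeds $|\Im\rho_{-M}|=\tfrac{1}{2}$ and $|\Im\rho^{\alpha}_{-M}|$, every $\rho_{j}$ and every $\rho^{\alpha}_{j}$ lies in $B(T_{n})$ for $n$ large; the remaining eigenvalues are real, with counting function $<\!\!<R^{2}$ by Weyl's law for $\Delta$ — and hence also for $\Delta_{\alpha,z_{0}}$ by the interlacing of its eigenvalues with those of $\Delta$ (Theorem \ref{prop}) and the standing genericity assumption on $\alpha$ — so the decay $h(\rho)<\!\!<(1+|\Re\rho|)^{-2-\delta}$ makes $\sum_{j\geq-M}h(\rho_{j})$ and $\sum_{j\geq-M}h(\rho^{\alpha}_{j})$ absolutely convergent, with the partial sums over $B(T_{n})$ converging to them. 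The integral along $\Im\rho=-\sigma$ converges absolutely by the same argument as in the compact case: on this line $\Re(\tfrac{1}{2}+\i\rho)=\tfrac{1}{2}+\sigma>1$, so $S_{\alpha,z_{0}}(\tfrac{1}{2}+\i\rho)=\beta^{-1}+m\psi(\tfrac{1}{2}+\i\rho)+\sum_{\gamma\in\Gamma\backslash\scrI}G_{\tfrac{1}{2}+\i\rho}(z_{0},\gamma z_{0})$, and expanding the logarithm into an identity term and diffractive-orbit terms exactly as in \eqref{series}--\eqref{expansion} — legitimate here because $\sigma>C(\Gamma,\alpha,z_{0})$ — and using the asymptotics $\psi(s)=\tfrac{1}{2\pi}\log s+O(|s|^{-1})$ together with Lemma \ref{Greenbound} yields an integrand $<\!\!<(1+|\Re\rho|)^{-2-\delta}$, so the integral tends to $\frac{1}{2\pi\i}\int_{-\i\sigma-\infty}^{-\i\sigma+\infty}h(\rho)\frac{d}{d\rho}\log S_{\alpha,z_{0}}(\tfrac{1}{2}+\i\rho)\,d\rho$. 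Finally, by \eqref{scatt} one has $\frac{d}{d\rho}\log\theta_{\alpha,z_{0}}(\tfrac{1}{2}+\i\rho)=\i\big\{\tfrac{\varphi_{\alpha,z_{0}}'}{\varphi_{\alpha,z_{0}}}-\tfrac{\varphi'}{\varphi}\big\}(\tfrac{1}{2}+\i\rho)$, so the last integral converges by Proposition \ref{lim}.

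Combining the four pieces, the displayed right-hand side converges as $n\to\infty$, and therefore so does the left-hand side, i.e. the limit in \eqref{bdterms} exists. I do not anticipate a genuine obstacle here: the analytic content has already been deposited — Lemma \ref{Greenbound} and the series estimate \eqref{bigbound} control the geometric part of $S_{\alpha,z_{0}}$ on $\Im\rho=-\sigma$, and Proposition \ref{lim} controls the scattering term on the critical line. The single point requiring care is the bookkeeping around the fixed parameter $\sigma$: one must choose $\sigma>C(\Gamma,\alpha,z_{0})$ large enough to simultaneously enclose all small eigenvalues inside $B(T_{n})$, to keep the contour $\Im\rho=-\sigma$ strictly to the right of every zero and pole of $S_{\alpha,z_{0}}$ (in particular to the right of the real zero $\tfrac{1}{2}+(\tfrac{1}{4}-\lambda^{\alpha}_{-M})^{1/2}$ of Theorem \ref{prop}(b)(i)), and to validate the expansion \eqref{series}.
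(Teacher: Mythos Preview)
Your proof is correct and follows essentially the same route as the paper: rearrange the truncated trace formula \eqref{prop16} to isolate the two corner integrals, then show each of the remaining four pieces converges using (i) the standard eigenvalue count plus decay of $h$, (ii) Proposition~\ref{lim} for the $\theta_{\alpha,z_{0}}$-integral, and (iii) the geometric representation of $S_{\alpha,z_{0}}$ on $\Im\rho=-\sigma$. The only minor technical divergence is in (iii): the paper integrates by parts and uses $\log S_{\alpha,z_{0}}(\tfrac{1}{2}+\i\rho)=O(\log\log|\rho|)$ on $\Im\rho=-\sigma$ (which follows directly from $S_{\alpha,z_{0}}=m\psi+O(1)$ and the boundedness of $\arg S_{\alpha,z_{0}}$), whereas you bound the integrand $h(\rho)\tfrac{d}{d\rho}\log S_{\alpha,z_{0}}$ directly via the expansion \eqref{series}--\eqref{expansion}; your route works but quietly needs a bound on $\sum_{\gamma}\big|\tfrac{d}{d\rho}G_{\tfrac{1}{2}+\i\rho}(z_{0},\gamma z_{0})\big|$, which is not literally Lemma~\ref{Greenbound} but follows from the same argument applied to the differentiated integral representation \eqref{intrep}.
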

\begin{proof}
We have the identity
\begin{equation}
\begin{split}
&\frac{1}{2\pi\i}\left\{\int_{-\i\sigma+T_{n}}^{T_{n}}+\int_{-T_{n}}^{-\i\sigma-T_{n}}\right\}h(\rho)\frac{d}{d\rho}\log S_{\alpha,\,z_{0}}(\tfrac{1}{2}+\i\rho)d\rho\\
=\;&\sum_{\rho^{\alpha}_{j}\in B(T_{n})}h(\rho^{\alpha}_{j})-\sum_{\rho_{j}\in B(T_{n})}h(\rho_{j})
-\frac{1}{2\pi\i}\int_{-\i\sigma-T_{n}}^{-\i\sigma+T_{n}}h(\rho)\frac{d}{d\rho}\log S_{\alpha,\,z_{0}}(\tfrac{1}{2}+\i\rho)d\rho\\
&-\tfrac{1}{2}\delta_{\Gamma}h(0)-\frac{1}{4\pi}\int_{-T_{n}}^{T_{n}}h(\rho)\frac{\theta_{\alpha,z_{0}}'}{{\theta_{\alpha,z_{0}}}}(\tfrac{1}{2}+\i\rho)d\rho.
\end{split}
\end{equation}
The standard upper bound on the number of eigenvalues and the decay of $h$ imply that the sums over the eigenvalues converge absolutely. By Proposition \ref{lim} the limit
\begin{equation}
\lim_{T\to\infty}\frac{1}{4\pi}\int_{-T}^{T}h(\rho)\frac{\theta_{\alpha,z_{0}}'}{{\theta_{\alpha,z_{0}}}}(\tfrac{1}{2}+\i\rho)d\rho
=\frac{1}{4\pi}\int_{-\infty}^{\infty}h(\rho)\frac{\theta_{\alpha,z_{0}}'}{{\theta_{\alpha,z_{0}}}}(\tfrac{1}{2}+\i\rho)d\rho
\end{equation}
exists. Finally, for $\Im\rho=-\sigma$, we have as a consequence of Lemma 9
\begin{equation}
S_{\alpha,z_{0}}(\tfrac{1}{2}+\i\rho)=m\psi(\tfrac{1}{2}+\i\rho)+O(1)
\end{equation}
which implies, for $\Im\rho=-\sigma$ and $|\rho|$ large, using boundedness of $\arg S_{\alpha,z_{0}}(\tfrac{1}{2}+\i\rho)$ away from poles,
\begin{equation}\label{Sbound}
\log S_{\alpha,\,z_{0}}(\tfrac{1}{2}+\i\rho)=O(\log\log|\rho|).
\end{equation}
From an integration by parts we have
\begin{equation}
\begin{split}
\int_{-\i\sigma-T}^{-\i\sigma+T}h(\rho)\frac{d}{d\rho}\log S_{\alpha,\,z_{0}}(\tfrac{1}{2}+\i\rho)d\rho
=\;&h(-\i\sigma+T)\log S_{\alpha,z_{0}}(\tfrac{1}{2}+\sigma+\i T)\\
&-h(-\i\sigma-T)\log S_{\alpha,z_{0}}(\tfrac{1}{2}+\sigma-\i T)\\
&-\int_{-\i\sigma-T}^{-\i\sigma+T}h'(\rho)\log S_{\alpha,\,z_{0}}(\tfrac{1}{2}+\i\rho)d\rho
\end{split}
\end{equation}
which implies
\begin{equation}\label{limbound}
\lim_{T\to\infty}\int_{-\i\sigma-T}^{-\i\sigma+T}h(\rho)\frac{d}{d\rho}\log S_{\alpha,\,z_{0}}(\tfrac{1}{2}+\i\rho)d\rho
=\lim_{T\to\infty}\int_{-\i\sigma-T}^{-\i\sigma+T}h'(\rho)\log S_{\alpha,\,z_{0}}(\tfrac{1}{2}+\i\rho)d\rho
\end{equation}
since
\begin{equation}
\lim_{T\to+\infty}h(-\i\sigma+T)\log S_{\alpha,z_{0}}(\tfrac{1}{2}+\sigma+\i T)
=\lim_{T\to+\infty}h(-\i\sigma-T)\log S_{\alpha,z_{0}}(\tfrac{1}{2}+\sigma-\i T)=0.
\end{equation}
Now the integral on the RHS of \eqref{limbound} clearly converges because of \eqref{Sbound} and $h\in H_{\sigma,\delta}$. Consequently the limit of the boundary term exists as we stretch the box to infinity.
\end{proof}

We can now apply Proposition \ref{seqbound} and Lemma \ref{testf} to derive a bound on a sequence of integrals which will eventually lead to the required bound on the corresponding sequence of boundary terms \eqref{bdterms} which we require in order to show that its limit vanishes.
\begin{prop}\label{prop24}
There exists $\lbrace T_{N(j)}\rbrace_{j}\subset\RR_{+}$, $\lim_{j\to\infty}T_{N(j)}=\infty$, such that for any $\epsilon>0$ we have
\begin{equation}
\int_{T_{N(j)}-\i\sigma}^{T_{N(j)}}\big|\log|S_{\alpha,z_{0}}(\tfrac{1}{2}+\i\rho)|\big||d\rho|<\!\!<_{\epsilon}T_{N(j)}^{2+\epsilon}.
\end{equation}
\end{prop}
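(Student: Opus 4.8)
The plan is to repeat the argument of Proposition~\ref{compbound}, with the polynomial bound of Proposition~\ref{polybound} replaced by the sub-exponential bound of Proposition~\ref{seqbound} and the use of the truncated formula \eqref{trunc} replaced by the corollary establishing the existence of the limit \eqref{bdterms}. Fix $\epsilon>0$; a diagonalisation over a sequence $\epsilon_{k}\downarrow 0$ then yields a single subsequence valid for all $\epsilon$. First I would apply Lemma~\ref{testf} to the sequence $\lbrace T_{N}\rbrace$ furnished by Proposition~\ref{seqbound}, obtaining a subsequence $\lbrace T_{N(j)}\rbrace_{j}$ and a test function $h_{\epsilon}\in H_{\sigma,\epsilon}$ with the symmetries \eqref{sym2}, \eqref{sym}, with $\Re h_{\epsilon}'(\rho)>\!\!>T_{N(j)}^{-2-\epsilon}$ uniformly on the right half-edge $[T_{N(j)}-\i\sigma,T_{N(j)}]$, and with $h_{\epsilon}'(\rho)<\!\!<(1+|\Re\rho|)^{-2-\epsilon}$ uniformly in $|\Im\rho|\leq\sigma$ (Cauchy's theorem). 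By Proposition~\ref{seqbound} and the identity $|S_{\alpha,z_{0}}(\tfrac{1}{2}+\i\rho)|=|S_{\alpha,z_{0}}(\tfrac{1}{2}-\i\bar\rho)|$, a consequence of $\overline{S_{\alpha,z_{0}}(w)}=S_{\alpha,z_{0}}(\bar w)$, one has $|S_{\alpha,z_{0}}(\tfrac{1}{2}+\i\rho)|<\!\!<\e^{C_{2}(\Gamma)T_{N(j)}^{2}\ln T_{N(j)}}$ on both half-edges $[T_{N(j)}-\i\sigma,T_{N(j)}]$ and $[-T_{N(j)}-\i\sigma,-T_{N(j)}]$.

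Applying the preceding corollary to $h_{\epsilon}$, the limit \eqref{bdterms} exists. Integrating by parts in the two half-edge integrals there, the boundary contributions at the lower endpoints $-\i\sigma\pm T_{N(j)}$ vanish in the limit by \eqref{Sbound} and the decay of $h_{\epsilon}$, while the contributions at the real endpoints $\pm T_{N(j)}$ combine, via $\overline{S_{\alpha,z_{0}}(w)}=S_{\alpha,z_{0}}(\bar w)$ and the evenness of $h_{\epsilon}$, into $2\i\,h_{\epsilon}(T_{N(j)})\arg S_{\alpha,z_{0}}(\tfrac{1}{2}+\i T_{N(j)})=O(T_{N(j)}^{-2-\epsilon})$, using that $\arg S_{\alpha,z_{0}}$ is bounded away from poles (proof of Proposition~\ref{lim}). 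Hence the limit as $j\to\infty$ of the sum of the two half-edge integrals of $h_{\epsilon}'(\rho)\log S_{\alpha,z_{0}}(\tfrac{1}{2}+\i\rho)\,d\rho$ exists. Folding the left half-edge onto the right via the reflection $\rho\mapsto-\bar\rho$, and again using $\overline{S_{\alpha,z_{0}}(w)}=S_{\alpha,z_{0}}(\bar w)$ together with \eqref{sym2}, \eqref{sym}, this sum reduces (up to a nonzero constant factor) to $\int_{-\i\sigma+T_{N(j)}}^{T_{N(j)}}\Re h_{\epsilon}'(\rho)\log|S_{\alpha,z_{0}}(\tfrac{1}{2}+\i\rho)|\,|d\rho|$ plus an error $O(T_{N(j)}^{-2-\epsilon})$ arising from an $\Im h_{\epsilon}'\,\arg S_{\alpha,z_{0}}$ term; consequently $\lim_{j\to\infty}\int_{-\i\sigma+T_{N(j)}}^{T_{N(j)}}\Re h_{\epsilon}'(\rho)\log|S_{\alpha,z_{0}}(\tfrac{1}{2}+\i\rho)|\,|d\rho|$ exists.

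Finally, renormalise by the upper bound. Enlarging $C_{2}$ if necessary so that $\e^{-C_{2}T_{N(j)}^{2}\ln T_{N(j)}}|S_{\alpha,z_{0}}(\tfrac{1}{2}+\i\rho)|\leq1$ on the right half-edge, the constant shift $C_{2}T_{N(j)}^{2}\ln T_{N(j)}$ changes the integral by $O(T_{N(j)}^{-\epsilon}\ln T_{N(j)})\to0$, because $\int_{-\i\sigma+T_{N(j)}}^{T_{N(j)}}\Re h_{\epsilon}'(\rho)\,|d\rho|<\!\!<T_{N(j)}^{-2-\epsilon}$; hence $\int_{-\i\sigma+T_{N(j)}}^{T_{N(j)}}\Re h_{\epsilon}'(\rho)\log\lbrace\e^{-C_{2}T_{N(j)}^{2}\ln T_{N(j)}}|S_{\alpha,z_{0}}(\tfrac{1}{2}+\i\rho)|\rbrace\,|d\rho|$ still converges. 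Its integrand is nonpositive ($\Re h_{\epsilon}'>0$ and the logarithm is $\leq0$ on this edge), so this convergent integral is bounded; since $\Re h_{\epsilon}'(\rho)>\!\!>T_{N(j)}^{-2-\epsilon}$ pointwise, we deduce $\int_{0}^{\sigma}\big|\log\lbrace\e^{-C_{2}T_{N(j)}^{2}\ln T_{N(j)}}|S_{\alpha,z_{0}}(\tfrac{1}{2}+\sigma-t+\i T_{N(j)})|\rbrace\big|\,dt<\!\!<T_{N(j)}^{2+\epsilon}$, and therefore
\[
\int_{T_{N(j)}-\i\sigma}^{T_{N(j)}}\big|\log|S_{\alpha,z_{0}}(\tfrac{1}{2}+\i\rho)|\big|\,|d\rho|<\!\!<T_{N(j)}^{2+\epsilon}+\sigma\,C_{2}T_{N(j)}^{2}\ln T_{N(j)}<\!\!<_{\epsilon}T_{N(j)}^{2+\epsilon}.
\]

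I expect the main obstacle to be the folding step. In the compact case (the derivation of \eqref{id}) it rested on $S_{\alpha,z_{0}}(s)=S_{\alpha,z_{0}}(1-s)$, which now fails; but the weaker symmetry $\overline{S_{\alpha,z_{0}}(w)}=S_{\alpha,z_{0}}(\bar w)$---valid since $S_{\alpha,z_{0}}$ is assembled from Green's functions and Eisenstein series satisfying $\overline{G^{\Gamma}_{s}(z,w)}=G^{\Gamma}_{\bar s}(z,w)$ and $\overline{E(z,s)}=E(z,\bar s)$---still makes $|S_{\alpha,z_{0}}(\tfrac{1}{2}+\i\rho)|$ invariant under $\rho\mapsto-\bar\rho$, the reflection interchanging the two half-edges, so the fold goes through once one keeps careful track of orientations and of the extra $\Im h_{\epsilon}'\,\arg S_{\alpha,z_{0}}$ term (controlled by $|\arg S_{\alpha,z_{0}}(\tfrac{1}{2}+\i\rho)|\leq\pi$ from the proof of Proposition~\ref{lim}). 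The only quantitative change---that the renormalisation shift is $O(T_{N(j)}^{2}\ln T_{N(j)})$ rather than $O(\ln T_{N(j)})$---is harmless, since $T_{N(j)}^{2}\ln T_{N(j)}=o(T_{N(j)}^{2+\epsilon})$.
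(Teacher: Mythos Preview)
Your proposal is correct and follows essentially the same route as the paper's proof: apply Lemma~\ref{testf} to the sequence from Proposition~\ref{seqbound}, integrate by parts on the two half-edges (using the corollary establishing \eqref{bdterms}), dispose of the boundary and $\arg S_{\alpha,z_{0}}$ contributions via the bound $|\arg S_{\alpha,z_{0}}(\tfrac{1}{2}+\i\rho)|\leq\pi$ from the proof of Proposition~\ref{lim}, fold the two half-edges together via $\rho\mapsto-\bar\rho$ using $\overline{S_{\alpha,z_{0}}(w)}=S_{\alpha,z_{0}}(\bar w)$, and then renormalise by the sub-exponential upper bound to extract the desired $T_{N(j)}^{2+\epsilon}$ estimate. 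One small simplification: the diagonalisation over $\epsilon_{k}\downarrow0$ is unnecessary, since in the construction of Lemma~\ref{testf} the subsequence $\lbrace T_{N(j)}\rbrace_{j}$ is chosen (via the dyadic condition $2^{n(k)-1}\leq T_{N(k)}+1\leq2^{n(k)}$) before and independently of $\epsilon$, so the same subsequence already works for every $\epsilon>0$.
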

\begin{proof}
Consider the sequence $\lbrace T_{N}\rbrace_{N}$ of Proposition \ref{seqbound}. We choose the subseqence $\lbrace T_{N(j)}\rbrace_{j}$ of Lemma \ref{testf}. Fix any $\epsilon>0$. We pick the test function $h_{\epsilon}\in H_{\sigma, \epsilon}$ given in Lemma \ref{testf}. We have by integration by parts
\begin{equation}\label{byparts}
\begin{split}
&\left\{\int_{T_{N(j)}-\i\sigma}^{T_{N(j)}}+\int_{-T_{N(j)}}^{-T_{N(j)}-\i\sigma}\right\}h_{\epsilon}(\rho)\frac{d}{d\rho}\log  S_{\alpha,z_{0}}(\tfrac{1}{2}+\i\rho)d\rho\\
=\;&h_{\epsilon}(T_{N(j)})\log S_{\alpha,z_{0}}(\tfrac{1}{2}+\i T_{N(j)})
-h_{\epsilon}(T_{N(j)}-\i\sigma)\log S_{\alpha,z_{0}}
(\tfrac{1}{2}+\sigma+\i T_{N(j)})\\
&+h_{\epsilon}(-T_{N(j)}-\i\sigma)\log S_{\alpha,z_{0}}
(\tfrac{1}{2}+\sigma-\i T_{N(j)})
-h_{\epsilon}(-T_{N(j)})\log S_{\alpha,z_{0}}(\tfrac{1}{2}-\i T_{N(j)})\\
&-\left\{\int_{T_{N(j)}-\i\sigma}^{T_{N(j)}}+\int_{-T_{N(j)}}^{-T_{N(j)}-\i\sigma}\right\}h_{\epsilon}'(\rho)\log S_{\alpha,z_{0}}(\tfrac{1}{2}+\i\rho)d\rho.
\end{split}
\end{equation}
We know that the LHS converges. Recall
\begin{equation}\label{log}
|S_{\alpha,z_{0}}(\tfrac{1}{2}+\sigma\pm\i T_{N(j)})|\sim\psi(\tfrac{1}{2}+\sigma\pm\i T_{N(j)})\sim\log T_{N(j)},\qquad j\to\infty.
\end{equation}
We have
\begin{equation}
h_{\epsilon}(\rho)<\!\!<(1+|\Re\rho|)^{-2-\epsilon}
\end{equation}
uniformly for $|\Im\rho|\leq\sigma$. It follows that the second and third term converge to zero as $j\to\infty$. For the first term and very similarly for the fourth term we have
\begin{equation}
\begin{split}
&h_{\epsilon}(T_{N(j)})\log S_{\alpha,z_{0}}(\tfrac{1}{2}+\i T_{N(j)})\\
=\;&h_{\epsilon}(T_{N(j)})\log|S_{\alpha,z_{0}}(\tfrac{1}{2}+\i T_{N(j)})|
+\i h_{\epsilon}(T_{N(j)})\arg S_{\alpha,z_{0}}(\tfrac{1}{2}+\i T_{N(j)}).
\end{split}
\end{equation}
We can rewrite the fifth term as
\begin{equation}
\begin{split}
&\left\{\int_{T_{N(j)}-\i\sigma}^{T_{N(j)}}+\int_{-T_{N(j)}}^{-T_{N(j)}-\i\sigma}\right\}h_{\epsilon}'(\rho)\log |S_{\alpha,z_{0}}(\tfrac{1}{2}+\i\rho)|d\rho\\
&+\i\left\{\int_{T_{N(j)}-\i\sigma}^{T_{N(j)}}+\int_{-T_{N(j)}}^{-T_{N(j)}-\i\sigma}\right\}h_{\epsilon}'(\rho)\arg S_{\alpha,z_{0}}(\tfrac{1}{2}+\i\rho)d\rho
\end{split}
\end{equation}

Recall that $\arg S_{\alpha,z_{0}}(\tfrac{1}{2}+\i\rho)$ is bounded away from poles in $-\sigma\leq\Im\rho\leq0$. This implies $\arg S_{\alpha,z_{0}}(\tfrac{1}{2}+\i\rho)<\!\!<1$ for $\rho\in[T_{N}(j),T_{N}(j)-\i\sigma]$, as we recall that the sequence $\lbrace T_{N(j)}\rbrace_{j}$ is chosen such that the intervals do not contain any poles of $S_{\alpha,z_{0}}(\tfrac{1}{2}+\i\rho)$. It hence follows
\begin{equation}
h_{\epsilon}(\pm T_{N(j)})\arg(S_{\alpha,z_{0}}(\tfrac{1}{2}\pm\i T_{N(j)}))<\!\!<T_{N(j)}^{-2-\epsilon}
\end{equation}
and
\begin{equation}
\left\{\int_{T_{N(j)}-\i\sigma}^{T_{N(j)}}+\int_{-T_{N(j)}}^{-T_{N(j)}-\i\sigma}\right\}h_{\epsilon}'(\rho)\arg S_{\alpha,z_{0}}(\tfrac{1}{2}+\i\rho)d\rho<\!\!<T_{N}^{-2-\epsilon}.
\end{equation}

Combining all this we conclude that the limit of
\begin{equation}
\begin{split}
&\lbrace h_{\epsilon}(T_{N(j)})-h_{\epsilon}(-T_{N(j)})\rbrace\log|S_{\alpha,z_{0}}(\tfrac{1}{2}+\i T_{N(j)})|\\
&-\left\{\int_{T_{N(j)}-\i\sigma}^{T_{N(j)}}+\int_{-T_{N(j)}}^{-T_{N(j)}-\i\sigma}\right\}h_{\epsilon}'(\rho)\log |S_{\alpha,z_{0}}(\tfrac{1}{2}+\i\rho)|d\rho
\end{split}
\end{equation}
which we can rewrite as
\begin{equation}\label{rearr}
\begin{split}
=\;&\int_{T_{N(j)}-\i\sigma}^{T_{N(j)}}\lbrace h'_{\epsilon}(\rho)-h'_{\epsilon}(-\bar{\rho})\rbrace\log |S_{\alpha,z_{0}}(\tfrac{1}{2}+\i\rho)|d\rho\\
=\;&-2\i\int_{0}^{\sigma}\Re h_{\epsilon}'(T_{N(j)}+\i(r-\sigma))\log
|S_{\alpha,z_{0}}(\tfrac{1}{2}+\sigma-r+\i T_{N(j)})|dr
\end{split}
\end{equation}
exists as $j\to\infty$. We have used evenness of $h_{\epsilon}$ and $h'_{\epsilon}(-\bar{\rho})=-h'_{\epsilon}(\bar{\rho}) =-\overline{h'_{\epsilon}(\rho)}$. 

From Proposition \ref{seqbound} we know that there exists a positive constant $c(\Gamma)>0$ such that 
\begin{equation}
\log \lbrace c(\Gamma)e^{-16C_{2}(\Gamma)T_{N(j)}^{2}\ln T_{N(j)}}|S_{\alpha,z_{0}}(\tfrac{1}{2}+\i\rho)|\rbrace\leq0
\end{equation} 
for all $j$ and $\rho\in[T_{N(j)},T_{N(j)}-\i\sigma]$. From Lemma 25 we have for $\rho\in[T_{N(j)}-\i\sigma,T_{N(j)}]$ and $T_{N(j)}$ large that $-\Re h_{\epsilon}'(\rho)=|\Re h_{\epsilon}'(\rho)|>\!\!>T_{N(j)}^{-2-\epsilon}$ uniformly in $j$ and $\rho$. Hence
\begin{equation}
\begin{split}
&T_{N(j)}^{-2-\epsilon}\int_{0}^{\sigma}\left|\log\left\{ c(\Gamma)e^{-16C_{2}(\Gamma)T_{N(j)}^{2}\ln T_{N}}|S_{\alpha,z_{0}}(\tfrac{1}{2}+\sigma-r+\i T_{N(j)})|\right\}\right|dr\\
<\!\!<\;&\int_{0}^{\sigma}|\Re h_{\epsilon}'(T_{N(j)}+\i(r-\sigma))|\;\left|\log\left\{ c(\Gamma)e^{-16C_{2}(\Gamma)T_{N(j)}^{2}\ln T_{N(j)}}|S_{\alpha,z_{0}}(\tfrac{1}{2}+\sigma-r+\i T_{N(j)})|\right\}\right|dr\\
=\;&\left|\int_{0}^{\sigma}\Re h_{\epsilon}'(T_{N(j)}+\i(r-\sigma))\;\log\left\{ c(\Gamma)e^{-16C_{2}(\Gamma)T_{N(j)}^{2}\ln T_{N(j)}}|S_{\alpha,z_{0}}(\tfrac{1}{2}+\sigma-r+\i T_{N(j)})|\right\} dr\right|
\end{split}
\end{equation}
which converges as $j\to\infty$. It follows for any $\delta>0$
\begin{equation}
\begin{split}
\int_{T_{N(j)}-\i\sigma}^{T_{N(j)}}\big|\log|S_{\alpha,z_{0}}(\tfrac{1}{2}+\i\rho)|\big||d\rho|
=&\int_{0}^{\sigma}\left|\log\left\{c(\Gamma)e^{-16C_{2}(\Gamma)T_{N(j)}^{2}\ln T_{N(j)}}|S_{\alpha,z_{0}}(\tfrac{1}{2}+\sigma-r+\i T_{N(j)})|\right\}\right|dr\\
&+O(T_{N(j)}^{2}\ln T_{N(j)})\\
<\!\!<\;&T_{N(j)}^{2+\delta}.
\end{split}
\end{equation}
\end{proof}

We can now apply Proposition \ref{prop24} to derive the vanishing of the sequence of boundary terms \eqref{bdterms} as in Theorem \ref{van}.
\begin{thm}\label{thm25}
Let $\delta,\epsilon>0$ and $\lbrace T_{N(j)}\rbrace_{N(j)}$ as above. Then for any $h\in H_{\sigma+\epsilon,\delta}$
\begin{equation}
\lim_{T_{N(j)}\to\infty}\int_{T_{N(j)}-\i\sigma}^{T_{N(j)}}h(\rho)\frac{S'_{\alpha,z_{0}}}{S_{\alpha,z_{0}}}{(\tfrac{1}{2}+\i\rho)}d\rho=0.
\end{equation}
\end{thm}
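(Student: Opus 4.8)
The plan is to follow the proof of Theorem \ref{van} essentially verbatim, the only change being that Proposition \ref{compbound} is replaced by its noncompact counterpart Proposition \ref{prop24}. With $\lbrace T_{N(j)}\rbrace_{j}$ the sequence of Proposition \ref{prop24}, the segment $[T_{N(j)}-\i\sigma,T_{N(j)}]$ contains no zero or pole of $S_{\alpha,z_{0}}(\tfrac{1}{2}+\i\rho)$ by the construction in Proposition \ref{seqbound}, so a continuous branch of $\log S_{\alpha,z_{0}}(\tfrac{1}{2}+\i\rho)=\log|S_{\alpha,z_{0}}(\tfrac{1}{2}+\i\rho)|+\i\arg S_{\alpha,z_{0}}(\tfrac{1}{2}+\i\rho)$ is defined along it. First I would integrate by parts to obtain
\begin{equation}
\begin{split}
\int_{T_{N(j)}-\i\sigma}^{T_{N(j)}}h(\rho)\frac{S'_{\alpha,z_{0}}}{S_{\alpha,z_{0}}}(\tfrac{1}{2}+\i\rho)\,d\rho
=\;&h(T_{N(j)})\log S_{\alpha,z_{0}}(\tfrac{1}{2}+\i T_{N(j)})\\
&-h(T_{N(j)}-\i\sigma)\log S_{\alpha,z_{0}}(\tfrac{1}{2}+\sigma+\i T_{N(j)})\\
&-\int_{T_{N(j)}-\i\sigma}^{T_{N(j)}}h'(\rho)\log S_{\alpha,z_{0}}(\tfrac{1}{2}+\i\rho)\,d\rho,
\end{split}
\end{equation}
and then show that each of the three terms on the right tends to $0$ as $j\to\infty$.

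For the two boundary terms I would use that $h\in H_{\sigma+\epsilon,\delta}$ forces $h(T_{N(j)}),\,h(T_{N(j)}-\i\sigma)<\!\!<T_{N(j)}^{-2-\delta}$, that Proposition \ref{seqbound} (with $t=0$), together with the bound $|\arg S_{\alpha,z_{0}}(\tfrac{1}{2}+\i\rho)|\leq\pi$ on $-\sigma\leq\Im\rho\leq0$ away from poles established in the proof of Proposition \ref{lim}, gives $|\log S_{\alpha,z_{0}}(\tfrac{1}{2}+\i T_{N(j)})|<\!\!<T_{N(j)}^{2}\ln T_{N(j)}$, and that $\log S_{\alpha,z_{0}}(\tfrac{1}{2}+\sigma+\i T_{N(j)})=O(\log\log T_{N(j)})$ on the edge $\Im\rho=-\sigma$ by \eqref{Sbound}; hence both boundary terms are $<\!\!<T_{N(j)}^{-\delta}\ln T_{N(j)}\to0$. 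For the remaining integral I would first deduce from Cauchy's theorem that $h'(\rho)<\!\!<(1+|\Re\rho|)^{-2-\delta}$ uniformly on $|\Im\rho|\leq\sigma$, and then combine $|\log S_{\alpha,z_{0}}(\tfrac{1}{2}+\i\rho)|\leq\big|\log|S_{\alpha,z_{0}}(\tfrac{1}{2}+\i\rho)|\big|+\pi$ along the segment with Proposition \ref{prop24} applied with $\delta/2$ in place of its $\epsilon$, giving
\begin{equation}
\left|\int_{T_{N(j)}-\i\sigma}^{T_{N(j)}}h'(\rho)\log S_{\alpha,z_{0}}(\tfrac{1}{2}+\i\rho)\,d\rho\right|
<\!\!<T_{N(j)}^{-2-\delta}\left(\int_{T_{N(j)}-\i\sigma}^{T_{N(j)}}\big|\log|S_{\alpha,z_{0}}(\tfrac{1}{2}+\i\rho)|\big|\,|d\rho|+\sigma\right)
<\!\!<T_{N(j)}^{-\delta/2}\to0.
\end{equation}
Assembling the three estimates then yields the claim.

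This last step poses no genuine obstacle; the hard analysis has all been carried out in Propositions \ref{polybound}, \ref{seqbound} and \ref{prop24} and Lemmas \ref{scatbound} and \ref{Ebound}. The only point worth flagging is the tension between the very large size $e^{C_{2}(\Gamma)T_{N}^{2}\ln T_{N}}$ of $S_{\alpha,z_{0}}$ on the critical line (Proposition \ref{seqbound}) and the merely polynomial decay of $h$: this is precisely why one integrates by parts, so that only $\log|S_{\alpha,z_{0}}|$ — a polynomially bounded quantity by Proposition \ref{prop24} — enters, rather than estimating $S'_{\alpha,z_{0}}/S_{\alpha,z_{0}}$ directly, and the estimates close only because $\delta>0$. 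One could equally well, as in the proof of Theorem \ref{van}, add the reflected segment $\int_{-T_{N(j)}}^{-T_{N(j)}-\i\sigma}$ and use oddness of the integrand to reduce first to an integral of $\lbrace h'(\rho)-h'(-\bar{\rho})\rbrace\log|S_{\alpha,z_{0}}(\tfrac{1}{2}+\i\rho)|$ over a single segment before invoking Proposition \ref{prop24}.
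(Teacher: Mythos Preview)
Your alternative route at the end --- adding the reflected segment $\int_{-T_{N(j)}}^{-T_{N(j)}-\i\sigma}$ and reducing to an integral of $\lbrace h'(\rho)-h'(-\bar\rho)\rbrace\log|S_{\alpha,z_{0}}(\tfrac{1}{2}+\i\rho)|$ --- is exactly the paper's proof, and it works for the reason you indicate.

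Your primary route, however, has a gap at the critical-line boundary term $h(T_{N(j)})\log S_{\alpha,z_{0}}(\tfrac{1}{2}+\i T_{N(j)})$. Proposition \ref{seqbound} only furnishes an \emph{upper} bound $|S_{\alpha,z_{0}}(\tfrac{1}{2}+\i T_{N(j)})|<\!\!<e^{C_{2}T_{N(j)}^{2}\ln T_{N(j)}}$, which controls $\log|S_{\alpha,z_{0}}|$ from above but says nothing about how negative $\log|S_{\alpha,z_{0}}|$ can be; there is no lower bound on $|S_{\alpha,z_{0}}|$ at the critical line in the paper, and the integral bound of Proposition \ref{prop24} does not yield a pointwise one at the endpoint. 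So your claimed estimate $|\log S_{\alpha,z_{0}}(\tfrac{1}{2}+\i T_{N(j)})|<\!\!<T_{N(j)}^{2}\ln T_{N(j)}$ is unjustified.

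This is precisely why the paper (and the proof of Theorem \ref{van}) pairs the segment with its reflection: after integrating by parts on both, the critical-line boundary contributions combine to $\lbrace h(T_{N(j)})-h(-T_{N(j)})\rbrace\log|S_{\alpha,z_{0}}(\tfrac{1}{2}+\i T_{N(j)})|$ (using $|S_{\alpha,z_{0}}(\tfrac{1}{2}-\i T)|=|S_{\alpha,z_{0}}(\tfrac{1}{2}+\i T)|$), which vanishes identically by evenness of $h$. The uncontrolled factor $\log|S_{\alpha,z_{0}}|$ is thus never estimated, only cancelled. Switch to your alternative and the proof goes through; the rest of your estimates (edge boundary terms via \eqref{Sbound}, $h'$ via Cauchy, and the integral via Proposition \ref{prop24}) are fine.
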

\begin{proof}
We follow the same lines as in Proposition \ref{prop24}. In exactly the same way as in the proof above we obtain the identity \eqref{rearr} for $h\in H_{\sigma+\epsilon,\delta}\subset H_{\sigma,\delta}$. So
\begin{equation}
\begin{split}
&\lim_{j\to\infty}\left\{\int_{T_{N(j)}-\i\sigma}^{T_{N(j)}}+\int_{-T_{N(j)}}^{-T_{N(j)}-\i\sigma}\right\}h(\rho)\frac{d}{d\rho}\log S_{\alpha,z_{0}}(\tfrac{1}{2}+\i\rho)d\rho\\
=&\lim_{j\to\infty}\int_{T_{N(j)}-\i\sigma}^{T_{N(j)}}\lbrace h'(\rho)-h'(-\bar{\rho})\rbrace\log|S_{\alpha,z_{0}}(\tfrac{1}{2}+\i\rho)|d\rho.
\end{split}
\end{equation}
The limit vanishes since
\begin{equation}
\begin{split}
&\int_{T_{N(j)}-\i\sigma}^{T_{N(j)}}|\lbrace h'(\rho)-h'(-\bar{\rho})\rbrace|\big|\log|S_{\alpha,z_{0}}(\tfrac{1}{2}+\i\rho)|\big||d\rho|\\
<\!\!<\;&T_{N(j)}^{-2-\delta}\int_{T_{N(j)}-\i\sigma}^{T_{N(j)}}\big|\log|S_{\alpha,z_{0}}(\tfrac{1}{2}+\i\rho)|\big|d\rho|.
\end{split}
\end{equation}
where we have used Proposition \ref{prop24} and observe that by Cauchy's theorem $h\in H_{\sigma+\epsilon,\delta}$ implies
\begin{equation}
h'(\rho)<\!\!<(1+|\Re\rho|)^{-2-\delta}
\end{equation}
uniformly in $|\Im\rho|\leq\sigma$.
\end{proof}

As an application of Theorem \ref{thm25} we can now prove the trace formula. Let $h\in H_{\sigma,\delta}$ for any $\sigma>\tfrac{1}{2}$ and $\delta>0$. We will make a specific choice of $\sigma$ below. We observe that
\begin{equation}
\begin{split}
\int_{-\i\sigma-T}^{-\i\sigma+T}h(\rho)\frac{S'_{\alpha,\,z_{0}}}{S_{\alpha,\,z_{0}}}(\tfrac{1}{2}+\i\rho)d\rho
=&h(-\i\sigma+T)\log S_{\alpha,z_{0}}(\tfrac{1}{2}+\sigma+\i T)\\
&-h(-\i\sigma-T)\log S_{\alpha,z_{0}}(\tfrac{1}{2}+\sigma-\i T)\\
&-\int_{-\i\sigma-T}^{-\i\sigma+T}h'(\rho)\log S_{\alpha,\,z_{0}}(\tfrac{1}{2}+\i\rho)d\rho
\end{split}
\end{equation}
and, because of \eqref{log}, \eqref{Im} and the decay of $h$,
\begin{equation}
\lim_{T\to\infty}h(-\i\sigma\pm T)\log S_{\alpha,z_{0}}(\tfrac{1}{2}+\sigma\pm\i T)=0.
\end{equation}
Thus we can take $T\to\infty$ in \eqref{prop16} to obtain
\begin{equation}\label{pretrace}
\begin{split}
\sum_{j\geq-M}h(\rho^{\alpha}_{j})-\sum_{j\geq-M}h(\rho_{j})
=\;&-\frac{1}{2\pi}\int_{-\i\sigma-\infty}^{-\i\sigma+\infty}h'(\rho)\log S_{\alpha,z_{0}}(\tfrac{1}{2}+\i\rho)d\rho\\
&+\tfrac{1}{2}\delta_{\Gamma}h(0)+\frac{1}{4\pi}\int_{-\infty}^{+\infty}h(\rho)\frac{\theta'_{\alpha,z_{0}}}{\theta_{\alpha,z_{0}}}(\tfrac{1}{2}+\i\rho)d\rho.
\end{split}
\end{equation}
where we have used Theorem \ref{thm25}, Proposition \ref{lim}, the existence of the limit \eqref{limbound}, the standard bound on the unperturbed cuspidal spectrum and our choice of generic $\alpha$ which ensures that the sum over perturbed eigenvalues is finite. Since $S_{\alpha,z_{0}}(s)$ depends continuously on $\alpha\neq0$ and Proposition \ref{lim} gives us control over the resonances we can extend \eqref{pretrace} to the countable set of non-generic values of $\alpha$. In particular Proposition \eqref{lim} and the fact that the zeros of of $S_{\alpha,z_{0}}(s)$ move continuously under variation of $\alpha\neq0$ imply that the trace over unperturbed eigenvalues will converge absolutely.

The first term on the RHS can now be expanded into an identity term and diffractive orbit terms in exactly the same way as in the compact case. Because of the identity
\begin{equation}
\varphi_{\alpha,\,z_{0}}(s)=\theta_{\alpha,z_{0}}(s)\varphi(s)
\end{equation}
we rewrite the second term as
\begin{equation}
\frac{1}{4\pi}\int_{-\infty}^{\infty}h(\rho)\left\{\frac{\varphi'_{\alpha,z_{0}}}{\varphi_{\alpha,z_{0}}}(\tfrac{1}{2}+\i\rho)-\frac{\varphi'}{\varphi}(\tfrac{1}{2}+\i\rho)\right\}d\rho.
\end{equation}

\section{The diffractive ghost of the sphere}

It is well known (cf. \cite{Mf}, section 6) that the identity term in Selberg's trace formula is closely related to the trace formula on the sphere, the only difference being that one sums over imaginary actions instead of real actions. In this section we want to give a similar interpretation of the leading term in the perturbed trace formula in the case that $z_{0}$ is not an elliptic fixed point.

Denote the 2-dimensional sphere by $\SS^{2}$. One can obtain a trace formula for the singular perturbations $\Delta_{\alpha,z_{0}}$ in a very similar way to the hyperbolic setting. The Green's function on $\SS^{2}$ is given by the same free Green's function as before, where we replace the Riemannian distance $d(\cdot,\cdot)$ on $\HH$ with the Riemannian distance $\tilde{d}(\cdot,\cdot)$ on $\SS^{2}$. The Green's function satisfies
\begin{equation}
(\Delta+w(w-1))G^{\SS^{2}}_{w}(\cdot,w)=\delta_{w}
\end{equation}
on $\SS^{2}$. In particular we note that $\SS^{2}$ is of finite volume as opposed to $\HH$, the Green's function on the surface is thus identical with the free Green's function and we don't need to employ the method of images as in the case of a hyperbolic finite-volume surface. Hence, analogously to the hyperbolic setting the meromorphic function
\begin{equation}
1+\alpha\lim_{z\to z_{0}}\left\{G^{\SS^{2}}_{w}(z,z_{0})-\Re G^{\SS^{2}}_{t}(z,z_{0})\right\}
=1+\alpha\lbrace\psi(w)-\Re\psi(t)\rbrace
\end{equation}
encodes all the information about the perturbed and unperturbed eigenvalues on $\SS^{2}$. Following an analogous reparametrisation as in the hyperbolic case, we obtain a relative zeta function given by
\begin{equation}\label{rz}
S^{\SS^{2}}_{\alpha,z_{0}}(w)=1+\beta(\alpha)\psi(w).
\end{equation}
$S^{\SS^{2}}_{\alpha,z_{0}}(w)$ has poles on the real line at $w=0,-1,-2,\cdots$ corresponding to the eigenvalues of the Laplacian on $\SS^{2}$ $w(w-1)=0,2,6,\cdots$ and zeros at $w=w^{\alpha}_{0},w^{\alpha}_{1},w^{\alpha}_{2}\cdots$ corresponding to the perturbed eigenvalues on $\SS^{2}$. We shall write the eigenvalues in the convenient form $\tfrac{1}{4}-\omega^{2}$ which corresponds to the choice $w=\tfrac{1}{2}+\omega$. These coordinates allow us to symmetrise the relative zeta function \eqref{rz} in a convenient way. For any $\delta>0$ we have after a contour integration and division by $2$
\begin{equation}\label{sphtrace}
\sum_{j=0}^{\infty}\lbrace h(\omega^{\alpha}_{j})-h(\omega_{j})\rbrace
=\frac{1}{4\pi\i}\int_{\Im\omega=-\delta}h(\omega)\frac{d}{d\omega}\log\lbrace(1+\beta\psi(\tfrac{1}{2}+\omega))(1+\beta\psi(\tfrac{1}{2}-\omega))\rbrace d\omega.
\end{equation}
Now, by contour integration, we can rewrite the identity term in the trace formula as
\begin{equation}
\begin{split}
&\frac{1}{2\pi\i}\int_{-\i\tilde{\sigma}-\infty}^{-\i\tilde{\sigma}+\infty}h(\rho)\frac{d}{d\rho}\log(1+\beta\psi(\tfrac{1}{2}+\i\rho))d\rho\\
=\;&-\tilde{\delta}_{\Gamma}h(w_{0}^{\alpha})+\frac{1}{2\pi\i}\int_{-\infty}^{\infty}h(\rho)\frac{d}{d\rho}\log(1+\beta\psi(\tfrac{1}{2}+\i\rho))d\rho
\end{split}
\end{equation}
where $\tilde{\delta}_{\Gamma}=1$, if $w_{0}^{\alpha}\in[\tfrac{1}{2},\tfrac{1}{2}+\tilde{\sigma}]$ and $\tilde{\delta}_{\Gamma}=0$ otherwise. Note that we have no zeros off the real line, since $\Im\psi(s)=0$ implies $\Im s=0$. Furthermore,
\begin{equation}
\begin{split}
&\frac{1}{2\pi\i}\int_{-\infty}^{\infty}h(\rho)\Re\frac{d}{d\rho}\log(1+\beta\psi(\tfrac{1}{2}+\i\rho))d\rho\\
=\;&\frac{1}{4\pi\i}\int_{-\infty}^{\infty}h(\rho)\frac{d}{d\rho}\log\lbrace(1+\beta\psi(\tfrac{1}{2}+\i\rho))(1+\beta\psi(\tfrac{1}{2}-\i\rho))\rbrace d\rho\\
=\;&\frac{1}{4\pi\i}\int_{-\i\delta-\infty}^{-\i\delta+\infty}h(\rho)\frac{d}{d\rho}\log\lbrace(1+\beta\psi(\tfrac{1}{2}+\i\rho))(1+\beta\psi(\tfrac{1}{2}-\i\rho))\rbrace d\rho.
\end{split}
\end{equation}
Hence the identity term in the perturbed hyperbolic trace formula is closely connected with the perturbed spherical trace via the identity \eqref{sphtrace}.

\section{The perturbed zeta function}

We define the perturbed zeta function on a quotient $\Gamma\backslash\HH$ to be
\begin{equation}\label{zeta}
Z_{\Gamma}^{\alpha,z_{0}}(s)=Z_{\Gamma}(s)S_{\alpha,z_{0}}(s)
\end{equation}
where
\begin{equation}
Z_{\Gamma}(s)=\prod_{\left\{p\right\}}\prod_{n=1}^{\infty}(1-\e^{-(n+s)l_{p}})
\end{equation}
is the Selberg zeta function associated with the group $\Gamma$ and the product representation is valid for $\Re s>1$. We have an analogous expression for $Z_{\Gamma}^{\alpha,z_{0}}(s)$.
\begin{cor}
The following idendity holds if $\Re s>\tfrac{1}{2}+\sigma(\beta)$
\begin{equation}\label{product}
Z_{\Gamma}^{\alpha,z_{0}}(s)=Z_{\Gamma}(s)(1+m\beta\psi(s))
\prod_{k=1}^{\infty}\prod_{\gamma_{1},\cdots,\gamma_{k}\in\Gamma\backslash\scrI}
\exp\left\{\frac{(-1)^{k}}{k}\left(\frac{\beta}{1+m\beta\psi(s)}\right)^{k}\prod_{j=1}^{k}G_{s}(l_{\gamma_{j},z_{0}})\right\}
\end{equation}
where the product on the RHS converges absolutely.
\end{cor}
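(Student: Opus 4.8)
The plan is to obtain the product representation by exponentiating the logarithmic expansion of $S_{\alpha,z_{0}}(s)$ already assembled in \eqref{series}--\eqref{expansion}, only now carried out pointwise in $s$ on the half-plane $\Re s>\tfrac12+\sigma(\beta)$ (which is contained in $\{\Re s>1\}$, since $\sigma(\beta)=\tilde\sigma(\beta)+\epsilon>\tfrac12$) rather than on the contour $\Im\rho=-\tilde\sigma$. First I would invoke the Proposition valid for $\Re s>1$ and write $\beta S_{\alpha,z_{0}}(s)=(1+m\beta\psi(s))(1+x(s))$, where $x(s)=\beta\,G^{\Gamma\backslash\scrI}_{s}(z_{0},z_{0})/(1+m\beta\psi(s))$ and $G^{\Gamma\backslash\scrI}_{s}(z_{0},z_{0})=\sum_{\gamma\in\Gamma\backslash\scrI}G_{s}(z_{0},\gamma z_{0})=\sum_{\gamma\in\Gamma\backslash\scrI}G_{s}(l_{\gamma,z_{0}})$ (the free Green's function depending only on distance). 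Taking logarithms, expanding $\log(1+x)$ as a power series, and using the absolute convergence of $\sum_{\gamma}G_{s}(z_{0},\gamma z_{0})$ (Lemma~\ref{Greenbound}) to rewrite $\bigl(\sum_{\gamma}G_{s}(l_{\gamma,z_{0}})\bigr)^{k}=\sum_{\gamma_{1},\dots,\gamma_{k}\in\Gamma\backslash\scrI}\prod_{j=1}^{k}G_{s}(l_{\gamma_{j},z_{0}})$ turns the expansion into exactly the series appearing in the exponent; exponentiating it term by term (first the $k$-sum, then, at each level $k$, the sum over $k$-tuples) produces the double product. Multiplying through by $Z_{\Gamma}(s)$, whose Euler product converges absolutely for $\Re s>1$, and using the definition $Z_{\Gamma}^{\alpha,z_{0}}(s)=Z_{\Gamma}(s)S_{\alpha,z_{0}}(s)$ then yields the stated identity.

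The absolute convergence of the infinite double product is inherited from Section~4 with no new work. For $\Re s>\tfrac12+\sigma(\beta)$ the lower bound for $|1+m\beta\psi(s)|$ established just before \eqref{large1}, together with Lemma~\ref{Greenbound}, gives the estimate \eqref{bigbound}, and the choice of $\sigma(\beta)$ in \eqref{large1} makes $|x(s)|<1$ there; hence $1+m\beta\psi(s)\neq0$ and the logarithmic series converges absolutely. The geometric decay in $k$ of $|\beta/(1+m\beta\psi(s))|^{k}\sum_{\gamma_{1},\dots,\gamma_{k}}\prod_{j}|G_{s}(l_{\gamma_{j},z_{0}})|$ is precisely the bound carried out right after \eqref{expansion}, and this, combined with $|x(s)|<1$, gives absolute, locally uniform convergence of the double product over $k$ and over $k$-tuples. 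Since $|\psi(s)|$ grows and $\sum_{\gamma}|G_{s}(l_{\gamma,z_{0}})|$ decays monotonically as $\Re s$ increases, the single-line estimates of Section~4 apply verbatim throughout the whole half-plane $\Re s>\tfrac12+\sigma(\beta)$, not merely on the line $\Re s=\tfrac12+\tilde\sigma$.

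There is no genuinely hard step here: the content is the rearrangement of an absolutely convergent series into a product, and all the analytic input is already available. The one point that needs attention is the legitimacy of the factorisation $\beta S_{\alpha,z_{0}}(s)=(1+m\beta\psi(s))(1+x(s))$ and of the subsequent expansion, i.e.\ that $1+m\beta\psi(s)$ does not vanish and $|x(s)|<1$ on all of $\Re s>\tfrac12+\sigma(\beta)$. This is guaranteed by the choice of $\sigma(\beta)$ through \eqref{large1}, by the monotonicity noted above, and by the earlier observation that $1+m\beta\psi(s)$ has at most one zero in $\Re s>\tfrac12$, which is real and located at $s=\tfrac12+v_{\beta}$ and which we may assume lies to the left of $\tfrac12+\sigma(\beta)$ after enlarging $\sigma(\beta)$ if necessary (and in any case $1+m\beta\psi(s)\to\infty$ as $\Re s\to\infty$). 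With these facts in place the identity and the asserted absolute convergence of the product on the right-hand side follow at once.
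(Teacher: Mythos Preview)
Your proposal is correct and follows essentially the same approach as the paper: take the logarithm, use the expansion \eqref{series} of $\log S_{\alpha,z_{0}}(s)$ as a power series in $\beta G^{\Gamma\backslash\scrI}_{s}(z_{0},z_{0})/(1+m\beta\psi(s))$, and invoke the bound \eqref{bigbound} to justify absolute convergence. The paper's own proof is in fact somewhat briefer than yours, omitting the careful verification that $|x(s)|<1$ and $1+m\beta\psi(s)\neq0$ throughout the half-plane, which you rightly supply.
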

\begin{proof}
The proof is an immediate consequence of the trace formula and follows simply by taking the logarithm of the product and using the same bound as in the proof of the trace formula to justify absolute convergence. We obtain
\begin{equation}
\begin{split}
\log{Z_{\Gamma}^{\alpha,z_{0}}(s)}-\log{Z_{\Gamma}(s)}-\log(1+m\beta\psi(s))
=\sum_{k=1}^{\infty}\frac{(-\beta)^{k}}{k}\sum_{\gamma_{1},...,\gamma_{k}\in\Gamma\backslash\scrI}\frac{\prod_{j=1}^{k}G_{s}(l_{\gamma_{j},z_{0}})}{(1+m\beta\psi(s))^{k}}
\end{split}
\end{equation}
where the sum over $k$ converges absolutely by \eqref{bigbound}.
\end{proof}
The identity \eqref{product} expresses the ratio $Z_{\Gamma}^{\alpha,z_{0}}(s)/Z_{\Gamma}(s)$ as a product of a local term $1+\beta\psi(s)$ and a product over the number of visits paid by a diffractive orbit to the scatterer multiplying orbit terms corresponding to all combinations of $k$ primitive diffractive orbits. 

In analogy with $Z_{\Gamma}(s)$ the perturbed zeta function satisfies a functional equation
\begin{equation}
Z_{\Gamma}^{\alpha,z_{0}}(s)=\xi(s)\theta_{\alpha,z_{0}}(1-s)Z_{\Gamma}^{\alpha,z_{0}}(1-s)
\end{equation}
where $Z_{\Gamma}(s)=\xi(s)Z_{\Gamma}(1-s)$.

If $\Gamma\backslash\HH$ has one cusp, Theorem \ref{prop} states that the relative zeta function $Z_{\Gamma}^{\alpha,z_{0}}(s)/Z_{\Gamma}(s)$ has simple poles corresponding to old eigenvalues $\lbrace s_{j}\rbrace_{j\geq-M}$ and simple zeros $\lbrace s^{\alpha}_{j}\rbrace_{j\geq-M}$ corresponding to new eigenvalues. These are situated on the critical line and a finite number on the real line. It also has poles and zeros corresponding to old and new resonances in the halfplane $\Re s<\tfrac{1}{2}$. It follows that $Z_{\Gamma}^{\alpha,z_{0}}(s)$ has zeros of order $m_{j}-1$ at $\lbrace s_{j}\rbrace_{j\geq-M}$, where $m_{j}$ denotes their old multiplicity with respect to the Laplacian. There is a zero or pole of order $2\tilde{m}-3$ at $s=\tfrac{1}{2}$ if $\tilde{\lambda}=\tfrac{1}{4}$ is an eigenvalue of the Laplacian and $\tilde{m}$ denotes the multiplicity of $\tilde{\lambda}$, and a pole of order $2$ otherwise. It also has simple zeros at $\lbrace s^{\alpha}_{j}\rbrace_{j\geq-M}$ corresponding to new eigenvalues as well as zeros corresponding to perturbed resonances and poles corresponding to unperturbed resonances in the halfplane $\Re s<\tfrac{1}{2}$. Note that an analogue of the Riemann hypothesis does not hold, in contrast to the case of the Laplacian, because under variation of $\alpha\in\RR\backslash\lbrace0\rbrace$ zeros of $Z_{\Gamma}^{\alpha,z_{0}}(s)$ on the critical line corresponding to new eigenvalues move continuously into the plane $\Re s<\tfrac{1}{2}$ to turn into resonances. We would like to remark that this is a characteristic of non-compactness. For compact surfaces such a phenomenon would not occur, the eigenvalues move continously along the critical line under variaton of $\alpha\in\RR\backslash\lbrace0\rbrace$, an analogue of the Riemann hypothesis still holds in this case. We also point out the analogy to the Phillips-Sarnak conjecture \cite{PhSa} on the behaviour of the eigenvalues of the Laplacian under perturbation of the metric.

\section*{Acknowledgments}
I would like to thank Jens Marklof for his guidance in the preparation of this work and many helpful and inspiring discussions. Furthermore, I would like to thank Andreas Str\"ombergsson for inviting me to Uppsala and both him and Professor Dennis Hejhal for very helpful discussions about this work during my visit. Furthermore, I would like to thank Professor Colin de Verdiere for his comments and suggestions.

\end{document}